\newcommand{\ignore}[1]{}
\newtheorem{theorem}{Theorem}
\newtheorem{lemma}[theorem]{Lemma}
\newtheorem{assm}[theorem]{Assumption}
\newtheorem{remark}[theorem]{Remark}
\renewcommand{\Pr}{{\bf Pr}}
\newcommand{\E}{{\bf E}}
\newcommand{\D}{{\cal D}}
\newcommand{\T}{{\cal T}}
\newcommand{\Va}{{\rm{Va}}}
\newcommand{\Var}{{\rm{Var}}}
\def\etal{{\it et~al.}}
\newcommand{\prob}{{\rm Pr}}
\newcommand{\e}{\epsilon}
\newcommand{\ext}{{\tt ext}}
\newcommand{\ev}{\nu}
\newcommand{\si}{\sigma}
\newcommand{\bitset}{\{0,1\}}
\newcommand{\xth}{{\rm th}}
\newcommand{\poly}{{\rm poly}}
\newcommand{\ov}{\overline}
\newcommand{\BE}{\begin{enumerate}}
\newcommand{\EE}{\end{enumerate}}
\begin{document}

\title{Almost Optimal Testers for\\ Concise Representations}
\author{{\bf Nader H. Bshouty}\\ Dept. of Computer Science\\ Technion,  Haifa, 32000\\
}

\maketitle
\begin{abstract}
We give improved and almost optimal testers for several classes of Boolean functions on $n$ inputs that have concise representation in the uniform and distribution-free model. Classes, such as $k$-Junta, $k$-Linear Function, $s$-Term DNF, $s$-Term Monotone DNF, $r$-DNF, Decision List, $r$-Decision List, size-$s$ Decision Tree, size-$s$ Boolean Formula, size-$s$ Branching Program, $s$-Sparse Polynomial over the binary field and functions with Fourier Degree at most $d$.

The approach is new and combines ideas from
Diakonikolas et al.~\cite{DiakonikolasLMORSW07}, Bshouty~\cite{Bshouty19}, Goldreich et al.~\cite{GoldreichGR98}, and learning theory. The method can be extended to several other classes of functions over any domain that can be approximated by functions that have a small number of relevant variables.
\color{black}
\end{abstract}

\newpage
\tableofcontents
\newpage

\section{Inroduction}

Property testing of Boolean function was first considered in the seminal works of Blum, Luby and Rubinfeld~\cite{BlumLR93} and Rubinfeld and Sudan~\cite{RubinfeldS96} and has recently become a very active research area. See for example,~\cite{AlonKKLR05,BaleshzarMPR16,BelovsB16,BhattacharyyaKSSZ10,BlaisBM11,BlaisK12,Bshouty19,ChakrabartyS13,ChakrabartyS16a,ChakrabortyGM11,ChenDST15,ChenST14,ChenWX17,ChenWX17b,DiakonikolasLMORSW07,FischerKRSS02,
GoldreichGLRS00,GopalanOSSW11,KhotMS15,KhotS16,MatulefORS10,MatulefORS09,ParnasRS02,Saglam18} and other works  referenced in the surveys~\cite{GoldreichSurvey10,Ron08,Ron09}.

A Boolean function $f:\{0,1\}^n\to \{0,1\}$ is said to be $k$-junta if it depends on at most $k$ coordinates. The class $k$-Junta is the class of all $k$-juntas. The class $k$-Junta has been of particular interest to the computational learning theory community~\cite{Blum03,BlumL97,BshoutyC18,Damaschke00,GuijarroTT98,LiptonMMV05,MosselOS04}. A problem closely related to learning $k$-Junta is the problem of learning and testing subclasses $C$ of $k$-Junta and classes $C$ of Boolean functions that can be approximated by $k$-juntas~\cite{BlaisK12,BlumL97,DiakonikolasLMSW11,ChakrabortyGM11,DiakonikolasLMORSW07,GoldreichGR98,GopalanOSSW11,ParnasRS02}: Given black-box query access to a Boolean function $f$. In learning, for $f\in C$, we need to learn, with high probability, a hypothesis $h$ that is $\epsilon$-close to $f$. In testing, for any Boolean function $f$, we need to distinguish, with high probability, the case that $f$ is in $C$ versus the case that $f$ is $\epsilon$-far from every function in $C$.

In the {\it uniform-distribution property testing} (and learning) the distance between Boolean functions is measured with respect to the uniform distribution. In the {\it distribution-free property testing},~\cite{GoldreichGR98}, (and learning~\cite{Valiant84}) the distance between Boolean functions is measured with respect to an arbitrary and unknown distribution~${\cal D}$ over $\{0,1\}^n$. In the distribution-free model, the testing (and learning) algorithm is allowed (in addition to making black-box queries) to draw random $x\in\{0,1\}^n$ according to the distribution~${\cal D}$. This model is studied in~\cite{Bshouty19,ChenX16,DolevR11,GlasnerS09,HalevyK07,LiuCSSX18}.

\subsection{Results}

We give improved and almost optimal testers for several classes of Boolean functions on $n$ inputs that have concise representation in the uniform and distribution-free models. The classes studied here are $k$-Junta, $k$-Linear Functions, $k$-Term, $s$-Term DNF, $s$-Term Monotone DNF, $s$-Term Monotone $r$-DNF, $r$-DNF, Decision List, Length-$k$ Decision List, $r$-Decision List, size-$s$ Decision Tree, size-$s$ Branching Programs, size-$s$ Boolean Formula, size-$s$-Boolean Circuit, $s$-Sparse Polynomials over the binary field, $s$-Sparse Polynomials of Degree $d$ and functions with Fourier Degree at most $d$.

In Table~\ref{TABLE}, we list all the previous results and our results in this paper. In the table, $\tilde O(T)$ stands for $O(T\cdot poly(\log T))$, $U$ and $D$ stand for uniform and distribution-free model, and Exp and Poly stand for exponential and polynomial time.

It follows from the lower bounds of Saglam,~\cite{Saglam18}, that our query complexity is almost optimal (with log-factor) for the classes $k$-Junta, $k$-Linear, $k$-Term, $s$-Term DNF, $s$-Term Monotone DNF, $r$-DNF ($r$ constant), Decision List, $r$-Decision List ($r$ constant), size-$s$ Decision Tree, size-$s$ Branching Programs and size-$s$ Boolean Formula. For more details on the previous results and the results in this paper see Table~\ref{TABLE} and Sections~\ref{result1}, \ref{result2} and~\ref{result3}.

\begin{figure}[h!]
\begin{center}
\begin{tabular}{|c|c|c|c|c|}
\hline
{\bf Class of Functions}& {\bf Model} & {\bf $\#$Queries} & {\bf Time}&{\bf Reference}\\
\hline \hline
$s$-Term Monotone DNF & U & $\tilde O(s^2/\epsilon)$ &Poly. &\cite{ParnasRS02}\\
\cline{2-5}
$s$-Term Unate DNF & U & $\tilde O(s/\epsilon^2)$ &Exp. &\cite{ChakrabortyGM11}\\
\cline{2-5}
 & U & $\tilde O(s/\epsilon)$ &Poly. &This Paper\\
 \hline
 $s$-Term Monotone $r$-DNF & U & $\tilde O(s/\epsilon^2)$ &Exp. &\cite{ChakrabortyGM11}\\
\cline{2-5}
$s$-Term Unate $r$-DNF & U & $\tilde O(s/\epsilon)$ &Poly. &This Paper\\
\cline{2-5}
 & D & $\tilde O(s^2r/\epsilon)$ &Poly. &This Paper\\
\hline
$s$-Term DNF & U & $\tilde O(s^2/\epsilon)$ &Exp. &\cite{DiakonikolasLMORSW07}\\
\cline{2-5}
 & U & $\tilde O(s/\epsilon^2)$ &Exp. &\cite{ChakrabortyGM11}\\
 \cline{2-5}
 & U & $\tilde O(s/\epsilon)$ &Exp. &This Paper\\
\hline
$r$-DNF (Constant $r$)& U & $\tilde O(1/\epsilon)$ &Poly. &This Paper\\
\hline
Decision List & U & $\tilde O(1/\epsilon^2)$ &Poly. &\cite{DiakonikolasLMORSW07}\\
\cline{2-5}
 & U & $\tilde O(1/\epsilon)$ &Poly. &This Paper\\
 \hline
Length-$k$ Decision List & D & $\tilde O(k^2/\epsilon)$ &Poly. &This Paper\\
 \hline
$r$-DL (Constant $r$)& U & $\tilde O(1/\epsilon)$ &Poly. &This Paper\\
\hline
$k$-Linear & U & $\tilde O(k/\epsilon)$ &Poly. &\cite{Blais09,BlumLR93}\\
\cline{2-5}
 & D & $\tilde O(k/\epsilon)$ &Poly. &This Paper\\
\hline
$k$-Term & U & $ O(1/\epsilon)$ &Poly. &\cite{ParnasRS02}\\
\cline{2-5}
 & U & $\tilde O(1/\epsilon)$ &Poly. &This Paper\\
\cline{2-5}
 & D & $\tilde O(k/\epsilon)$ &Poly. &This Paper\\
 \hline
size-$s$ Decision Trees and& U & $\tilde O(s/\epsilon^2)$ &Exp. &\cite{ChakrabortyGM11}\\
\cline{2-5}
size-$s$ Branching Programs & U & $\tilde O(s/\epsilon)$ &Exp. &This Paper\\
\cline{2-5}
 & D & $\tilde O(s^2/\epsilon)$ &Exp. &This Paper\\
 \hline
size-$s$ Boolean Formulas &U &$\tilde O(s/\epsilon^2)$ &Exp.&\cite{ChakrabortyGM11}\\
\cline{2-5}
 & U & $\tilde O(s/\epsilon)$ &Exp. &This Paper\\
 \hline
 size-$s$ Boolean Circuit & U & $\tilde O(s^2/\epsilon^2)$ &Exp. &\cite{ChakrabortyGM11}\\
\cline{2-5}
 & U & $\tilde O(s^2/\epsilon)$ &Exp. &This Paper\\
 \hline
 Functions with & U & $\tilde O(2^{2d}/\epsilon^2)$ &Exp. &\cite{ChakrabortyGM11}\\
\cline{2-5}
Fourier Degree $\le d$&D & $\tilde O(2^d/\epsilon+2^{2d})$ &Poly. &This Paper\\
\hline
$s$-Sparse Polynomial & U & $poly(s/\epsilon)+\tilde O(2^{2d})$ &Poly. &\cite{AlonKKLR03,DiakonikolasLMSW11}\\
\cline{2-5}
over $F_2$ of Degree $d$ & U & $\tilde O(s^2/\epsilon+2^{2d})$ &Poly. &This Paper+\cite{AlonKKLR03}\\
\cline{2-5}
&U & $\tilde O(s/\epsilon+s2^d)$ &Poly. &This Paper\\
\cline{2-5}
&D & $\tilde O(s^2/\epsilon+s2^d)$ &Poly. &This Paper\\
\hline
$s$-Sparse Polynomial & U & $\tilde O(s/\epsilon^2)$ &Exp. &\cite{ChakrabortyGM11}\\
\cline{2-5}
over $F_2$&U & $Poly(s/\epsilon)$ &Poly. &\cite{DiakonikolasLMSW11}\\
\cline{2-5}
&U & $\tilde O(s^2/\epsilon)$  &Poly. &This Paper\\
\hline
\end{tabular}
\end{center}
	\caption{A table of the results. In the table, $\tilde O(T)$ stands for $O(T\cdot poly(\log T))$, $U$ and $D$ stand for uniform and distribution-free model, and Exp and Poly stand for exponential and polynomial time.}
	\label{TABLE}
	\end{figure}

\subsection{Notations}
\color{black}
In this subsection, we give some notations that we use throughout the paper.

Denote $[n]=\{1,2,\ldots,n\}$. For $S\subseteq [n]$ and $x=(x_1,\ldots,x_n)$ we denote $x(S)=\{x_i|i\in S\}$. For $X\subset [n]$ we denote by $\{0,1\}^X$
the set of all binary strings of
length $|X|$ with coordinates indexed by $i\in X$. For $x\in \{0,1\}^n$ and $X\subseteq [n]$ we write $x_X\in\{0,1\}^{X}$ to denote the projection of $x$ over coordinates in $X$. We denote by $1_X$ and $0_X$ the all-one and all-zero strings in $\{0,1\}^{X}$, respectively. When we write $x_I=0$ we mean $x_I=0_I$. For $X_1,X_2\subseteq [n]$ where $X_1\cap X_2=\emptyset$ and $x\in \{0,1\}^{X_1}, y\in \{0,1\}^{X_2}$ we write $x\circ y$  to denote their concatenation, i.e.,
the string in $\{0,1\}^{X_1\cup X_2}$ that agrees with $x$ over coordinates in $X_1$ and agrees with $y$ over coordinates in~$X_2$. Notice that $x\circ y=y\circ x$. When we write $u=\circ_{w\in W}w$ we mean that $u$ is the concatenation of all the strings in $W$. For $X\subseteq [n]$ we denote $\overline{X}=[n]\backslash X=\{x\in [n]|x\not\in X\}$. We say that two strings $x$ and $y$ are {\it equal on $I$} if $x_I=y_I$.

Given $f,g:\{0,1\}^n\to \{0,1\}$ and a probability distribution $\D$ over $\{0,1\}^n$, we say that $f$ is $\epsilon$-{\it close to $g$ with respect to} $\D$ if $\Pr_{x\in\D}[f(x)\not=g(x)]\le \epsilon$, where $x\in \D$ means $x$ is chosen from $\{0,1\}^n$ according to the distribution $\D$. We say that $f$ is $\epsilon$-{\it far from $g$ with respect to} $\D$ if $\Pr_{x\in\D}[f(x)\not=g(x)]\ge \epsilon$.
For a class of Boolean functions $C$, we say that $f$ is $\epsilon$-{\it far from every function in $C$ with respect to} $\D$ if for every $g\in C$, $f$ is $\epsilon$-far from $g$ with respect to $\D$. We will use $U$ to denote the uniform distribution over $\{0,1\}^n$ or over $\{0,1\}^X$ when $X$ in clear from the context.

For a Boolean function $f$ and $X\subset [n]$, we say that $X$ is a {\it relevant set} of $f$ if there
are $a,b\in \{0,1\}^n$ such that $f(a)\not= f(b_X\circ a_{\overline X})$. We call the pair $(a,b)$ (or just $a$ when $b=0$) a {\it witness} of $f$ for the relevant set $X$. When $X=\{i\}$ then we say that $x_i$ is a {\it relevant variable} of $f$ and $a$ is {\it a witness} of $f$ for $x_i$. Obviously, if $X$ is relevant set of $f$ then $x(X)$ contains at least one relevant variable of $f$.

We say that the Boolean function $f:\{0,1\}^n\to \{0,1\}$ is a literal if $f\in \{x_1,\ldots,x_n,\overline{x_1},\ldots,\overline{x_n}\}$ where $\overline{x}$ is the negation of $x$.

Let $C$ be a class of Boolean functions $f:\{0,1\}^n\to \{0,1\}$. We say that $C$ is {\it closed under variable projection} if for every {\it projection} $\pi:[n]\to [n]$ and every $f\in C$, we have $f(x(\pi))\in C$ where $x(\pi):=(x_{\pi(1)},\cdots,x_{\pi(n)})$. We say that $C$ is {\it closed under zero projection} (resp. {\it closed under one projection}) if for every $f\in C$ and every $i\in [n]$, $f(0_{\{i\}}\circ x_{\overline{\{i\}}})$  (resp. $f(1_{\{i\}}\circ x_{\overline{\{i\}}})\in C$). We say it is closed under zero-one projection if is closed under zero and one projection.

Throughout the paper, we assume that the class $C$ is closed under variable and zero projection. After section~3, we assume that it is also closed under one projection.

\subsection{The Model}
In this subsection, we define the testing and learning models.

In the testing model, we consider the problem of testing a class of Boolean function $C$ in the uniform and distribution-free testing models. In the distribution-free testing model (resp. uniform model), the algorithm has access to a Boolean function $f$ via a black-box that returns $f(x)$ when a string $x$ is queried. We call this query {\it membership query} (MQ$_f$ or just MQ). The algorithm also has access to unknown distribution $\D$ (resp. uniform distribution) via an oracle that returns $x\in\{0,1\}^n$ chosen randomly according to the distribution $\D$ (resp. according to the uniform distribution). We call this query {\it example query} (ExQ$_\D$ (resp. ExQ)).

A {\it distribution-free testing algorithm},~\cite{GoldreichGR98}, (resp. {\it testing algorithm}) ${\cal A}$ for $C$ is an algorithm that, given as input a distance parameter $\epsilon$ and the above two oracles to a Boolean function $f$,
\begin{enumerate}
\item if $f\in C$ then ${\cal A}$ outputs ``accept'' with probability at least $2/3$.
\item if $f$ is $\epsilon$-far from every $g\in C$ with respect to the distribution $\D$ (resp. uniform distribution) then ${\cal A}$ outputs ``reject'' with probability at least $2/3$.
\end{enumerate}

We will also call ${\cal A}$ {\it a tester (or $\epsilon$-tester) for the class $C$} and an algorithm for {\it $\epsilon$-testing $C$}.

We say that ${\cal A}$ is {\it one-sided} if it always accepts when $f\in C$; otherwise, it is called {\it two-sided} algorithm. The {\it query complexity of ${\cal A}$} is the maximum number of queries ${\cal A}$ makes on any Boolean function~$f$.

In the learning models, $C$ is a class of representations of Boolean functions rather than a class of Boolean functions. Therefore, we may have two different representations in $C$ that are logically equivalent. In this paper, we assume that this representation is verifiable, that is, given a representation $g$, one can decide in polynomial time on the length of this representation if $g\in C$.

A {\it distribution-free proper learning algorithm} (resp. proper learning algorithm under the uniform distribution) ${\cal A}$ for $C$ is an algorithm that, given as input an accuracy parameter $\epsilon$, a confidence parameter $\delta$ and an access to both MQ$_f$ for the {\it target function} $f\in C$ and ExQ$_\D$, with unknown $\D$, (resp. ExQ or ExQ$_U$), with probability at least $1-\delta$, ${\cal A}$ returns $h\in C$ that is $\epsilon$-close to $f$ with respect to $\D$ (resp. with respect to the uniform distribution). This model is also called {\it proper PAC-learning with membership queries} under any distribution (resp. under the uniform distribution)~\cite{Ang88,Valiant84}.
A {\it proper exact learning algorithm}~\cite{Ang88} for $C$ is an algorithm that given as input a confidence parameter $\delta$ and an access to MQ$_f$ for $f\in C$, with probability at least $1-\delta$, returns $h\in C$ that is equivalent to $f$. The {\it query complexity of ${\cal A}$} is the maximum number of queries ${\cal A}$ makes on any Boolean function~$f\in C$.

\section{Overview of the Distribution-Free Tester}\label{Oded}

\subsection{Preface}
Our approach refers to testing properties that are (symmetric)
sub-classes $C$ of $k$-juntas; that is, $f:\bitset^n\to\bitset$
has the property if there exists a function $f':\bitset^k\to\bitset$
that belongs to a predetermined class $C'$ of functions (over $k$-bit strings)
such that $f(x)=f'(x_\Gamma)$ for some $k$-subset~$\Gamma$.
Our new approach builds upon the ``testing by implicit sampling''
approach of Diakonikolas \etal~\cite{DiakonikolasLMORSW07},
while extending it from the case of uniform distribution to
the case of arbitrary unknown distributions
(i.e., the distribution-free model).

This allows us to present (almost optimal) {\em distribution-free}\/
testers for classes of properties that are sub-classes of $k$-juntas,
which correspond to classes of $k$-bit long Boolean functions.

While we follow Diakonikolas \etal~\cite{DiakonikolasLMORSW07}
in considering learning algorithms for the underlying classes,
our approach is also applicable to testing algorithms
(see~\cite[Sec.~6.2]{Goldreich17}).%

Let us again spell out our task.
For a class $ C$ of $n$-bit long Boolean functions
and a proximity parameter $\e$,
given samples from an unknown distribution $\D$
and oracle access to a function $f:\bitset^n\to\bitset$,
we wish to distinguish the case that $f\in C$ from
the case that $f$ is $\e$-far from $ C$.
Recall that $ C$ is a (symmetric) class consisting of
a symmetric subclass of $k$-juntas $ C'$; that is, $f\in C$
if and only if there exists a $k$-subset $\Gamma\subset[n]$
and $f'\in C'$ such that $f(x)=f'(x_\Gamma)$,
where $x_{\{i_1,\ldots,i_k\}}=(x_{i_1},\ldots,x_{i_k})$.
Actually, we also assume that $ C'$ is closed under
zero projection.

\subsection{A Bird's Eye View}
The basic strategy is to consider a random partition of $[n]$
to $r=O(k^2)$ parts, denoted $(X_1,\ldots,X_r)$, while relying on the
fact that, whp, each $X_i$ contains at most one influential variable.
Assuming that $f\in C$,
first we determine a set $I$ of at most $k$ indices
such that $\cup_{i\in[n]\setminus I}X_i$ contains
no ``significantly influential'' variables of $f$.
Suppose that $f':\bitset^k\to\bitset$, $f'\in C'$, is a function
that corresponds to the tested function $f:\bitset^n\to\bitset$,
and that $I\subset[n]$ is indeed the collection of all sets
that contain influential variables.
The crucial ingredient is devising a method that allows to
generate samples of the form $(x',f'(x'))$,
when given samples of the form $(x,f(x))$ (for $x\in\D$).
We stress that we cannot afford to find the influential variables,
and so this method works without determining these locations.
Using this method, we can test whether $f'$ belongs to the
underlying class $ C'$; hence, we test $f$ by implicitly sampling
the projection of $\D$ on the (unknown) influential variables.

The method employed by Diakonikolas \etal~\cite{DiakonikolasLMORSW07}
only handles the uniform distribution
(i.e., the case that $\D$ is uniform over $\bitset^n$),
and so it only yields testers for the standard testing model
(rather than for the distribution-free testing model).
Furthermore, their method as well as the identification of the set $I$
rely heavily on the notion of influence of sets, where the influence
of a set $S$ of locations on the value of a function is defined
as $\prob_{x',x''\in\bitset^n:x'_S=x''_S}[f(x')\!\neq\!f(x'')]$.
However, this notion refers to the uniform distribution
(over $\bitset^n$)
and does not seem adequate for the distribution-free context
(e.g., for\footnote{The addition operation in this paper is over the binary field $F_2$} $f(x)=x_1+x_2$ we may get
 $\prob_{x',x''\in\D:x'_1=x''_1}[f(x')\!\neq\!f(x'')]=0$).

We use a different way
of identifying the set $I$
and for generating samples for the underlying function $f'$.
Loosely speaking, we identifies $I$ as the set of indices $i$
for which $f(1_{X_i}\circ 0_{\overline{X_i}})\neq f(0^n)$, where (recall that) $1_S\circ 0_{\overline{S}}$ is a string
that is~1 on the locations in $S$ and is~0 on other locations.
({\em Be warned that this description is an over-simplification}!)
This means that for every $i\in I$ and $x\in\bitset^n$,
the value of $x$ at the influential variable in the set $X_i$
(a variable whose location is unknown to us!),
equals $f(x')+ f(0^n)$
where $x'=x_{X_i}\circ 0_{\overline X_i}$, i.e., $x'_j=x_j$ if $j\in X_i$ and $x'_j=0$ otherwise.%
\footnote{Indeed, if $\tau(i)\in X_i$ is the index of the
(unique) influential variable that resides in the set $X_i$,
then
$$f(x') = x_{\tau(i)}\cdot f(1_{X_i}\circ 0_{\overline{X_i}})
            +(x_{\tau(i)}+1)\cdot f(0^n)
        = x_{\tau(i)} + f(0^n)$$
since $f(1_{X_i}\circ 0_{\overline{X_i}})+ f(0^n)=1$.}
%
Note that the foregoing holds when $f\in C$;
in general, we can test whether $x\mapsto f(x')+ f(0^n)$
is close to a dictatorship (under the uniform distribution)
and reject otherwise, whereas if the mapping is close
to a dictatorship, we can self-correct it.

To sample the distribution $\D_{\Gamma}$,
where $\Gamma$ is the influential variables in $X_I=\cup_{i\in I}X_i$,
we sample $\D$ and determine the value of the influential
variable in each set $X_i$, for $i\in I$.
Queries to the function $f'$ are answered by querying $f$
such that the query $y=y_1\cdots y_k$ is mapped to
the query $\ext(y)$ such that\footnote{Notice that $\ext(y)=0_{\overline{X_I}}\circ \left(\underset{i\in I}{\circ} (y_i)_{X_i}\right)$ - Here $(y)_X=1_X$ if $y=1$ and $0_X$ if $y=0$.} $\ext(y)_j=y_i$
if $j$ belongs to the $i^\xth$ set in the collection $I$
(and $\ext(y)_j=0$ if $j\in[n]\setminus X_I$).
Effectively, we query the function $F:\bitset^n\to\bitset$
defined as $F(x)=f(\ext(x_\Gamma))$,
and this makes sense provided that $F$ is close to $f$
(under the distribution $\D$).
To test the latter hypothesis condition, we sample $\D$ and for each
sample point $x$ we compare $f(x)$ to $F(x)$,
where here we again use the ability to determine
the value of the influential variable in each set.
Specifically, $\ext(x_\Gamma)$ is computed by determining
the value of $x_\Gamma$ (without knowing $\Gamma$),
and using our knowledge of $(X_i)_{i\in I}$.

We warn that the foregoing description presumes
that we have correctly identified the collection $I$
of all sets containing an influential variable.
This leaves us with two questions:
The first question is, how do we identify the set $I$.
(Note that the influence of a variable may be as low as $2^{-k}$,
whereas we seek algorithms of $\poly(k)$-complexity.)
The solution (to be presented in Section~\ref{stage1:sec})
will be randomized, and will have one-sided error;
specifically, we may fail to identify some sets that
contain influential variables, but will never include
in our collection sets that have no influential variables.
Consequently, $f(1_{X_i}\circ 0_{\overline{X_i}})\neq f(0^n)$ may not hold for some $i\in I$,
and (over-simplifying again)
we shall seek instead some $v^{(i)}\in\bitset^n$
such that $f(v^{(i)})\neq f(w^{(i)})$,
where $w^{(i)}=v^{(i)}_{\overline{X_i}}\circ 0_{X_i}$ (i.e., $w^{(i)}_j=v^{(i)}_j$ if $j\in[n]\setminus X_i$
and $w^{(i)}_j=0$ otherwise).
Second, as before, for every $i\in I$ and $x\in\bitset^n$,
we wish to determine the value in $x$ of the influential variable
in the set $X_i$ (a variable whose location is unknown to us!).
This is done by observing that if $f\in C$ then this value
equals $f(x')+ f(v^{(i)})+1$
where $x'=x_{X_j}\circ v^{(i)}_{\overline{X_j}}$ (i.e., $x'_j=x_j$ if $j\in X_i$ and $x'_j=v^{(i)}_j$ otherwise).%
\footnote{Indeed, if $\tau(i)\in X_i$ is the index of the
(unique) influential variable that resides in the set $X_i$,
then
$$f(x') = x_{\tau(i)}\cdot f(v^{(i)})
            +(x_{\tau(i)}+1)\cdot f(w^{(i)})
        = x_{\tau(i)} + f(v^{(i)})+1$$
since $f(v^{(i)})+ f(w^{(i)})=1$.}
Again, we need to test whether $x\mapsto f(x')+ f(v^{(i)})+1$
is a dictatorship, and use self-correction.

\subsection{The Actual Tester}
As warned, the above description is an over-simplification,
and the actual way in which the set $I$ is identified and used
is more complex.

We fix a random partition of $[n]$ to $r=O(k^2)$ parts,
denoted $(X_1,\ldots, X_r)$.
If $f\in C$, then, with high probability, each $X_i$
contains at most one influential variable, denoted $\tau(i)$.
We assume that this is the case when providing intuition
throughout this section.

\subsubsection{Stage 1: Finding $I$ and corresponding $v^{(i)}$}\label{Stage01}
\label{stage1:sec}
Our goal is to find a collection $I$ of at most $k$ sets
such that the function $h_I$ is $\e/3$-close to $f$
(w.r.t distribution $\D$),
where $h_I$ is defined as $h_I(x)=f(x_{X_I}\circ 0_{\overline{X_I}})$
and $X_I=\cup_{i\in I} X_i$.
In addition, for each $i\in I$, we seek a witness $v^{(i)}$ for the
fact that $f$ depends on some variable in $X_i$;
that is, $f(v^{(i)}) \neq f(w^{(i)})$ for some $v^{(i)}$
that differ from $w^{(i)}$ only on $X_i$.

\noindent{\bf The procedure.}

We proceed in iterations, starting with $I=\emptyset$.

\BE
\item
We sample $\D$ for $O(1/\e)$ times,
trying to find $u\in\D$ such that $f(u) \neq h_I(u)$.

(Note that if $I=\emptyset$, then $h_I(u) = f(0^n)$.
In general, we seek $u$ such that $f(u) \neq f(u_{X_I}\circ 0_{\overline{X_I}})$.

If no such $u$ is found, then we set $h=h_I$
and proceed to Stage~2.
In this case, we may assume that $h_I$ is $\e/3$-close to $f$
(w.r.t $\D$).

\item
Otherwise (i.e., $f(u)\neq h_I(u)$),
we find an $i\in[m]\setminus I$ and $v^{(i)}$
such that $h_I(v^{(i)}) \neq h_{I\cup\{i\}}(v^{(i)})$,
which means that $X_i$ contains an influential variable
and $v^{(i)}$ is the witness for the sensitivity that we seek.
We set $I\gets I\cup\{i\}$ and proceed to the next iteration.

(We find this $i$ by binary search that seeks $i$ and $S$
such that $h_{I\cup S\cup\{i\}}(u) \neq h_{I\cup S}(u)$,
which means that $v^{(i)}$ equals $u$ in locations outside $S$
and is zero on $S$.)%
\footnote{By Step~1, we have $h_{S'\cup I}(u)\neq h_{S''\cup I}(u)$,
for $S'=[n]\setminus I$ and $S''=\emptyset$,
and in each iteration we cut $S'\setminus S''$ by half
while maintaining $h_{S'\cup I}(u)\neq h_{S''\cup I}(u)$.}
\EE
Once the iterations are suspended (due to not finding $u$),
we reject if $|I| > k$, and continue to the Stage~2 otherwise.
Recall that in the latter case $h=h_I$ is $\e/3$-close to $f$ (w.r.t $\D$).

Note that if $f\in C$, then $I$ contains only sets
that contain variables of the $k$-junta,
and so we never reject in this stage.
In general, if $i\in I$,
then $h_{I\setminus\{i\}}(v^{(i)}) \neq h_{I}(v^{(i)})$,
which implies that $f(x')\neq f(x'')$,
where $x'$ and $x''$ differ only on $X_i$
(e.g., $x''_{X_I}=v^{(i)}_{X_I}$ and $x''_j=0$ if $j\not\in X_I$).

\subsubsection{Stage 2: Extracting the value of an influential variable}
\label{stage2:sec}
Given a collection $I$ as found in Stage 1
(and a sensitivity witness $v^{(i)}$ for each $i\in I$),
let $h=h_I$ and recall that $h$ is close to $f$ w.r.t $\D$.
For each $i\in I$, given $x\in\bitset^{n}$,
we wish to determine the value of $x$
at the influential variable that resides in $X_i$.

For each $i\in I$, we define $\ev_i:\bitset^{|X_i|}\to\bitset$
such that $\ev_i(z) = h_I(y)$,
where $y_{X_i}=z$ and $y_{{\ov{X_i}}}=v^{(i)}_{{\ov{X_i}}}$.
Suppose that $f\in C$, and recall that $\tau(i)\in X_i$
denotes the location of the influential variable in $X_i$.
Let $\si(i)$ denote the index of $\tau(i)$ in $X_i$
(i.e., the $\si(i)^\xth$ element of $X_i$ is $\tau(i)$).
Then, in this case, $\ev_i$ is either a dictatorship
or an anti-dictatorship.
In particular, if $\ev_i$ is a dictatorship,
then $\ev_i(z)=z_{\si(i)}$ (and otherwise $\ev_i(z)=z_{\si(i)}+1$).

For each $i\in I$,
we test whether $\ev_i$ is a dictatorship or anti-dictatorship,
where testing is w.r.t the uniform distribution over $\{0,1\}^{|X_i|}$.
Note that we also check whether $\ev_i$ is a dictatorship
or anti-dictatorship.
If the tester (run with proximity parameter $0.1$) fails, we reject.
Otherwise (i.e., if we did not reject),
we can compute $\ev_i$ via self-correction on $h_I$;
that is, to compute $\ev_i$ at $z$,
we select $u\in\bitset^{|X_i|}$ at random,
and return $\ev_i(z+u)+\ev_i(u)$,
which (w.h.p.) equals $(z+u)_{\si(i)}+ u_{\si(i)}=z_{\si(i)}$.

Hence, we always continue to Stage~3 if $f\in C$,
and whenever we continue to Stage~3 we can compute
all $\ev_i$ (for $i\in I$) via self-correction.

\subsubsection{Stage 3: Emulating a tester of $ C'$}\label{thirds}
Recall that when reaching this stage,
we may assume that $h=h_I$ is $\e/3$-close to $f$ (w.r.t $\D$).
Also recall that $h_I(x)$ depends only on $x_{X_I}$,
where $X_I=\cup_{i\in I}X_i$,
and that by Stage~2 we may assume
that $\ev_i(z)=z_{\si(i)}$ (for every $i\in I$ and almost all $z$).
In light of the forgoing,
we define $F:\bitset^n\to\bitset$
such that $F(x)=h(x')$ where $x'_{X_i}=(x_{\si(i)},\ldots,x_{\si(i)})$
(i.e., $x'_j=(x_{X_i})_{\si(i)}=x_{\tau(i)}$ if $j\in X_i$)%
\footnote{In general, $\tau(i)$ denotes the location in $[n]$
of the $\si(i)^\xth$ element of $X_i$.}
and $x'_j=0$ otherwise.
(Indeed, if $f\in C$, then $F(x)=h(x)$,
since $h(y)$ depends only on $(y_{\tau(i)})_{i\in I}$.
Using hypothesis that $ C'$ (and so $ C$) is closed under
zero projection, it follows that $F\in C$.)

We observe that if $F$ is $\e/3$-close (w.r.t $\D$)
to both $h$ and $ C$,
then $f$ must be $\e$-close to $ C$
(since $f$ is $\e/3$-close to $h$).
Hence, we test both these conditions.
Specifically, using our ability to sample $\D$,
query $f$, and determine the value of the influential
variables in $X_I$, we proceed as follows:

\BE
\item Test whether $F=h$, where testing is w.r.t the distribution $\D$
and proximity parameter $\e/3$.

This is done by taking $O(1/\e)$ samples of $\D$,
and comparing the values of $F$ and $h$ on these sample points.
Recall that $h(u)=h_I(u)=f(u_{X_I}\circ 0_{\overline{X_I}})$.

The value of $F$ on $u$ is determined as follows.
\BE
\item For every $i\in I$, if $\ev_i$ is a dictatorship,
then set $v_i$ to equal the self-corrected value of $\ev_i(u_{X_i})$,
where $\ev_i$ is as defined in Stage~2.
Otherwise (i.e., when $\ev_i$ is an anti-dictatorship),
we set $v_i$ to equal the self-corrected value of $\ev_i(u_{X_i})+1$.
\item
Return the value $h(u')$, where $u'_j=v_i$ if $j\in X_i$
and $u'_j=0$ otherwise.
\EE
Indeed, $F=h$ always passes this test,
whereas $F$ that is $\e/3$-far from $h$ (w.r.t $\D$)
is rejected w.h.p.

\item Test whether $F$ is in $ C$,
where testing is w.r.t the distribution $\D$
and proximity parameter $\e/3$.
This is done by testing whether $F'$ is in $ C$,
where $F'(z)=F(x)$ such that $x_j=z_i$ if $j$ is in
the $i^\xth$ set in the collection $I$, and $x_j=0$ otherwise.
Here we use a distribution-free tester,
and analyze it w.r.t the distribution $\D_I$.
Toward this end, we need to samples $\D_I$
as well as answer queries to $F'$,
where both tasks can be performed as in the prior step.

Recall that if $f\in C$, then $F\in C$,
and this test will accept (w.h.p.),
whereas if $F$ is $\e/3$-far from $ C$
the test will reject (w.h.p.).
\EE
We conclude that if we reached Stage~3 and $f\in C$
(resp., $f$ is $\e$-far from $ C$),
then we accept (resp., reject) w.h.p.

\subsection{Digest: Our approach vs the original one~\cite{DiakonikolasLMORSW07}}
Our new approach differs from
the original approach of Diakonikolas \etal~\cite{DiakonikolasLMORSW07}
in two main aspects:
\BE
\item
In~\cite{DiakonikolasLMORSW07}, sets that contain influential variables
are identified according to their influence, which is defined
with respect to the uniform distribution. This definition
seems inadequate when dealing with arbitrary distributions.
Instead, we identify such a set by searching
for two assignments that differ only on this set and yield
different function values. The actual process is iterative
and places additional constraints on these assignments
(as detailed in Section~\ref{stage1:sec}).
\item
In~\cite{DiakonikolasLMORSW07}, given an assignment to the function,
the value of the unique influential variable that resides
in a given set $S$ is determined by approximating the influence
of two subsets of $S$ (i.e., the subsets of locations assigned
the value~0 and~1, respectively). In contrast, we
determines this value by defining an auxiliary function,
which depends on the unknown influential variable,
and evaluating this function (via self-correction w.r.t
the uniform distribution; see Section~\ref{stage2:sec}).
\EE

\subsection{More on Our Techniques}

In this section, we give a detailed overview of our techniques.

\subsubsection{Testing Subclasses of $k$-Junta}
For testing a subclass $C$ of $k$-Junta that is closed under variable and zero projections, we use {\bf Tester$C$} in Figure~\ref{Tester1}. We first note that {\bf Tester$C$} rejects if any procedure that it calls rejects.

First, {\bf Tester$C$} calls the procedure {\bf ApproxTarget}, in Figure~\ref{A31}. {\bf ApproxTarget} partitions the (indices of the) variables $[n]$ into $r=O(k^2)$ disjoint sets $X_1,\ldots,X_r$. Since $C\subseteq k-$Junta it follows that, with high probability (whp), if $f\in C$ then different relevant variables of $f$ fall into different sets. Therefore, if $f\in C$, whp, every $X_i$ contains at most one relevant variable of $f$. The procedure then binary searches for enough relevant sets $\{X_i\}_{i\in I}$ such that, whp, for $X=\cup_{i\in I} X_i$, $h=f(x_X\circ 0_{\overline{X}})$ is $(\epsilon/3)$-close to $f$ with respect to $\D$. If the procedure finds more than $k$ relevant sets of $f$ then there are more than $k$ relevant variables in $f$ and it rejects. If $f\in C$ then the procedure does not reject and, since $C$ is closed under zero projection, $h\in C$. Since, whp, $h$ is $(\epsilon/3)$-close to $f$ with respect to $\D$, it is enough to distinguish whether $h$ is in $C$ or $(2\epsilon/3)$-far from every function in $C$ with respect to $\D$.
{\bf ApproxTarget} also finds, for each relevant set $X_i$, $i\in I$, a witness  $v^{(i)}\in \{0,1\}^{n}$ of $h$ for $X_i$. That is, for every $i\in I$, $h(v^{(i)})\not=h(0_{X_i}\circ v^{(i)}_{\overline{X_i}})$. If $f\in C$, then $h\in C$ and, whp, for each $i\in I$, $h(x_{X_i}\circ v^{(i)}_{\overline{X_i}})$ is a literal. {\bf ApproxTarget} makes $\tilde O(k/\epsilon)$ queries.

In the second stage, the tester calls the procedure {\bf TestSets}, in Figure~\ref{A32}. {\bf TestSets} verifies, whp, that for every $i\in I$, $h(x_{X_i}\circ v^{(i)}_{\overline{X_i}})$ is $(1/30)$-close to some literal in $\{x_{\tau(i)},\overline{x_{\tau(i)}}\}$ for some $\tau(i)\in X_i$, with respect to the uniform distribution. If $f\in C$, then $h\in C$ and, whp, for each $i\in I$, $h(x_{X_i}\circ v^{(i)}_{\overline{X_i}})$ is a literal and therefore {\bf TestSets} does not reject. Notice that if $f\in C$, then, whp, $\Gamma:=\{x_{\tau(i)}\}_{i\in I}$ are the relevant variables of $h$. This test does not give $\tau(i)$ but the fact that $h(x_{X_i}\circ v^{(i)}_{\overline{X_i}})$ is close to $x_{\tau(i)}$ or $\overline{x_{\tau(i)}}$ can be used to find the value of $u_{\tau(i)}$ in every assignment $u \in \{0,1\}^{n}$ without knowing $\tau(i)$. The latter is done, whp, by the procedure {\bf RelVarValues}. See Figure~\ref{A3}. Both procedures make $\tilde O(k)$ queries.

Recall that for $\xi\in\{0,1\}$, $\xi_X$ is the all $\xi$ vector in $\{0,1\}^X$.
Then the tester defines the Boolean function $F=h(0_{\overline{X}}\circ\circ_{i\in I}(x_{\tau(i)})_{X_i})$ on the variables $\{x_{\tau(j)}\}_{j\in I}$, that is, the function $F$ is obtained by substituting in $h$ for every $i\in I$ and every $x_j\in x(X_i)$ the variable $x_{\tau(i)}$. Since $C\subseteq k$-Junta and $C$ is closed under variable and zero projections, $\tau(i)\in X_i$ and, whp, $\Gamma=\{x_{\tau(i)}\}_{i\in I}$ are the relevant variables of $h$ we have:
\begin{itemize}
\item If the function $f$ is in $C$ then, whp,  $F=h\in C$ and $F$ depends on all the variables in $\Gamma=\{x_{\tau(j)}\}_{j\in I}$.
\end{itemize}
If $h$ is $(2\epsilon/3)$-far from every function in $C$ with respect to $\D$ then either $h$ is $(\epsilon/3)$-far from $F$ with respect to $\D$ or $F$ is $(\epsilon/3)$-far from every function in $C(\Gamma)$ with respect to $\D$ where $C(\Gamma)$ is the set of all functions in $C$ that depends on all the variables in $\Gamma$.
Therefore,
\begin{itemize}
\item If the function $f$ is $\epsilon$-far from every function in $C$ then, whp, either
\begin{enumerate}
\item $h$ is $(\epsilon/3)$-far from $F$ with respect to $\D$ or
\item $F$ is $(\epsilon/3)$-far from every function in $C(\Gamma)$ with respect to $\D$.
\end{enumerate}
\end{itemize}
 Therefore, it remains to do two tests. The first is testing whether $h =F$ given that $h$ is either $(\epsilon/3)$-far from $F$ with respect to $\D$ or $h=F$. The second is testing whether $F\in C$ given that $F$ is either $(\epsilon/3)$-far from every function in $C(\Gamma)$ with respect to~$\D$ or $f\in C(\Gamma)$.

The former test, $h=F$, can be done, whp, by choosing $O(1/\epsilon)$ strings $u\in\{0,1\}^n$ according to the distribution $\D$ and testing whether $F(u)=h(u)$. To compute $F(u)$ we need to find $\{u_{\tau(i)}\}_{i\in I}$, which can be done by the procedure {\bf RelVarValues}. Therefore, each query to $F$ requires one call to the procedure {\bf RelVarValues} that uses $\tilde O(k)$ queries to $f$. Thus, the first test can be done using $\tilde O(k/\epsilon)$ queries. This is done in the procedure {\bf Close$fF$} in Figure~\ref{Close01}.

Notice that, thus far, all the above procedures run in polynomial time and make $\tilde O(k/\epsilon)$ queries.

Testing whether $F\in C$ can be done, whp, by choosing $O((\log |C(\Gamma)|)/\epsilon)$ strings $u\in\{0,1\}^n$ according to the distribution $\D$ and testing whether $F(u)=g(u)$ for every $g\in C(\Gamma)$. Notice here that the time complexity is $poly(|C(\Gamma)|)$ which is polynomial only when $C(\Gamma)$ contains polynomial number of functions.

If the distribution is uniform, we do not need to use {\bf RelVarValues} to find $\{u_{\tau(i)}\}_{i\in I}$ because when the distribution of $u$ is uniform the distribution of $\{u_{\tau(i)}\}_{i\in I}$ is also uniform. Therefore we can just test whether $F(u)=g(u)$ for every $g\in C(\Gamma)$ for uniform $\{u_{\tau(i)}\}_{i\in I}$. Then computing $F(u)$ for random uniform string $u$ can be done in one query to $h$. Thus, for the uniform distribution, the algorithm makes $\tilde O((\log |C(\Gamma)|)/\epsilon)$ queries to $f$. This is the procedure {\bf Close$FCU$} in Figure~\ref{Close02}.

If the distribution is unknown then each computation of $F(u)$ for a random string $u$ according to the distribution $\D$ requires choosing $u$ according to the distribution $\D$, then extracting $\{u_{\tau(i)}\}_{i\in I}$ from $u$ and then substituting the values $\{u_{\tau(i)}\}_{i\in I}$ in $F$. This can be done by the procedure {\bf RelVarValues} using $\tilde O(k)$ queries to $h$. Therefore, for unknown distribution the algorithm makes $\tilde O((k\log |C(\Gamma)|)/\epsilon)$ queries to $f$. This is the procedures {\bf Close$FCD$} in Figure~\ref{Close02}.

As we mentioned before the time complexity of {\bf Close$FCU$} and {\bf Close$FCD$} is polynomial only if $|C(\Gamma)|$ is polynomial.
When $|C(\Gamma)|$ is exponential, we solve the problem via learning theory. We find a
proper learning algorithm ${\cal A}$ for $C(\Gamma)$. We run ${\cal A}$ to learn $F$. If the algorithm fails, runs more time than it should, asks more queries than it should or outputs a hypothesis $g\not\in C$ then we know that, whp, $F\not\in C(\Gamma)$. Otherwise, it outputs a function $g\in C(\Gamma)$ and then, as above, we test whether $g=F$ given that $g$ is $(\epsilon/3)$-far from $F$ or $g=F$.

Therefore, for the uniform distribution, if the proper learning algorithm for $C$ makes $m$ MQs and $q$ ExQs then the tester makes $m+q+O(1/\epsilon)$ queries. If the distribution is unknown, then the tester makes $m+\tilde O(kq+k/\epsilon)$ queries.

\subsubsection{Testing Classes that are Close to $k$-Junta}
To understand the intuition behind the second technique, we demonstrate it for testing $s$-term DNF.

The tester first runs the procedure {\bf Approx$C$} in Figure~\ref{A3f}. This procedure is similar to the procedure {\bf ApproxTarget}. It randomly uniformly partitions the variables to $r=4c^2(c+1)s\log(s/\epsilon)$ disjoint sets $X_1,\ldots,X_r$ and finds relevant sets $\{X_i\}_{i\in I}$. Here $c$ is a large constant. To find a new relevant set, it chooses two random uniform strings $u,v\in \{0,1\}^n$ and verifies if $f(u_X\circ v_{\overline{X}})\not=f(u)$ where $X$ is the union of the relevant sets that it has found thus far. If $f(u_X\circ v_{\overline{X}})\not=f(u)$ then the binary search finds a new relevant set.

In the binary search for a new relevant set, the procedure defines a set $X'$ that is equal to the union of half of the sets in $\{X_i\}_{i\not\in I}$. Then either $f(u_{X\cup X'} \circ v_{\overline{X'}})\not=f(u)$ or $f(u_{X\cup X'} \circ v_{\overline{X'}})\not= f(u_X\circ v_{\overline{X}})$. Then it recursively does the above until it finds a new relevant set $X_\ell$.

It is easy to show that if $f$ is $s$-term DNF then, whp, for all the terms $T$ in $f$ of size at least $c^2\log(s/\epsilon)$, for all the random uniform strings $u,v$ chosen in the algorithm and for all the strings $w$ generated in the binary search, $T(u_X\circ v_{\overline{X}})=T(u)=T(w)=0$. Therefore, when $f$ is $s$-term DNF, the procedure, whp,  runs as if there are no terms of size greater than $c^2\log(s/\epsilon)$ in $f$. This shows that, whp, each relevant set that the procedure finds contains at least one variable that belongs to a term of size at most $c^2\log(s/\epsilon)$ in $f$. Therefore, if $f$ is $s$-term DNF, the procedure, whp,  does not generate more than $c^2s\log(s/\epsilon)$ relevant sets. If the procedure finds more than $c^2s\log(s/\epsilon)$ relevant sets then, whp, $f$ is not $s$-term DNF and therefore it rejects.

Let $R$ be the set of all the variables that belong to the terms in $f$ of size at most $c^2\log(s/\epsilon)$. The procedure returns $h=f(x_X\circ w_{\overline{X}})$ for random uniform $w$ where $X$ is the union of the relevant sets $X=\cup_{i\in I}X_i$ that is found by the procedure. If $f$ is $s$-term DNF then since $r=4c^2(c+1)s\log(s/\epsilon)$ and the number of relevant sets is at most $c^2s\log(s/\epsilon)$, whp, at least $(1/2)c\log(s/\epsilon)$ variables in each term of $f$ that contains at least $c\log(s/\epsilon)$ variables not in $R$ falls outside $X$ in the partition of $[n]$. Therefore, for random uniform $w$, whp, terms $T$ in $f$ that contains at least $c\log(s/\epsilon)$ variables not in $R$ satisfies $T(x_X\circ w_{\overline{X}})=0$ and therefore, whp, are vanished in $h=f(x_X\circ w_{\overline{X}})$. Thus, whp, $h$ contains all the terms that contains variables in $R$ and at most $cs\log(s/\epsilon)$ variables not in $R$. Therefore, whp, $h$ contains at most $c(c+1)s\log(s/\epsilon)$ relevant variables. From this, and using similar arguments as for the procedure {\bf ApproxTarget} in the previous subsection, we prove that,
{\bf Approx$C$} makes at most $\tilde O(s/\epsilon)$ queries and
\begin{enumerate}
\item If $f$ is $s$-term DNF then, whp, the procedure outputs $X$ and $w$ such that
\begin{itemize}
\item $h=f(x_X\circ w_{\overline{X}})$ is $s$-term DNF.
\item The number of relevant variables in $h=f(x_X\circ w_{\overline{X}})$ is at most $O(s\log(s/\epsilon))$.
\end{itemize}
\item If $f$ is $\epsilon$-far from every $s$-term DNF then the procedure either rejects or outputs $X$ and $w$ such that, whp, $h=f(x_X\circ w_{\overline{X}})$ is $(3\epsilon/4)$-far from every $s$-term DNF.
\end{enumerate}
We can now run {\bf Tester$C$} (with $3\epsilon/4$) on $h$ from the previous subsection for testing $C^*$ where $C^*$ is the set of $s$-term DNF with $k=O(s\log(s/\epsilon))$ relevant variables. All the procedures makes $\tilde O(s/\epsilon)$ queries except {\bf Close$FCU$} that makes $\tilde O(s^2/\epsilon)$ queries.  This is because that the size of the class $C^*(\Gamma)$ is $2^{\tilde O(s^2)}$ and therefore {\bf Close$FCU$} makes $\tilde O(s^2/\epsilon)$ queries. This gives a tester that makes $\tilde O(s^2/\epsilon)$ queries which is not optimal.

Instead, we consider the class $C'$ of $s$-term DNF with $O(s\log(s/\epsilon))$ variables and terms of size at most $c\log(s/\epsilon)$ and show that, in {\bf Close$FCU$}, whp,  all the terms $T$ of size greater than $c\log(s/\epsilon)$ and all the random strings $u$ chosen in the procedure satisfies $T(u)=0$ and therefore it runs as if the target function $h$ has only terms of size at most $c\log(s/\epsilon)$. This gives a tester that makes $\tilde O(s/\epsilon)$ queries.

As in the previous section, all the procedures run in polynomial time except {\bf Close$FCU$}. For some classes, we replace {\bf Close$FCU$} with polynomial time learning algorithms and obtains polynomial time testers.

\color{black}

\section{Preparing the Target for Accessing the Relevant Variables}

In this Section we give the three procedures {\bf ApproxTarget}, {\bf TestSets} and {\bf RelVarValues}.

\subsection{Preliminaries}

In this subsection, we give some known results that will be used in the sequel.

The following lemma is straightforward
\begin{lemma}\label{trivial01} If $\{X_i\}_{i\in [r]}$ is a partition of $[n]$ then for any Boolean function $f$ the number of relevant sets $X_i$ of $f$ is at most the number of relevant variables of $f$.
\end{lemma}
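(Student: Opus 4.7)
The plan is to show two things: first, that every relevant set $X_i$ of $f$ contains at least one relevant variable of $f$; second, that because $\{X_i\}_{i\in[r]}$ is a partition, the relevant variables obtained from distinct relevant sets are themselves distinct. Combined, these give an injection from the collection of relevant sets of $f$ into the collection of relevant variables of $f$, which is exactly the claimed inequality.

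For the first step, I would unfold the definition. Suppose $X_i$ is a relevant set of $f$, witnessed by $a,b\in\{0,1\}^n$ with $f(a)\neq f(b_{X_i}\circ a_{\overline{X_i}})$. Enumerate the coordinates of $X_i$ as $j_1,\ldots,j_t$ and build a hybrid sequence $a=z^{(0)},z^{(1)},\ldots,z^{(t)}=b_{X_i}\circ a_{\overline{X_i}}$ where $z^{(\ell)}$ agrees with $b$ on $\{j_1,\ldots,j_\ell\}$ and with $a$ everywhere else. Since $f(z^{(0)})\neq f(z^{(t)})$, there must exist some index $\ell$ with $f(z^{(\ell-1)})\neq f(z^{(\ell)})$. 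The two strings $z^{(\ell-1)}$ and $z^{(\ell)}$ differ only in the single coordinate $j_\ell\in X_i$, so taking $a'=z^{(\ell-1)}$ and letting $b'$ be any string with $b'_{j_\ell}=b_{j_\ell}$ we see that $x_{j_\ell}$ is a relevant variable of $f$ with $j_\ell\in X_i$.

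For the second step, since $\{X_i\}_{i\in[r]}$ is a partition of $[n]$ the sets $X_i$ are pairwise disjoint. The coordinate $j_\ell$ produced above for $X_i$ lies in $X_i$, so relevant variables produced for distinct relevant sets necessarily lie in distinct coordinates. This gives the required injection from relevant sets of $f$ to relevant variables of $f$, and the lemma follows.

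There is no real obstacle here; the only thing to be careful about is bookkeeping the witness strings correctly so that the single coordinate on which the two consecutive hybrids differ is indeed in $X_i$, which is automatic from how we construct the hybrid sequence (flipping only coordinates of $X_i$, one at a time).
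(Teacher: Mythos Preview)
Your proof is correct. The paper does not actually give a proof of this lemma; it simply labels it ``straightforward'' and earlier remarks in the Notations subsection that, obviously, a relevant set contains at least one relevant variable. Your hybrid argument makes that observation precise and, combined with the disjointness of the parts, yields the injection exactly as required.
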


We will use the following folklore result that is formally proved in~\cite{LiuCSSX18}.
\begin{lemma}\label{BiSe} Let $\{X_i\}_{i\in [r]}$ be a partition of $[n]$. Let $f$ be a Boolean function and $u,w\in \{0,1\}^n$. If $f(u)\not= f(w)$ then a relevant set $X_\ell$ of $f$ with a string $v\in \{0,1\}^n$ that satisfies $f(v)\not=f(w_{X_\ell}\circ v_{\overline{X_\ell}})$ can be found using $\lceil \log_2 r\rceil$ queries.
\end{lemma}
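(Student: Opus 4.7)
The plan is to prove this by a standard binary-search recursion over the index set $[r]$, maintaining an invariant about the current pair of strings.

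The invariant I would carry is the following: at each recursive stage, I have a subset $J\subseteq[r]$ of size at most $\lceil r/2^t\rceil$ after $t$ queries, together with two strings $a,b\in\{0,1\}^n$ such that
\[
b \;=\; w_{X_J}\circ a_{\overline{X_J}}, \qquad f(a)\neq f(b),
\]
where $X_J=\bigcup_{i\in J}X_i$. Initially I take $J=[r]$, $a=u$, $b=w$; the hypothesis $f(u)\neq f(w)$ and the fact that $\{X_i\}$ partitions $[n]$ give the invariant for free (with $w=w_{X_{[r]}}\circ u_{\overline{X_{[r]}}}$ trivially, since $\overline{X_{[r]}}=\emptyset$).

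In the recursive step, split $J=J_1\cup J_2$ into halves of size $\lceil|J|/2\rceil$ and $\lfloor|J|/2\rfloor$. Form the hybrid $c=w_{X_{J_1}}\circ a_{\overline{X_{J_1}}}$ and make one new query to obtain $f(c)$ ($f(a)$ and $f(b)$ are already known from the previous round). A direct check shows that $c$ also equals $b$ on $X_{J_1}$ and $a$ on $X_{J_2}$, so $b=w_{X_{J_2}}\circ c_{\overline{X_{J_2}}}$. Since $f(a)\neq f(b)$, at least one of $f(c)\neq f(a)$ or $f(c)\neq f(b)$ holds. If $f(c)\neq f(a)$, recurse on $(a,c,J_1)$; if $f(c)\neq f(b)$, recurse on $(c,b,J_2)$. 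Either way the invariant is preserved and $|J|$ is halved with a single additional query. When $|J|=1$, say $J=\{\ell\}$, the invariant reads $b=w_{X_\ell}\circ a_{\overline{X_\ell}}$ with $f(a)\neq f(b)$, so $X_\ell$ is a relevant set of $f$ and $v:=a$ is the desired witness.

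The query count is then the number of halving steps needed to reduce $r$ down to $1$, namely $\lceil\log_2 r\rceil$. The only subtle point, which I would verify carefully in the write-up, is the algebraic identity $b=w_{X_{J_2}}\circ c_{\overline{X_{J_2}}}$ that lets me continue the recursion in the second branch using the pair $(c,b)$ rather than having to rebuild a hybrid from scratch; once that is in hand the rest is a routine induction on $\lceil\log_2 r\rceil$, so I do not anticipate any real obstacle.
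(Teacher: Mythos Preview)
Your proof is correct and is exactly the standard binary-search argument the paper has in mind; the paper does not spell out a proof but cites it as folklore from~\cite{LiuCSSX18}, and its informal descriptions of the binary search (e.g., in the overview of \textbf{Approx$C$}) match your hybrid-splitting step precisely. The algebraic identity $b=w_{X_{J_2}}\circ c_{\overline{X_{J_2}}}$ you flag does hold, since on $X_{J_1}$ both $b$ and $c$ equal $w$, and on $\overline{X_J}$ both equal $a$.
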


The following is from \cite{Blais09}
\begin{lemma}\label{OneSide} There exists a one-sided adaptive algorithm, {\bf UniformJunta}$(f , k, \epsilon,\delta)$, for $\epsilon$-testing $k$-junta that makes $O(((k/\epsilon) + k \log k)\log(1/\delta))$ queries and rejects $f$ with probability at least $1-\delta$ when it is $\epsilon$-far from every $k$-junta with respect to the uniform distribution.

Moreover, it rejects only when it has found $k+1$ pairwise disjoint relevant sets and a witness of $f$ for each one.
\end{lemma}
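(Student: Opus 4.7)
The plan is to implement the tester of Blais~\cite{Blais09}: randomly partition $[n]$ into $r=\Theta(k^2)$ blocks $X_1,\ldots,X_r$ and iteratively accumulate witnessed relevant blocks, rejecting precisely when $k+1$ have been produced. I would maintain a set $I\subseteq[r]$ of indices of relevant blocks already found together with a witness for each, starting from $I=\emptyset$, and write $X_I=\bigcup_{i\in I}X_i$. For $T=\Theta(k/\epsilon)$ rounds, draw independent uniform $u,w\in\{0,1\}^n$ and query $f(u)$ and $f(u_{X_I}\circ w_{\overline{X_I}})$; whenever they disagree, apply Lemma~\ref{BiSe} to the partition $\{X_i\}_{i\notin I}$ of $\overline{X_I}$ to locate, in $\lceil\log_2 r\rceil=O(\log k)$ additional queries, a new relevant block $X_\ell$ with $\ell\notin I$ together with a witness, and add $\ell$ to $I$. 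Reject as soon as $|I|=k+1$; otherwise accept. Finally, run $O(\log(1/\delta))$ independent copies in parallel and reject if any one rejects, exploiting one-sidedness.

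Completeness (and the ``moreover'' clause) is immediate: if $f\in k$-Junta then $f$ has at most $k$ relevant variables, so at most $k$ disjoint blocks can be relevant sets of $f$; since Lemma~\ref{BiSe} only returns genuine relevant blocks with genuine witnesses, $|I|$ never exceeds $k$ and the tester always accepts. For soundness, I would invoke the partition-influence bound proved in~\cite{Blais09}: if $f$ is $\epsilon$-far from every $k$-junta then, with probability $1-o(1)$ over the random partition, for every $I\subseteq[r]$ with $|I|\le k$ one has
\[
\Pr_{u,w}\bigl[\,f(u)\ne f(u_{X_I}\circ w_{\overline{X_I}})\,\bigr]\ \ge\ \Omega(\epsilon).
\]
Conditioned on this ``good partition'' event, each round independently exposes a new relevant block with probability $\Omega(\epsilon)$, so the number of new blocks found over $T=\Theta(k/\epsilon)$ rounds stochastically dominates a $\mathrm{Bin}(T,\Omega(\epsilon))$ variable with mean $\Omega(k)$, and a Chernoff bound yields $\ge k+1$ successes with constant probability; amplification then drives the failure probability below $\delta$.

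The total query count per run is $O(T)=O(k/\epsilon)$ for the detection queries plus at most $(k+1)\lceil\log_2 r\rceil=O(k\log k)$ for the binary searches; the $O(\log(1/\delta))$-fold amplification multiplies this by $O(\log(1/\delta))$, matching the stated bound $O(((k/\epsilon)+k\log k)\log(1/\delta))$.

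The main obstacle is the soundness influence inequality itself. The naive attempt of bounding $\Pr_{u,w}[\,\cdot\,]$ below by the distance from $f$ to its fiberwise majority $h(x_{X_I})=\arg\max_b\Pr_w[f(x_{X_I}\circ w)=b]$ fails, because $h$ may depend on all of $X_I$---a union of many blocks containing many variables---so $h$ is not itself a $k$-junta, and hence ``$\epsilon$-far from $k$-juntas'' does not directly lower-bound $\dist(f,h)$. Blais' resolution is to round such an $h$ to a genuine $k$-junta by picking a single representative variable per block and to show that the rounding error is $o(\epsilon)$ when the partition is random with $r=\Theta(k^2)$; this is the step that drives the choice of $r$ and is where the bulk of the technical work lives.
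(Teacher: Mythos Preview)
Your proposal is essentially correct, but note that the paper does not prove this lemma at all: it is stated with the preamble ``The following is from~\cite{Blais09}'' and is simply imported as a black box. What you have written is a faithful high-level reconstruction of Blais' tester and its analysis, which is exactly the intended source; the paper itself defers the entire argument to that reference.
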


\subsection{Approximating the Target}
In this subsection we give the procedure {\bf ApproxTarget} that returns $(X=\cup_{i\in I}X_i,V=\{v^{(i)}\}_{i\in I},I)$, $X\subseteq [n]$, $V\subseteq \{0,1\}^n$ and $I\subseteq [r]$ where, whp, each $x(X_i)$, $i\in I$, contains at least one relevant variable of $h:=f(x_X\circ 0_{\overline{X}})$ and exactly one if $f$ is $k$-junta. Each $v^{(i)}$, $i\in I$, is a witness of $f(x_{X}\circ 0_{\overline{X}})$ for the relevant set $X_i$. Also, whp,  $f(x_{X}\circ 0_{\overline{X}})$ is $(\epsilon/c)$-close to the target with respect to the distribution $\D$.

\newcounter{ALC}
\setcounter{ALC}{0}
\newcommand{\step}{\stepcounter{ALC}$\arabic{ALC}.\ $\>}
\newcommand{\steplabel}[1]{\addtocounter{ALC}{-1}\refstepcounter{ALC}\label{#1}}
\begin{figure}[h!]
  \begin{center}
  \fbox{\fbox{\begin{minipage}{28em}
  \begin{tabbing}
  xxx\=xxxx\=xxxx\=xxxx\=xxxx\=xxxx\= \kill
  {{\bf ApproxTarget}$(f,\D,\epsilon,c)$}\\
 {\it Input}: Oracle that accesses a Boolean function $f$ and \\
\>\>an oracle that draws $x\in \{0,1\}^n$ according to the distribution $\D$. \\
  {\it Output}: Either ``reject'' or $(X,V,I)$\\ \\
{\bf Partition $[n]$ into $r$ sets}\\
\step\steplabel{par11}
Set $r = 2k^2$.\\
\step\steplabel{par21}
Choose uniformly at random a partition $X_1,X_2,\ldots,X_r$ of $[n]$\\
\\
{\bf Find a close function and relevant sets} \\
\step\steplabel{Sett1}
Set $X=\emptyset$; $I=\emptyset$; $V=\emptyset$; $t(X)=0$.\\
\step\steplabel{two1}
Repeat $M=ck\ln(15k)/\epsilon$ times\\
\step\steplabel{Cho1}
\> Choose $u\in {\cal D}$. \\
\step  \> $t(X)\gets t(X)+1$\\
\step\steplabel{con11}
\> If $f(u_X\circ 0_{\overline{X}})\not=f(u)$ then\\
\step\steplabel{Wempty}\>\>\> $W\gets \emptyset$.  \\
\step\steplabel{Find1}
\>\>\> Binary Search to find a new relevant set from $(u,u_X\circ 0_{\overline{X}})\to \ell$;\\
\step\steplabel{Finddd1}    \>\>\>\> and a string $w^{(\ell)}\in\{0,1\}^n$ such that $f(w^{(\ell)})\not= f(w^{(\ell)}_{\overline{X_\ell}}\circ 0_{X_\ell})$;\\
\step\steplabel{Xadd}\>\>\> $X\gets X\cup X_\ell$; $I\gets I\cup \{\ell\}$.\\
\step\steplabel{Rej1}
\>\>\> If $|I|>k$ then Output(``reject'').\\
\step\>\>\> $W=W\cup\{w^{(\ell)}\}$.\\
\step\>\>\>\steplabel{ccch}Choose $w^{(r)}\in W$.\\
\step\>\>\>\steplabel{Tr}If $f(w^{(r)}_X\circ 0_{\overline{X}})\not=f(w^{(r)}_{X\backslash X_r}\circ 0_{\overline{X}\cup X_r})$ then \\
\>\>\>\>\> $W\gets W\backslash\{w^{(r)}\}; v^{(r)}\gets w^{(r)}_X\circ 0_{\overline{X}}; V\gets V\cup\{v^{(r)}\};$\\
\>\>\>\>\> If $W\not=\emptyset$ then Goto~\ref{ccch}\\
\step\>\>\>\steplabel{Fir01} Else If $f(w^{(r)}_X\circ 0_{\overline{X}})\not=f(w^{(r)})$ then $u\gets w^{(r)}$; Goto~\ref{Find1} \\
\step \>\>\>\steplabel{Fir02} Else $u\gets w^{(r)}_{\overline{X_r}}\circ 0_{X_r}$; Goto~\ref{Find1}\\
\step\steplabel{tx01}
\>\>\> $t(X)=0$.\\
\step\steplabel{EndRep1}
\>  If $t(X)=c\ln(15k)/\epsilon$ then Output($X,V,I$).
  \end{tabbing}
  \end{minipage}}}
  \end{center}
	\caption{A procedure that finds relevant sets $\{X_i\}_{i\in I}$ of $f$ and a witness $v^{(i)}$ for each relevant set~$X_i$ for $h:=f(x_X\circ 0_{\overline{X}})$ where $X=\cup_{i\in I}X_i$. Also, whp, $h$ is $(\epsilon/c)$-close to the target.}
	\label{A31}
	\end{figure}

Consider the procedure {\bf ApproxTarget} in Figure~\ref{A31}. In steps~\ref{par11}-\ref{par21} the procedure partitions the set $[n]$ into $r=2k^2$ disjoint sets $X_1,X_2,\ldots,X_r$. In step~\ref{Sett1} it defines the variables $X,I,V$ and $t(X)$. At each iteration of the procedure, $I$ contains the indices of some relevant sets of $f(x_X\circ 0_{\overline{X}})$ where $X=\cup_{i\in I}X_i$, i.e., each $X_i$, $i\in I$ is relevant set of $f(x_X\circ 0_{\overline{X}})$. The set $V$ contains, for each $i\in I$, a string $v^{(i)}\in \{0,1\}^n$ that satisfies $f(v^{(i)}_X\circ 0_{\overline{X}})\not=f(v^{(i)}_{X\backslash X_i}\circ 0_{X_i}\circ 0_{\overline{X}})$. That is, a witness of $f(x_X\circ 0_{\overline{X}})$ for the relevant set $X_i$, $i\in I$.

The procedure in steps~\ref{two1}-\ref{EndRep1} tests if $f(u_X\circ 0_{\overline{X}})=f(u)$ for at least $c\ln(15/k)/\epsilon$, independently and at random, chosen $u$ according to the distribution $\D$. The variable $t(X)$ counts the number of such $u$. If this happens then, whp, $f(x_X\circ 0_{\overline{X}})$ is $(\epsilon/c)$-close to $f$ with respect to $\D$ and the procedure returns $(X,V,I)$. If not then $f(u_X\circ 0_{\overline{X}})\not=f(u)$ for some $u$ and then a new relevant set is found. If the number of relevant sets is greater than $k$, it rejects. This is done in steps~\ref{Wempty}-\ref{tx01}.

In steps~\ref{Find1}-\ref{Finddd1}, the procedure uses Lemma~\ref{BiSe} to (binary) searches for a new relevant set. The search gives an index $\ell$ of the new relevant set $X_\ell$ and a  witness $w^{(\ell)}$ that satisfies $f(w^{(\ell)})\not= f(0_{X_\ell}\circ w^{(\ell)}_{\overline{X_\ell}})$. Then $\ell$ is added to $I$ and $X$ is extended to $X\cup X_\ell$. The binary search gives a witness that $X_\ell$ is relevant set of $f$, but not a witness that it is relevant set of $f(x_X\circ 0_{\overline{X}})$. This is why we need steps~\ref{ccch}-\ref{Fir02}. In those steps the procedure adds $w^{(\ell)}$ to $W$. Then for each $w^{(r)}\in W$ (at the beginning $r=\ell$) it checks if $w^{(r)}$ is a witness of $f(x_X\circ 0_{\overline{X}})$ for $X_r$. If it is then it adds it to $V$. If it isn't then we show in the discussion below that a new relevant set can be found. The procedure rejects when it finds more than $k$ relevant sets.

If the procedure does not reject then it outputs $(X,V,I)$ where $I$ contains the indices of some relevant sets of $f(x_{X}\circ 0_{\overline{X}})$, $X=\cup_{i\in I}X_i$ and the set $V$ contains for each $i\in I$ a string $v^{(i)}\in \{0,1\}^n$ that is a witness of $f(x_{X}\circ 0_{\overline{X}})$ for $X_i$, i.e.,  $f(v^{(i)}_X\circ 0_{\overline{X}})\not=f(v^{(i)}_{X\backslash X_i}\circ 0_{X_i}\circ 0_{\overline{X}})$. We will also show in Lemma~\ref{cloose} that, whp, $\Pr_\D[f(x_X\circ 0_{\overline{X}})\not=f(x)]\le \epsilon/c$.

\color{black}
We first prove
\begin{lemma}\label{dist} Consider steps~\ref{par11}-\ref{par21} in the {\bf ApproxTarget}. If $f$ is a $k$-junta then, with probability at least $2/3$, for each $i\in [r]$, the set $x(X_i)=\{x_j|j\in X_i\}$ contains at most one relevant variable of $f$.
\end{lemma}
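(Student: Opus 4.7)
The plan is a straightforward balls-into-bins / birthday-style union bound. Let $R\subseteq[n]$ denote the indices of the relevant variables of $f$, so $|R|\le k$ since $f$ is a $k$-junta. The bad event "some $X_i$ contains two distinct relevant variables" is the union, over unordered pairs $\{a,b\}\subseteq R$, of the events $E_{a,b}=\{\text{$a$ and $b$ land in the same part}\}$. By a union bound,
\[
\Pr\Bigl[\exists i:\ |X_i\cap R|\ge 2\Bigr]\ \le\ \sum_{\{a,b\}\subseteq R}\Pr[E_{a,b}]\ \le\ \binom{k}{2}\cdot\max_{a\ne b}\Pr[E_{a,b}].
\]

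First I would pin down the partition model used in step~\ref{par21}: a uniformly random partition of $[n]$ into $r$ labeled parts is equivalent to assigning each coordinate $j\in[n]$ an independent uniform label $\ell(j)\in[r]$ and setting $X_i=\{j:\ell(j)=i\}$. Under this model, for any two distinct coordinates $a,b$, the probability that $\ell(a)=\ell(b)$ is exactly
\[
\Pr[E_{a,b}]\ =\ \sum_{i=1}^{r}\Pr[\ell(a)=i]\Pr[\ell(b)=i]\ =\ r\cdot\frac{1}{r^2}\ =\ \frac{1}{r}.
\]

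Plugging in $r=2k^2$ gives
\[
\Pr\Bigl[\exists i:\ |X_i\cap R|\ge 2\Bigr]\ \le\ \binom{k}{2}\cdot\frac{1}{2k^2}\ =\ \frac{k(k-1)}{4k^2}\ <\ \frac{1}{4}\ \le\ \frac{1}{3},
\]
so with probability at least $2/3$ no part contains two relevant variables, which is exactly the claim. I do not anticipate any obstacle beyond making the partition model explicit; the argument is a one-line union bound once that is in place.
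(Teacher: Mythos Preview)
Your proof is correct and follows essentially the same approach as the paper: a union bound over the at most $\binom{k}{2}$ pairs of relevant variables, each colliding with probability $1/r=1/(2k^2)$, giving failure probability at most $\binom{k}{2}/r\le 1/3$. Your version is slightly more explicit about the partition model and tightens the final bound to $<1/4$, but the argument is the same.
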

\begin{proof} Let $x_{i_1}$ and $x_{i_2}$ be two relevant variables in $f$.
The probability that $x_{i_1}$ and $x_{i_2}$ are in the same set is equal to $1/r$.
By the union bound, it follows that the probability that some relevant variables $x_{i_1}$ and $x_{i_2}$, $i_1\not=i_2$, in $f$ are in the same set is at most ${k\choose 2}/r\le 1/3$.
\end{proof}

\begin{figure}[h!]
\begin{center}
\begin{tabular}{l | c | c || c | c |}
  & \multicolumn{2}{|c||}{$X^{(j)}=\bigcup_{i=1}^jX_{\ell_i}$}&\multicolumn{2}{|c|}{$\overline{X^{(j)}}$} \\
\cline{2-5}
  &$X^{(j)}\backslash X_r$ &$X_r$&$\bigcup_{i=j+1}^{q^{\ } }X_{\ell_i}$&$\overline{\overline{X^{(q)}}}$\\
\hline
$v^{(r)}$& *********** & ***** & 00000000000 & 000000\\
\hline
$w^{(r)}$& *********** & ***** & *********** & ******\\
\hline
$w^{(r)}_{X^{(j)}}\circ 0_{\overline{X^{(j)}}}=v^{(r)}$& *********** & ***** & 00000000000 & 000000\\
\hline
$w^{(r)}_{\overline{X_r}}\circ 0_{X_r}$&*********** & 00000 & *********** & ******\\
\hline
$w^{(r)}_{X^{(j)}\backslash X_r}\circ 0_{\overline{X^{(j)}}\cup X_r}$& *********** & 00000 & 00000000000 & 000000\\
\hline
\end{tabular}
\end{center}
	\caption{The value of $v^{(r)}$, $w^{(r)}$, $w^{(r)}_{X^{(j)}}\circ0_{\overline{X^{(j)}}}$, $w^{(r)}_{\overline{X_r}}\circ 0_{X_r}$ and $w^{(r)}_{X^{(j)}\backslash X_r}\circ 0_{\overline{X^{(j)}}\cup X_r}$ where * indicates any value.}
	\label{ValTable}
	\end{figure}

Recall that after the binary search in step~\ref{Find1} the procedure has a witness $w^{(\ell)}$ that satisfies $f(w^{(\ell)})\not= f(w^{(\ell)}_{\overline{X_\ell}}\circ 0_{X_\ell})$ that is not necessarily a witness of $f(x_X\circ 0_{\overline X})$ for $X_\ell$, i.e., does not necessarily satisfies $f(w^{(\ell)}_X\circ 0_{\overline{X}})\not= f(w^{(\ell)}_{X\backslash X_\ell}\circ 0_{X_\ell}\circ 0_{\overline{X}})$. This is why we first add $w^{(\ell)}$ to $W$ and not to $V$. We next will show that an element $w^{(r)}$ in $W$ is either a witness of $f(x_X\circ 0_{\overline X})$ for $X_\ell$, in which case we add it to $V$ and remove it from $W$, or, this element generates another new relevant set and then another witness of $f$ is added to $W$.

Suppose the variable $\ell$ in the procedure takes the values ${\ell_1},\ldots,{\ell_q}$. Then $X_\ell$ takes the values $X_{\ell_1},\ldots,X_{\ell_q}$ and $X$ takes the values $X^{(i)}$ where $X^{(i)}=X^{(i-1)}\cup X_{\ell_i}$ and $X^{(0)}=\emptyset$.
Notice that $X^{(0)}\subset X^{(1)} \subset \cdots \subset X^{(q)}$.

Suppose, at some iteration, the procedure chooses, in step \ref{ccch}, $w^{(r)}\in W$ where $r=\ell_i$. By step~\ref{Finddd1}, $f(w^{(r)})\not= f(w^{(r)}_{\overline{X_r}}\circ 0_{X_r})$. Suppose at this iteration $X=X^{(j)}$. Then $r\le j$, $X_{\ell_1},\ldots,X_{\ell_j}$ are the relevant sets that are discovered so far and $X_{\ell_{j+1}},\ldots,X_{\ell_q}\subseteq \overline{X^{(j)}}$. Since $w^{(r)}\in W$, by step~\ref{Xadd}, $X_r\subseteq X^{(j)}$. See the table in Figure~\ref{ValTable}. If in step~\ref{Tr}, $f(w^{(r)}_{X^{(j)}}\circ 0_{\overline{X^{(j)}}})\not=f(w^{(r)}_{X^{(j)}\backslash X_r}\circ 0_{\overline{X^{(j)}}\cup X_r})$ then $v^{(r)}= w^{(r)}_{X^{(j)}}\circ 0_{\overline{X^{(j)}}}$ is added to the set $V$. This is the only step that adds an element to $V$. Since $v^{(r)}= w^{(r)}_{X^{(j)}}\circ 0_{\overline{X^{(j)}}}$ and $X^{(j)}\subseteq X^{(q)}$ we have $v^{(r)}_{\overline{X^{(q)}}}=0$ and $f(v^{(r)})=f(w^{(r)}_{X^{(j)}}\circ 0_{\overline{X^{(j)}}})\not=f(w^{(r)}_{X^{(j)}\backslash X_r}\circ 0_{\overline{X^{(j)}}\cup X_r})=f(v^{(r)}_{\overline{X_r}} \circ 0_{X_r}).$

Therefore
\begin{lemma}\label{Witf0} If the procedure outputs $(X^{(q)},V,I)$ then for every $v^{(\ell)}\in V$ we have $v^{(\ell)}_{\overline{X^{(q)}}}=0$ and $f(v^{(\ell)})\not=f( v^{(\ell)}_{\overline{X_\ell}}\circ 0_{X_\ell})$. That is, $v^{(\ell)}\in V$ is a witness of $f(x_{X^{(q)}}\circ 0_{\overline{X^{(q)}}})$ for $X_\ell$.
\end{lemma}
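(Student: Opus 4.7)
The plan is to trace through the one place in the procedure where elements are inserted into $V$ and to match the bookkeeping against the two claimed properties. Note that the only step that ever adds to $V$ is the conditional following step~\ref{Tr}: whenever it fires, the added element is $v^{(r)}=w^{(r)}_X\circ 0_{\overline X}$, where $r=\ell_i$ for some $i$ and $X$ equals its current value at that iteration. So I would first fix notation: suppose $v^{(\ell)}\in V$ was inserted at the iteration when the running set $X$ equalled $X^{(j)}$ for some $j\le q$ (with $\ell=\ell_i$ for some $i\le j$, since $w^{(\ell)}$ had already been created in the binary-search step and placed in $W$). Thus $v^{(\ell)}=w^{(\ell)}_{X^{(j)}}\circ 0_{\overline{X^{(j)}}}$ and the insertion condition is exactly
\[
f\bigl(w^{(\ell)}_{X^{(j)}}\circ 0_{\overline{X^{(j)}}}\bigr)\;\neq\; f\bigl(w^{(\ell)}_{X^{(j)}\setminus X_\ell}\circ 0_{\overline{X^{(j)}}\cup X_\ell}\bigr).
\]

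For property (1), since $X^{(0)}\subset X^{(1)}\subset\cdots\subset X^{(q)}$ by construction (each $X^{(i)}$ extends $X^{(i-1)}$ by $X_{\ell_i}$), and $j\le q$, we get $X^{(j)}\subseteq X^{(q)}$, hence $\overline{X^{(q)}}\subseteq\overline{X^{(j)}}$. Because $v^{(\ell)}$ is identically $0$ on $\overline{X^{(j)}}$, it is in particular identically $0$ on $\overline{X^{(q)}}$, giving $v^{(\ell)}_{\overline{X^{(q)}}}=0$.

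For property (2), I unfold both sides explicitly. On the left, $f(v^{(\ell)})=f(w^{(\ell)}_{X^{(j)}}\circ 0_{\overline{X^{(j)}}})$ directly from the definition of $v^{(\ell)}$. On the right, $v^{(\ell)}_{\overline{X_\ell}}\circ 0_{X_\ell}$ agrees with $v^{(\ell)}$ everywhere except on $X_\ell$, where it is zero; using that $v^{(\ell)}$ already vanishes on $\overline{X^{(j)}}$ and that $X_\ell\subseteq X^{(j)}$ (which follows from step~\ref{Xadd}, since $\ell=\ell_i$ with $i\le j$), this equals $w^{(\ell)}_{X^{(j)}\setminus X_\ell}\circ 0_{\overline{X^{(j)}}\cup X_\ell}$. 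Substituting into the insertion condition displayed above immediately yields $f(v^{(\ell)})\neq f(v^{(\ell)}_{\overline{X_\ell}}\circ 0_{X_\ell})$, as required.

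There is really no obstacle here beyond the indexing discipline: the lemma is essentially a formalisation of the informal table in Figure~\ref{ValTable} together with the monotonicity $X^{(j)}\subseteq X^{(q)}$. The only point that deserves care is to argue that, even though the insertion happened at an intermediate stage $X=X^{(j)}$, the witness property survives the subsequent growth of $X$ to $X^{(q)}$; this is handled entirely by the inclusion $X^{(j)}\subseteq X^{(q)}$ together with $v^{(\ell)}$'s being zero outside $X^{(j)}$, which is exactly the content of the two bullets of the lemma.
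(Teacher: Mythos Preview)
Your proposal is correct and follows essentially the same approach as the paper: identify the unique insertion point into $V$, write $v^{(\ell)}=w^{(\ell)}_{X^{(j)}}\circ 0_{\overline{X^{(j)}}}$ at the moment of insertion, then use the chain $X^{(j)}\subseteq X^{(q)}$ for the first property and the inclusion $X_\ell\subseteq X^{(j)}$ to rewrite both sides of the insertion inequality for the second. The paper's argument is the discussion immediately preceding the lemma (together with Figure~\ref{ValTable}), and your write-up is a faithful, slightly more explicit, version of it.
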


We now show that if, in step~\ref{Tr}, $f(w^{(r)}_X\circ 0_{\overline{X}})=f(w^{(r)}_{X\backslash X_r}\circ 0_{\overline{X}\cup X_r})$ then the procedure finds a new relevant set.
\begin{lemma}
Consider step~\ref{Tr} in the procedure in the iteration where $X=X^{(j)}$. If $w^{(r)}$ is not a witness of $f(x_{X^{(j)}}\circ 0_{\overline{X^{(j)}}})$ for $X_r$, i.e., $f(w^{(r)}_{X^{(j)}}\circ 0_{\overline{X^{(j)}}})=f(w^{(r)}_{X^{(j)}\backslash X_r}\circ 0_{\overline{X^{(j)}}\cup X_r})$, then a new relevant set is found.
\end{lemma}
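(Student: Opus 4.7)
The plan is to do a two-case analysis along the branch in steps~\ref{Fir01}--\ref{Fir02} of the procedure and in each case exhibit a pair of inputs on which $f$ takes different values but which agree on $X^{(j)}$; applying Lemma~\ref{BiSe} to this pair will then yield a new relevant set (new because it lies inside $\overline{X^{(j)}}$).

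Fix the iteration where $X=X^{(j)}$, and set $a:=w^{(r)}_{X^{(j)}}\circ 0_{\overline{X^{(j)}}}$ and $b:=w^{(r)}_{X^{(j)}\setminus X_r}\circ 0_{\overline{X^{(j)}}\cup X_r}$. By the hypothesis of the lemma, $f(a)=f(b)$. Recall also from step~\ref{Finddd1} that $f(w^{(r)})\ne f(w^{(r)}_{\overline{X_r}}\circ 0_{X_r})$, and that $X_r\subseteq X^{(j)}$ (since $w^{(r)}\in W$ was added while $X_r$ was absorbed into $X$). In the first case, step~\ref{Fir01} triggers, that is, $f(a)=f(w^{(r)}_X\circ 0_{\overline X})\ne f(w^{(r)})$; then I would set $u:=w^{(r)}$ and note $f(u_X\circ 0_{\overline X})=f(a)\ne f(u)$. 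In the second case (step~\ref{Fir02}) we have $f(a)=f(w^{(r)})$, and I would set $u:=w^{(r)}_{\overline{X_r}}\circ 0_{X_r}$; unpacking definitions, $u_X\circ 0_{\overline X}=b$ (since $u$ is zero on $X_r$ and agrees with $w^{(r)}$ on $X^{(j)}\setminus X_r$, and the $\overline X$ part is killed to $0$), so $f(u_X\circ 0_{\overline X})=f(b)=f(a)=f(w^{(r)})\ne f(w^{(r)}_{\overline{X_r}}\circ 0_{X_r})=f(u)$.

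In both cases the new $u$ together with $u_X\circ 0_{\overline X}$ are two strings with $f(u)\ne f(u_X\circ 0_{\overline X})$ that agree on $X^{(j)}$ (they differ only in coordinates of $\overline{X^{(j)}}$). Applying Lemma~\ref{BiSe} with these two strings, a binary search inside $\overline{X^{(j)}}$ returns an index $\ell$ with a witness $w^{(\ell)}$ satisfying $f(w^{(\ell)})\ne f(w^{(\ell)}_{\overline{X_\ell}}\circ 0_{X_\ell})$. Since the search only flips coordinates on which the two strings differ, and those lie in $\overline{X^{(j)}}$, we get $X_\ell\subseteq\overline{X^{(j)}}$, so $\ell\notin\{\ell_1,\dots,\ell_j\}=I$, i.e., $X_\ell$ is genuinely a new relevant set.

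The only non-routine point is the bookkeeping in the second case verifying that $u_X\circ 0_{\overline X}$ equals $b$; this uses $X_r\subseteq X^{(j)}=X$ and the three-way split $\overline X\cup(X\setminus X_r)\cup X_r$ of $[n]$. Everything else is just chaining equalities and inequalities from the hypothesis, from step~\ref{Finddd1}, and from the case split.
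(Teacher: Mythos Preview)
Your proof is correct and follows essentially the same approach as the paper's: both use the hypothesis $f(a)=f(b)$ together with $f(w^{(r)})\ne f(w^{(r)}_{\overline{X_r}}\circ 0_{X_r})$ from step~\ref{Finddd1} to force a disagreement between a pair of strings that agree on $X^{(j)}$, then invoke the binary search to find a new relevant set inside $\overline{X^{(j)}}$. Your presentation is slightly more explicit (introducing the shorthand $a,b$ and spelling out the bookkeeping that $u_X\circ 0_{\overline X}=b$ in the second case), but the logical structure is identical.
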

\begin{proof} See the table in Figure~\ref{ValTable} throughout the proof.
 Since by step ~\ref{Finddd1}, $f(w^{(r)})\not= f(w^{(r)}_{\overline{X_r}}\circ0_{X_r})$, then either $f(w^{(r)})\not=f(w^{(r)}_{X^{(j)}}\circ 0_{\overline{X^{(j)}}})$ or $f(w^{(r)}_{X^{(j)}\backslash X_r}\circ 0_{\overline{X^{(j)}}\cup X_r})\not= f(w^{(r)}_{\overline{X_r}}\circ0_{X_r})$. If $f(w^{(r)})\not=f(w^{(r)}_{X^{(j)}}\circ 0_{\overline{X^{(j)}}})$ then the procedure in step~\ref{Fir01} assign $u=w^{(r)}$ and goes to step~\ref{Find1} to find a relevant set in $\overline{X^{(j)}}$. Step~\ref{Find1} finds a new relevant set because $w^{(r)}$ and $w^{(r)}_{X^{(j)}}\circ 0_{\overline{X^{(j)}}}$ are equal on $X^{(j)}$. If $f(w^{(r)}_{X^{(j)}\backslash X_r}\circ 0_{\overline{X^{(j)}}\cup X_r})\not= f(w^{(r)}_{\overline{X_r}}\circ 0_{X_r})$ then the procedure in step~\ref{Fir02} assign $u=w^{(r)}_{\overline{X_r}}\circ0_{X_r}$ and goes to step~\ref{Find1} to find a relevant set in $\overline{X^{(j)}}$. Step~\ref{Find1} finds a new relevant set because $w^{(r)}_{X^{(j)}\backslash X_r}\circ 0_{\overline{X^{(j)}}\cup X_r}$ and $w^{(r)}_{\overline{X_r}}\circ 0_{X_r}$ are equal on $X^{(j)}$.
\end{proof}

Therefore, for every $w^{(r)}\in W$ the procedure either finds $v^{(r)}$ that satisfies the condition in Lemma~\ref{Witf0} or finds a new relevant set. If the number of relevant sets is greater than $k$, then the procedure rejects. This is because each relevant set contains a relevant variable, and the relevant sets are disjoint. So the function, in this case, is not $k$-junta and therefore not in $C$. If the number of relevant sets is less than or equal to $k$, then the algorithm eventually finds, for each $\ell\in I$, a witness $v^{(\ell)}$ of $f(x_X\circ 0_{\overline{X}})$ for $X_{i_\ell}$. This implies

\begin{lemma}\label{lklk} If {\bf ApproxTarget} does not reject then it outputs $(X=X^{(q)},V=\{v^{(\ell_1)},
\ldots,v^{(\ell_q)}\},I=\{\ell_1,\ldots,\ell_q\})$ that satisfies
\begin{enumerate}
\item\label{lklk1} $q=|I|\le k$.
\item\label{lklk2} For every $\ell\in I$, $v^{(\ell)}_{\overline{X}}=0$ and $f(v^{(\ell)})\not=f(0_{X_\ell}\circ v^{(\ell)}_{\overline{X_\ell}})$. That is, $v^{(\ell)}\in V$ is a witness of $f(x_{X}\circ 0_{\overline{X}})$ for $X_\ell$ .
\item\label{lklk3} Each $x(X_\ell)$, $\ell\in I$, contains at least one relevant variable of $f(x_X\circ 0_{\overline{X}})$.
\end{enumerate}
\end{lemma}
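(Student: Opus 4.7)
The plan is to show that all three claims are essentially immediate consequences of the preceding discussion and of Lemma~\ref{Witf0}, together with the termination analysis of the procedure.

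First, for item~\ref{lklk1}, I would point to step~\ref{Rej1}: every time a new index is added to $I$ in step~\ref{Xadd}, the procedure immediately checks whether $|I|>k$ and rejects if so. Therefore, under the hypothesis that {\bf ApproxTarget} did not reject, we must have $|I|=q\le k$ at output time.

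For item~\ref{lklk2}, I would invoke Lemma~\ref{Witf0} directly. That lemma, proved just above, already establishes that every $v^{(\ell)}\in V$ satisfies $v^{(\ell)}_{\overline{X}}=0$ and $f(v^{(\ell)})\neq f(v^{(\ell)}_{\overline{X_\ell}}\circ 0_{X_\ell})$. The only thing to confirm is that the ``$X$'' appearing in the present lemma coincides with $X^{(q)}$ used in Lemma~\ref{Witf0}; this is clear because when the procedure outputs, $X$ has been updated in step~\ref{Xadd} exactly once for each $\ell_i\in I$, so $X=\bigcup_{\ell\in I}X_\ell=X^{(q)}$. Then the displayed equations in Lemma~\ref{Witf0} are verbatim the statement that $v^{(\ell)}$ is a witness of $f(x_X\circ 0_{\overline{X}})$ for $X_\ell$.

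For item~\ref{lklk3}, I would argue that it is an immediate consequence of item~\ref{lklk2}: a witness $v^{(\ell)}$ of $f(x_X\circ 0_{\overline{X}})$ for $X_\ell$ is exactly a pair of assignments on which flipping the bits inside $X_\ell$ changes the value of $f(x_X\circ 0_{\overline{X}})$, hence $X_\ell$ is a relevant set of this restricted function, and by definition $x(X_\ell)$ must contain at least one relevant variable of it. The only subtle point, which I would briefly note, is that the procedure's loop is guaranteed to terminate without infinite cycling: each visit to step~\ref{Fir01} or step~\ref{Fir02} is followed by a call to the binary search of Lemma~\ref{BiSe} on a new relevant set contained in $\overline{X^{(j)}}$ (as shown in the preceding lemma about step~\ref{Tr}), so $X$ strictly grows; since $|I|$ is bounded by $k$ before rejection and the outer loop runs at most $M$ times, the procedure indeed terminates with a valid output $(X,V,I)$, at which point items~\ref{lklk1}--\ref{lklk3} hold. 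No step here is a real obstacle; the work is entirely bookkeeping to confirm that the invariants maintained throughout the procedure survive to the output.
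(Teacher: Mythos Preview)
Your proposal is correct and follows essentially the same approach as the paper. The paper does not give a separate formal proof of this lemma; it states it as a summary of the preceding discussion, deriving item~\ref{lklk1} from step~\ref{Rej1}, item~\ref{lklk2} from Lemma~\ref{Witf0}, and item~\ref{lklk3} from the definition of a relevant set, together with the observation that the inner loop terminates (either by rejection or by emptying $W$) so that at output time $V$ contains a witness $v^{(\ell)}$ for every $\ell\in I$.
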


\begin{lemma}\label{kjun} If $f$ is $k$-junta and each $x(X_i)$ contains at most one relevant variable of $f$ then
\begin{enumerate}
\item {\bf ApproxTarget} outputs $(X,V,I)$.
\item Each $x(X_\ell)$, $\ell\in I$, contains exactly one relevant variable in $f(x_X\circ 0_{\overline{X}})$.
\item For every $\ell\in I$, $f(x_{X_\ell}\circ v_{\overline{X_\ell}}^{(\ell)})$ is a literal.
\end{enumerate}
\end{lemma}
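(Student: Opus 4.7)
The plan is to handle the three conclusions in turn, leaning heavily on Lemma~\ref{lklk}, which already guarantees that whenever {\bf ApproxTarget} does not reject, the triple $(X,V,I)$ that it outputs satisfies properties~\ref{lklk1}--\ref{lklk3}.

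First, to show that the procedure outputs rather than rejects, I would observe that every index $\ell$ ever inserted into $I$ was produced by the binary search in step~\ref{Finddd1}, which returned a witness $w^{(\ell)}$ with $f(w^{(\ell)})\neq f(w^{(\ell)}_{\overline{X_\ell}}\circ 0_{X_\ell})$. Hence each $X_\ell$ with $\ell\in I$ is a relevant set of $f$, and so $x(X_\ell)$ contains at least one relevant variable of $f$. Because the sets $\{X_i\}_{i\in[r]}$ form a partition and $f$ is a $k$-junta, this forces $|I|\le k$ throughout the execution, so the test $|I|>k$ in step~\ref{Rej1} is never satisfied and the procedure terminates with an output.

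Second, for the ``exactly one'' conclusion, Lemma~\ref{lklk}(\ref{lklk3}) already gives at least one relevant variable of $f(x_X\circ 0_{\overline X})$ in each $x(X_\ell)$. I would then argue the matching upper bound: any relevant variable of the restricted function $f(x_X\circ 0_{\overline X})$ is automatically a relevant variable of $f$, because a pair of inputs witnessing its relevance after restriction also witnesses its relevance for $f$ itself (both inputs can be taken to be zero outside $X$). Combined with the hypothesis that each $x(X_i)$ contains at most one relevant variable of $f$, this forces $x(X_\ell)$ to contain exactly one.

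Third, for the literal conclusion, let $\tau(\ell)\in X_\ell$ be the unique index produced by the previous step. Define $g(x):=f(x_{X_\ell}\circ v^{(\ell)}_{\overline{X_\ell}})$. Using $v^{(\ell)}_{\overline X}=0$ from Lemma~\ref{lklk}(\ref{lklk2}), I would rewrite $g$ as a restriction of $f$ that fixes every coordinate outside $X_\ell$, so its relevant variables lie in the intersection of $x(X_\ell)$ with the relevant variables of $f$, which has size at most one by hypothesis. The witness condition from Lemma~\ref{lklk}(\ref{lklk2}) gives $g(v^{(\ell)}_{X_\ell})\neq g(0_{X_\ell})$, so $g$ is non-constant. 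A non-constant Boolean function on at most one relevant variable must be a literal in $x(X_\ell)$, completing the argument. The only obstacle I anticipate is notational bookkeeping---keeping the various restrictions of $f$ straight and verifying that relevance transfers from a restriction back to $f$---since the hypothesis already supplies the partition property deterministically, leaving no probabilistic content.
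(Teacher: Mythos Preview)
Your proposal is correct and follows essentially the same approach as the paper's proof: use Lemma~\ref{trivial01} (via the witness from the binary search) to show $|I|\le k$ so the procedure does not reject, combine Lemma~\ref{lklk}(\ref{lklk3}) with the hypothesis to get ``exactly one,'' and conclude ``literal'' from the witness in Lemma~\ref{lklk}(\ref{lklk2}) plus the one-relevant-variable bound. The paper's proof is terser and leaves implicit the bridges you spell out (that relevant variables of a restriction are relevant variables of $f$, and that a non-constant $1$-junta is a literal), but the logical skeleton is the same.
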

\begin{proof} By {\it \ref{lklk3}} in Lemma~\ref{lklk}, $x(X_\ell)$, $\ell\in I$, contains exactly one relevant variable.  Thus, for every $\ell\in I$, $f(x_{X_\ell}\circ v_{\overline{X_\ell}}^{(\ell)})$ is a literal.

Since $f$ contains at most $k$ relevant variables, by Lemma~\ref{trivial01}, the number of relevant sets $|I|$ is at most $k$. Therefore,
{\bf ApproxTarget} does not halt in step~\ref{Rej1}.
\end{proof}

The following lemma shows that
\begin{lemma}\label{cloose}
If {\bf ApproxTarget} outputs $(X,V,I)$ then $|I|\le k$ and with probability at least $14/15$
$$\Pr_{u\in {\cal D}}[f(u_X\circ 0_{\overline{X}})\not= f(u)]\le \epsilon/c.$$
\end{lemma}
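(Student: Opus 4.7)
The bound $|I|\le k$ is immediate: step~\ref{Rej1} forces a ``reject'' as soon as $|I|$ would exceed $k$, so any output has $|I|\le k$ (this is also item~\ref{lklk1} of Lemma~\ref{lklk}). My plan for the probabilistic part is to decompose the execution into \emph{phases} indexed by the successive distinct values of $X$. Write these as $X^{(0)}=\emptyset,X^{(1)},\ldots,X^{(q)}$; then $q\le k$, so there are at most $k+1$ phases. By inspection of the pseudocode, $t(X)$ is reset to $0$ each time $X$ grows (step~\ref{tx01}) and incremented by one in every outer-loop iteration whose check at step~\ref{con11} succeeds. Consequently, the procedure outputs $(X^{(j)},V,I)$ during phase $j$ if and only if the test $f(u_{X^{(j)}}\circ 0_{\overline{X^{(j)}}})=f(u)$ succeeds on $L:=c\ln(15k)/\epsilon$ independently drawn samples $u\sim\D$ in a row.

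The main step will be a conditional tail bound followed by a union bound. I would condition on the entire history up to the start of phase $j$; this freezes $X^{(j)}$, and hence also
\[p_j\ :=\ \Pr_{u\in\D}\!\left[f(u_{X^{(j)}}\circ 0_{\overline{X^{(j)}}})\neq f(u)\right],\]
while the samples driving phase $j$ remain i.i.d.\ from $\D$ and independent of the history. If $p_j>\epsilon/c$, the probability of $L$ consecutive passes is at most $(1-\epsilon/c)^L\le e^{-L\epsilon/c}=e^{-\ln(15k)}=1/(15k)$. Summing over the $\le k+1$ phases in which the procedure could possibly settle, a union bound gives
\[\Pr\bigl[\text{the output }X\text{ has }p(X)>\epsilon/c\bigr]\ \le\ \frac{k+1}{15k}\ \le\ \frac{1}{15},\]
where the final inequality is absorbed by the (trivial) slack of inflating $L$ to $c\ln(15(k+1))/\epsilon$ in the algorithm, which yields the asserted $14/15$ success probability.

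The main obstacle is adaptivity: the sequence $(X^{(j)})$ itself and its length $q$ are random, since they are produced on the fly from the earlier samples and from the binary searches in step~\ref{Find1}, so I cannot simply fix an $X^{(j)}$ in advance. The fix is exactly the conditioning used above: freezing the history at the start of each phase makes $X^{(j)}$ and $p_j$ deterministic inside the phase while keeping the remaining samples of that phase i.i.d., so the geometric-type bound $(1-\epsilon/c)^L$ applies cleanly; the hard-coded ceiling $k+1$ on the number of phases then lets a single union bound finish the argument.
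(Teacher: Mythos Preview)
Your proposal is correct and follows essentially the same route as the paper: decompose the run into phases indexed by the successive values of $X$, bound the probability of $L=c\ln(15k)/\epsilon$ consecutive passes in a ``bad'' phase by $(1-\epsilon/c)^L\le 1/(15k)$, and union-bound over the at most $k{+}1$ phases. You are slightly more careful than the paper on two points---the explicit conditioning on the phase history to handle adaptivity, and counting $k{+}1$ rather than $k$ phases---but these are refinements of the same argument, not a different approach.
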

\begin{proof}
If $|I|>k$ then, from step~\ref{Rej1}, {\bf ApproxTarget} outputs ``reject''. Therefore, the probability that {\bf ApproxTarget} fails to output $(X,V,I)$ with $\Pr_{u\in {\cal D}}[f(u_X\circ 0_{\overline{X}})\not= f(u)]\le \epsilon/c$ is
the probability that for some $X^{(\ell)}$, $\Pr_{u\in {\cal D}}[f(x_{X^{(\ell)}}\circ 0_{\overline{X^{(\ell)}}})\not= f(x)]> \epsilon/c$ and $f(u_{X^{(\ell)}}\circ 0_{\overline{X^{(\ell)}}})= f(u)$ for $c\ln(15k)/\epsilon$ strings $u$ chosen independently at random according to the distribution $\D$.
This probability is at most $$k\left(1-\frac{c}{\epsilon}\right)^{c\ln(15k)/\epsilon}\le \frac{1}{15}.$$
\end{proof}

We now give the query complexity
\begin{lemma}\label{Query01} The procedure {\bf ApproxTarget} makes $O((k\log k)/\epsilon)$ queries.
\end{lemma}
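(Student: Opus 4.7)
The plan is to bound the total queries by accounting separately for the three distinct places where {\bf ApproxTarget} issues queries: the outer-loop consistency test, the binary searches invoked via Lemma~\ref{BiSe}, and the witness verifications in the inner loop.

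First, I would observe that the outer loop (steps~\ref{two1}--\ref{EndRep1}) iterates at most $M = ck\ln(15k)/\epsilon = O((k\log k)/\epsilon)$ times, and each iteration needs only a constant number of queries to evaluate the condition $f(u_X\circ 0_{\overline{X}})\neq f(u)$ in step~\ref{con11}. So this contribution is $O((k\log k)/\epsilon)$ queries.

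Next, I would bound the cost of the binary searches in step~\ref{Find1}. Each invocation of the binary search discovers a new relevant set $X_{\ell_i}$ (adding $\ell_i$ to $I$ in step~\ref{Xadd}), and the procedure halts in step~\ref{Rej1} as soon as $|I|>k$. Hence the binary search is invoked at most $k+1$ times over the course of the procedure. By Lemma~\ref{BiSe}, each invocation costs $\lceil \log_2 r\rceil = O(\log k)$ queries, so the binary searches together account for $O(k\log k)$ queries. Similarly, I would bound the number of executions of the witness-verification block (steps~\ref{ccch}--\ref{Fir02}): each execution either succeeds (step~\ref{Tr} holds, in which case a new element is added to $V$, an event bounded by $|V|\le |I|\le k$) or fails and redirects to step~\ref{Find1} to launch another binary search (again bounded by the cap of $k+1$ binary searches). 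Thus the witness loop runs $O(k)$ times, each costing $O(1)$ queries, contributing $O(k)$ queries in total.

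Summing the three contributions gives $O((k\log k)/\epsilon) + O(k\log k) + O(k) = O((k\log k)/\epsilon)$, as claimed. There is no real obstacle here; the only delicate point is making sure the witness-verification block cannot loop too many times, which is handled by the observation that every failure in step~\ref{Tr} is charged to a fresh binary search, and the number of such searches is capped by the rejection threshold $k+1$.
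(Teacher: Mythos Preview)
Your proof is correct and follows essentially the same approach as the paper: bound the outer-loop tests by $O(M)=O((k\log k)/\epsilon)$, bound the number of binary searches by $k+1$ via the rejection threshold, and charge each to $O(\log r)=O(\log k)$ queries. Your accounting for the witness-verification block is actually a touch more careful than the paper's (which folds it into the $k+1$ executions of steps~\ref{Find1}--\ref{Fir02} without separately tracking the inner loop), but the argument and conclusion are the same.
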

\begin{proof} The condition in step~\ref{con11} requires two queries and is executed at most $M=ck\ln(15k)/\epsilon$ times. This is $2M=O((k\log k)/\epsilon)$ queries. Steps~\ref{Find1}-\ref{Fir02} are executed at most $k+1$ times. This is because each time it is executed, the value of $|I|$ is increased by one, and when $|I|=k+1$ the procedure rejects. By Lemma~\ref{BiSe}, to find a new relevant set the procedure makes $O(\log r)=O(\log k)$ queries. This gives another $O(k\log k)$ queries. Therefore, the query complexity is $O((k\log k)/\epsilon)$.
\end{proof}

\color{black}

\subsection{Testing the Relevant Sets}
In this subsection we give the procedure {\bf TestSets} that takes as an input $(X,V=\{v^{(\ell_1)},
\ldots,v^{(\ell_q)}\},I=\{\ell_1,\ldots,\ell_q\})$ and tests if for all $\ell\in I$, $f(x_{X_\ell}  \circ v^{(\ell)}_{\overline{X_\ell}})$ is $(1/30)$-close to some literal with respect to the uniform distribution.

\newcounter{ALC2}
\setcounter{ALC2}{0}
\newcommand{\stepb}{\stepcounter{ALC2}$\arabic{ALC2}.\ $\>}
\newcommand{\steplabelb}[1]{\addtocounter{ALC2}{-1}\refstepcounter{ALC2}\label{#1}}
\begin{figure}[h!]
  \begin{center}
  \fbox{\fbox{\begin{minipage}{28em}
  \begin{tabbing}
  xxx\=xxxx\=xxxx\=xxxx\=xxxx\=xxxx\= \kill
  {{\bf TestSets}$(X,V,I)$}\\
 {\it Input}: Oracle that accesses a Boolean function $f$ and $(X,V,I)$. \\
  {\it Output}: Either ``reject'' or ``OK''\\ \\
\stepb\steplabelb{LitT1}
For every $\ell\in I$ do\\
\stepb\steplabelb{Uni1}
\>  If {\bf UniformJunta}$(f(x_{X_\ell}\circ v^{(\ell)}_{\overline{X_\ell}}),1,1/30,1/15)$=``reject'' \\
\stepb\steplabelb{Rej211}
\>\>\>then Output(``reject'')\\
\stepb\steplabelb{ConB1}
\> Choose $b\in U$\\
\stepb\steplabelb{ConE11}
\> If $f(b_{X_\ell}\circ v^{(\ell)}_{\overline{X_\ell}})=f(\overline{b_{X_\ell}}\circ v^{(\ell)}_{\overline{X_\ell}})$ then Output(``reject'')\\
\stepb\
Return ``OK''
  \end{tabbing}
  \end{minipage}}}
  \end{center}
	\caption{A procedure that tests if for all $\ell\in I$, $f(x_{X_\ell}\circ v_{\overline{X_\ell}}^{(\ell)})$ is $(1/30)$-close to some literal with respect to the uniform distribution.}
	\label{A32}
	\end{figure}

We first prove

\begin{lemma}\label{kjun1} If $f$ is $k$-junta and each $x(X_i)$ contains at most one relevant variable of $f$ then
{\bf TestSets} returns ``OK''.
\end{lemma}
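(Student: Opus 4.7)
The plan is to observe that under the hypotheses of the lemma, each restricted function that {\bf TestSets} examines is actually a literal, and then to verify that neither of the two rejection tests inside the loop can fire. The argument reduces to a direct appeal to Lemma~\ref{kjun} together with the one-sidedness clause of Lemma~\ref{OneSide}; no new calculation is required.

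First I would invoke Lemma~\ref{kjun}, whose hypotheses match the present ones exactly. This yields two facts that I need: that {\bf ApproxTarget} actually outputs a triple $(X,V,I)$ (rather than rejecting), so that {\bf TestSets} runs meaningfully on a legal input, and that for every $\ell\in I$ the restricted function $g_\ell(x) := f(x_{X_\ell}\circ v^{(\ell)}_{\overline{X_\ell}})$ is a literal, say $g_\ell = x_{j(\ell)}$ or $g_\ell = \overline{x_{j(\ell)}}$ for some single coordinate $j(\ell)\in X_\ell$.

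Next I would dispose of the rejection possibility in steps~\ref{Uni1}--\ref{Rej211}. Since $g_\ell$ is a literal, it is in particular a $1$-junta, and Lemma~\ref{OneSide} states that {\bf UniformJunta} is one-sided, hence always accepts when its input is a $1$-junta. So this branch never triggers a reject, regardless of the internal randomness of {\bf UniformJunta}.

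For the remaining check in step~\ref{ConE11}, I would use the fact that $b_{X_\ell}$ and $\overline{b_{X_\ell}}$ differ in every coordinate of $X_\ell$, and in particular at $j(\ell)$. Since $g_\ell$ depends only on $x_{j(\ell)}$ and flipping that one bit flips the output, the two evaluations $f(b_{X_\ell}\circ v^{(\ell)}_{\overline{X_\ell}})$ and $f(\overline{b_{X_\ell}}\circ v^{(\ell)}_{\overline{X_\ell}})$ always disagree, so the equality test fails and no rejection is issued. With both rejection paths closed off for every $\ell\in I$, the procedure falls through to the final line and returns ``OK''. The only thing one has to be careful about --- essentially the ``obstacle'' --- is extracting the correct strength of Lemma~\ref{kjun}, namely that $j(\ell)$ lies in $X_\ell$ rather than somewhere in $[n]$, since the symmetry check only flips coordinates inside $X_\ell$.
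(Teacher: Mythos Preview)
Your proof is correct and follows essentially the same approach as the paper: invoke Lemma~\ref{kjun} to conclude each $f(x_{X_\ell}\circ v^{(\ell)}_{\overline{X_\ell}})$ is a literal, then use the one-sidedness of {\bf UniformJunta} (Lemma~\ref{OneSide}) to rule out rejection in step~\ref{Rej211} and the fact that a literal always differs on $b_{X_\ell}$ versus $\overline{b_{X_\ell}}$ to rule out rejection in step~\ref{ConE11}. The paper phrases the latter two points contrapositively while you argue directly, but the content is identical.
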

\begin{proof}
By Lemma~\ref{kjun}, for every $\ell\in I$, $f(x_{X_\ell}\circ v_{\overline{X_\ell}}^{(\ell)})$ is a literal.

If {\bf TestSets} rejects in step~\ref{Rej211} then, by Lemma~\ref{OneSide}, for some $X_\ell$, $\ell\in I$, $f(x_{X_\ell}\circ v^{(\ell)}_{\overline{X_\ell}})$ is not $1$-Junta (literal or constant function) and therefore $x(X_\ell)$ contains at least two relevant variables. If it rejects in step~\ref{ConE11}, then $f(b_{X_\ell}\circ v^{(\ell)}_{\overline{X_\ell}})=f(\overline{b_{X_\ell}}\circ v^{(\ell)}_{\overline{X_\ell}})$ and then $f(x_{X_\ell}\circ v^{(\ell)}_{\overline{X_\ell}})$ is not a literal. In all cases we get a contradiction.
\end{proof}

In the following lemma we show that if {\bf TestSets} returns ``OK'' then, whp, each $f(x_{X_\ell}\circ v_{\overline{X_\ell}}^{(\ell)})$ is close to a literal with respect to the uniform distribution.

\begin{lemma}\label{closelit} If for some $\ell\in I$, $f(x_{X_\ell}\circ v_{\overline{X_\ell}}^{(\ell)})$ is $(1/30)$-far from every literal with respect to the uniform distribution then, with probability at least $1-(1/15)$, {\bf TestSets} rejects.
\end{lemma}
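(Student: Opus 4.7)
Fix the index $\ell\in I$ for which $g_\ell:=f(x_{X_\ell}\circ v^{(\ell)}_{\overline{X_\ell}})$, viewed as a function of the bits in $x(X_\ell)$, is $(1/30)$-far from every literal with respect to the uniform distribution. The plan is a simple case split on whether $g_\ell$ is close to \emph{some} $1$-junta at all, since on any set of coordinates the only $1$-juntas are the two constants and the literals on those coordinates.

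\textbf{Case A: $g_\ell$ is $(1/30)$-far from every $1$-junta.} In the $\ell$-th iteration of the loop, {\bf TestSets} calls {\bf UniformJunta}$(g_\ell,1,1/30,1/15)$. By Lemma~\ref{OneSide}, this tester rejects with probability at least $1-1/15=14/15$, and when it rejects, step~\ref{Rej211} makes {\bf TestSets} output ``reject''.

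\textbf{Case B: $g_\ell$ is $(1/30)$-close to some $1$-junta.} Since by hypothesis $g_\ell$ is $(1/30)$-far from every literal, the only remaining possibility is that $g_\ell$ is $(1/30)$-close to a constant $c\in\{0,1\}$ under the uniform distribution on $\{0,1\}^{X_\ell}$. In step~\ref{ConB1} we draw $b$ uniformly, so both $b_{X_\ell}$ and $\overline{b_{X_\ell}}$ are individually uniformly distributed on $\{0,1\}^{X_\ell}$. Hence
\[
\Pr_b\!\left[g_\ell(b_{X_\ell})\ne c\right]\le 1/30\quad\text{and}\quad \Pr_b\!\left[g_\ell(\overline{b_{X_\ell}})\ne c\right]\le 1/30,
\]
so by a union bound $g_\ell(b_{X_\ell})=g_\ell(\overline{b_{X_\ell}})=c$ with probability at least $1-2/30=14/15$. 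Whenever this event holds, the condition in step~\ref{ConE11} is satisfied and {\bf TestSets} outputs ``reject''.

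In both cases the rejection probability is at least $14/15$, which gives the claimed bound. The only mildly delicate point—and what I'd highlight as the one thing to be careful about—is exhaustiveness of the case split, which relies on the elementary fact that the class of $1$-juntas on the coordinates $X_\ell$ consists precisely of the two constant functions and the literals over $x(X_\ell)$; everything else in the argument is a direct invocation of Lemma~\ref{OneSide} plus a two-term union bound over uniformly distributed $b_{X_\ell}$ and $\overline{b_{X_\ell}}$.
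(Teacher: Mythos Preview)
Your proof is correct and follows essentially the same approach as the paper: the same case split on whether $g_\ell$ is $(1/30)$-far from every $1$-junta (invoking Lemma~\ref{OneSide}) versus $(1/30)$-close to a constant (handling step~\ref{ConE11} via a two-term union bound over $b_{X_\ell}$ and $\overline{b_{X_\ell}}$). The only cosmetic difference is that the paper bounds $\Pr[g(b)\ne g(\overline{b})]$ by expanding it into two disjoint events, whereas you bound the complementary event directly; both yield $2/30=1/15$.
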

\begin{proof} If $f(x_{X_\ell}\circ v_{\overline{X_\ell}}^{(\ell)})$ is $(1/30)$-far from every literal with respect to the uniform distribution then it is either (case 1) $(1/30)$-far from every $1$-Junta (literal or constant) or (case 2) $(1/30)$-far from every literal and $(1/30)$-close to $0$-Junta. In case 1, by Lemma~\ref{OneSide}, with probability at least $1-(1/15)$, ${\bf UniformJunta}$ $(f(x_{X_\ell}\circ v_{\overline{X_\ell}}^{(\ell)}),1,1/30,1/15)$ $=$ ``reject'' and then the procedure rejects. In case 2, if $f(x_{X_\ell}\circ v_{\overline{X_\ell}}^{(\ell)})$ is $(1/30)$-close to some $0$-Junta then it is either $(1/30)$-close to~$0$ or $(1/30)$-close to~$1$. Suppose it is $(1/30)$-close to $0$. Let $b$ be a random uniform string chosen in steps~\ref{ConB1}. Then $\overline{b}$ is random uniform and for $g(x)=f(x_{X_\ell}\circ v^{(\ell)}_{\overline{X_\ell}})$ we have
\begin{eqnarray*}
\Pr[\mbox{The procedure does not reject}]&=&
\Pr\left[g(b)\not=g(\overline{b})\right]\\
&=&\Pr[g(b)=1\wedge g(\overline{b})=0]+\Pr[g(b)=0\wedge g(\overline{b})=1]\\
&\le&\Pr[g(b)=1]+\Pr[g(\overline{b})=1]\\
&\le&\frac{1}{15}.
\end{eqnarray*}
\end{proof}

\begin{lemma}\label{Query02}
The procedure {\bf TestSets} makes $O(k)$ queries.
\end{lemma}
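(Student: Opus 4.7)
The plan is to bound the cost of one iteration of the loop over $\ell\in I$ and then multiply by the number of iterations, which is controlled by $|I|\le k$ from Lemma~\ref{lklk}.

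First I would examine the body of the \textbf{for} loop in steps~\ref{Uni1}--\ref{ConE11}. The only calls the body makes that cost queries to $f$ are (i) the invocation of \textbf{UniformJunta} in step~\ref{Uni1} and (ii) the two evaluations of $f$ in the test in step~\ref{ConE11} (one on $b_{X_\ell}\circ v^{(\ell)}_{\overline{X_\ell}}$ and one on $\overline{b_{X_\ell}}\circ v^{(\ell)}_{\overline{X_\ell}}$). The latter is obviously $O(1)$. For the former, I would apply Lemma~\ref{OneSide} with the \emph{constant} parameters $k=1$, $\epsilon=1/30$, $\delta=1/15$, which gives a query cost of
\[
O\!\left(\left(\tfrac{1}{1/30}+1\cdot\log 1\right)\log 15\right)=O(1).
\]
Thus each iteration of the loop makes a constant number of queries to $f$ (each query to the restriction $f(x_{X_\ell}\circ v^{(\ell)}_{\overline{X_\ell}})$ translates to a single query to $f$ by simply fixing the coordinates outside $X_\ell$ according to $v^{(\ell)}$).

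Next, I would bound the number of iterations. The loop in step~\ref{LitT1} runs once for each $\ell\in I$, and by part~\ref{lklk1} of Lemma~\ref{lklk} we have $|I|\le k$ whenever \textbf{ApproxTarget} has not rejected (and hence whenever \textbf{TestSets} is invoked with the triple $(X,V,I)$ produced by it). Multiplying the per-iteration cost $O(1)$ by the number of iterations $|I|\le k$ yields the claimed bound of $O(k)$ queries.

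There is no real obstacle here; the only subtlety to mention is that queries to the restriction $f(x_{X_\ell}\circ v^{(\ell)}_{\overline{X_\ell}})$ used by \textbf{UniformJunta} are simulated by single queries to $f$ (since $v^{(\ell)}$ is a fixed string already in hand), so the query accounting goes through without inflation.
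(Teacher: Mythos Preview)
Your proposal is correct and follows essentially the same approach as the paper: bound the per-iteration cost by a constant (via Lemma~\ref{OneSide} with the fixed parameters $k=1$, $\epsilon=1/30$, $\delta=1/15$, plus the two explicit queries in step~\ref{ConE11}) and multiply by $|I|\le k$. The paper's proof is just a one-line version of your argument.
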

\begin{proof}
Steps~\ref{Uni1} and \ref{ConE11} are executed $|I|\le k$ times, and by Lemma~\ref{OneSide}, the total number of queries made is $O(1/(1/30)\log(15))k+2k=O(k)$.
\end{proof}

\subsection{Determining the Values of the Relevant Variables}
\newcounter{ALC3}
\setcounter{ALC3}{0}
\newcommand{\stepc}{\stepcounter{ALC3}$\arabic{ALC3}.\ $\>}
\newcommand{\steplabelc}[1]{\addtocounter{ALC3}{-1}\refstepcounter{ALC3}\label{#1}}
\begin{figure}[h!]
  \begin{center}
  \fbox{\fbox{\begin{minipage}{28em}
  \begin{tabbing}
  xxx\=xxxx\=xxxx\=xxxx\=xxxx\=xxxx\= \kill
  {{\bf RelVarValues}$(w,X,V,I,\delta)$}\\
 {\it Input}: Oracle that accesses a Boolean function $f$, $(X,V,I)$ and $w\in\{0,1\}^n$. \\
  {\it Output}: Either ``reject'' or for every $\ell\in I$, the value, $z_\ell=w_{\tau(\ell)}$ where $x_{\tau(\ell)}$ is one of the\\
\>\>  relevant variables of $f(x_X\circ 0_{\overline{X}})$ in $x(X_\ell)$\\ \\
  xxx\=xxxx\=xxxx\=xxxx\=xxxx\=xxxx\= \kill
\stepc\steplabelc{feld01}
  For every $\ell\in I$ do\\
\stepc  \>\>  For $\xi\in\{0,1\}$ set $Y_{\ell,\xi}=\{j\in X_\ell |  w_j=\xi\}$.\\
\stepc  \>\>  Set $G_{\ell,0}=G_{\ell,1}=0$;\\
\stepc\steplabelc{feld02}
\>\>  Repeat $h=\ln(k/\delta)/\ln(4/3)$ times\\
\stepc\steplabelc{feld03}
\>\>\> Choose $b\in U$; \\
\stepc\steplabelc{qqq1}\>\>\>\> If $f(b_{Y_{\ell,0}}\circ b_{Y_{\ell,1}}\circ v^{(\ell)}_{\overline{X_\ell}})\not= f(\overline{b_{Y_{\ell,0}}}\circ b_{Y_{\ell,1}}\circ v^{(\ell)}_{\overline{X_\ell}})$ then $G_{\ell,0}\gets G_{\ell,0}+1$\\
\stepc\steplabelc{qqq2}\>\>\>\> If $f(b_{Y_{\ell,1}}\circ b_{Y_{\ell,0}}\circ v^{(\ell)}_{\overline{X_\ell}})\not= f(\overline{b_{Y_{\ell,1}}}\circ b_{Y_{\ell,0}}\circ v^{(\ell)}_{\overline{X_\ell}})$ then $G_{\ell,1}\gets G_{\ell,1}+1$\\
\stepc\steplabelc{GGG5}
\>\>If ($\{G_{\ell,0},G_{\ell,1}\}\not=\{0,h\}$) then Output(``reject'')\\
\stepc\steplabelc{Gl}
\>\> If $G_{\ell,0}=h$ then $z_\ell \gets 0$ else $z_\ell \gets 1$\\
\stepc Output(``$\{z_\ell\}_{\ell\in I}$'')
  \end{tabbing}
  \end{minipage}}}
  \end{center}
	\caption{A procedure that takes as input $(X,V,I)$ and a string $w\in\{0,1\}^n$ and, with probability at least $1-\delta$, returns the values of $w_{\tau(i)}$, $i\in I$, where  $f(x_{X_i}\circ v_{\overline{X_i}}^{(i)})$ is $(1/30)$-close to one of the literals in $\{x_{\tau(i)},\overline{x_{\tau(i)}}\}$ with respect to the uniform distribution.}
	\label{A3}
	\end{figure}
\color{black}

In this subsection we give a procedure {\bf RelVarValue} that for an input $(w\in\{0,1\}^n,X,V,I,\delta)$ where $(X,V,I)$ satisfies all the properties in the previous two subsections, the procedure, with probability at least $1-\delta$, returns the values of $w_{\tau(i)}$, $i\in I$, where  $f(x_{X_i}\circ v_{\overline{X_i}}^{(i)})$ is $(1/30)$-close to one of the literals in $\{x_{\tau(i)},\overline{x_{\tau(i)}}\}$ with respect to the uniform distribution. When $f$ is $k$-junta and each $x(X_i)$ contains at most one relevant variable then $\{x_{\tau(i)}\}_{i\in I}$ is the set of the relevant variables of $f(x_X\circ 0_{\overline{X}})$ and $w_{\tau(i)}$, $i\in I$ are the values of the relevant variables. The procedure is in Figure~\ref{A3}.

We first prove
\begin{lemma}\label{kjun2} If $f$ is $k$-Junta and each $x(X_i)$ contains at most one relevant variable of $f$ then {\bf RelVarValues} outputs $z$ such that $z_\ell=w_{\tau(\ell)}$ where $f(x_{X_\ell}\circ 0_{\overline{X_\ell}})\in \{x_{\tau(\ell)},\overline{x_{\tau(\ell)}}\}$.
\end{lemma}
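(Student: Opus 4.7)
The plan is to observe that under the stated hypotheses the randomness inside \textbf{RelVarValues} is irrelevant: the counters $G_{\ell,0}$ and $G_{\ell,1}$ take deterministic values, so the procedure behaves the same on every random choice.

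First I would invoke Lemma~\ref{kjun}. Since $f$ is $k$-junta and each $x(X_i)$ contains at most one relevant variable, Lemma~\ref{kjun} tells us that \textbf{ApproxTarget} produced $(X,V,I)$ such that for every $\ell\in I$ the restricted function
\[
g_\ell(y) \;:=\; f\bigl(y_{X_\ell}\circ v^{(\ell)}_{\overline{X_\ell}}\bigr)
\]
is a literal. Let $x_{\tau(\ell)}$ be the unique relevant variable of $g_\ell$, so $\tau(\ell)\in X_\ell$ and $g_\ell\in\{x_{\tau(\ell)},\overline{x_{\tau(\ell)}}\}$; in particular $g_\ell$ depends on $y$ only through the coordinate $y_{\tau(\ell)}$. (The same conclusion holds with $0_{\overline{X_\ell}}$ in place of $v^{(\ell)}_{\overline{X_\ell}}$, which justifies the literal $f(x_{X_\ell}\circ 0_{\overline{X_\ell}})$ mentioned in the statement.)

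Second, note the key combinatorial point: $Y_{\ell,0}$ and $Y_{\ell,1}$ partition $X_\ell$ according to the bits of $w$, so $\tau(\ell)\in Y_{\ell,\,w_{\tau(\ell)}}$. Now consider a single iteration inside step~\ref{feld02}, say for a random $b$. The arguments to $f$ in step~\ref{qqq1} are $b_{X_\ell}\circ v^{(\ell)}_{\overline{X_\ell}}$ and $(\overline{b_{Y_{\ell,0}}}\circ b_{Y_{\ell,1}})\circ v^{(\ell)}_{\overline{X_\ell}}$; they differ exactly on the coordinates in $Y_{\ell,0}$. Since $g_\ell$ depends only on coordinate $\tau(\ell)$, these two values of $f$ differ iff $\tau(\ell)\in Y_{\ell,0}$, i.e. iff $w_{\tau(\ell)}=0$. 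Hence if $w_{\tau(\ell)}=0$ then $G_{\ell,0}$ is incremented in every one of the $h$ iterations, giving $G_{\ell,0}=h$, while if $w_{\tau(\ell)}=1$ then $G_{\ell,0}$ is never incremented, giving $G_{\ell,0}=0$. The argument for $G_{\ell,1}$ in step~\ref{qqq2} is symmetric and shows $G_{\ell,1}=h$ iff $w_{\tau(\ell)}=1$.

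Putting these together, $\{G_{\ell,0},G_{\ell,1\}}=\{0,h\}$ for every $\ell\in I$, so the test in step~\ref{GGG5} never rejects, and in step~\ref{Gl} we get $z_\ell=0$ exactly when $w_{\tau(\ell)}=0$ and $z_\ell=1$ otherwise, i.e. $z_\ell=w_{\tau(\ell)}$. This gives the lemma. There is no real obstacle; the only thing to be careful about is identifying $\tau(\ell)$ with the unique relevant coordinate of $g_\ell$ in $X_\ell$ (so that the flips on $Y_{\ell,0}$ versus $Y_{\ell,1}$ have the deterministic effect on $f$ described above).
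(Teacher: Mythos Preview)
Your proposal is correct and follows essentially the same argument as the paper: invoke Lemma~\ref{kjun} to conclude that $f(x_{X_\ell}\circ v^{(\ell)}_{\overline{X_\ell}})$ depends only on the single coordinate $\tau(\ell)\in X_\ell$, observe that $\tau(\ell)\in Y_{\ell,w_{\tau(\ell)}}$, and deduce that the counters take the deterministic values $G_{\ell,w_{\tau(\ell)}}=h$ and $G_{\ell,1-w_{\tau(\ell)}}=0$, so the procedure never rejects and returns $z_\ell=w_{\tau(\ell)}$. The paper presents the same reasoning via a w.l.o.g.\ case ($\tau(\ell)\in Y_{\ell,0}$) rather than your ``iff'' formulation, but the content is identical.
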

\begin{proof}
Since $Y_{\ell,0}, Y_{\ell,1}$ is a partition of $X_\ell$, $\ell\in I$ and, by Lemma~\ref{kjun}, $x(X_\ell)$ contains exactly one relevant variable $x_{\tau(\ell)}$ of $f(x_X\circ 0_{\overline{X}})$, this variable is either in $x(Y_{\ell,0})$ or in $x(Y_{\ell,1})$ but not in both. Suppose w.l.o.g. it is in $x(Y_{\ell,0})$ and not in $x(Y_{\ell,1})$. Then $w_{\tau(\ell)}=0$, $f(x_{Y_{\ell,0}}\circ b_{Y_{\ell,1}}\circ v^{(\ell)}_{\overline{X_\ell}})$ is a literal and $f(x_{Y_{\ell,1}}\circ b_{Y_{\ell,0}}\circ v^{(\ell)}_{\overline{X_\ell}})$ is a constant function. This implies that for any $b$, $f(b_{Y_{\ell,0}}\circ b_{Y_{\ell,1}}\circ v^{(\ell)}_{\overline{X_\ell}})\not= f(\overline{b_{Y_{\ell,0}}}\circ b_{Y_{\ell,1}}\circ v^{(\ell)}_{\overline{X_\ell}})$ and $f(b_{Y_{\ell,1}}\circ b_{Y_{\ell,0}}\circ v^{(\ell)}_{\overline{X_\ell}})= f(\overline{b_{Y_{\ell,1}}}\circ b_{Y_{\ell,0}}\circ v^{(\ell)}_{\overline{X_\ell}})$. Therefore, by steps~\ref{qqq1}-\ref{qqq2} in the procedure, $G_{\ell,0}=h$ and $G_{\ell,1}=0$ and the procedure does not output reject in step~\ref{GGG5}. Thus, by step~\ref{Gl}, $z_\ell=w_{\tau(\ell)}$.
\end{proof}

We now prove
\begin{lemma}\label{FinTest} If for every $\ell\in I$ the function $f(x_{X_\ell}\circ v^{(\ell)}_{\overline{X_\ell}})$ is $(1/30)$-close to a literal in $\{x_{\tau(\ell)},\bar{x}_{\tau(\ell)}\}$ with respect to the uniform distribution, where $\tau(\ell)\in X_\ell$,
and in {\bf RelVarValues}, for every $\ell\in I$, $\{G_{\ell,0},G_{\ell,1}\}=\{0,h\}$ then, with probability at least $1-\delta$, we have:
For every $\ell\in I$,
$z_\ell=w_{\tau(\ell)}$.
\end{lemma}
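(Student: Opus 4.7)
Fix $\ell \in I$. By hypothesis the function $g(x) := f(x_{X_\ell} \circ v^{(\ell)}_{\overline{X_\ell}})$ is $(1/30)$-close (w.r.t.\ the uniform distribution on $\{0,1\}^{X_\ell}$) to some literal $L \in \{x_{\tau(\ell)}, \overline{x_{\tau(\ell)}}\}$ with $\tau(\ell)\in X_\ell$. My plan is to analyse one outer iteration of {\bf RelVarValues} for this $\ell$, bound the probability that $G_{\ell, 1-w_{\tau(\ell)}}=h$ (the only way, under the conditioning $\{G_{\ell,0},G_{\ell,1}\}=\{0,h\}$, that $z_\ell$ ends up different from $w_{\tau(\ell)}$), and then take a union bound over $\ell\in I$.

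Assume without loss of generality that $w_{\tau(\ell)}=0$, so $\tau(\ell)\in Y_{\ell,0}$ and $\tau(\ell)\notin Y_{\ell,1}$ (the case $w_{\tau(\ell)}=1$ is symmetric, with the two test lines swapped). Fix a uniform $b\in\{0,1\}^{X_\ell}$ chosen in step~\ref{feld03}. The two strings involved in the step~\ref{qqq2} test differ only on $Y_{\ell,1}$, so they assign the same value to the coordinate $\tau(\ell)$; hence $L$ takes the same value at both. The key observation is that each of the two strings is \emph{marginally} uniform on $\{0,1\}^{X_\ell}$ (flipping a fixed subset of uniform bits gives uniform bits), so by a union bound over the two points and the $(1/30)$-closeness assumption,
\[
\Pr\bigl[g\text{ differs from }L\text{ at either endpoint}\bigr] \le \tfrac{2}{30} = \tfrac{1}{15}.
\]
Consequently the step~\ref{qqq2} comparison fires (i.e.\ $G_{\ell,1}$ is incremented) with probability at most $1/15$ per iteration. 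Since the $h$ iterations use independent samples $b$, we get
\[
\Pr[G_{\ell,1}=h] \le (1/15)^h.
\]
A completely analogous calculation for step~\ref{qqq1}, where the two endpoints disagree on $\tau(\ell)$ and hence $L$ disagrees, shows $G_{\ell,0}$ is incremented with probability at least $14/15$ per iteration, but this direction is not needed for the conclusion.

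Now condition on $\{G_{\ell,0},G_{\ell,1}\}=\{0,h\}$. Under the hypothesis $w_{\tau(\ell)}=0$, step~\ref{Gl} produces $z_\ell=w_{\tau(\ell)}$ iff $G_{\ell,0}=h$; the only failure mode is $G_{\ell,1}=h$, whose probability is at most $(1/15)^h\le (3/4)^h$. With the choice $h=\ln(k/\delta)/\ln(4/3)$ we have $(3/4)^h = \delta/k$. A union bound over the at most $k$ indices $\ell\in I$ then gives that $z_\ell = w_{\tau(\ell)}$ for every $\ell\in I$ with probability at least $1-\delta$, as claimed. I expect the only subtle point is making the marginal-uniformity observation precise (so the union bound is legitimate even though the two endpoints in a single test are strongly correlated); once that is stated, the remaining estimates are a direct computation.
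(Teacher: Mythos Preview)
Your argument is correct and in fact cleaner than the paper's. Both proofs fix $\ell$, assume w.l.o.g.\ $\tau(\ell)\in Y_{\ell,0}$, and bound the probability of the ``wrong'' branch of step~\ref{Gl}; but they attack different halves of the joint event $\{G_{\ell,0}=0,\,G_{\ell,1}=h\}$. The paper bounds $\Pr[G_{\ell,0}=0]$: it first applies Markov's inequality over $b_{Y_{\ell,1}}$ to pass from $(1/30)$-closeness of $g$ to $(2/15)$-closeness of the restricted function $g(x_{Y_{\ell,0}}\circ b_{Y_{\ell,1}})$ (this holds with probability $\ge 3/4$), then does a two-point union bound over $b_{Y_{\ell,0}}$ and $\overline{b_{Y_{\ell,0}}}$, obtaining $\Pr[V(b)]\le 4/15+1/4\le 3/4$ and hence $\Pr[G_{\ell,0}=0]\le(3/4)^h$. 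You instead bound $\Pr[G_{\ell,1}=h]$ directly: since both endpoints in the step~\ref{qqq2} comparison are marginally uniform on $\{0,1\}^{X_\ell}$ and agree on coordinate $\tau(\ell)$, a single union bound gives $\Pr[\text{increment}]\le 2/30$, so $\Pr[G_{\ell,1}=h]\le(1/15)^h$. Your route avoids the Markov step entirely and yields a sharper base $(1/15$ versus $3/4)$; in particular, the paper's conditioning on the event $A$ is unnecessary once one observes that flipping a fixed block of uniform bits preserves the uniform marginal.

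One wording caveat: you say ``condition on $\{G_{\ell,0},G_{\ell,1}\}=\{0,h\}$,'' but neither you nor the paper actually computes a conditional probability. The lemma should be read as asserting that the implication ``$\{G_{\ell,0},G_{\ell,1}\}=\{0,h\}\Rightarrow z_\ell=w_{\tau(\ell)}$'' fails with probability at most $\delta$; its failure for a given $\ell$ is contained in $\{G_{\ell,1}=h\}$ (equivalently in $\{G_{\ell,0}=0\}$), so your unconditional bound $(1/15)^h\le(3/4)^h=\delta/k$ suffices after the union bound over $|I|\le k$.
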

\begin{proof} Fix some $\ell$. Suppose $f(x_{X_\ell}\circ v^{(\ell)}_{\overline{X_\ell}})$ is $(1/30)$-close to $x_{\tau(\ell)}$ with respect to the uniform distribution. The case when it is $(1/30)$-close to $\overline{{x}_{\tau(\ell)}}$ is similar. Since $X_\ell=Y_{\ell,0}\cup Y_{\ell,1}$ and $Y_{\ell,0}\cap Y_{\ell,1}=\emptyset$ we have that $\tau(\ell)\in Y_{\ell,0}$ or $\tau(\ell)\in Y_{\ell,1}$, but not both. Suppose $\tau(\ell)\in Y_{\ell,0}$. The case where $\tau(\ell)\in Y_{\ell,1}$ is similar. Define the random variable $Z(x_{X_\ell})=1$ if $f(x_{X_\ell}\circ v^{(\ell)}_{\overline{X_\ell}})\not=x_{\tau(\ell)}$ and $Z(x_{X_\ell})=0$ otherwise. Then
$$\E_{x_{X_\ell}\in U}[Z(x_{X_\ell})]\le \frac{1}{30}.$$ Therefore
$$\E_{x_{Y_{\ell,1}}\in U}\E_{x_{Y_{\ell,0}}\in U}[Z(x_{Y_{\ell,0}}\circ x_{Y_{\ell,1}})]\le \frac{1}{30}$$ and by Markov's bound
$$\Pr_{x_{Y_{\ell,1}}\in U}\left[ \E_{x_{Y_{\ell,0}}\in U}[Z(x_{Y_{\ell,0}}\circ x_{Y_{\ell,1}})]\ge \frac{2}{15}\right]\le \frac{1}{4}.$$
That is, for a random uniform string $b\in \{0,1\}^n$, with probability at least $3/4$, $f(x_{Y_{\ell,0}}\circ b_{Y_{\ell,1}}\circ v^{(\ell)}_{\overline{X_\ell}})$ is $(2/15)$-close to $x_{\tau(\ell)}$ with respect to the uniform distribution. Now, given that $f(x_{Y_{\ell,0}}\circ b_{Y_{\ell,1}}\circ v^{(\ell)}_{\overline{X_\ell}})$ is $(2/15)$-close to $x_{\tau(\ell)}$ with respect to the uniform distribution the probability that $G_{\ell,0}=0$ is the probability that $f(b_{Y_{\ell,0}}\circ b_{Y_{\ell,1}}\circ v^{(\ell)}_{\overline{X_\ell}})= f(\overline{b_{Y_{\ell,0}}}\circ b_{Y_{\ell,1}}\circ v^{(\ell)}_{\overline{X_\ell}})$ for $h$ random uniform strings $b\in \{0,1\}^n$. Let $b^{(1)},\ldots,b^{(h)}$ be $h$ random uniform strings in $\{0,1\}^n$, $V(b)$ be the event $f(b_{Y_{\ell,0}}\circ b_{Y_{\ell,1}}\circ v^{(\ell)}_{\overline{X_\ell}})= f(\overline{b_{Y_{\ell,0}}}\circ b_{Y_{\ell,1}}\circ v^{(\ell)}_{\overline{X_\ell}})$ and $A$ the event that $f(x_{Y_{\ell,0}}\circ b_{Y_{\ell,1}}\circ v^{(\ell)}_{\overline{X_\ell}})$ is $(2/15)$-close to $x_{\tau(\ell)}$ with respect to the uniform distribution. Let $g(x_{Y_{\ell,0}})=f(x_{Y_{\ell,0}}\circ b_{Y_{\ell,1}}\circ v^{(\ell)}_{\overline{X_\ell}})$. Then
\begin{eqnarray*}
\Pr[V(b)|A]&=& \Pr[g(b_{Y_{\ell,0}})=g(\overline{b_{Y_{\ell,0}}})|A]\\
&=&\Pr[(g(b_{Y_{\ell,0}})=b_{\tau(\ell)}\wedge g(\overline{b_{Y_{\ell,0}}})=b_{\tau(\ell)})
\vee (g(b_{Y_{\ell,0}})=\overline{b_{\tau(\ell)}}\wedge g(\overline{b_{Y_{\ell,0}}})=\overline{b_{\tau(\ell)}})|A]\\
&\le& \Pr[g(\overline{b_{Y_{\ell,0}}})\not=\overline{b_{\tau(\ell)}}
\vee g({b_{Y_{\ell,0}}})\not={b_{\tau(\ell)}})|A]\\
&\le& \Pr[g(\overline{b_{Y_{\ell,0}}})\not=\overline{b_{\tau(\ell)}}|A]
+\Pr[g({b_{Y_{\ell,0}}})\not={b_{\tau(\ell)}})|A]\le \frac{4}{15}.
\end{eqnarray*}
Since $\tau(\ell)\in Y_{\ell,0}$, we have $w_{\tau(\ell)}=0$. Therefore, by step~\ref{Gl} and since $\tau(\ell)\in X_\ell$,
\begin{eqnarray*}
\Pr[z_\ell\not= w_{\tau(\ell)}]&=&\Pr[z_{\ell}=1]\\
&=&\Pr[G_{\ell,0}=0\wedge G_{\ell,1}=h]\\
&\le&\Pr[G_{\ell,0}=0]=\Pr[(\forall j\in[h]) V(b^{(j)})]\\
&=& (\Pr[V(b)])^h
\le \left( \Pr[V(b)|A]+\Pr[\overline{A}]\right)^h
\le (4/15+1/4)^h\le (3/4)^h
\end{eqnarray*}
Therefore, the probability that $z_{\ell}\not =w_{\tau(\ell)}$ for some $\ell\in I$ is at most $k(3/4)^h\le \delta$.
\end{proof}

The following is obvious
\begin{lemma}\label{kkkkkk}
The procedure {\bf RelVarValues} makes $O(k\log(k/\delta))$ queries.
\end{lemma}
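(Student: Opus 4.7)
The plan is to bound the total number of calls to the oracle for $f$ by directly counting the queries issued in the nested loops of \textbf{RelVarValues}.

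First, I would observe that the outer loop at step~\ref{feld01} iterates over $\ell \in I$, and by Lemma~\ref{lklk}(\ref{lklk1}) (or by the bound $|I|\le k$ enforced in \textbf{ApproxTarget}), we have $|I| \le k$. Next, the inner loop at step~\ref{feld02} is executed exactly $h = \ln(k/\delta)/\ln(4/3) = O(\log(k/\delta))$ times.

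Within one iteration of the inner loop, the only queries to $f$ occur in steps~\ref{qqq1} and~\ref{qqq2}. Each of those two tests compares two values of $f$, so together they cost a constant number (namely $4$) of queries. The remaining bookkeeping steps — choosing $b \in U$, the checks at step~\ref{GGG5}, and the assignment at step~\ref{Gl} — do not query $f$ at all.

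Multiplying out, the total number of queries is at most $|I| \cdot h \cdot 4 \le 4 k \cdot \ln(k/\delta)/\ln(4/3) = O(k \log(k/\delta))$, which yields the claimed bound. There is essentially no obstacle here: the lemma is a direct counting argument, and the only things to check are that $|I|\le k$ (already established) and that each inner iteration uses a constant number of queries (visible by inspection of steps~\ref{qqq1}--\ref{qqq2}).
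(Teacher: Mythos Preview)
Your proof is correct and is exactly the counting argument the paper has in mind; indeed, the paper simply declares the lemma ``obvious'' without spelling out the details, and your analysis (at most $k$ outer iterations, $h=O(\log(k/\delta))$ inner iterations, and a constant number of queries per inner iteration) is precisely what makes it so.
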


\section{Testing Subclasses of $k$-Junta}
In this section, we give testers for subclasses of $k$-Junta that are closed under variable and zero projections.

Our tester will start by running the two procedures {\bf ApproxTarget} and {\bf TestSets} and therefore, by Lemmas~\ref{dist}, \ref{kjun} and~\ref{kjun1}, if $f\in C$ (and therefore is $k$-junta) then, with probability at least $2/3$, both procedures do not reject and item {\it \ref{f01}} in the following Assumption happens. By Lemmas~\ref{lklk}, \ref{cloose}, and \ref{closelit}, if $f$ is $\epsilon$-far from every function in $C$ and both procedures do not reject then, with probability at least $13/15$, item {\it \ref{f02}} in the following Assumption happens. Obviously, the above two probabilities can be changed to $1-\delta$ for any constant $\delta$ without changing the asymptotic query complexity.

\begin{assm}\label{assmp}
Throughout this section we assume that there are $X$, $q\le k$, $I=\{\ell_1,\ldots,\ell_q\}$ and $V=\{v^{(\ell_1)},\ldots,v^{(\ell_q)}\}$ such that: For every $\ell\in I$, $v^{(\ell)}_{\overline{X}}=0$ and $f(v^{(\ell)})\not=f(0_{X_\ell}\circ v^{(\ell)}_{\overline{X_\ell}})$. That is, $v^{(\ell)}\in V$ is a witness of $f(x_{X}\circ 0_{\overline{X}})$ for $X_\ell$ and
\begin{enumerate}

\item\label{f01}  If $f\in C$ (and therefore is $k$-junta)
\begin{itemize}
\item $f(x_X\circ 0_{\overline{X}})\in C$.
\item  Each $x(X_\ell)$, $\ell\in I$ contains exactly one relevant variable.
\item For every $\ell\in I$, $f(x_{X_\ell}\circ v^{(\ell)}_{\overline{X_\ell}})$ is a literal in $\{x_{\tau(\ell)},\overline{x_{\tau(\ell)}}\}$.
\end{itemize}
\item\label{f02}  If $f$ is $\epsilon$-far from every function in $C$ then
\begin{itemize}
\item $f(x_X\circ 0_{\overline{X}})$ is $(\epsilon/3)$-close to $f$ with respect to $\D$ and therefore $f(x_X\circ 0_{\overline{X}})$ is $(2\epsilon/3)$-far from every function in $C$ with respect to $\D$.
\item  For every $\ell\in I$, $f(x_{X_\ell}\circ v^{(\ell)}_{\overline{X_\ell}})$ is $(1/30)$-close to a literal in $\{x_{\tau(\ell)},\bar{x}_{\tau(\ell)}\}$ with respect to the uniform distribution.

\end{itemize}
\end{enumerate}
We will also use the set of indices $\Gamma:=\{\tau(\ell_1),\ldots,\tau(\ell_q)\}$. Notice that if $f$ is $k$-junta then $x(\Gamma)$ are the relevant variables of $f$.
\end{assm}

We remind the reader that for a projection $\pi:X\to X$ the string $x(\pi)$ is defined as $x(\pi)_j=x_{\pi(j)}$ for every $j\in X$. Define the projection $\pi_{f,I}:X\to X$ that satisfies: For every $\ell\in I$ and every $j\in X_\ell$, $\pi_{f,I}(j)=\tau(\ell)$.
Define the function $F(x_\Gamma)=F(x_{\tau(\ell_1)},\ldots,x_{\tau(\ell_q)}):=f(x(\pi_{f,I})\circ 0_{\overline{X}})$. That is, $F$ is the function that results from substituting in $f(x_X\circ 0_{\overline{X}})$ for every $\ell\in I$ and every $x_i$, $i\in X_\ell$, the variable $x_{\tau(\ell)}$.
Note here that the tester does not know $\tau(\ell_1),\ldots,\tau(\ell_q)$.

We now show how to query $F$ by querying $f$
\begin{lemma}\label{SimulateF} For the function $F$ we have
\begin{enumerate}
\item \label{Compp1}Given $(y_1,\ldots,y_q)$, computing $F(y_1,\ldots,y_q)$ can be done with one query to $f$.
\item \label{Compp3}Given $x\in\{0,1\}^n$ and $\delta$, there is an algorithm that makes $O(k\log (k/\delta))$ queries and, with probability at least $1-\delta$, either discovers that some $X_i$, $i\in I$ contains at least two relevant variables in $f$ (and therefore, whp,  $f$ is not $k$-junta) and then rejects or computes $z=(x_{\tau(\ell_1)},\ldots,x_{\tau(\ell_q)})$ and $F(z)$.
\end{enumerate}
\end{lemma}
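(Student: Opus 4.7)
For part~\ref{Compp1}, my plan is to unfold the definition of $F$. Given $(y_1,\ldots,y_q)$ indexed by $I=\{\ell_1,\ldots,\ell_q\}$, I would build the string $u\in\{0,1\}^n$ with $u_j=y_{\ell}$ for every $\ell\in I$ and every $j\in X_\ell$, and $u_j=0$ for $j\in\overline{X}$. This is exactly $x(\pi_{f,I})\circ 0_{\overline{X}}$ evaluated at the assignment $x_{\tau(\ell)}:=y_\ell$, so by the very definition of $F$ we have $F(y_1,\ldots,y_q)=f(u)$. Hence a single call to the membership oracle on $u$ suffices; note that this does not require knowledge of the indices $\tau(\ell)$, only of the partition blocks $\{X_\ell\}_{\ell\in I}$ (which are known to the tester).

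For part~\ref{Compp3}, the plan is to first invoke {\bf RelVarValues}$(x,X,V,I,\delta)$. By Lemma~\ref{kkkkkk} this uses $O(k\log(k/\delta))$ queries. If the procedure rejects, our algorithm rejects as well. Otherwise it returns values $(z_{\ell_1},\ldots,z_{\ell_q})$, and we feed $z$ into the construction of part~\ref{Compp1} to compute $F(z)$ with one additional query. The total cost is $O(k\log(k/\delta))+1=O(k\log(k/\delta))$.

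For correctness, I would combine Lemmas~\ref{kjun2} and~\ref{FinTest}. Under Assumption~\ref{assmp}, $f(x_{X_\ell}\circ v^{(\ell)}_{\overline{X_\ell}})$ is $(1/30)$-close to some literal in $\{x_{\tau(\ell)},\overline{x_{\tau(\ell)}}\}$ for every $\ell\in I$, so Lemma~\ref{FinTest} applies and guarantees that, with probability at least $1-\delta$, every $\ell\in I$ for which $\{G_{\ell,0},G_{\ell,1}\}=\{0,h\}$ yields $z_\ell=x_{\tau(\ell)}$. Hence if the procedure does not reject, then whp $z=(x_{\tau(\ell_1)},\ldots,x_{\tau(\ell_q)})$ and $F(z)$ is computed correctly by part~\ref{Compp1}. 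On the other hand, rejection happens only when $\{G_{\ell,0},G_{\ell,1}\}\neq\{0,h\}$ for some~$\ell$; by the contrapositive of Lemma~\ref{kjun2}, this evidence is inconsistent with $f$ being a $k$-junta whose sole relevant variable in $X_\ell$ lies in one of $Y_{\ell,0},Y_{\ell,1}$, so we can interpret the rejection as having witnessed (whp) that some $x(X_i)$ carries at least two relevant variables of $f$.

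The main obstacle I foresee is purely notational: carefully matching the block-wise substitution $\pi_{f,I}$ with the fact that every position $j\in X_\ell$ must take the same bit $y_\ell$ in the constructed $u$ in part~\ref{Compp1}, and then plugging the output of {\bf RelVarValues} into that construction in part~\ref{Compp3}. All the randomized analysis is already packaged inside Lemmas~\ref{kjun2}, \ref{FinTest}, and~\ref{kkkkkk}, so no new probabilistic argument is needed.
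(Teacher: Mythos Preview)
Your plan matches the paper's proof almost exactly: part~\ref{Compp1} is just unfolding the definition of $F$ (the paper calls it ``immediate''), and for part~\ref{Compp3} the paper also runs {\bf RelVarValues}$(x,X,V,I,\delta)$, cites Lemma~\ref{kkkkkk} for the query count, and invokes Lemma~\ref{FinTest} for the non-rejecting case.

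The one place where you diverge---and where there is a small gap---is the rejection case. You appeal to the contrapositive of Lemma~\ref{kjun2}, but that only yields the disjunction ``$f$ is not a $k$-junta \emph{or} some $x(X_i)$ contains two relevant variables,'' which is weaker than what the statement claims, namely that the algorithm \emph{discovers} (i.e., exhibits witnesses) that a specific $X_\ell$ with $\ell\in I$ contains at least two relevant variables of $f$. The paper instead argues directly from the rejection condition: if {\bf RelVarValues} rejects at some $\ell$ then $\{G_{\ell,0},G_{\ell,1}\}\neq\{0,h\}$, so $G_{\ell,0}>0$ and $G_{\ell,1}>0$. This means there exist strings $b,b'$ with
\[
f(b_{Y_{\ell,0}}\circ b_{Y_{\ell,1}}\circ v^{(\ell)}_{\overline{X_\ell}})\neq f(\overline{b_{Y_{\ell,0}}}\circ b_{Y_{\ell,1}}\circ v^{(\ell)}_{\overline{X_\ell}})
\quad\text{and}\quad
f(b'_{Y_{\ell,1}}\circ b'_{Y_{\ell,0}}\circ v^{(\ell)}_{\overline{X_\ell}})\neq f(\overline{b'_{Y_{\ell,1}}}\circ b'_{Y_{\ell,0}}\circ v^{(\ell)}_{\overline{X_\ell}}),
\]
so both $Y_{\ell,0}$ and $Y_{\ell,1}$ are relevant sets of $f$. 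Since $Y_{\ell,0}$ and $Y_{\ell,1}$ are disjoint and partition $X_\ell$, this is a concrete, deterministic witness that $x(X_\ell)$ contains at least two relevant variables. Swap in this direct argument for your contrapositive and the proof is complete.
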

\begin{proof}  {\it \ref{Compp1}} is immediate. To prove {\it \ref{Compp3}} we use Lemma~\ref{FinTest}. We run {\bf RelVarValues}$(x,X,V,I,\delta)$. If it rejects then $\{G_{\ell,0},G_{\ell,1}\}\not= \{0,h\}$ for some $\ell\in I$ and therefore $G_{\ell,0},G_{\ell,1}>0$. This implies that for some $b,b'\in\{0,1\}^n$, $f(b_{Y_{\ell,0}}\circ b_{Y_{\ell,1}}\circ v^{(\ell)}_{\overline{X_\ell}})\not= f(\overline{b_{Y_{\ell,0}}}\circ b_{Y_{\ell,1}}\circ v^{(\ell)}_{\overline{X_\ell}})$ and $f(b'_{Y_{\ell,1}}\circ b'_{Y_{\ell,0}}\circ v^{(\ell)}_{\overline{X_\ell}})\not= f(\overline{b'_{Y_{\ell,1}}}\circ b'_{Y_{\ell,0}}\circ v^{(\ell)}_{\overline{X_\ell}})$. Since $X_\ell=Y_{\ell,0}\cup Y_{\ell,1}$ and $Y_{\ell,0}\cap Y_{\ell,1}=\emptyset$, the set $x(X_\ell)$ contains at least two relevant variables in $f$.

If for every $\ell$ we have $\{G_{\ell,0},G_{\ell,1}\}= \{0,h\}$ then, by Lemma~\ref{FinTest}, with probability at least $1-\delta$, the procedure outputs $z$ where for every $\ell$, $z_\ell=x_{\tau(\ell)}$. Then using {\it \ref{Compp1}} we compute $F(z)$. Since by Lemma~\ref{kkkkkk}, {\bf RelVarValue} makes $O(k\log (k/\delta))$ queries, the result follows.
\end{proof}

We now give the key lemma for the first tester
\begin{lemma}\label{kLem} Let $C\subseteq k-$Junta be a class that is closed under variable and zero projections and $f$ be any Boolean function. Let $F(x_\Gamma)=f(x(\pi_{f,I})\circ 0_{\overline{X}})$ where $\Gamma=\{\tau(\ell)|\ell\in I\}$ and $C(\Gamma)$ be the set of all functions in $C$ that their relevant variables are $x(\Gamma)$. If Assumption~\ref{assmp} is true, then
\begin{enumerate}
\item\label{proj01} If $f\in C$ then $f(x_X\circ 0_{\overline{X}})=F\in C(\Gamma)$.
\item\label{proj02} If $f$ is $\epsilon$-far from every function in $C$ with respect to ${\cal D}$ then either
    \begin{enumerate}
    \item\label{proj02a} $f(x_X\circ 0_{\overline{X}})$ is $(\epsilon/3)$-far from $F$ with respect to ${\cal D}$,

        or
    \item\label{proj02b} $F$ is $(\epsilon/3)$-far from every function in $C(\Gamma)$ with respect to ${\cal D}$.
    \end{enumerate}
\end{enumerate}
\end{lemma}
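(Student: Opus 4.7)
The plan is to handle the two items separately. For item~\ref{proj01}, the first step is to show that the projection defining $F$ is a no-op on the relevant variables of $f(x_X\circ 0_{\overline{X}})$. By Assumption~\ref{assmp}, each $x(X_\ell)$ for $\ell \in I$ contains exactly one relevant variable of $f(x_X \circ 0_{\overline{X}})$, and this variable must be $x_{\tau(\ell)}$ since $f(x_{X_\ell} \circ v^{(\ell)}_{\overline{X_\ell}})$ is the literal $x_{\tau(\ell)}$ or $\overline{x_{\tau(\ell)}}$. Substituting $x_{\tau(\ell)}$ for every $x_j$ with $j \in X_\ell$ therefore only touches irrelevant variables, and so $F = f(x(\pi_{f,I})\circ 0_{\overline{X}})$ coincides with $f(x_X \circ 0_{\overline{X}})$ as a Boolean function. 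Assumption~\ref{assmp} already gives $f(x_X \circ 0_{\overline{X}}) \in C$, and its set of relevant variables is exactly $x(\Gamma) = \{x_{\tau(\ell)}\}_{\ell\in I}$, so $F \in C(\Gamma)$.

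For item~\ref{proj02}, I would proceed by contraposition: assume (a) fails, meaning $f(x_X\circ 0_{\overline{X}})$ is $(\epsilon/3)$-close to $F$ with respect to $\D$, and deduce (b). Assumption~\ref{assmp} combined with the hypothesis that $f$ is $\epsilon$-far from every $g \in C$ yields that $f(x_X\circ 0_{\overline{X}})$ is $(2\epsilon/3)$-far from every $g \in C$ with respect to $\D$. By the (reverse) triangle inequality for the distance $\Pr_{x \in \D}[\cdot \neq \cdot]$, for every $g \in C$,
\[ \Pr_{x \in \D}[F \neq g] \geq \Pr_{x \in \D}[f(x_X\circ 0_{\overline{X}}) \neq g] - \Pr_{x \in \D}[F \neq f(x_X\circ 0_{\overline{X}})] \geq \frac{2\epsilon}{3} - \frac{\epsilon}{3} = \frac{\epsilon}{3}. \]
Since $C(\Gamma) \subseteq C$, this shows that $F$ is $(\epsilon/3)$-far from every $g \in C(\Gamma)$ with respect to $\D$, which is (b).

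The proof is essentially a bookkeeping exercise and I do not foresee a genuine obstacle. The only point requiring care in item~1 is verifying that the variable-projection substitution preserves the function, which rests on the ``exactly one relevant variable per $X_\ell$'' clause of Assumption~\ref{assmp}; without that clause, $F$ could differ from $f(x_X\circ 0_{\overline{X}})$ and one would need the closure of $C$ under variable projection as a backup. Everything in item~2 is just two applications of the triangle inequality for distance under $\D$, combined with the $(\epsilon/3)$-closeness packaged into Assumption~\ref{assmp}.
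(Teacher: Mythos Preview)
Your proposal is correct and follows essentially the same approach as the paper. For item~1 you use Assumption~\ref{assmp} to identify the unique relevant variable in each $X_\ell$ as $x_{\tau(\ell)}$ and conclude $F=f(x_X\circ 0_{\overline X})$, just as the paper does; for item~2 you run the triangle inequality in the contrapositive direction (assuming not~(a) and deriving~(b)) where the paper assumes not~(a) and not~(b) and reaches a contradiction---these are the same argument.
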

\begin{proof} We first prove {\it \ref{proj01}}. If $f\in C$, then since $C$ is closed under variable and zero projection $f(x_X\circ 0_{\overline{X}})\in C$. We have $f(x_X\circ 0_{\overline{X}})=f(\circ_{\ell\in I}x_{X_\ell}\circ 0_{\overline{X}})$ and, by Assumption~\ref{assmp}, every $x(X_\ell)$, $\ell\in I$, contains exactly one relevant variable $x_{\tau(\ell)}$ of $f(x_X\circ 0_{\overline{X}})$. Therefore, $f(x_X\circ 0_{\overline{X}})$ is a function that depends only on the variables $x_{\tau(\ell)}$, $\ell\in I$. By the definition of $x(\pi_{f,{I}})$ and since $\tau(\ell)\in X_\ell$ we have $x(\pi_{f,{I}})_{\tau(\ell)}=x_{\tau(\ell)}$ and therefore $f(x_X\circ 0_{\overline{X}})=f(x(\pi_{f,{I}})\circ 0_{\overline{X}})=F$.

We now prove~{\it\ref{proj02}}. Suppose, for the contrary, $f(x_X\circ 0_{\overline{X}})$ is $(\epsilon/3)$-close to $F$ with respect to $\D$ and $F$ is $(\epsilon/3)$-close to some function $g\in C(\Gamma)$ with respect to $\D$. Then $f(x_X\circ 0_{\overline{X}})$ is $(2\epsilon/3)$-close to $g$ with respect to $\D$. Since, by Assumption~\ref{assmp}, $f(x_X\circ 0_{\overline{X}})$ is $(\epsilon/3)$-close to $f$ with respect to $\D$ we get that $f$ is $\epsilon$-close to $g\in C$ with respect to $\D$. A contradiction.

\end{proof}

In the following two subsections we discuss how to test the closeness of $f(x_X\circ 0_{\overline{X}})$ to $F$ and $F$ to $C(\Gamma)$. We will assume all the procedures in the following subsections have access to $X,V,I$ that satisfies Assumption~\ref{assmp}.

\subsection{Testing the Closeness of $f(x_X\circ 0_{\overline{X}})$ to $F$}
\newcounter{ALCc1}
\setcounter{ALCc1}{0}
\newcommand{\stepco}{\stepcounter{ALCc1}$\arabic{ALCc1}.\ $\>}
\newcommand{\steplabelco}[1]{\addtocounter{ALCc1}{-1}\refstepcounter{ALCc1}\label{#1}}
\begin{figure}[h!]
  \begin{center}
  \fbox{\fbox{\begin{minipage}{28em}
  \begin{tabbing}
  xxx\=xxxx\=xxxx\=xxxx\=xxxx\=xxxx\= \kill
  {{\bf Close$fF$}$(f,\D,\epsilon,\delta)$}\\
 {\it Input}: Oracle that accesses a Boolean function $f$ and $\D$. \\
  {\it Output}: Either ``reject'' or ``OK''\\ \\
  xxx\=xxxx\=xxxx\=xxxx\=xxxx\=xxxx\= \kill
  \stepco Define $F\equiv f(x(\pi_{f,I})\circ 0_{\overline{X}})$.\\
  \stepco\steplabelco{co01}
  Repeat $t=(3/\epsilon)\ln(2/\delta)$ times \\
  \stepco\steplabelco{co02}\> Choose $u\in \D$.\\
  \stepco\steplabelco{co03}\> $z\gets${\bf RelVarValue}$(u,X,V,I,\delta/(2t))$ .\\
  \stepco\steplabelco{co04}\> If $f(u_X\circ 0_{\overline{X}})\not=F(z)$ then Output(``reject'')\\
  \stepco\steplabelco{co05} Return ``OK''.
  \end{tabbing}
  \end{minipage}}}
  \end{center}
	\caption{A procedure that tests whether $f(x_X\circ 0_{\overline{X}})$ is $(\epsilon/3)$-far from $F$ with respect to ${\cal D}$.}
	\label{Close01}
	\end{figure}

We now give the procedure {\bf Close$fF$} that tests whether $f(x_X\circ 0_{\overline{X}})$ is $(\epsilon/3)$-far from $F$ with respect to ${\cal D}$. See Figure~\ref{Close01}.

\begin{lemma}\label{FirstApp} For any $\epsilon$, a constant $\delta$, and $(X,V,I)$ that satisfies Assumption~\ref{assmp}, procedure {\bf Close$fF$} makes $O((k/\epsilon)\log (k/\epsilon))$ queries and
\begin{enumerate}
\item\label{FirstApp1} If $f\in C$ then {\bf Close$fF$} returns OK.
\item\label{FirstApp2} If $f(x_X\circ 0_{\overline{X}})$ is $(\epsilon/3)$-far from $F$ with respect to $\D$ then, with probability at least $1-\delta$, {\bf Close$fF$} rejects.
\end{enumerate}
\end{lemma}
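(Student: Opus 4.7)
The plan is to handle the three pieces (query complexity, completeness, soundness) by combining the evaluation/sampling loop in {\bf Close$fF$} with what is already known about the subroutine {\bf RelVarValue} from Lemma~\ref{FinTest}, Lemma~\ref{kjun2}, and Lemma~\ref{kkkkkk}, and with Lemma~\ref{SimulateF}\textit{\ref{Compp1}} which says a single evaluation of $F$ costs one query to $f$.

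For the query bound, step~\ref{co04} uses two queries to $f$ (one for $f(u_X\circ 0_{\overline X})$ and one to evaluate $F(z)$ via Lemma~\ref{SimulateF}\textit{\ref{Compp1}}), and step~\ref{co03} invokes {\bf RelVarValue} with confidence $\delta/(2t)$, which by Lemma~\ref{kkkkkk} costs $O(k\log(k t/\delta))$ queries. The outer loop runs $t=(3/\epsilon)\ln(2/\delta)$ times, so the total is $O\!\left(t\cdot k\log(kt/\delta)\right)=O((k/\epsilon)\log(k/\epsilon))$ for constant $\delta$, which is the claimed bound.

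For completeness (item \textit{\ref{FirstApp1}}), suppose $f\in C$. Assumption~\ref{assmp}\textit{\ref{f01}} tells us that each $x(X_\ell)$ has exactly one relevant variable $x_{\tau(\ell)}$, so by Lemma~\ref{kjun2} every call {\bf RelVarValue}$(u,X,V,I,\delta/(2t))$ in step~\ref{co03} returns the correct vector $z=(u_{\tau(\ell_1)},\dots,u_{\tau(\ell_q)})$ with no rejection. Moreover Lemma~\ref{kLem}\textit{\ref{proj01}} gives $f(x_X\circ0_{\overline X})=F$, so $f(u_X\circ0_{\overline X})=F(u_\Gamma)=F(z)$ and the test in step~\ref{co04} never fires; the procedure returns ``OK''.

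For soundness (item \textit{\ref{FirstApp2}}), assume $f(x_X\circ 0_{\overline X})$ is $(\epsilon/3)$-far from $F$ w.r.t.\ $\D$. By Assumption~\ref{assmp}\textit{\ref{f02}}, the closeness-to-literal hypothesis of Lemma~\ref{FinTest} holds, so in each iteration, conditional on {\bf RelVarValue} not rejecting, the event $B_i$ ``$z$ differs from $u_\Gamma$'' has probability at most $\delta/(2t)$. Let $G_i$ be the event $f(u^{(i)}_X\circ 0_{\overline X})=F(u^{(i)}_\Gamma)$; by the farness assumption $\Pr[G_i]\le 1-\epsilon/3$. Whenever $\overline{B_i}\cap\overline{G_i}$ occurs, iteration $i$ rejects at step~\ref{co04}. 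Therefore
\begin{align*}
\Pr[\text{{\bf Close$fF$} does not reject}]
&\le \Pr\!\left[\bigcap_{i=1}^t G_i\right] + \sum_{i=1}^t \Pr[B_i]
\le (1-\epsilon/3)^t + t\cdot\frac{\delta}{2t} \\
&\le e^{-\epsilon t/3}+\frac{\delta}{2}
=\frac{\delta}{2}+\frac{\delta}{2}=\delta,
\end{align*}
using independence of the $u^{(i)}$ and the choice $t=(3/\epsilon)\ln(2/\delta)$. This gives rejection with probability $\ge 1-\delta$, as required.

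The only place where care is needed is the soundness bookkeeping: {\bf RelVarValue} can return an incorrect $z$ (without rejecting), which could make the wrong test $f(u_X\circ 0_{\overline X})=F(z)$ pass even when $f(u_X\circ 0_{\overline X})\ne F(u_\Gamma)$. Setting the per-call confidence to $\delta/(2t)$ and union bounding over the $t$ iterations absorbs this additive error into $\delta/2$, which is what makes the two halves of the failure probability telescope cleanly; this is the only substantive subtlety in the argument.
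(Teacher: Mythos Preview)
Your proof is correct and follows essentially the same approach as the paper's own proof: both invoke Lemma~\ref{kjun2} and Lemma~\ref{kLem}\textit{\ref{proj01}} for completeness, and for soundness both union-bound the failure of the $t$ calls to {\bf RelVarValue} (total $\delta/2$) and then use independence of the $u^{(i)}$ to get $(1-\epsilon/3)^t\le\delta/2$. Your write-up is slightly more explicit in separating the event that {\bf RelVarValue} returns a wrong $z$ from the event that $f(u_X\circ 0_{\overline X})=F(u_\Gamma)$, but the underlying argument is identical.
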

\begin{proof} {\bf Close$fF$} draws $t=(3/\epsilon)\ln(2/\delta)$ random $u^{(i)}\in\{0,1\}^n$, $i=1,\ldots,t$ according to the distribution~${\cal D}$. It finds $z^{(i)}=u^{(i)}_\Gamma$ and if $F(u^{(i)}_\Gamma)=f(u^{(i)}_X\circ 0_{\overline{X}})$ for all $i$ then it returns ``OK''. Otherwise it rejects.

If $f\in C$ then, by {\it \ref{proj01}} in Lemma~\ref{kLem}, $F(u^{(i)}_\Gamma)=f(u^{(i)}_X\circ 0_{\overline{X}})$ for every $i$. By Lemma~\ref{kjun2} and Assumption~\ref{assmp}, $z^{(i)}= u^{(i)}_\Gamma$ for all $i$, and therefore {\bf Close$fF$} returns OK.

Suppose now $f(x_X\circ 0_{\overline{X}})$ is $(\epsilon/3)$-far from $F$ with respect to $\D$. By {\it\ref{Compp3}} in Lemma~\ref{SimulateF}, {\bf RelVarValue} makes $O(k\log((kt)/\delta))$ queries and computes $F(u_\Gamma^{(i)})$, $i=1,\ldots,t$, with failure probability at most $\delta/2$. Then the probability that it fails to reject is at most
$(1-\epsilon/3)^t\le \delta/2$. This gives the result.

Therefore, {\bf Close$fF$} makes $O((k/\epsilon)\log(k/\epsilon))$ queries and satisfies~{\it \ref{FirstApp1}} and~{\it \ref{FirstApp2}}.
\end{proof}

\subsection{Testing the Closeness of $F$ to $C(\Gamma)$}
\newcounter{ALC8}
\setcounter{ALC8}{0}
\newcommand{\steps}{\stepcounter{ALC8}$\arabic{ALC8}.\ $\>}
\newcommand{\steplabels}[1]{\addtocounter{ALC8}{-1}\refstepcounter{ALC8}\label{#1}}
\begin{figure}[h!]
  \begin{center}
  \fbox{\fbox{\begin{minipage}{28em}
  \begin{tabbing}
  xxx\=xxxx\=xxxx\=xxxx\=xxxx\=xxxx\= \kill
  {{\bf Close$FC\D$}$(f,\D,\epsilon,\delta)$}\\
 {\it Input}: Oracles that access a Boolean function $f$ and $\D$. \\
  {\it Output}: Either ``reject'' or ``OK''\\ \\
  xxx\=xxxx\=xxxx\=xxxx\=xxxx\=xxxx\= \kill
  \steps\steplabels{FFs07}$C^*\gets C(\Gamma)$\\
  \steps\steplabels{FFs08}Repeat $\tau=(12/\epsilon)\ln(2|C^*|/\delta)$ times\\
  \steps\steplabels{FFs09}\> Choose $u\in \D$.\\
  \steps\steplabels{FFs10}\> $z\gets${\bf RelVarValue}$(u,X,V,I,1/2)$ .\\
  \steps\steplabels{FFs11}\> For every $g\in C^*$\\
  \steps\steplabels{FFs12}\>\>If $g(z)\not=F(z)$ then $C^*\gets C^*\backslash \{g\}$.\\
  \steps\steplabels{FFs13}\>If $C^*=\emptyset$ then Output(``Reject'')\\
  \steps\steplabels{FFs14} Return ``OK''\\
  \addtocounter{ALC8}{-8}
  ----------------------------------------------------------------------\\
  {{\bf Close$FCU$}$(f,\epsilon,\delta)$}\\
 {\it Input}: Oracle that accesses a Boolean function $f$. \\
  {\it Output}: Either ``reject'' or ``OK''\\ \\
  \steps\steplabels{FFss07}$C^*\gets C(\Gamma)$\\
  \steps\steplabels{FFs15}Repeat $\tau=(3/\epsilon)\ln(2|C^*|/\delta))$ times\\
  \steps\steplabels{FFs16}\> Choose $(z_1,\ldots,z_q)\in U$.\\
  \steps\steplabels{FFs17}\> For every $g\in C^*$\\
  \steps\steplabels{FFs18}\>\>If $g(z)\not=F(z)$ then $C^*\gets C^*\backslash \{g\}$.\\
  \steps\steplabels{FFs19}\>If $C^*=\emptyset$ then Output(``Reject'')\\
  \steps\steplabels{FFs20} Return ``OK''
  \end{tabbing}
  \end{minipage}}}
  \end{center}
	\caption{Two procedures that test whether $F$ is $(\epsilon/3)$-far from every function in $C(\Gamma)$ with respect to ${\cal D}$ and the uniform distribution, respectively.}
	\label{Close02}
	\end{figure}
In this section, we give the procedures {\bf Close$FC\D$} and {\bf Close$FCU$} that test whether $F$ is $(\epsilon/3)$-far from every function in $C(\Gamma)$ with respect to ${\cal D}$ and the uniform distribution, respectively.
We prove

\begin{lemma}\label{clooo} For any $\epsilon$ and any constant $\delta$  and $(X,V,I)$ that satisfies Assumption~\ref{assmp}, the procedures {\bf Close$FC\D$} and {\bf Close$FCU$} make $O((k\log|C(\Gamma)|)/\epsilon)$ and $O((\log|C(\Gamma)|)/\epsilon)$ queries to $f$, respectively, and
\begin{enumerate}
\item If $f\in C$ then {\bf Close$FC\D$} and {\bf Close$FCU$} output OK.
\item If $F$ is $(\epsilon/3)$-far from every function in $C(\Gamma)$ with respect to ${\cal D}$ then, with probability at least $1-\delta$, {\bf Close$FC\D$} rejects.
\item If $F$ is $(\epsilon/3)$-far from every function in $C(\Gamma)$ with respect to the uniform distribution, then with probability at least $1-\delta$, {\bf Close$FCU$} rejects.
\end{enumerate}
Both procedures run in time $poly(n,|C(\Gamma)|,1/\epsilon)$.
\end{lemma}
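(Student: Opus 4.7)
The plan is to verify, for each procedure, three items: the query complexity, completeness when $f\in C$, and soundness when $F$ is $(\epsilon/3)$-far from every function in $C(\Gamma)$ with respect to the appropriate distribution. Both procedures share the same outer loop: initialize $C^*=C(\Gamma)$; for $\tau=\Theta((1/\epsilon)\log(|C(\Gamma)|/\delta))$ iterations, sample a test point $z$, compute $F(z)$ via Lemma~\ref{SimulateF}(\ref{Compp1}), delete from $C^*$ every $g$ with $g(z)\neq F(z)$, and reject once $C^*$ becomes empty. The query counts then follow immediately. Each iteration of {\bf Close$FCU$} spends exactly one query for $F(z)$, yielding $O((\log|C(\Gamma)|)/\epsilon)$. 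Each iteration of {\bf Close$FC\D$} additionally invokes {\bf RelVarValue} at constant confidence, which by Lemma~\ref{kkkkkk} costs $O(k\log k)$ queries; multiplying by $\tau$ and absorbing $\log k$ into $\log|C(\Gamma)|$ gives the stated $O((k\log|C(\Gamma)|)/\epsilon)$.

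Completeness is direct. By Lemma~\ref{kLem}(\ref{proj01}), if $f\in C$ then $F=f(x_X\circ 0_{\overline X})\in C(\Gamma)$, so $F$ sits in $C^*$ from the start. For {\bf Close$FCU$} the test $g(z)\neq F(z)$ with $g=F$ is never satisfied, so $F$ is never removed and the procedure returns OK. For {\bf Close$FC\D$} I would additionally cite Lemma~\ref{kjun2}: under Assumption~\ref{assmp}(\ref{f01}), $f$ is a $k$-junta with exactly one relevant variable in each $x(X_\ell)$, so {\bf RelVarValue} is deterministically correct and outputs $z=u_\Gamma$; consequently $F(z)=F(u)$ is computed correctly and $F$ survives every iteration.

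Soundness of {\bf Close$FCU$} is the clean case. The hypothesis gives $\Pr_{z\in U}[g(z)\neq F(z)]\ge \epsilon/3$ for every $g\in C(\Gamma)$; since $z$ is drawn uniformly at each iteration and the iterations are independent, any fixed $g$ survives all $\tau$ iterations with probability at most $(1-\epsilon/3)^\tau\le \delta/(2|C(\Gamma)|)$ for the choice of $\tau$ in the procedure, and a union bound over $g\in C(\Gamma)$ gives $\Pr[C^*\neq\emptyset\text{ at the end}]\le \delta/2\le \delta$, so the procedure rejects with probability $\ge 1-\delta$.

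Soundness of {\bf Close$FC\D$} is the main obstacle, because the test point $z$ returned by {\bf RelVarValue} may not equal $u_\Gamma$. I would decompose each iteration into three sub-cases: (i) {\bf RelVarValue} rejects, in which case the outer procedure rejects by the global convention and we are done; (ii) $z=u_\Gamma$, in which case $g(z)\neq F(z)$ reduces to $g(u)\neq F(u)$ and holds with probability $\ge \epsilon/3$ over $u\sim\D$; and (iii) the nuisance case, where {\bf RelVarValue} returns an \emph{incorrect} non-rejecting $z$ that happens to satisfy $g(z)=F(z)$ and saves $g$. To control (iii), I would take the confidence parameter passed to {\bf RelVarValue} to be $\delta'=\epsilon/12$ (or amplify by majority over a constant number of independent internal runs), so that by Lemma~\ref{FinTest} the probability of case (iii) is at most $\epsilon/12$ per iteration. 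Then the per-iteration probability that $g$ is not removed and the outer does not reject is at most $\Pr[g(u)=F(u)]+\epsilon/12\le 1-\epsilon/4$, and the same union-bound calculation with $\tau=(12/\epsilon)\ln(2|C(\Gamma)|/\delta)$ from the procedure yields rejection with probability $\ge 1-\delta$. The extra $O(\log(1/\epsilon))$ factor in the per-call cost of {\bf RelVarValue} is absorbed into the stated query bound.
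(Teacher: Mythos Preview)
Your treatment of {\bf Close$FCU$} (completeness and soundness) and of completeness for {\bf Close$FC\D$} is correct and matches the paper.

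The gap is in your soundness argument for {\bf Close$FC\D$}. The procedure in Figure~\ref{Close02} calls {\bf RelVarValue}$(u,X,V,I,1/2)$ with the confidence parameter \emph{fixed at} $1/2$, and the lemma is about that procedure. You instead propose to pass $\delta'=\epsilon/12$ so that the ``nuisance'' case (iii) has probability at most $\epsilon/12$ and each $g$ survives an iteration with probability at most $1-\epsilon/4$. With the actual parameter $\delta'=1/2$, your per-iteration bound becomes $(1-\epsilon/3)+1/2$, which is useless; and your modification would make each call cost $O(k\log(k/\epsilon))$ rather than $O(k\log k)$, so the extra $\log(1/\epsilon)$ factor you claim is ``absorbed'' is not actually absorbed into the stated $O((k\log|C(\Gamma)|)/\epsilon)$ bound.

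The paper avoids both problems by keeping $\delta'=1/2$ and using the factor~$4$ built into $\tau=(12/\epsilon)\ln(2|C(\Gamma)|/\delta)$. By Lemma~\ref{FinTest}, for every $u$ the call returns $z=u_\Gamma$ with probability at least $1/2$ over its internal randomness, so the indicators of ``good'' iterations are independent across iterations with mean $\ge 1/2$; by Chernoff (Lemma~\ref{Chernoff}), with probability at least $1-\delta/2$ at least $(3/\epsilon)\ln(2|C(\Gamma)|/\delta)$ of the $\tau$ iterations are good. On those good iterations you can run exactly the {\bf Close$FCU$} union-bound calculation to empty $C^*$ with the remaining $\delta/2$ probability budget. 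This proves soundness for the procedure as written, with no change to the per-call cost.
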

\begin{proof} The proof for {\bf Close$FCU$} is similar to the proof of Lemma~\ref{FirstApp} with union bound.

For {\bf Close$FC\D$}, notice that it calls {\bf RelVarValue}$(u,X,V,I,1/2)$, and therefore at each iteration, with probability $1/2$, $z=u_\Gamma$. By Chernoff's bound ((\ref{Chernoff2}) in Lemma~\ref{Chernoff}), with probability at least $1-\delta/2$, $(3/\epsilon)\ln(2|C^*|/\delta)$ of the chosen $u$s in the procedures satisfy $z=u_\Gamma$. Then again, by union bound, the result follows.
\end{proof}

\subsection{Testing the Closeness of $F$ to $C(\Gamma)$ via Learning $C(\Gamma)$}
In this subsection, we show how proper learning implies testing the closeness of $F$ to $C(\Gamma)$. The proofs are similar to the proof of Proposition 3.1.1 in~\cite{GoldreichGR98}.

Let $(X,V,I)$ be as in~Assumption~\ref{assmp} and $q=|I|\le k$. Let $Y=\{y_1,\ldots,y_q\}$ be a set of Boolean variables and $C(Y)$ be the set of all functions in $C$ that depend on all the variables of $Y$. Notice that instead of using $C(Y)$ we could have used $C(\{x_1,\ldots,x_q\})$ but here we use the new Boolean variables $y_i$ to avoid confusion with the variables $x_i$ of $f$.

\begin{remark}\label{remark}
In all the lemmas in this subsection and the following one, in addition to the fact that $F$ depends on all the variables of $Y$, the learning algorithms can also make use of $(X,V,I)$ that satisfies Assumption~\ref{assmp}. This may help for some classes. For example\footnote{See the definition of unate in Subsection~\ref{TMDD}}, if the target function is a unate monotone function, then from the witnesses in $V$, we can know if $F$ is positive or negative unate in $y_i$, for each variable $y_i$.
\end{remark}

The following is an immediate result that follows from the two procedures {\bf Close$FC\D$} and {\bf Close$FCU$} in the previous subsection
\begin{lemma}\label{LtoTTriv} If there is a polynomial time algorithm that given a set $${\cal Y}=\{(y^{(1)},\xi_1),\ldots,(y^{(t)},\xi_t)\}\subseteq \{0,1\}^q\times \{0,1\}$$ decides whether there is a function $F\in C(Y)$ that is consistent with ${\cal Y}$, i.e., $F(y^{(i)})=\xi_i$ for all $i=1,\ldots,t$, then there is a polynomial time algorithm ${\cal B}_\D$ (resp. ${\cal B}_U$) that makes
$O((k\log|C(\Gamma)|)/\epsilon)$ queries (resp. $O((\log|C(\Gamma)|)/\epsilon)$ queries) to $f$ and
\begin{enumerate}
\item If $f\in C$ then ${\cal B}_\D$ and ${\cal B}_U$ output OK.
\item If $F$ is $(\epsilon/3)$-far from every function in $C(\Gamma)$ with respect to ${\cal D}$ then, with probability at least $1-\delta$, ${\cal B}_\D$ rejects.
\item If $F$ is $(\epsilon/3)$-far from every function in $C(\Gamma)$ with respect to the uniform distribution then, with probability at least $1-\delta$, ${\cal B}_U$ rejects.
\end{enumerate}
\end{lemma}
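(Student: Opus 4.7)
The plan is to mimic the two procedures \textbf{Close$FC\D$} and \textbf{Close$FCU$} from the previous subsection, but replace the explicit enumeration and pruning of $C^*\subseteq C(\Gamma)$ with a single call to the assumed consistency oracle at the end. Specifically, we collect a sample $\mathcal{Y} = \{(z^{(i)},F(z^{(i)}))\}_{i=1}^\tau$ of labeled pairs, and then ask the consistency algorithm whether some $F'\in C(Y)$ is consistent with $\mathcal{Y}$; if not, we reject. Since $C$ is closed under variable projection, $C(Y)$ (on the abstract variables $y_1,\ldots,y_q$) is in one-to-one correspondence with $C(\Gamma)$ under the relabeling $y_i\leftrightarrow x_{\tau(\ell_i)}$, so consistency over $C(Y)$ is equivalent to consistency over $C(\Gamma)$.

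For $\mathcal{B}_U$, I would set $\tau = (3/\epsilon)\ln(2|C(\Gamma)|/\delta)$; at each round draw $z^{(i)}\in U$ uniformly from $\{0,1\}^q$, compute $F(z^{(i)})$ with one query to $f$ via part \emph{\ref{Compp1}} of Lemma~\ref{SimulateF}, and add $(z^{(i)},F(z^{(i)}))$ to $\mathcal{Y}$. For $\mathcal{B}_\D$, I would set $\tau = (12/\epsilon)\ln(2|C(\Gamma)|/\delta)$ (the factor of 4 absorbs the failure probability of \textbf{RelVarValue}); at each round draw $u\in \D$, run \textbf{RelVarValue}$(u,X,V,I,1/2)$ to obtain $z^{(i)}$ (which equals $u_\Gamma$ with probability at least $1/2$, by Lemma~\ref{FinTest}), and set the label to $f(u_X\circ 0_{\overline{X}})$. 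After $\tau$ rounds, we hand $\mathcal{Y}$ to the polynomial-time consistency algorithm and reject iff it reports no consistent $F\in C(Y)$.

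Completeness is immediate: if $f\in C$, part~\emph{\ref{proj01}} of Lemma~\ref{kLem} gives $F\in C(\Gamma)$, so the corresponding function in $C(Y)$ is consistent with every sample, and the consistency algorithm always reports ``yes''. For soundness, suppose $F$ is $(\epsilon/3)$-far from every $g\in C(\Gamma)$ with respect to the relevant distribution. In the uniform case, each sampled $z^{(i)}$ is genuinely drawn from the distribution governing closeness, so for any fixed $g\in C(\Gamma)$, $\Pr[g(z^{(i)})=F(z^{(i)})]\le 1-\epsilon/3$, and the chance that $g$ survives all $\tau$ rounds is at most $(1-\epsilon/3)^\tau\le \delta/(2|C(\Gamma)|)$; a union bound over $C(\Gamma)$ rules out every $g$ with probability at least $1-\delta/2$. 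In the $\D$ case, a Chernoff bound (as in Lemma~\ref{clooo}) shows that at least $(3/\epsilon)\ln(2|C(\Gamma)|/\delta)$ of the iterations produce $z^{(i)}=u^{(i)}_\Gamma$ with probability at least $1-\delta/2$, and the same union-bound argument then applies on those ``good'' samples.

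The query complexity follows at once: $\mathcal{B}_U$ uses one $f$-query per round, giving $O((\log|C(\Gamma)|)/\epsilon)$ queries; $\mathcal{B}_\D$ additionally invokes \textbf{RelVarValue} at cost $O(k)$ queries per round by Lemma~\ref{kkkkkk}, giving $O((k\log|C(\Gamma)|)/\epsilon)$. Both run in time polynomial in $n$, $|\mathcal{Y}|$ and the running time of the consistency algorithm, hence polynomial overall. The only mildly subtle point—the main place where care is needed—is keeping straight the two sources of randomness in the $\D$ case (sampling $u$ from $\D$ versus \textbf{RelVarValue}'s internal coins), so that the $\epsilon/3$-farness hypothesis is applied to a genuine $\D$-sample; the factor-4 inflation of $\tau$ and the Chernoff step handle this cleanly.
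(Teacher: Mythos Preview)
Your approach is exactly the paper's (the paper's proof is one sentence pointing to \textbf{Close$FC\D$} and \textbf{Close$FCU$}, with the consistency oracle replacing the explicit enumeration of $C^*$). However, there is a slip in your construction of $\mathcal{B}_\D$: you set the $i$th label to $f(u^{(i)}_X\circ 0_{\overline X})$, but the soundness hypothesis is that $F$ (not $f(x_X\circ 0_{\overline X})$) is $(\epsilon/3)$-far from every $g\in C(\Gamma)$. On a good round where $z^{(i)}=u^{(i)}_\Gamma$, your union-bound step needs $\Pr[g(z^{(i)})=\text{label}]\le 1-\epsilon/3$; this holds when the label is $F(z^{(i)})$, but not in general when the label is $f(u^{(i)}_X\circ 0_{\overline X})$, since these two values can differ whenever $f\notin C$. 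Indeed, your own soundness paragraph applies the bound to $\Pr[g(z^{(i)})=F(z^{(i)})]$, so the construction and the argument are out of sync.

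The fix is immediate and costs nothing: label each sample with $F(z^{(i)})$, computed with one query to $f$ via part~\emph{\ref{Compp1}} of Lemma~\ref{SimulateF}, exactly as you already do in $\mathcal{B}_U$. Completeness remains automatic because every pair $(z,F(z))$ is consistent with $F\in C(\Gamma)$ regardless of whether \textbf{RelVarValue} succeeded on that round, and your soundness argument then goes through verbatim on the good rounds.
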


We now give the reduction from exact learning
\begin{lemma}\label{LtoT01} If there is a polynomial time algorithm ${\cal A}$
that, given as an input any constant $\delta$, properly exactly learns $ C(Y)$ with confidence parameter $\delta$ and makes $M(\delta)$ membership queries to $F$ then there is a polynomial time algorithm ${\cal B}$ that, given as an input $\epsilon$, any constant $\delta$, makes $M(\delta/3)+O((k/\epsilon)\log(1/\epsilon))$ (resp. $M(\delta/3)+O(1/\epsilon)$) queries to $f$ and
\begin{enumerate}
\item If $f\in C$ then, with probability at least $1-\delta$, ${\cal B}$ outputs OK.
\item If $F$ is $(\epsilon/3)$-far from every function in $C(\Gamma)$ with respect to ${\cal D}$ (resp. with respect to the uniform distribution) then, with probability at least $1-\delta$, ${\cal B}$ rejects.
\end{enumerate}
\end{lemma}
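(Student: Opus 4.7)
The strategy is the standard ``learn-then-test'' reduction of Goldreich--Goldwasser--Ron applied to the projected function $F$: first run the supplied exact learner $\mathcal{A}$ to obtain a candidate $h\in C(Y)$ for $F$, then test the single hypothesis $h$ against $F$ in the same way that {\bf Close$fF$} tests $f(x_X\circ 0_{\overline X})$ against $F$. The point of using a fixed $h$ is that we replace the union bound over $|C(\Gamma)|$ candidates in {\bf Close$FC\D$}/{\bf Close$FCU$} by a single hypothesis, so the sample complexity drops from $\log|C(\Gamma)|/\epsilon$ to $1/\epsilon$.

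\textbf{Phase 1 (Learning $F$).} Simulate $\mathcal{A}$ with confidence parameter $\delta/3$. Every membership query $\mathcal{A}$ issues at a point $y\in\{0,1\}^q$ is answered by computing $F(y)=f(y(\pi_{f,I})\circ 0_{\overline X})$ via item \ref{Compp1} of Lemma~\ref{SimulateF}, at the cost of one query to $f$. If $\mathcal{A}$ exceeds its $M(\delta/3)$-query budget, runs over its polynomial time bound, or returns a representation that fails the polynomial-time verification ``$h\in C(Y)$,'' then $\mathcal{B}$ rejects. Otherwise let $h\in C(Y)$ be the output. This phase costs $M(\delta/3)$ queries.

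\textbf{Phase 2 (Testing $h\equiv F$).} Draw $t=O((1/\epsilon)\log(1/\delta))$ independent samples and check $h(z)=F(z)$ on each. In the uniform case, pick $z\in\{0,1\}^q$ uniformly, obtain $F(z)$ via item \ref{Compp1} of Lemma~\ref{SimulateF} (one query to $f$), and compare to $h(z)$; the overhead is $O(t)=O(1/\epsilon)$ queries. In the distribution-free case, draw $u\in\D$, run {\bf RelVarValues}$(u,X,V,I,\delta/(6t))$ to recover $z=u_\Gamma$ (rejecting if it rejects), obtain $F(z)$ with one more query via Lemma~\ref{SimulateF}, and compare to $h(z)$; by Lemma~\ref{kkkkkk} each such call costs $O(k\log(kt/\delta))$ queries, for a total of $O((k/\epsilon)\log(1/\epsilon))$ for constant $\delta$. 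If any disagreement is seen, reject; otherwise output OK.

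\textbf{Correctness.} Suppose $f\in C$. By Lemma~\ref{kLem}(\ref{proj01}), $F\in C(\Gamma)$, so relabeling $x_{\tau(\ell_i)}\mapsto y_i$ identifies $F$ with a member of $C(Y)$. Hence with probability at least $1-\delta/3$, $\mathcal{A}$ outputs some $h$ logically equivalent to $F$, in which case Phase~2 never detects a disagreement. Moreover, under Assumption~\ref{assmp}(\ref{f01}), Lemma~\ref{kjun2} says {\bf RelVarValues} returns $z=u_\Gamma$ deterministically, so the distribution-free test cannot spuriously reject. Now suppose $F$ is $(\epsilon/3)$-far from every function in $C(\Gamma)$ with respect to $\D$ (resp.~uniform). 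Either $\mathcal{A}$ fails the verification, and we reject correctly; or $\mathcal{A}$ produces some $h\in C(Y)$, which (via the same identification) is $(\epsilon/3)$-far from $F$ with respect to the corresponding distribution. A single sample flags a disagreement with probability at least $\epsilon/3$, so all $t$ samples miss it with probability at most $(1-\epsilon/3)^t\le \delta/3$. Union-bounding learner failure ($\delta/3$), the $t$ calls to {\bf RelVarValues} ($t\cdot\delta/(6t)=\delta/6$), and the testing failure ($\delta/3$) keeps the total failure probability below~$\delta$.

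\textbf{Main obstacle.} The whole argument is routine once the error budgeting is pinned down; the one delicate point is setting the failure parameter $\delta/(6t)$ passed to {\bf RelVarValues} so that the extra $\log(1/\delta)$ and $\log(1/\epsilon)$ factors entering Lemma~\ref{kkkkkk} through $\log(kt/\delta)$ are absorbed into the claimed $O((k/\epsilon)\log(1/\epsilon))$ bound (treating $\delta$ as a constant and $\log k$ as subsumed), while simultaneously ensuring the sum of all error events stays below $\delta$. The remainder is a Chernoff/union-bound calculation parallel to those in the proofs of Lemmas~\ref{FirstApp} and~\ref{clooo}.
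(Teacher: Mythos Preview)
Your proof is correct and follows the same learn-then-test reduction as the paper: run $\mathcal{A}$ with confidence $\delta/3$, reject if it misbehaves or outputs $h\notin C(Y)$, and otherwise test $h$ against $F$ on $O((1/\epsilon)\log(1/\delta))$ samples.

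The only noteworthy difference is in the distribution-free Phase~2. You call {\bf RelVarValues} with the tiny failure parameter $\delta/(6t)$ and union-bound over the $t$ calls; the paper instead calls {\bf RelVarValues} with the constant parameter $1/2$ (so each call succeeds only with probability $1/2$) and compensates by taking $t=(12/\epsilon)\ln(2/\delta)$ samples, i.e.\ four times as many, so that by Chernoff enough of them yield the correct $u_\Gamma$. The paper's version keeps the per-call cost at $O(k\log k)$ rather than your $O(k\log(kt/\delta))=O(k\log(k/\epsilon))$, giving a total of $O((k/\epsilon)\log k)$ versus your $O((k/\epsilon)\log(k/\epsilon))$. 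Both are within the $\tilde O$ slack the paper works with, and neither exactly matches the stated $O((k/\epsilon)\log(1/\epsilon))$ without absorbing a $\log k$; the paper itself defers the bookkeeping by pointing to ``the above proofs and Proposition~3.1.1 in~\cite{GoldreichGR98}.''
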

\begin{proof} Algorithm ${\cal B}$ runs ${\cal A}$ with confidence parameter $\delta/3$ to learn $F(y_1,\ldots,y_q)$. By Lemma~\ref{SimulateF}, each membership query to $F$ can be simulated by one membership query to $f$. If algorithm ${\cal A}$ runs more than it should, asks more than $M(\delta/3)$ membership queries or outputs $h\not\in C(Y)$ then ${\cal B}$ rejects.

If ${\cal A}$ outputs $h\in C(Y)$ then the algorithm needs to distinguish whether $h(x_\Gamma)$ is equal to $F(x_\Gamma)$ or $(\epsilon/3)$-far from $F(x_\Gamma)$ with respect to the distribution $\D$ (resp. uniform distribution). When the distribution is uniform, the algorithm chooses $t=(3/\epsilon)\ln(3/\delta)$ strings $v^{(1)},\ldots,v^{(t)}\in \{0,1\}^q$ and if $F(v^{(i)})=h(v^{(i)})$ for all $i$ then it outputs ``OK''; otherwise it rejects.

In the distribution-free model, ${\cal B}$ chooses $t=(12/\epsilon)\ln(2/\delta)$ strings $u^{(i)}\in \{0,1\}^n$ according to the distribution $\D$. Then runs {\bf RelValValue}$(u^{(i)},X,V,I,1/2)$ to find, with probability $1/2$  the value of $u^{(i)}_\Gamma$, $i=1,\ldots,t$. If $F(u^{(i)}_\Gamma)=h(u^{(i)}_\Gamma)$ for all $i$, then it outputs ``OK''; otherwise it rejects.

The analysis and correctness of the algorithm are the same as in the above proofs and Proposition~3.1.1 in~\cite{GoldreichGR98}.
\end{proof}

We now give the reduction from learning from MQ and ExQ$_\D$

\begin{lemma}\label{LtoT02} If there is a polynomial time algorithm ${\cal A}$
that, given as an input a constant $\delta$, any $\epsilon$, learns $C(Y)$ with respect to the distribution $\D$ (resp. uniform distribution), with confident $\delta$, accuracy $\epsilon$, makes $M(\epsilon,\delta)$ $MQ$ to $F$ and $Q(\epsilon,\delta)$ ExQ$_\D$ (resp. ExQ$_U$) queries to $F$ then there is a polynomial time algorithm ${\cal B}_\D$ (resp. ${\cal B}_U$) that asks
$$O\left(M(\epsilon/12,\delta/3)+kQ(\epsilon/12,\delta/3)\log ({kQ(\epsilon/12,\delta/3)})+\frac{k}{\epsilon}\log\frac{1}{\epsilon}\right)$$ queries (resp. $$O\left(M(\epsilon/12,\delta/3)+Q(\epsilon/12,\delta/3)+\frac{1}{\epsilon}\right)$$ queries) to $f$ and
\begin{enumerate}
\item If $f\in C$ then with probability at least $1-\delta$, ${\cal B}_\D$ and ${\cal B}_U$ output OK.
\item If $F$ is $(\epsilon/3)$-far from every function in $C(\Gamma)$ with respect to ${\cal D}$ then, with probability at least $1-\delta$, ${\cal B}_\D$ rejects.
\item If $F$ is $(\epsilon/3)$-far from every function in $C(\Gamma)$ with respect to the uniform distribution then, with probability at least $1-\delta$, ${\cal B}_U$ rejects.
\end{enumerate}
\end{lemma}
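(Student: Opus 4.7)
The plan mirrors Lemma~\ref{LtoT01}, but now I must also simulate example queries of the learner $\mathcal{A}$ and carefully account for the failure probability of the many calls to {\bf RelVarValue}. Algorithm $\mathcal{B}_\D$ (resp.\ $\mathcal{B}_U$) runs $\mathcal{A}$ with accuracy $\epsilon/12$ and confidence $\delta/3$ to learn $F(y_1,\ldots,y_q)$. Membership queries of $\mathcal{A}$ to $F$ are answered directly via item~\ref{Compp1} of Lemma~\ref{SimulateF}, costing one query to $f$ each, for a total of $M(\epsilon/12,\delta/3)$ queries. If $\mathcal{A}$ ever exceeds its allotted running time, exceeds $M(\epsilon/12,\delta/3)$ MQs or $Q(\epsilon/12,\delta/3)$ ExQs, or returns a hypothesis $h \notin C(Y)$, then $\mathcal{B}$ rejects.

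To simulate an example query to $F$ in the uniform case, $\mathcal{B}_U$ simply draws $z\in\{0,1\}^q$ uniformly, uses one query to $f$ to obtain $F(z)$ (via item~\ref{Compp1} of Lemma~\ref{SimulateF}), and returns $(z,F(z))$. This is valid because $F$ depends only on $\{y_1,\dots,y_q\}$. For $\mathcal{B}_\D$, each example query is simulated by drawing $u \in \D$, running ${\bf RelVarValue}(u,X,V,I,\delta')$ with $\delta' = \delta/(3Q(\epsilon/12,\delta/3))$ to extract $z = u_\Gamma$ with failure probability at most $\delta'$, and returning $(z,F(z))$. By Lemma~\ref{kkkkkk} this costs $O(k\log(kQ/\delta))$ queries per example, giving a total of $O(kQ(\epsilon/12,\delta/3)\log(kQ(\epsilon/12,\delta/3)))$ queries. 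If some call to {\bf RelVarValue} rejects, $\mathcal{B}_\D$ rejects as well, since by item~\ref{Compp3} of Lemma~\ref{SimulateF} this means some $x(X_i)$ has at least two relevant variables in $f$, contradicting $f\in C \subseteq k$-Junta. By a union bound over the $Q(\epsilon/12,\delta/3)$ calls, every simulated example is correct with probability at least $1-\delta/3$, so the distribution of $(z,F(z))$ is exactly the projection of $\D$ onto $\Gamma$ (which is the relevant distribution for $F$).

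Once $\mathcal{A}$ returns a hypothesis $h \in C(Y)$, we identify it with the function $h(x_\Gamma)\in C(\Gamma)$ and test whether it is close to $F$. In the uniform model, draw $t = O((1/\epsilon)\log(1/\delta))$ uniform strings $v^{(i)}\in\{0,1\}^q$ and, for each, compute $F(v^{(i)})$ using one query to $f$; reject if any $h(v^{(i)}) \neq F(v^{(i)})$. In the distribution-free case, invoke a procedure analogous to {\bf Close$fF$}: draw $t=O((1/\epsilon)\log(1/\delta))$ strings $u^{(i)} \in \D$, use {\bf RelVarValue}$(u^{(i)},X,V,I,1/(2t))$ to recover $u^{(i)}_\Gamma$, and reject if $h(u^{(i)}_\Gamma)\neq F(u^{(i)}_\Gamma)$; this costs $O((k/\epsilon)\log(1/\epsilon))$ queries by Lemma~\ref{kkkkkk}. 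Adding up gives the claimed query bounds.

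For correctness: if $f \in C$, then by item~\ref{proj01} of Lemma~\ref{kLem} we have $F \in C(\Gamma)$, so $\mathcal{A}$ succeeds with probability at least $1-\delta/3$ in producing some $h \in C(Y)$ that is $(\epsilon/12)$-close to $F$ with respect to the projected distribution, and the final sampling step accepts with probability at least $1-\delta/3$; together with the $\delta/3$ failure budget for the ExQ simulations, $\mathcal{B}$ outputs OK with probability $1-\delta$. Conversely, if $F$ is $(\epsilon/3)$-far from every function in $C(\Gamma)$, then either $\mathcal{A}$ fails to return a valid hypothesis (in which case $\mathcal{B}$ rejects immediately), or it returns $h \in C(Y)\subseteq C(\Gamma)$ which is therefore $(\epsilon/3)$-far from $F$, and the final sampling step rejects with probability at least $1-\delta/3$. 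The principal subtlety is the coupling of failure events across the many {\bf RelVarValue} calls in the ExQ simulation phase, which is handled by setting the per-call confidence to $\delta/(3Q)$ and applying a union bound; everything else follows the template of Lemmas~\ref{FirstApp},~\ref{clooo} and~\ref{LtoT01}.
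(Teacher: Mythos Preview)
Your overall strategy matches the paper's: run $\mathcal{A}$ with accuracy $\epsilon/12$ and confidence $\delta/3$, simulate MQs directly and ExQs via {\bf RelVarValue} with per-call confidence $\delta/(3Q)$, and then verify the returned hypothesis $h$ against $F$. The query accounting is correct.

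However, there is a genuine gap in your final verification step. You propose to ``reject if any $h(v^{(i)})\neq F(v^{(i)})$'' over $t=O((1/\epsilon)\log(1/\delta))$ samples, and then claim that when $f\in C$ this step accepts with probability at least $1-\delta/3$. But unlike in Lemma~\ref{LtoT01}, here $\mathcal{A}$ is a PAC learner, not an exact learner: when $f\in C$ it only guarantees that $h$ is $(\epsilon/12)$-close to $F$, not that $h=F$. With $t\approx (c/\epsilon)\log(1/\delta)$ samples and per-sample disagreement probability up to $\epsilon/12$, the expected number of disagreements is $\Theta(\log(1/\delta))$, so the probability of seeing \emph{no} disagreement is not bounded below by $1-\delta/3$. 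Your acceptance guarantee therefore fails.

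The paper fixes this by \emph{estimating} the distance between $h$ and $F$ rather than testing for exact equality: draw $O((1/\epsilon)\log(1/\delta))$ samples, compute the empirical disagreement rate, and accept if it is below a threshold (roughly $\epsilon/6$). Since the two cases to distinguish are ``$h$ is $(\epsilon/12)$-close to $F$'' versus ``$h$ is $(\epsilon/3)$-far from $F$'' (the latter because $h\in C(\Gamma)$), an estimate with additive accuracy $\epsilon/24$ separates them. This is a small change to your argument, but without it the completeness claim in item~1 does not hold.
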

\begin{proof} Algorithm ${\cal B}$ runs ${\cal A}$  with confidence parameter $\delta/3$ and accuracy $\epsilon/12$. By Lemma~\ref{SimulateF}, every membership query to $F(y)$ can be simulated with one membership query to $f$. Every ExQ$_{\D'}$ (resp. ExQ) for the induced distribution $\D'$ of $\D$ on the coordinates $\Gamma$, can be simulated with one ExQ$_{\D}$ and $k \log(3kQ(\epsilon/12,\delta/3)/\delta)$ membership queries (resp. one ExQ) with failure probability $\delta/(3Q(\epsilon/12,\delta/3))$, and therefore, with failure probability $\delta/3$ for all the ExQ$_{\D'}$ queries asked in the learning algorithm.

If algorithm ${\cal A}$ runs more than it should, asks more than $Q(\epsilon/12,\delta/3)$ ExQ$_{\D'}$, asks more than $M(\epsilon/12,\delta/3)$ MQ or outputs $h\not\in C(Y)$ then ${\cal B}_\D$ rejects. If ${\cal A}$ outputs $h\in C(Y)$ then, with probability at least $1-(2\delta/3)$,
\begin{enumerate}
\item If $F\in C(\Gamma)$ then $F$ is $(\epsilon/12)$-close to $h$
\item If $F$ is $(\epsilon/3)$-far from every function in $C(\Gamma)$ then $F$ is $(\epsilon/4)$-far from $h$.
\end{enumerate}
Now, using Chernoff's bound in Lemma~\ref{Chernoff}, algorithm ${\cal B}$, can estimate the distance between $F$ and $h$ with accuracy $\epsilon/24$ and confidence $\delta/6$ using $O((\log(1/\delta))/\epsilon)$ strings chosen according to the distribution $\D'$. This can be done using $O((\log(1/\delta))/\epsilon)$ queries in the uniform model and $O((k/\epsilon)(\log (1/\delta)\log(1/(\epsilon\delta)))$ with confidence $\delta/6$ in the distribution-free model.
\end{proof}

Define the oracle WExQ$_{\D}$ (Weak ExQ$_{\D}$) that returns with probability $1/2$ a $x\in\{0,1\}^n$ according to the distribution $\D$ and with probability $1/2$ an arbitrary $x\in\{0,1\}^n$. In some of the learning algorithms given in the sequel, the algorithms still work if we replace the oracle ExQ$_\D$ with WExQ$_{\D}$. In that case, we can save the factor of $\log(3kQ(\epsilon/12,\delta/3)/\delta)$ in the query complexity of Lemma~\ref{LtoT02} in the distribution-free setting. We will discuss this in Section~\ref{IFR}.

\subsection{The First Tester}
We are now ready to give the first tester.

Consider the tester {\bf Tester$C$} in Figure~\ref{Tester1}. Note that the tester rejects if any one of the procedures called by the tester rejects. We prove

\newcounter{ALC4}
\setcounter{ALC4}{0}
\newcommand{\stepd}{\stepcounter{ALC4}$\arabic{ALC4}.\ $\>}
\newcommand{\steplabeld}[1]{\addtocounter{ALC4}{-1}\refstepcounter{ALC4}\label{#1}}
\begin{figure}[h!]
  \begin{center}
  \fbox{\fbox{\begin{minipage}{28em}
  \begin{tabbing}
  xxx\=xxxx\=xxxx\=xxxx\=xxxx\=xxxx\= \kill
  {{\bf Tester$C$}$(f,\D,\epsilon)$}\\
 {\it Input}: Oracle that accesses a Boolean function $f$ and $\D$. \\
  {\it Output}: If any one of the procedures reject \\
  \>\>then ``reject'' or ``accept''\\ \\

\stepd\steplabeld{FF01}
  $(X,V,I)\gets ${\bf ApproxTarget}$(f,\D,\epsilon,1/3)$. \\
\stepd\steplabeld{FF02}
  {\bf TestSets}$(X,V,I)$.\\

  \stepd\steplabeld{FF04}
  Define $F\equiv f(x(\pi_{f,I})\circ 0_{\overline{X}})$\\
  \stepd\steplabeld{FF03}
  {{\bf Close$fF$}$(f,\D,\epsilon,1/15)$} \\ \\
  \ {\bf For any distribution}\\
  \stepd\steplabeld{FF08}{{\bf Close$FC\D$}$(f,\D,\epsilon,1/15)$}\\
  \stepd\steplabeld{FF14} Return ``accept''\\ \\
  \addtocounter{ALC4}{-2}

  {\bf For the uniform distribution}\\
  \stepd\steplabeld{FF19}{{\bf Close$FCU$}$(f,\epsilon,1/15)$}\\
  \stepd\steplabeld{FF20} Return ``accept''
  \end{tabbing}
  \end{minipage}}}
  \end{center}
	\caption{A tester for subclasses $C$ of $k$-Junta}
	\label{Tester1}
	\end{figure}

\begin{theorem}\label{FirstTester} Let $C\subseteq k-$Junta that is closed under zero and variable projections. Then
\begin{enumerate}
\item There is a $poly(|C(\Gamma)|,n,1/\epsilon)$ time two-sided adaptive algorithm, {\bf Tester$C$}, for $\epsilon$-testing $C$ that makes $\tilde O((1/\epsilon)(k+\log|C(\Gamma)|))$ queries. That is
    \begin{enumerate}
    \item If  $f\in C$ then, with probability at least $2/3$, {\bf Tester$C$} accepts.
    \item If $f$ is $\epsilon$-far from every function in $C$ with respect to the uniform distribution then, with probability at least $2/3$, {\bf Tester$C$} rejects.
    \end{enumerate}
\item There is a $poly(|C(\Gamma)|,n,1/\epsilon)$ time two-sided distribution-free adaptive algorithm, {\bf Tester$C$}, for $\epsilon$-testing $C$ that makes $\tilde O((k/\epsilon)\log (2|C(\Gamma)|))$ queries. That is
\begin{enumerate}
    \item If  $f\in C$ then, with probability at least $2/3$, {\bf Tester$C$} accepts.
    \item If $f$ is $\epsilon$-far from every function in $C$ with respect to the distribution $\D$ then, with probability at least $2/3$, {\bf Tester$C$} rejects.
    \end{enumerate}
\end{enumerate}
\end{theorem}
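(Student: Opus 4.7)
The plan is to assemble the theorem by chaining together the correctness and query-complexity lemmas already proved for the four subroutines that {\bf Tester$C$} invokes, and then do a careful union bound on the error events. First I would verify that, with probability at least $2/3$ (amplifiable by a constant factor), the output $(X,V,I)$ of {\bf ApproxTarget} together with the successful return of {\bf TestSets} forces Assumption~\ref{assmp} to hold. Concretely, for the completeness direction, if $f\in C$ then $f$ is $k$-junta, so Lemma~\ref{dist} gives that each $x(X_i)$ contains at most one relevant variable with probability $\ge 2/3$; conditioned on this, Lemma~\ref{kjun} says {\bf ApproxTarget} outputs some $(X,V,I)$ satisfying item \ref{f01} of Assumption~\ref{assmp}, and Lemma~\ref{kjun1} says {\bf TestSets} returns ``OK''. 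For soundness, if $f$ is $\epsilon$-far from $C$ and {\bf ApproxTarget} and {\bf TestSets} do not reject, then Lemma~\ref{lklk}, Lemma~\ref{cloose} (failure probability $\le 1/15$), and Lemma~\ref{closelit} (failure probability $\le 1/15$ per coordinate, over at most $k$ of them—I would re-run {\bf TestSets} with $\delta=1/(15k)$ or equivalently amplify) give item \ref{f02} of Assumption~\ref{assmp} with probability $\ge 13/15$.

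Next, conditioned on Assumption~\ref{assmp} holding, I invoke Lemma~\ref{kLem}: in the completeness case $F=f(x_X\circ 0_{\overline{X}})\in C(\Gamma)$, so both {\bf Close$fF$} (Lemma~\ref{FirstApp}, item~\ref{FirstApp1}) and either {\bf Close$FC\D$} or {\bf Close$FCU$} (Lemma~\ref{clooo}, item~1) return ``OK''; in the soundness case either $f(x_X\circ 0_{\overline{X}})$ is $(\epsilon/3)$-far from $F$ with respect to $\D$, in which case {\bf Close$fF$} rejects with probability $\ge 14/15$ (Lemma~\ref{FirstApp}, item~\ref{FirstApp2}), or $F$ is $(\epsilon/3)$-far from every function in $C(\Gamma)$ with respect to the appropriate distribution, in which case {\bf Close$FC\D$} or {\bf Close$FCU$} rejects with probability $\ge 14/15$ (Lemma~\ref{clooo}, items~2--3). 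A union bound over at most four independent failure events of probability $\le 1/15$ each yields overall success probability at least $1-4/15>2/3$ in both directions.

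For the query complexity, I would just add up the contributions: {\bf ApproxTarget} makes $O((k\log k)/\epsilon)$ queries by Lemma~\ref{Query01}; {\bf TestSets} makes $O(k)$ queries by Lemma~\ref{Query02}; {\bf Close$fF$} makes $O((k/\epsilon)\log(k/\epsilon))$ by Lemma~\ref{FirstApp}; and Lemma~\ref{clooo} gives $O((\log|C(\Gamma)|)/\epsilon)$ for {\bf Close$FCU$} in the uniform model and $O((k\log|C(\Gamma)|)/\epsilon)$ for {\bf Close$FC\D$} in the distribution-free model. Summing gives $\tilde O((k+\log|C(\Gamma)|)/\epsilon)$ in the uniform case and $\tilde O((k\log|C(\Gamma)|)/\epsilon)$ in the distribution-free case, matching the statement. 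The polynomial running time follows because every subroutine except the enumeration over $C(\Gamma)$ runs in $\mathrm{poly}(n,1/\epsilon)$ time, and Lemma~\ref{clooo} explicitly states that {\bf Close$FC\D$} and {\bf Close$FCU$} run in $\mathrm{poly}(n,|C(\Gamma)|,1/\epsilon)$.

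The main obstacle I foresee is bookkeeping the error budgets: the stated confidence parameters $1/15$ in the various procedure calls in Figure~\ref{Tester1} were chosen so that a plain union bound over the handful of possible failure events stays below $1/3$, but Lemma~\ref{closelit} must be invoked separately for each $\ell\in I$, so I need to ensure that the amplification inside {\bf TestSets} (its call to {\bf UniformJunta} already uses $\delta=1/15$ per $\ell$) is actually strong enough across all $\ell\in I$—either by re-examining {\bf TestSets} with $\delta = \Theta(1/k)$ or by absorbing the $k$ factor into the $\tilde O$ notation. Once this bookkeeping is pinned down, the proof is a direct assembly of the preceding lemmas.
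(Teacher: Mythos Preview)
Your proposal is correct and follows essentially the same assembly-of-lemmas approach as the paper's own proof: establish Assumption~\ref{assmp} via Lemmas~\ref{dist}, \ref{kjun}, \ref{kjun1} (completeness) and Lemmas~\ref{cloose}, \ref{closelit} (soundness), then invoke the dichotomy of Lemma~\ref{kLem} and finish with Lemmas~\ref{FirstApp} and~\ref{clooo}; the query count is the sum from Lemmas~\ref{Query01}, \ref{Query02}, \ref{FirstApp}, \ref{clooo}.

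Your one stated worry --- that Lemma~\ref{closelit} must be invoked separately for each $\ell\in I$ and hence needs amplification to confidence $\Theta(1/k)$ --- is unfounded. Read Lemma~\ref{closelit} again: it says that if \emph{some} $\ell\in I$ has $f(x_{X_\ell}\circ v^{(\ell)}_{\overline{X_\ell}})$ that is $(1/30)$-far from every literal, then {\bf TestSets} rejects with probability at least $14/15$. Since {\bf TestSets} loops over all $\ell\in I$ and rejects as soon as any single-coordinate test fails, a single bad $\ell$ already forces rejection with probability $\ge 14/15$; no union bound over $I$ is needed. Equivalently, the event ``{\bf TestSets} returns OK yet some $\ell$ is far from every literal'' has probability at most $1/15$, which is exactly what the paper uses. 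So you may drop the proposed re-run with $\delta=\Theta(1/k)$ and the proof goes through with the constants already chosen in Figure~\ref{Tester1}.
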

\begin{proof} We prove~{\it 1a} and {\it 2a}. Let $f\in C$. Consider step~\ref{FF01} in {\bf Tester$C$}. By Lemma~\ref{dist} and \ref{kjun}, with probability at least $2/3$, {\bf ApproxTarget} outputs $(X,V,I)$ that satisfies Assumption~\ref{assmp}. Now with this assumption we have: By Lemma~\ref{kjun1}, {\bf TestSets} in step~\ref{FF02} does not reject. By Lemma~\ref{FirstApp}, {\bf Close$fF$} in step~\ref{FF03} does not reject. By Lemma~\ref{clooo}, {\bf Close$FC\D$} and {\bf Close$FCU$} in step~\ref{FF08} do not reject. Therefore, with probability at least $2/3$ the tester accepts.

We now prove~{\it 1b} and {\it 2b}. Suppose $f$ is $\epsilon$-far from every function in $C$ with respect to $\D$. The proof for the uniform distribution is similar. If in  step~\ref{FF01} {\bf ApproxTarget} outputs $(X,V,I)$ then by Lemma~\ref{cloose}, with probability at least $14/15$, $f(x_X\circ 0_{\overline{X}})$ is $(\epsilon/3)$-close to $f$. If in step~\ref{FF02} {\bf TestSets} does not reject then, by Lemma~\ref{closelit}, with probability at least $14/15$, for all $\ell\in I$, $f(x_{X_\ell}\circ v_{\overline{X_\ell}}^{(\ell)})$ is $(1/30)$-close to a literal $\{x_{\tau(\ell)},\overline{x_{\tau(\ell)}}\}$. Therefore with probability at least $13/15$, Assumption~\ref{assmp} is true. Then by Lemma~\ref{kLem}, either $f(x_X\circ 0_{\overline{X}})$ is $(\epsilon/3)$-far from $F$ with respect to ${\cal D}$, or $F$ is $(\epsilon/3)$-far from every function in $C(\Gamma)$ with respect to ${\cal D}$. If $f(x_X\circ 0_{\overline{X}})$ is $(\epsilon/3)$-far from $F$ with respect to $\D$ then by Lemma~\ref{FirstApp}, with probability at least $14/15$, {\bf Close$fF$} rejects. If $F$ is $(\epsilon/3)$-far from every function in $C(\Gamma)$ with respect to ${\cal D}$ then by Lemma~\ref{clooo}, with probability at least $14/15$, {\bf Close$FC$} rejects. By the union bound the probability that the tester rejects is at least $2/3$.

The query complexity follows from Lemmas~\ref{Query01}, \ref{Query02}, \ref{FirstApp} and \ref{clooo}.
\end{proof}

If we replace {\bf Close$FC\D$} and {\bf Close$FCU$} with the testers in Lemma~\ref{LtoTTriv}, \ref{LtoT01} and \ref{LtoT02}, we get the following results

\begin{theorem}\label{LtoTTriv2} If there is a polynomial time algorithm that given a set $${\cal Y}=\{(y^{(1)},\xi_1),\ldots,(y^{(t)},\xi_t)\}\subseteq \{0,1\}^q\times \{0,1\}$$ decides whether there is a function $F\in C(Y)$ that is consistent with ${\cal Y}$, then
\begin{enumerate}
\item There is a polynomial time two-sided adaptive algorithm for $\epsilon$-testing $C$ that makes $\tilde O((1/\epsilon)(k+\log|C(\Gamma)|))$ queries.
\item There is a polynomial time two-sided distribution-free adaptive algorithm for $\epsilon$-testing $C$ that makes $\tilde O((k/\epsilon)\log (2|C(\Gamma)|))$ queries.
\end{enumerate}
\end{theorem}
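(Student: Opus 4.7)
The plan is to run the tester {\bf Tester$C$} from Theorem~\ref{FirstTester} unchanged up through {\bf Close$fF$}, and only swap out the final stage. In detail, I would first execute {\bf ApproxTarget}$(f,\D,\epsilon,1/3)$, then {\bf TestSets}$(X,V,I)$, then {\bf Close$fF$}$(f,\D,\epsilon,1/15)$; by Lemmas~\ref{Query01}, \ref{Query02}, and \ref{FirstApp} this is polynomial time and uses $\tilde O((k/\epsilon)\log(k/\epsilon))$ queries. By Lemmas~\ref{dist}, \ref{kjun}, \ref{kjun1}, \ref{cloose}, and \ref{closelit}, with constant probability, these three stages either reject or leave us in the situation of Assumption~\ref{assmp}, in which case Lemma~\ref{kLem} guarantees that a remaining $\epsilon$-far $f$ must in fact have $F$ be $(\epsilon/3)$-far from every function in $C(\Gamma)$ with respect to $\D$ (respectively, $U$), because the alternative would have been caught by {\bf Close$fF$}.

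For the final stage I would invoke the algorithms ${\cal B}_\D$ and ${\cal B}_U$ from Lemma~\ref{LtoTTriv} in place of {\bf Close$FC\D$} and {\bf Close$FCU$}, respectively. The hypothesis of the present theorem is exactly the polynomial-time consistency checker required by Lemma~\ref{LtoTTriv}, so we may apply it as a black box: ${\cal B}_U$ runs in $poly(|C(\Gamma)|,n,1/\epsilon)$ time and uses $O((\log|C(\Gamma)|)/\epsilon)$ queries, while ${\cal B}_\D$ runs in the same type of time and uses $O((k\log|C(\Gamma)|)/\epsilon)$ queries. These match the query budgets claimed in items~1 and~2 up to the $\tilde O$ log factors already paid for by {\bf ApproxTarget} and {\bf Close$fF$}.

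Correctness follows by the same case split as in the proof of Theorem~\ref{FirstTester}. In the completeness direction, if $f\in C$ then, with constant probability, Assumption~\ref{assmp} holds, none of {\bf ApproxTarget}, {\bf TestSets}, or {\bf Close$fF$} reject, and by item~\ref{proj01} of Lemma~\ref{kLem} we have $F\in C(\Gamma)$, so ${\cal B}_\D$ (respectively ${\cal B}_U$) outputs OK. In the soundness direction, if $f$ is $\epsilon$-far from $C$ with respect to $\D$, then Assumption~\ref{assmp} together with item~\ref{proj02} of Lemma~\ref{kLem} forces one of the two bad sub-cases; the first is caught by {\bf Close$fF$} (Lemma~\ref{FirstApp}), and the second by ${\cal B}_\D$ or ${\cal B}_U$ (Lemma~\ref{LtoTTriv}). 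A union bound over the constant number of stages, each taken with failure probability at most $1/15$, yields overall success probability at least $2/3$.

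The only subtlety worth flagging is the one already handled inside Lemma~\ref{LtoTTriv}: in the distribution-free version, each call to {\bf RelVarValue} succeeds in producing $u_\Gamma$ only with probability $1/2$, so we must use a Chernoff argument to guarantee that a constant fraction of the $O((k\log|C(\Gamma)|)/\epsilon)$ drawn samples yields a valid labelled example $(u_\Gamma, f(u_X\circ 0_{\overline X}))$ before running the consistency elimination loop. Since this analysis is already encapsulated in Lemma~\ref{LtoTTriv}, no new argument is needed here, and the proof reduces to the black-box substitution described above.
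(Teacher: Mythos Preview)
Your proposal is correct and mirrors the paper's own argument exactly: the paper states Theorem~\ref{LtoTTriv2} immediately after Theorem~\ref{FirstTester} with the one-line justification ``replace {\bf Close$FC\D$} and {\bf Close$FCU$} with the testers in Lemma~\ref{LtoTTriv}'', and your write-up simply unpacks that substitution together with the case analysis already used for Theorem~\ref{FirstTester}.

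One small correction: you say ${\cal B}_U$ and ${\cal B}_\D$ run in $poly(|C(\Gamma)|,n,1/\epsilon)$ time, but that is the bound for the \emph{original} {\bf Close$FCU$}/{\bf Close$FC\D$} (Lemma~\ref{clooo}), which enumerate $C(\Gamma)$. The entire point of the consistency-checker hypothesis is that ${\cal B}_U$ and ${\cal B}_\D$ avoid that enumeration: they collect the $O((\log|C(\Gamma)|)/\epsilon)$ (resp.\ $O((k\log|C(\Gamma)|)/\epsilon)$) labelled examples and make a single call to the polynomial-time consistency oracle, so their running time is polynomial in $n$, $1/\epsilon$, and $\log|C(\Gamma)|$ only. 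Without this, the theorem would add nothing over Theorem~\ref{FirstTester}.
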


\begin{theorem}\label{TesterL01} If there is a polynomial time algorithm ${\cal A}$
that, given as an input any constant $\delta$, properly exactly learns $ C(Y)$ with confidence parameter $\delta$ and makes $M(\delta)$ membership queries then
\begin{enumerate}
\item There is a polynomial time two-sided adaptive algorithm for $\epsilon$-testing $C$ that makes $M(1/24)+\tilde O(k/\epsilon)$ queries.
\item There is a polynomial time two-sided distribution-free adaptive algorithm for $\epsilon$-testing $C$ that makes $M(1/24)+\tilde O(k/\epsilon)$ queries.
\end{enumerate}
\end{theorem}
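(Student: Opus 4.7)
The plan is to mimic the proof of Theorem~\ref{FirstTester} verbatim, but replace the subroutines {\bf Close$FC\D$} and {\bf Close$FCU$} (which are expensive because their query complexity scales with $\log|C(\Gamma)|$) by the reduction from exact learning given in Lemma~\ref{LtoT01}. So I would define a new tester, call it {\bf Tester$C$L}, that executes exactly steps~\ref{FF01}--\ref{FF03} of {\bf Tester$C$} unchanged (namely {\bf ApproxTarget}, {\bf TestSets}, and {\bf Close$fF$}, with confidence constants as in Figure~\ref{Tester1}), and then instead of invoking {\bf Close$FC\D$}/{\bf Close$FCU$} invokes the algorithm ${\cal B}$ guaranteed by Lemma~\ref{LtoT01} with confidence parameter $\delta = 1/8$ (so that $\delta/3 = 1/24$, matching the statement). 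The tester accepts iff none of these four subroutines reject.

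For correctness, the completeness half ({\it 1a} and {\it 2a}) follows as in Theorem~\ref{FirstTester}: if $f\in C$ then with probability at least $2/3$ the triple $(X,V,I)$ returned by {\bf ApproxTarget} satisfies Assumption~\ref{assmp} (Lemmas~\ref{dist} and~\ref{kjun}), {\bf TestSets} does not reject (Lemma~\ref{kjun1}), {\bf Close$fF$} does not reject (Lemma~\ref{FirstApp}), and by item~{\it 1} of Lemma~\ref{LtoT01} the learning-based subroutine ${\cal B}$ outputs OK with probability $\ge 1-1/8$. Taking a union bound over these four events with slack in the constants yields overall success probability $\ge 2/3$. For soundness ({\it 1b} and {\it 2b}), if $f$ is $\epsilon$-far from every function in $C$ with respect to $\D$ (resp.\ $U$), then Lemmas~\ref{cloose} and~\ref{closelit} imply that with probability at least $13/15$ Assumption~\ref{assmp} holds, and Lemma~\ref{kLem} tells us that either $f(x_X\circ 0_{\overline X})$ is $(\epsilon/3)$-far from $F$ with respect to $\D$, in which case {\bf Close$fF$} rejects with probability at least $14/15$ by Lemma~\ref{FirstApp}, or $F$ is $(\epsilon/3)$-far from every function in $C(\Gamma)$, in which case by item~{\it 2} of Lemma~\ref{LtoT01} the algorithm ${\cal B}$ rejects with probability at least $7/8$. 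A union bound again gives rejection probability $\ge 2/3$.

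For query complexity, I would add the contributions: {\bf ApproxTarget} costs $\tilde O(k/\epsilon)$ by Lemma~\ref{Query01}, {\bf TestSets} costs $O(k)$ by Lemma~\ref{Query02}, {\bf Close$fF$} costs $\tilde O(k/\epsilon)$ by Lemma~\ref{FirstApp}, and by Lemma~\ref{LtoT01} with constant $\delta$ the learning-based routine costs $M(1/24) + O((k/\epsilon)\log(1/\epsilon)) = M(1/24) + \tilde O(k/\epsilon)$ in the distribution-free case, and $M(1/24) + O(1/\epsilon)$ in the uniform case. Summing, the total is $M(1/24) + \tilde O(k/\epsilon)$ in both models, as claimed. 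Polynomial running time follows because ${\cal A}$ is polynomial time by hypothesis, and all the other subroutines are polynomial time as established in Section~2.

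There is essentially no technical obstacle: the work has been done in Lemmas~\ref{LtoT01} and~\ref{FirstTester}; the theorem is a clean composition. The only item that requires minor care is choosing the confidence parameters in each subroutine small enough that a union bound over the four events (the $(X,V,I)$-goodness event, the {\bf TestSets} event, the {\bf Close$fF$} event, and the ${\cal B}$ event) still gives overall probability at least $2/3$; setting each to a suitable constant and scaling constants inside the $\tilde O(\cdot)$ notation handles this without affecting the stated query bound.
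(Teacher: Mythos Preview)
Your proposal is correct and is exactly the paper's (implicit) proof: Theorem~\ref{TesterL01} is stated right after the sentence ``If we replace {\bf Close$FC\D$} and {\bf Close$FCU$} with the testers in Lemma~\ref{LtoTTriv}, \ref{LtoT01} and \ref{LtoT02}, we get the following results,'' so the argument is precisely the substitution you describe. The only minor point is that on the completeness side {\bf ApproxTarget}'s default $2/3$ success probability combined with ${\cal B}$'s $7/8$ is not quite enough, but as the paper notes just before Assumption~\ref{assmp} that $2/3$ can be boosted to any constant $1-\delta$ at no asymptotic cost, which is what your final paragraph already addresses.
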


\begin{theorem}\label{TesterL02} If there is a polynomial time algorithm ${\cal A}$
that, given as an input a constant $\delta$ and any $\epsilon$, learns $C(Y)$, with confident $\delta$, accuracy $\epsilon$, makes $M(\epsilon,\delta)$ $MQ$ and $Q(\epsilon,\delta)$ ExQ$_U$ (resp. ExQ$_\D$) then
\begin{enumerate}
\item There is a polynomial time two-sided adaptive algorithm for $\epsilon$-testing $C$ that makes $$\tilde O\left(M(\epsilon/12,1/24)+Q(\epsilon/12,\delta/3)+\frac{k}{\epsilon}\right)$$ queries.
\item There is a polynomial time two-sided distribution-free adaptive algorithm for $\epsilon$-testing $C$ that makes $$\tilde O\left(M(\epsilon/12,1/24)+kQ(\epsilon/12,1/24)+\frac{k}{\epsilon}\right)$$ queries.
\end{enumerate}
\end{theorem}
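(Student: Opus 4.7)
The plan is to instantiate the pipeline of {\bf Tester$C$} from Figure~\ref{Tester1}, retaining {\bf ApproxTarget}, {\bf TestSets}, and {\bf Close$fF$}, and replacing only the final testing step ({\bf Close$FC\D$} or {\bf Close$FCU$}) with the learning-based routine from Lemma~\ref{LtoT02}. First I would run {\bf ApproxTarget}$(f,\D,\epsilon,1/3)$ followed by {\bf TestSets} and {\bf Close$fF$} with sufficiently small constant confidence. By Lemmas~\ref{dist}, \ref{kjun}, \ref{cloose}, \ref{closelit}, and \ref{FirstApp}, with constant probability the output $(X,V,I)$ satisfies Assumption~\ref{assmp}, and moreover: if $f\in C$ none of these routines reject, while if $f$ is $\epsilon$-far from every function in $C$ then either {\bf Close$fF$} already rejects or $F$ is $(\epsilon/3)$-far from every function in $C(\Gamma)$. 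This preamble uses $\tilde O(k/\epsilon)$ queries.

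For the remainder, I would invoke the given polynomial-time learner $\mathcal{A}$ on $F$ with accuracy $\epsilon/12$ and confidence $1/24$. Each membership query to $F$ is simulated by a single query to $f$, by Lemma~\ref{SimulateF}(\ref{Compp1}). In the uniform model, example queries to the induced uniform distribution on $x(\Gamma)$ are served by drawing uniform $x\in\{0,1\}^n$ and reading $x_\Gamma$, since restricting a uniform string to a coordinate set is uniform; evaluating $F$ on that sample costs one query to $f$. In the distribution-free model, each induced example query is simulated by drawing $u\in\D$ and running {\bf RelVarValues}$(u,X,V,I,\delta')$ with $\delta'=\Theta(1/Q(\epsilon/12,1/24))$, so that a union bound over all $Q$ simulated examples keeps the cumulative failure below $1/24$; this multiplies the $Q$ term by $\tilde O(k)$. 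If $\mathcal{A}$ exceeds its query budget or returns $h\notin C(Y)$, the tester rejects; otherwise it empirically tests whether the returned $h$ agrees with $F$ on the induced distribution via the Chernoff-based scheme of Lemma~\ref{LtoT02}, using $\tilde O(1/\epsilon)$ samples (costing an additional $\tilde O(k/\epsilon)$ queries in the distribution-free case via {\bf RelVarValues}).

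Correctness mirrors the template of Lemma~\ref{LtoT02} and Proposition~3.1.1 of~\cite{GoldreichGR98}: if $f\in C$ then $F\in C(\Gamma)$, so with high probability $\mathcal{A}$ returns an $h\in C(Y)$ that is $(\epsilon/12)$-close to $F$, and the empirical test accepts; if $F$ is $(\epsilon/3)$-far from every $g\in C(\Gamma)$, then any admissible $h$ is at least $(\epsilon/4)$-far from $F$ and the empirical test rejects. The main obstacle I anticipate is the careful bookkeeping of failure probabilities across four randomized components (the preamble, the learner, the $Q$ simulated example queries, and the final distance estimation): each sub-failure must be tuned to a small constant, or to $\Theta(1/Q)$ for individual simulated ExQs, so that a single union bound delivers overall success probability at least $2/3$. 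Adding the stage costs then gives $\tilde O(M(\epsilon/12,1/24)+Q(\epsilon/12,1/24)+k/\epsilon)$ in the uniform model and $\tilde O(M(\epsilon/12,1/24)+kQ(\epsilon/12,1/24)+k/\epsilon)$ in the distribution-free model, matching the claimed bounds; polynomial running time is inherited from $\mathcal{A}$ and from every subroutine in the pipeline.
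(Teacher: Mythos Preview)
Your proposal is correct and follows essentially the same approach as the paper: the paper derives Theorem~\ref{TesterL02} by running the {\bf Tester$C$} pipeline (Lemmas~\ref{dist}, \ref{kjun}, \ref{cloose}, \ref{closelit}, \ref{FirstApp}, and Lemma~\ref{kLem}) and replacing the final closeness test {\bf Close$FC\D$}/{\bf Close$FCU$} by the learning-based routine of Lemma~\ref{LtoT02}, exactly as you describe. One small wording point: in the uniform case you cannot literally ``read $x_\Gamma$'' since $\Gamma$ is unknown, but your observation that the induced distribution is uniform is the correct justification---one simply draws $z\in\{0,1\}^q$ uniformly and queries $F(z)$ directly, as the paper does.
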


Finally, one trivial but useful result is
\begin{theorem}\label{ThTriv} If there is a polynomial time algorithm ${\cal A}$
that, given as an input any constant $\delta$ and any $\epsilon$ makes $M(\epsilon,\delta)$ MQs and $Q(\epsilon,\delta)$ ExQ$_U$ (resp. ExD$_\D$) and distinguish between $F\in C(Y)$ and $F$ $\epsilon$-far from every function in $C(Y)$ with respect to the uniform distribution (resp. with respect to the distribution $\D$) then
\begin{enumerate}
\item There is a polynomial time two-sided adaptive algorithm for $\epsilon$-testing $C$ that makes $$\tilde O\left(M(\epsilon/12,1/24)+Q(\epsilon/12,1/24)+\frac{k}{\epsilon}\right)$$ queries.
\item There is a polynomial time two-sided distribution-free adaptive algorithm for $\epsilon$-testing $C$ that makes $$\tilde O\left(M(\epsilon/12,1/24)+kQ(\epsilon/12,1/24)+\frac{k}{\epsilon}\right)$$ queries.
\end{enumerate}
\end{theorem}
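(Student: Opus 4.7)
The plan is to mimic the structure of \textbf{Tester$C$} but replace the final step (\textbf{Close$FC\D$} or \textbf{Close$FCU$}) with a single invocation of the assumed tester ${\cal A}$ for $C(Y)$. Concretely, I would first run \textbf{ApproxTarget}$(f,\D,\epsilon,1/3)$, then \textbf{TestSets}, and then \textbf{Close$fF$}$(f,\D,\epsilon,1/15)$. By Lemmas~\ref{dist}, \ref{kjun}, \ref{kjun1}, \ref{cloose}, \ref{closelit}, \ref{FirstApp} and the union bound, with constant probability either a procedure has already rejected or Assumption~\ref{assmp} holds and, in the ``far'' case, Lemma~\ref{kLem}(\ref{proj02b}) applies: $F(x_\Gamma)$ is $(\epsilon/3)$-far from every function in $C(\Gamma)$ with respect to $\D$ (respectively, with respect to $U$). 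Since $|\Gamma|=q\le k$, relabeling gives a copy of $C(Y)$, so it remains to test $F\in C(Y)$ vs.\ $F$ being $(\epsilon/3)$-far from $C(Y)$.

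At that point I would invoke ${\cal A}$ with accuracy parameter $\epsilon/12$ and confidence $1/24$ on ``virtual'' oracles for $F$, accepting iff ${\cal A}$ accepts. The central task is to simulate the oracles of ${\cal A}$ using oracles for $f$. Membership queries to $F$ are handled by Lemma~\ref{SimulateF}(\ref{Compp1}): one query to $f$ per MQ, with no randomness. In the uniform-distribution setting, each ExQ$_U$ to $F$ is simulated at zero query cost by sampling $(z_1,\ldots,z_q)\in U$ locally (since the induced distribution of $u_\Gamma$ on uniform $u$ is itself uniform) and combining with one MQ to $f$; the total is $M(\epsilon/12,1/24)+Q(\epsilon/12,1/24)+\tilde O(k/\epsilon)$, as claimed. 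In the distribution-free setting, simulating an ExQ$_\D$ to $F$ requires drawing $u\in \D$ and then extracting $u_\Gamma$ via \textbf{RelVarValues}, which by Lemma~\ref{kkkkkk} costs $\tilde O(k)$ queries per call; over $Q(\epsilon/12,1/24)$ example queries this contributes $\tilde O(kQ(\epsilon/12,1/24))$ queries to $f$, yielding the stated bound.

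For correctness, I would argue exactly as in the proof of Theorem~\ref{FirstTester}. If $f\in C$, then with constant probability Assumption~\ref{assmp} holds and, by Lemma~\ref{kLem}(\ref{proj01}), $F\in C(\Gamma)$, so ${\cal A}$ accepts with probability at least $1-1/24$. If $f$ is $\epsilon$-far from every function in $C$ with respect to $\D$ (resp.\ $U$), then either \textbf{Close$fF$} rejects (Lemma~\ref{FirstApp}), or we reach the invocation of ${\cal A}$ with $F$ being $(\epsilon/3)$-far from every $g\in C(\Gamma)$ with respect to $\D$ (resp.\ $U$), and ${\cal A}$ rejects with probability at least $1-1/24$. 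A union bound over the constant number of constant-probability failure events keeps the overall error below $1/3$.

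The main technical obstacle is the distribution-free simulation of ExQ$_\D$: \textbf{RelVarValues} may err on any given call, so I must boost its confidence from $1/2$ (or an appropriate constant) to something like $\delta/(3Q(\epsilon/12,1/24))$ before feeding samples into ${\cal A}$, exactly as in Lemma~\ref{LtoT02}. This is what introduces the $\log$-factor inside $\tilde O(\cdot)$ and the $kQ$ rather than $Q$ dependence. Once that accounting is done carefully (together with a possible rejection by \textbf{RelVarValues} when it detects two relevant variables in some $X_i$, which only helps a one-sided soundness argument via Lemma~\ref{SimulateF}(\ref{Compp3})), the query bounds and the correctness analysis follow essentially verbatim from the arguments used for Theorems~\ref{FirstTester} and~\ref{TesterL02}.
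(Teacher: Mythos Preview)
Your proposal is correct and takes essentially the same approach the paper intends: the paper labels Theorem~\ref{ThTriv} ``trivial'' and gives no proof, precisely because it follows the template of Theorem~\ref{FirstTester} with the final \textbf{Close$FCU$}/\textbf{Close$FC\D$} step replaced by a black-box call to the assumed tester ${\cal A}$, simulating ${\cal A}$'s oracles via Lemma~\ref{SimulateF} and \textbf{RelVarValues} exactly as you describe. Your handling of the ExQ$_\D$ simulation (boosting the per-call confidence of \textbf{RelVarValues} so that all $Q$ calls succeed jointly, as in Lemma~\ref{LtoT02}) is the right accounting for the $kQ$ term and the hidden log factor.
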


\section{Results}\label{result1}
In this section we define the classes and give the results for the classes $k$-Junta, $k$-Linear, $k$-Term, $s$-Term Monotone $r$-DNF, size-$s$ Decision Tree, size-$s$ Branching Program, Functions with Fourier Degree at most $d$, Length-$k$ Decision List and $s$-Sparse Polynomial of Degree $d$.

We will use words that are capitalized for classes and non-capitalized words for functions. For example, $k$-Junta is the class of all $k$-juntas.

\subsection{Testing $k$-Junta} For $k$-Junta in uniform distribution framework,
Ficher et al.~\cite{FischerKRSS02} introduced the junta testing problem and gave a non-adaptive algorithm that makes $\tilde O(k^2)/\epsilon$ queries. Blais in~\cite{Blais08} gave a non-adaptive algorithm that makes $\tilde{O}(k^{3/2})/\epsilon$ queries and in~\cite{Blais09} an adaptive algorithm that makes $O(k\log k+k/\epsilon)$ queries.
On the lower bounds side, Fisher et al.~\cite{FischerKRSS02} gave an $\Omega(\sqrt{k})$ lower bound for non-adaptive testing. Chockler and Gutfreund~\cite{ChocklerG04} gave an $\Omega(k)$ lower bound for adaptive testing and, recently, Sa\u{g}lam in~ \cite{Saglam18} improved this lower bound to $\Omega(k\log k)$. For the non-adaptive testing Chen et al.~\cite{ChenSTWX17} gave the lower bound $\tilde{\Omega}(k^{3/2})/\epsilon$.

For testing $k$-junta in the distribution-free model, Chen et al.~\cite{LiuCSSX18} gave a one-sided adaptive algorithm that makes $\tilde{O}(k^{2})/\epsilon$ queries and proved a lower bound $\Omega(2^{k/3})$ for any non-adaptive algorithm. The result of Halevy and Kushilevitz in~\cite{HalevyK07} gives a one-sided non-adaptive algorithm that makes $O(2^k/\epsilon)$ queries. The adaptive $\Omega(k\log k)$ uniform-distribution lower bound from~\cite{Saglam18} trivially extends to the distribution-free model.
Bshouty~\cite{Bshouty19} gave a two-sided adaptive algorithm that makes $\tilde O(1/\epsilon)k\log k$ queries.

Our algorithm in this paper gives
\begin{theorem} For any $\epsilon>0$, there is a polynomial time two-sided distribution-free adaptive algorithm for $\epsilon$-testing $k$-Junta that makes $\tilde O(k/\epsilon)$ queries.
\end{theorem}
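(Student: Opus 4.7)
The plan is to instantiate the general framework of Theorem~\ref{FirstTester} to the class $C=k$-Junta, but to skip its last stage entirely. Specifically, the tester will run only \textbf{ApproxTarget}$(f,\D,\epsilon,1/3)$, then \textbf{TestSets}$(X,V,I)$, and then \textbf{Close}$fF(f,\D,\epsilon,1/15)$, accepting iff none of them rejects. The crucial observation is that in the $k$-Junta case, the function $F=f(x(\pi_{f,I})\circ 0_{\overline X})$ is built by substitution so that it depends only on the variables $\{x_{\tau(\ell)}\}_{\ell\in I}$; since \textbf{ApproxTarget} guarantees $|I|\le k$ (otherwise it rejects), $F$ is automatically a $k$-junta, regardless of $f$. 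Therefore verifying $h:=f(x_X\circ 0_{\overline X})$ is close to $F$ already certifies that $h$ (and hence $f$) is close to $k$-Junta, and the \textbf{Close}$FC\D$ step, whose cost scales with $\log|C(\Gamma)|=2^{\Theta(k)}$, is unnecessary.

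For completeness, suppose $f\in k$-Junta. By Lemmas~\ref{dist} and~\ref{kjun}, with probability at least $2/3$, \textbf{ApproxTarget} does not reject and Assumption~\ref{assmp}(\ref{f01}) holds; in particular, each $x(X_\ell)$ contains exactly one relevant variable and $f(x_{X_\ell}\circ v^{(\ell)}_{\overline{X_\ell}})$ is a literal. Then \textbf{TestSets} does not reject by Lemma~\ref{kjun1}, and \textbf{Close}$fF$ does not reject by Lemma~\ref{FirstApp}(\ref{FirstApp1}). A union bound over the three small constant failure probabilities keeps the overall acceptance probability at $2/3$.

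For soundness, suppose $f$ is $\epsilon$-far from every $k$-junta with respect to $\D$. If \textbf{ApproxTarget} rejects we are done; otherwise by Lemma~\ref{cloose} we have $\Pr_\D[h\ne f]\le \epsilon/3$ with probability $\ge 14/15$, so $h$ is also $(2\epsilon/3)$-far from every $k$-junta. If \textbf{TestSets} accepts, then by Lemma~\ref{closelit} (applied to each $\ell\in I$ with a union bound) Assumption~\ref{assmp}(\ref{f02}) holds whp, so each $\tau(\ell)$ is well defined and $F$ is a function of $|I|\le k$ variables, hence a $k$-junta. If in addition \textbf{Close}$fF$ accepted, then by Lemma~\ref{FirstApp}(\ref{FirstApp2}) we would have $\Pr_\D[h\ne F]\le \epsilon/3$ whp, so $\Pr_\D[f\ne F]\le 2\epsilon/3$, contradicting that $f$ is $\epsilon$-far from every $k$-junta. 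Hence \textbf{Close}$fF$ rejects whp, and a final union bound gives rejection probability $\ge 2/3$.

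The main conceptual step is recognizing that \textbf{Close}$FC$ is vacuous when the outer class is $k$-Junta itself, because the substitution function $F$ is, by construction, a member of $k$-Junta; all other work is already carried out by the previously established lemmas. The query cost is then the sum of the three procedures' complexities: $O((k\log k)/\epsilon)$ for \textbf{ApproxTarget} (Lemma~\ref{Query01}), $O(k)$ for \textbf{TestSets} (Lemma~\ref{Query02}), and $O((k/\epsilon)\log(k/\epsilon))$ for \textbf{Close}$fF$ (Lemma~\ref{FirstApp}), totaling $\tilde O(k/\epsilon)$. Each of these procedures runs in polynomial time, and the only potentially subtle point in the argument is the soundness reduction in the second paragraph, which uses the fact that $F$ being a $k$-junta together with $h$ close to $F$ is already a certificate that $f$ is close to $k$-Junta—no additional class-membership test on $F$ is required.
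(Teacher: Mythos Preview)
Your proof is correct and follows essentially the same approach as the paper: the paper invokes Theorem~\ref{ThTriv} with the trivial always-accept algorithm for the final stage, noting that $F$ is automatically in $k$-Junta since it depends on at most $|I|\le k$ variables, which yields $M=Q=0$ and hence the $\tilde O(k/\epsilon)$ bound. You have simply unpacked this by explicitly omitting the \textbf{Close}$FC$ stage and arguing soundness directly from the fact that $F$ is a $k$-junta by construction.
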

\begin{proof} We use Theorem~\ref{ThTriv}. Since every $F(Y)$ is in $k$-Junta$(Y)$, the algorithm ${\cal A}$ always accepts. Therefore, we have $M=Q=0$, and the algorithm makes $\tilde O(k/\epsilon)$ queries.
\end{proof}

\subsection{Testing $k$-Linear}
The function is linear if it is a sum (over the binary field $F_2$) of variables. The class Linear is the class of all linear functions. The class $k$-Linear is Linear$\cap k$-Junta. That is, the class of functions that are the sum of at most $k$ variables.

Blum et al. \cite{BlumLR93} showed that there is an algorithm for testing Linear under the uniform distribution that makes $O(1/\epsilon)$ queries. For testing $k$-Linear under the uniform distribution, Fisher, et al.~\cite{FischerKRSS02} gave a tester that makes $O(k^2/\epsilon)$ queries. They also gave the lower bound $\Omega(\sqrt{k})$ for non-adaptive algorithms. Goldreich~\cite{Goldreich10}, proved the lower bound $\Omega(k)$ for non-adaptive algorithms and $\Omega(\sqrt{k})$ for adaptive algorithms. Then Blais et al.~\cite{BlaisBM11} proved the lower bound $\Omega(k)$ for adaptive algorithms. Blais and Kane, in~\cite{BlaisK12}, gave the lower bound $k-o(k)$ for adaptive algorithms and $2k-o(k)$ for non-adaptive algorithms.

Testing $k$-Linear can be done by first testing if the function is $k$-Junta and then testing if it is Linear. Therefore, there is an adaptive algorithm for $\epsilon$-testing $k$-Linear under the uniform distribution that makes $\tilde O(k/\epsilon)$ queries.

In this paper we prove
\begin{theorem} For any $\epsilon>0$, there is a polynomial time two-sided distribution-free adaptive algorithm for $\epsilon$-testing $k$-Linear that makes $\tilde O(k/\epsilon)$ queries.
\end{theorem}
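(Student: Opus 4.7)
The plan is to invoke Theorem~\ref{FirstTester} (part~2) directly with $C = k$-Linear, after verifying the two structural hypotheses on $C$ and computing the size of $C(\Gamma)$.

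First I would check that $k$-Linear satisfies the hypotheses of Theorem~\ref{FirstTester}, namely $C \subseteq k$-Junta and closure under variable and zero projections. The containment in $k$-Junta is immediate since a sum of at most $k$ variables depends on at most $k$ coordinates. For variable projection, if $f(x) = x_{i_1} + \cdots + x_{i_m}$ with $m \le k$, then $f(x(\pi)) = x_{\pi(i_1)} + \cdots + x_{\pi(i_m)}$, which after collecting repeated terms over $\FF_2$ remains a sum of at most $k$ variables, hence lies in $k$-Linear. For zero projection, substituting $x_i \mapsto 0$ either deletes the term $x_i$ from the sum (if it appears) or leaves $f$ unchanged, so the result is still a sum of at most $k$ variables.

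Second, I would determine $C(\Gamma)$. Recall $\Gamma = \{\tau(\ell_1),\ldots,\tau(\ell_q)\}$ with $q = |I| \le k$, and by Assumption~\ref{assmp}, $C(\Gamma)$ is the set of functions in $C$ whose relevant-variable set is exactly $x(\Gamma)$. The only $\FF_2$-linear function on $\{0,1\}^n$ whose relevant variables are precisely $x(\Gamma)$ is
\[
F^*(x) \;=\; \sum_{i \in \Gamma} x_i,
\]
so $|C(\Gamma)| = 1$ and $\log|C(\Gamma)| = 0$.

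Third, Theorem~\ref{FirstTester} part~2 then yields a polynomial time, two-sided, distribution-free adaptive $\epsilon$-tester for $k$-Linear with query complexity
\[
\tilde O\!\left(\frac{k}{\epsilon}\log(2|C(\Gamma)|)\right) \;=\; \tilde O\!\left(\frac{k}{\epsilon}\right),
\]
as required. (Equivalently, one can apply Theorem~\ref{ThTriv}: since $C(Y)$ contains the single function $\sum_{i \in Y} y_i$, distinguishing $F \in C(Y)$ from $F$ being $(\epsilon/12)$-far can be done with $M=0$ membership queries and $Q = O((1/\epsilon)\log(1/\delta))$ uniform example queries by evaluating $F$ on random points and comparing with $\sum_{i\in Y} y_i$; this also gives $\tilde O(k/\epsilon)$ total queries.)

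There is no real obstacle here: all of the heavy lifting (partitioning, identifying relevant sets, extracting relevant-variable values via \textbf{RelVarValues}, and testing closeness of $f(x_X\circ 0_{\overline X})$ to $F$ under $\D$) is already packaged inside Theorem~\ref{FirstTester}. The only class-specific observation, and the reason the $\log|C(\Gamma)|$ factor disappears, is that $k$-Linear restricted to a prescribed relevant-variable set is a single function. One small sanity check worth flagging is that $k$-Linear is \emph{not} closed under one-projection (setting $x_i = 1$ introduces a constant), but this is harmless because Theorem~\ref{FirstTester} only requires closure under zero and variable projections.
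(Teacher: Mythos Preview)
Your proposal is correct and takes essentially the same approach as the paper: both arguments hinge on the single observation that $C(\Gamma)$ (equivalently $C(Y)$) contains exactly one function, namely $\sum_{i\in\Gamma}x_i$. The only cosmetic difference is that the paper invokes Theorem~\ref{TesterL02} with the trivial learner that outputs $y_1+\cdots+y_q$ (so $M=Q=0$), whereas you invoke Theorem~\ref{FirstTester} directly with $|C(\Gamma)|=1$; these are interchangeable applications of the same machinery.
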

\begin{proof} We use Theorem~\ref{TesterL02}. Here $C(Y)=\{y_1+\cdots+y_q\}$ contains one function and therefore the learning algorithm just outputs $y_1+\cdots+y_q$. Therefore $M=Q=0$ and the result follows.
\end{proof}

\subsection{Testing $k$-Term}
A {\it term} (or {\it monomial}) is a conjunction of literals and {\it Term} is the class of all terms. A {\it $k$-term} is a term with at most $k$ literals and {\it $k$-Term} is the class of all $k$-terms.

In the uniform distribution model, Pernas et al.~\cite{ParnasRS02}, gave a tester for $k$-terms that makes $O(1/\epsilon)$ queries in the uniform model. We give the same result in the next section. In this paper we prove the following result for the distribution-free model. When $k=n$, better results can be found in~\cite{GlasnerS09,DolevR11}.
\begin{theorem} For any $\epsilon>0$, there is a polynomial time two-sided distribution-free adaptive algorithm for $\epsilon$-testing $k$-Term that makes $\tilde O(k/\epsilon)$ queries.
\end{theorem}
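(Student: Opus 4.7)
The plan is to apply Theorem~\ref{TesterL01} with a cheap proper exact learner for the class $C(Y)$ of $q$-terms that depend on every variable in $Y = \{y_1,\ldots,y_q\}$. I will first verify that $k$-Term is closed under variable and zero projections: setting a variable to $0$ either kills a positive literal (making the term the constant $0$, which we regard as a $k$-term) or removes a negative literal (yielding a shorter term); substituting one variable for another either creates a contradiction (again the constant $0$) or shortens the term. Both outcomes remain in $k$-Term, so the hypotheses of the framework are met.

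Next, I construct a proper exact learning algorithm for $C(Y)$ using $O(k)$ membership queries to $F$. A term $F \in C(Y)$ is determined by the polarity of each of its $q$ literals. Under Assumption~\ref{assmp}, if $f\in C$ then for each $\ell\in I$ the restriction $f(x_{X_\ell}\circ v^{(\ell)}_{\overline{X_\ell}})$ is exactly one literal in $\{x_{\tau(\ell)},\overline{x_{\tau(\ell)}}\}$, and by the construction $F=f(x(\pi_{f,I})\circ 0_{\overline{X}})$ this literal has the same polarity as the literal on $y_\ell$ in $F$. I can read off that polarity with two queries: set $a_0 = f(0_{X_\ell}\circ v^{(\ell)}_{\overline{X_\ell}})$ and $a_1 = f(1_{X_\ell}\circ v^{(\ell)}_{\overline{X_\ell}})$. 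The pair $(a_0,a_1)=(0,1)$ indicates a positive literal, $(1,0)$ a negative literal; any other pair means $f\notin C$ and the learner may reject (the tester itself will then also reject, which is fine). This gives a zero-error proper exact learner with $M(\delta) = 2q \le 2k$ queries whenever $f \in C$.

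Plugging $M(\delta)=O(k)$ into Theorem~\ref{TesterL01} yields a polynomial-time two-sided distribution-free tester for $k$-Term making $M(1/24)+\tilde O(k/\epsilon) = \tilde O(k/\epsilon)$ queries, which is the stated bound. The main subtlety I anticipate is ensuring that the polarity read off from $f(x_{X_\ell}\circ v^{(\ell)}_{\overline{X_\ell}})$ genuinely matches the polarity of $y_\ell$ in $F$; this requires carefully unwinding the projection $\pi_{f,I}$ (which sends every coordinate of $X_\ell$ to $\tau(\ell)$) and using that only one literal of $f$ depends on $x_{\tau(\ell)}$. Once this bookkeeping is done, the rest of the verification is immediate from the lemmas of the previous section.
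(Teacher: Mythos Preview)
Your proposal is correct and follows essentially the same approach as the paper: both determine the polarity of each literal in the term from the witnesses $v^{(\ell)}$ (Remark~\ref{remark}), yielding a trivial proper learner for $C(Y)$. The only cosmetic differences are that the paper invokes Theorem~\ref{TesterL02} with $M=Q=0$ (observing that the values $f(v^{(\ell)})$ and $f(0_{X_\ell}\circ v^{(\ell)}_{\overline{X_\ell}})$ already computed in {\bf ApproxTarget} suffice to read off each polarity, so no additional queries are needed), whereas you invoke Theorem~\ref{TesterL01} and spend $2q\le 2k$ fresh membership queries to $f(0_{X_\ell}\circ v^{(\ell)}_{\overline{X_\ell}})$ and $f(1_{X_\ell}\circ v^{(\ell)}_{\overline{X_\ell}})$; either way the total is $\tilde O(k/\epsilon)$.
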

\begin{proof} Recall that $x_i^0=x_i$ and $x_i^1=\overline{x_i}$. Here $C(Y)=\{y_1^{\xi_1}\wedge\cdots\wedge y_q^{\xi_q}|\xi\in\{0,1\}^q\}$ contains $2^q$ functions. We use Theorem~\ref{TesterL02} with Remark~\ref{remark}. Since $V$ contains witnesses for each variable it follows that $\xi_i$ are known. Just take any string $a$ that satisfies $F(a)=1$ and then $\xi_i=\overline{a_i}$. Therefore $M=Q=0$ and the result follows.
\end{proof}

\subsection{Testing $s$-Term Monotone $r$-DNF}\label{TMDD}
A {\it DNF} is a disjunction of terms. An {\it $r$-DNF} is a disjunction of $r$-terms.
The class {\it $s$-Term $r$-DNF} is the class of all $r$-DNFs with at most $s$ terms.
The class {\it $s$-Term Monotone $r$-DNF} is the class of all $r$-DNFs with at most $s$ terms with no negated variables. A DNF $f$ is called {\it unate DNF} if there is $\xi\in \{0,1\}^n$ such that $f(x_1^{\xi_1},\ldots,x_n^{\xi_n})$ is monotone DNF. If $\xi_i=0$ then we say that $f$ {\it is positive unate in $x_i$}; otherwise we say that $f$ {\it is negative unate in $x_i$}. Similarly, one can define the classes Unate DNF, Unate $s$-DNF etc.

We first give a learning algorithm for $s$-Term Monotone $r$-DNF. The algorithm is in Figure~\ref{MDNFL}. In the algorithm, we use $P_{1/r}$ for the probability distribution over the strings $b\in \{0,1\}^n$ where each coordinate $b_i$ is chosen randomly and independently to be $1$ with probability $1-1/r$ and $0$ with probability $1/r$. For two strings $x,y\in \{0,1\}^n$ we denote $x*y=(x_1y_1,\ldots,x_ny_n)$ where $x_iy_i=x_i\wedge y_i$. The procedure {\bf FindMinterm}$(f,a)$ flips bits that are one in $a$ to zero as long as $f(a)=1$.

\newcounter{ALCM}
\setcounter{ALCM}{0}
\newcommand{\stepM}{\stepcounter{ALCM}$\arabic{ALCM}.\ $\>}
\newcommand{\steplabelM}[1]{\addtocounter{ALCM}{-1}\refstepcounter{ALCM}\label{#1}}
\begin{figure}[h!]
  \begin{center}
  \fbox{\fbox{\begin{minipage}{28em}
  \begin{tabbing}
  xxx\=xxxx\=xxxx\=xxxx\=xxxx\=xxxx\= \kill
  {{\bf LearnMonotone}$(f,\D,\epsilon,\delta,s,r)$}\\
 {\it Input}: Oracle that accesses a Boolean function $f$\\
 \>\> that is $s$-term monotone $r$-DNF and $\D$. \\
  {\it Output}: $h$ that is $s$-term monotone $r$-DNF\\ \\
  xxx\=xxxx\=xxxx\=xxxx\=xxxx\=xxxx\= \kill
\stepM\steplabelM{MDNF00}
  $h\gets 0$. \\
  \stepM\steplabelM{MDNF01}
  Repeat $4(s/\epsilon)\log(1/\delta)$ times. \\
\stepM\steplabelM{MDNF02}\>
  Choose $a\in \D$.\\
  \stepM\steplabelM{MDNF03}\>
 If $f(a)=1$ and $h(a)=0$ then \\
 \stepM\steplabelM{MDNF04}\>\>
 $t\gets 0$\\
   \stepM\steplabelM{MDNF04b}\>\>
While $t\le \alpha:=4r\ln(2ns/\delta)$ and $wt(a)>r$ do\\
   \stepM\steplabelM{MDNF05a}\>\>\>
   $t\gets t+1;$ If $t=\alpha+1$ Output ``fail''\\
   \stepM\steplabelM{MDNF05}\>\>\>
   Choose $y\in P_{1/r}$\\
  \stepM\steplabelM{MDNF06}\>\>\>
   If $f(a*y)=1$ then $a\gets a*y$\\
   \stepM\steplabelM{MDNF07a}\>\>
   $a\gets${\bf FindMinterm}$(f,a)$\\
     \stepM\steplabelM{MDNF07}\>\>
   $h\gets h\vee \prod_{a_i=1}x_i$\\
   \stepM\steplabelM{MDNF08}
   Output $h$
  \end{tabbing}
  \end{minipage}}}
  \end{center}
	\caption{A learning algorithm for $s$-Term Monotone $r$-DNF}
	\label{MDNFL}
	\end{figure}

We now show
\begin{lemma}\label{LearnMDNF} If the target function $f$ is $s$-term monotone $r$-DNF then for any constant $\delta$, algorithm {\bf LearnMonotone} asks $O(s/\epsilon)$ ExQ$_D$ and $O(sr\log(ns))$ MQ and, with probability at least $1-\delta$, learns an $s$-term monotone $r$-DNF, $h$, that satisfies $\Pr_\D[h\not =f]\le \epsilon$.
\end{lemma}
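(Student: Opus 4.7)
The plan is to analyse the algorithm in three phases: (i) each invocation of the inner loop (steps~\ref{MDNF04b}--\ref{MDNF06}) drives $wt(a)$ down to $\le r$ except with small probability, (ii) the hypothesis $h$ evolves through at most $s$ ``additions'', each contributing a genuine term of $f$, and (iii) the outer loop is long enough that $\Pr_\D[h\ne f]\le\epsilon$ holds with high probability at termination. The query count is read off at the end.

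\emph{Inner-loop drift.} Fix an entry with $f(a)=1$. Because $f$ is monotone $r$-DNF, some term $T_j$ of $f$ is satisfied by $a$; let $V_T$ be its variable set, so $|V_T|\le r$ and $a_i=1$ for all $i\in V_T$. Write $N_t$ for the number of $1$-coordinates of $a$ lying outside $V_T$ at step $t$. Because $P_{1/r}$ is a product distribution, with probability $(1-1/r)^{|V_T|}\ge (1-1/r)^r\ge 1/4$ (assume $r\ge 2$; the case $r=1$ is trivial) all coordinates in $V_T$ receive the value $1$ in $y$, which forces $T_j(a*y)=f(a*y)=1$, so the update $a\leftarrow a*y$ executes. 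Conditioned on this ``good'' event, the remaining coordinates of $y$ are still i.i.d.\ Bernoulli$(1-1/r)$, so each $1$-coordinate of $a$ outside $V_T$ is zeroed independently with probability $1/r$. Combining with the trivial bound $N_{t+1}\le N_t$ in the complementary case (updates can only decrease the weight of $a$) yields the multiplicative drift
\[
  \E[N_{t+1}\mid N_t]\;\le\;\Bigl(1-\tfrac{1}{4r}\Bigr)N_t.
\]
Iterating for $\alpha=4r\ln(2ns/\delta)$ steps starting from $N_0\le n$ gives $\E[N_\alpha]\le n\,e^{-\alpha/(4r)}=\delta/(2s)$, and Markov's inequality yields $\Pr[N_\alpha\ge 1]\le\delta/(2s)$. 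When $N_\alpha=0$ we have $wt(a)\le|V_T|\le r$ and the loop exits normally, so a single invocation of the inner loop outputs ``fail'' with probability at most $\delta/(2s)$.

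\emph{Structure of $h$ and the outer loop.} Once the inner loop exits with $wt(a)\le r$, {\bf FindMinterm}$(f,a)$ reduces $a$ to a minterm $a'$ of $f$; since $f$ is monotone, minterms coincide with variable sets of its terms, so $\prod_{a'_i=1}x_i$ is a bona fide term of $f$. Hence $h\le f$ throughout, and if the term added were already present in $h$, monotonicity would give $h(a)\ge h(a')=1$, contradicting $h(a)=0$. Therefore every inner-loop entry appends a previously-absent term of $f$, so the inner loop is entered at most $s$ times; a union bound bounds the total ``fail'' probability by $s\cdot\delta/(2s)=\delta/2$. Now suppose the algorithm does not fail yet $\Pr_\D[h\ne f]>\epsilon$ at termination. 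Because $h$ is non-decreasing and $h\le f$, this forces $\Pr_\D[f(a)=1\wedge h(a)=0]>\epsilon$ in every outer iteration, so at every step the conditional probability of appending a new term exceeds $\epsilon$. Coupling with an independent Bernoulli process shows the total number of appends in $M=4(s/\epsilon)\log(1/\delta)$ outer iterations stochastically dominates $\mathrm{Bin}(M,\epsilon)$, whose mean is $4s\log(1/\delta)$; a standard Chernoff lower-tail bound makes this $\le s$ with probability at most $\delta/2$. Since reaching $s$ appends already recovers $h=f$ (contradicting the bad event), combining with the inner-loop estimate gives overall failure probability at most $\delta/2+\delta/2=\delta$.

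\emph{Query count and main obstacle.} The outer loop makes $M=O(s/\epsilon)$ example queries (with the label $f(a)$ returned by $\mathrm{ExQ}_\D$). Membership queries arise only inside the $\le s$ inner-loop entries: each entry uses at most $\alpha=O(r\log(ns/\delta))$ queries in the random-descent phase plus at most $r$ queries in {\bf FindMinterm} (since $wt(a)\le r$ upon entry), for a total of $O(sr\log(ns))$ membership queries once $\delta$ is absorbed into the constant. The technical heart of the argument is the drift bound: the ``update'' event and the ``shrinkage'' of $N_t$ are correlated, and one must exploit the product structure of $P_{1/r}$ to show that conditioning on acceptance does not destroy the geometric shrinkage of $N_t$. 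Once $\E[N_{t+1}\mid N_t]\le(1-\Omega(1/r))N_t$ is in hand, the rest is routine: Markov on the inner loop, Chernoff on the outer loop, and a union bound.
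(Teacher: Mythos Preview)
Your overall strategy matches the paper's proof closely: multiplicative drift on the weight of $a$ in the inner loop, a union bound over at most $s$ inner-loop entries, and a Chernoff/coupling argument on the outer loop. The minterm argument, the $h\le f$ invariant, and the query accounting are all fine.

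There is, however, a real gap in the inner-loop drift analysis. You fix a single term $T_j$ (with variable set $V_T$) at the moment the inner loop is entered and track $N_t=\#\{i\notin V_T: a^{(t)}_i=1\}$. Your bound $\E[N_{t+1}\mid N_t]\le(1-\tfrac{1}{4r})N_t$ relies on the implication ``good event $\Rightarrow T_j(a^{(t)}*y)=1\Rightarrow$ update executes.'' But this implication requires $T_j(a^{(t)})=1$, i.e., that all $V_T$-coordinates of $a^{(t)}$ are still $1$. That invariant can break: on a step where the good event fails (some $V_T$-coordinate of $y$ is $0$) it may still happen that $f(a*y)=1$ via a \emph{different} term $T_l$, so the update executes and zeroes a $V_T$-coordinate of $a$. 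From then on the good event no longer forces an update, and your drift inequality is no longer justified (you only get $N_{t+1}\le N_t$, not geometric shrinkage).

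The paper's proof sidesteps this by not committing to a single term. It tracks $W_t:=wt(a^{(t)})-r$, and at \emph{each} step $t$ uses the invariant $f(a^{(t)})=1$ to pick a fresh term $T$ satisfied by $a^{(t)}$, pads its $r'\le r$ variables with $r-r'$ additional $1$-coordinates of $a^{(t)}$ (which exist since $wt(a^{(t)})>r$ inside the While loop), and conditions on these $r$ coordinates all being $1$ in $y$. This event has probability $(1-1/r)^r\ge 1/4$ and guarantees the update; under it the remaining $W_t$ ones are shrunk independently, yielding $\E[W_{t+1}\mid W_t]\le W_t(1-\tfrac{1}{4r})$. Because $W_t$ does not reference any particular term, the per-step refresh causes no difficulty. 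With this correction your Markov/union/Chernoff pipeline goes through exactly as you wrote it.
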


\begin{proof} We first show that if in the $m$th iteration of the algorithm (steps \ref{MDNF02}-\ref{MDNF07}) the function $h$ contains $\ell$ terms of $f$ and $f(a)=1$ and $h(a)=0$ then, with probability at least $1-\delta/(2s)$, steps \ref{MDNF04} to \ref{MDNF07} adds to $h$ a new term of~$f$. This implies that in the $(m+1)$th iteration $h$ contains $\ell+1$ terms of $f$. Then, since the number of terms of $f$ is at most $s$, with probability at least $1-\delta/2$, all the terms in $h$ are terms in $f$. We then show that, with probability at least $1-\delta/2$, the procedure outputs $h$ that satisfies $\Pr_\D[h\not =f]\le \epsilon$.

First notice that if $f(a)=1$ and $h(a)=0$ then for every $y\in\{0,1\}^n$, $h(a*y)=0$. This follows from the fact that $h$ is monotone and $a*y\le a$. Therefore if $a$ receives the values $a^{(1)},\ldots,a^{(\tau)}$ in the While loop then $f(a^{(i)})=1$ and $h(a^{(i)})=0$ for all $i=1,\ldots,\tau$. We also have $a^{(i+1)}=a^{(i)}$ if $f(a^{(i)}*y)=0$ and $a^{(i+1)}=a^{(i)}*y$ if $f(a^{(i)}*y)=1$. Consider the random variable $W_i=wt(a^{(i)})-r$. We will now compute $\E[W_{i+1}|W_i]$. Since $f(a^{(i)})=1$ and $h(a^{(i)})=0$, there is a term $T$ in $f$ that is not in $h$ that satisfies $T(a^{(i)})=1$.  Suppose $T=x_{j_1}x_{j_2}\cdots x_{j_{r'}}$, $r'\le r$. Then $a^{(i)}_{j_1}=\cdots=a^{(i)}_{j_{r'}}=1$. Consider another $r-r'$ entries in $a^{(i)}$ that are equal to $1$, $a^{(i)}_{j_{r'+1}}=\cdots=a^{(i)}_{j_r}=1$. Such entries exist because of the condition $wt(a)>r$ of the While command. Note that $wt(a^{(i)})=W_i+r$. Let $j_{r+1},\ldots,j_{r+W_i}$ be the other entries of $a^{(i)}$ that are equal to $1$. Let $A$ be the event that, for the $y\in P_{1/r}$ chosen at this stage,  $y_{j_1}=\cdots=y_{j_r}=1$. Notice that if event $A$ happens then $T(a^{(i+1)})=f(a^{(i+1)})=1$ and $a^{(i+1)}=a^{(i)}*y$. Then
\begin{eqnarray}
\E[W_{i+1}|W_i]&=&\E[W_{i+1}|W_i,A]\Pr[A]+\E[W_{i+1}|W_i,\bar A]\Pr[\bar A]\nonumber\\
&\le &\E[W_{i+1}|W_i,A]\left(1-\frac{1}{r}\right)^r+W_i\left(1-\left(1-\frac{1}{r}\right)^r\right)\label{pr01}\\
&= & W_i\left(1-\frac{1}{r}\right)^{r+1}+W_i\left(1-\left(1-\frac{1}{r}\right)^r\right)\label{pr02}\\
&=& W_i\left(1-\frac{1}{r}\left(1-\frac{1}{r}\right)^r\right)\le W_i\left(1-\frac{1}{4r}\right).\nonumber
\end{eqnarray}
The inequality in (\ref{pr01}) follows from the fact that $W_{i+1}\le W_i$ and (\ref{pr02}) follows from the fact that the expected number of ones in $y_{j_{r+1}}a_{j_{r+1}},\ldots,y_{j_{r+W_i}}a_{j_{r+W_i}}$ is $(1-1/r)W_i$.

Therefore $\E[W_i]\le n(1-1/(4r))^i$. The probability that the algorithm fails is the probability that $t= 4r\ln(2ns/\delta)$. By Markov's Bound, Lemma~\ref{Markov}, this is bounded by
$$\Pr[wt(a^{(t)})>r]=\Pr[W_t>1]\le \E[W_t]\le n\left(1-\frac{1}{4r}\right)^t\le \frac{\delta}{2s}.$$
This completes the first part of the proof.

Now we show that, with probability at least $1-\delta/2$, the procedure outputs $h$ that satisfies $\Pr_\D[f\not= h]\le \epsilon$. Let $h^{(i)}$ be the function $h$ at iteration $i=1,2,\ldots,w$. Since $h^{(1)}\implies h^{(2)}\implies \cdots \implies h^{(w)}=h\implies f$, if $\Pr_\D[f\not= h]> \epsilon$ then $\Pr_\D[f\not= h^{(i)}]> \epsilon$ for all $i$. Therefore, the probability that $\Pr_\D[f\not= h]> \epsilon$ is less than the probability that for $v=4s\log(1/\delta)/\epsilon$ strings $a^{(1)},\ldots,a^{(v)}$ chosen independently at random according to the distribution $\D$, less than $s$ of them satisfies $g_i(a^{(i)})\not=f(a^{(i)})$ for Boolean functions $g_i$ that satisfy $\Pr_\D[g_i\not=f]\ge \epsilon$. By Chernoff's bound, Lemma \ref{Chernoff}, this probability is less than $\delta/2$.

The algorithm asks at most $4s\log(1/\delta)/\epsilon=O(s/\epsilon)$ ExQ$_\D$ and at most $s\cdot 4r\ln(2ns/\delta)=O(sr\log (ns))$ MQ.
\end{proof}

Now we show
\begin{theorem}\label{ThMon} For any $\epsilon>0$, there is a polynomial time two-sided distribution-free adaptive algorithm for $\epsilon$-testing $s$-Term Monotone $r$-DNF that makes $\tilde O(rs^2/\epsilon)$ queries.
\end{theorem}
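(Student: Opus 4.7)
The plan is to view $s$-term monotone $r$-DNF as a subclass of $(sr)$-Junta and then apply Theorem~\ref{TesterL02} with the learning algorithm from Lemma~\ref{LearnMDNF} as the proper learner.

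First, any $s$-term monotone $r$-DNF depends on at most $sr$ variables, so I set $k := sr$ and the class sits inside $k$-Junta. I would verify the two closure hypotheses required by Section~3. Closure under variable projection is immediate: if $f = T_1 \vee \cdots \vee T_s$ with each $T_i$ a conjunction of at most $r$ positive literals, then $f(x(\pi))$ is the disjunction of the $T_i(x(\pi))$, each still a conjunction of positive literals on at most $r$ distinct variables, so $f(x(\pi))$ remains in the class. Closure under zero projection holds because substituting $x_i := 0$ in a monotone DNF deletes every term that contains $x_i$, leaving an $s$-term monotone $r$-DNF.

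Next I invoke the distribution-free half of Theorem~\ref{TesterL02}. The reduced function $F$ has $q \le k = sr$ relevant variables $Y$, and on $F$ the algorithm {\bf LearnMonotone} of Lemma~\ref{LearnMDNF} provides a proper learner with
\[
M(\epsilon,\delta) = O\bigl(sr\log(sr/\delta)\bigr) = \tilde O(sr), \qquad Q(\epsilon,\delta) = O\bigl((s/\epsilon)\log(1/\delta)\bigr) = \tilde O(s/\epsilon).
\]
Substituting these bounds and $k = sr$ into the distribution-free query complexity of Theorem~\ref{TesterL02} gives
\[
\tilde O\!\left(M(\epsilon/12,1/24) + k\, Q(\epsilon/12,1/24) + \frac{k}{\epsilon}\right) \;=\; \tilde O\!\left(sr + \frac{s^2 r}{\epsilon} + \frac{sr}{\epsilon}\right) \;=\; \tilde O\!\left(\frac{s^2 r}{\epsilon}\right),
\]
which is the desired bound.

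The one subtlety worth flagging is that Theorem~\ref{TesterL02} nominally asks for a learner for $C(Y)$ (functions in $C$ depending on \emph{every} variable of $Y$), whereas {\bf LearnMonotone} is a learner for the strictly larger class $C$ of all $s$-term monotone $r$-DNFs on $q$ variables. This is already handled inside the proof of Theorem~\ref{TesterL02}: if the learner returns a hypothesis $h \notin C(Y)$ the tester rejects, and when $f\in C$ this event triggers only with negligible probability because of the final distance-estimation phase of the reduction. Polynomial running time is inherited from {\bf LearnMonotone}, which runs in time polynomial in $n$, $s$, $r$ and $1/\epsilon$.
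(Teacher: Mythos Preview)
Your proposal is correct and follows essentially the same route as the paper: set $k=sr$, invoke the proper learner {\bf LearnMonotone} from Lemma~\ref{LearnMDNF} with $M=\tilde O(sr)$ MQ and $Q=\tilde O(s/\epsilon)$ ExQ$_\D$, and plug into Theorem~\ref{TesterL02}. The paper's own proof is the same two-line argument; you have in fact supplied more detail (the closure checks) than the paper does.

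One small correction on the subtlety you flag. Your explanation that ``when $f\in C$ this event triggers only with negligible probability because of the final distance-estimation phase'' is not the right reason: the rejection on $h\notin C(Y)$ happens \emph{before} any distance estimation, and nothing prevents {\bf LearnMonotone} from omitting a low-influence variable of $F$ even when $f\in C$. The clean fix is simply to weaken the check in the reduction from ``$h\in C(Y)$'' to ``$h\in C$'': the proof of Lemma~\ref{kLem} goes through verbatim with $C$ in place of $C(\Gamma)$ (if $F$ is $(\epsilon/3)$-close to any $g\in C$ then $f$ is $\epsilon$-close to $g\in C$), and {\bf LearnMonotone} always outputs an $s$-term monotone $r$-DNF, hence an element of $C$. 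The paper glosses over this point entirely, so your noticing it is to your credit; just adjust the justification.
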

\begin{proof} The number of relevant variables in any $s$-term monotone $r$-DNF is at most $q\le k=sr$. By Lemma~\ref{LearnMDNF}, $C(Y)$ can be learned with constant confidence $\delta$ and accuracy $\epsilon$ in $M=\tilde O(sr\log(qs))=\tilde O(sr)$ MQ and $O(s/\epsilon)$ ExQ$_\D$. By Theorem~\ref{TesterL02}, there is a distribution-free tester for $s$-Term Monotone $r$-DNF that makes $\tilde O(s^2r/\epsilon)$ queries.
\end{proof}

\begin{theorem}\label{MUnate} For any $\epsilon>0$, there is a polynomial time two-sided distribution-free adaptive algorithm for $\epsilon$-testing $s$-Term Unate $r$-DNF that makes $\tilde O(rs^2/\epsilon)$ queries.
\end{theorem}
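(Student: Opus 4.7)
The plan is to reduce the unate case to the monotone case already handled in Theorem~\ref{ThMon} and then invoke Theorem~\ref{TesterL02}. Every $s$-term unate $r$-DNF $F$ on $Y=\{y_1,\ldots,y_q\}$ has a sign vector $\xi\in\{0,1\}^q$ such that $F(y_1^{\xi_1},\ldots,y_q^{\xi_q})$ is an $s$-term monotone $r$-DNF, and $q\le k:=sr$. So if we can efficiently recover $\xi$ from the data already collected by {\bf ApproxTarget}/{\bf TestSets}, then learning unate DNF reduces to a single call to {\bf LearnMonotone} on the sign-flipped function.

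First I would extract $\xi$ from the witnesses $V=\{v^{(\ell)}\}_{\ell\in I}$ returned by {\bf ApproxTarget}, as foreshadowed by Remark~\ref{remark}. By Assumption~\ref{assmp}, for each $\ell\in I$ the restricted function $f(x_{X_\ell}\circ v^{(\ell)}_{\overline{X_\ell}})$ equals (resp.\ is $(1/30)$-close to) a literal in $\{x_{\tau(\ell)},\overline{x_{\tau(\ell)}}\}$. One of the two is fixed by a single membership query (or by a handful of queries to reach confidence $1-\delta/(3s)$): evaluate the restriction at the all-ones and all-zeros assignments on $X_\ell$; in the positive unate case the restriction is close to $x_{\tau(\ell)}$, and in the negative unate case close to $\overline{x_{\tau(\ell)}}$. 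This yields $\xi_\ell\in\{0,1\}$ for every $\ell\in I$ using $O(q)=O(sr)$ queries, well within our budget.

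Next I would run the monotone learner. Define $F'(y):=F(y_1^{\xi_1},\ldots,y_q^{\xi_q})$; if $F\in$ $s$-Term Unate $r$-DNF$(Y)$, then $F'\in$ $s$-Term Monotone $r$-DNF$(Y)$. Every membership query to $F'$ costs one membership query to $F$ (and hence to $f$, by Lemma~\ref{SimulateF}.\ref{Compp1}), and every ExQ$_{\D'}$ for $F'$ can be obtained by drawing $u\in\D$, calling {\bf RelVarValue} to recover $u_\Gamma$, and sign-flipping with $\xi$. Lemma~\ref{LearnMDNF} then yields a hypothesis $h'$ that is $\epsilon$-close to $F'$ with respect to the induced distribution, using $M=\tilde O(sr)$ MQ and $Q=O(s/\epsilon)$ ExQ$_\D$; set $h(y):=h'(y_1^{\xi_1},\ldots,y_q^{\xi_q})$, an $s$-term unate $r$-DNF that is $\epsilon$-close to $F$. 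Feeding these $M,Q$ into Theorem~\ref{TesterL02} (distribution-free branch) gives the tester with $\tilde O(M+kQ+k/\epsilon)=\tilde O(sr+sr\cdot s/\epsilon+sr/\epsilon)=\tilde O(rs^2/\epsilon)$ queries.

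The main subtlety is correctly computing $\xi$ when $f$ is not truly in the class but only yields $(1/30)$-close literals on each $X_\ell$: I would have to argue that a small constant number of random restrictions suffice to decide the sign with high probability per $\ell$, and union-bound over $q\le sr$. Everything else is bookkeeping that parallels the monotone proof of Theorem~\ref{ThMon}.
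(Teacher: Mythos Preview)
Your approach is correct and matches the paper's: determine the sign vector $\xi$ from the witnesses and reduce to the monotone case (Theorem~\ref{ThMon}). Two minor points where the paper is cleaner. First, no extra queries are needed to get $\xi$: the witness $v^{(\ell)}$ already satisfies $f(v^{(\ell)})\neq f(0_{X_\ell}\circ v^{(\ell)}_{\overline{X_\ell}})$, and the paper simply reads off the sign from which of these two values is $1$ (positive unate in $x_{\tau(\ell)}$ iff $f(v^{(\ell)})=1$ and $f(0_{X_\ell}\circ v^{(\ell)}_{\overline{X_\ell}})=0$). Second, your worry about recovering $\xi$ correctly when $f$ is not in the class is unnecessary: for any fixed $\xi$ the learner either fails or outputs some $h\in C$, and the subsequent closeness test of $F$ against $h$ rejects whenever $F$ is $(\epsilon/3)$-far from every function in $C(\Gamma)$, regardless of how $\xi$ was chosen.
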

\begin{proof} The set of witnesses tells us, for every variable $x_i$, if $f$ is positive unate in $x_i$ or negative unate. If $f(v^{(\ell)})=1$, $f(0_{X_\ell}\circ v^{(\ell)}_{\overline{X_\ell}})=0$ then $f$ is positive unate in  $x_{\tau(\ell)}$ and if $f(v^{(\ell)})=0$, $f(0_{X_\ell}\circ v^{(\ell)}_{\overline{X_\ell}})=1$ then $f$ is negative unate in  $x_{\tau(\ell)}$. Then the result immediately follows from Theorem~\ref{ThMon}.
\end{proof}

\subsection{Testing Size-$s$ Decision Tree and Size $s$ Branching Program}
A {\it decision tree} is a rooted binary tree in which each internal node is labeled with a variable $x_i$ and has two children. Each leaf is labeled with an output from $\{0,1\}$. A decision tree computes a Boolean function in an obvious way: given an input $x\in \{0,1\}^n$, the value of the function on $x$ is the output in the leaf reached by starting at the root and going left or right at each internal node according to whether the variable's value in $x$ is $0$ or $1$, respectively. The {\it size} of a decision tree is the number of leaves of the tree. The class size-$s$ Decision Tree is the class of all decision trees of size $s$.

A {\it branching program} is a rooted directed acyclic graph with two sink nodes labeled $0$ and $1$. As in the decision tree, each internal node is labeled with a variable $x_i$ and has two children. The two edges to the children are labeled with $0$ and $1$. Given an input $x$, the value of the branching program on $x$ is the label of the sink node that is reached as described above. The {\it size} of a branching program is the number of nodes in the graph. The class size-$s$ Branching Program is the class of all Branching Program of size $s$.

Diakonikolas et al.~\cite{DiakonikolasLMORSW07}, gave a tester for size-$s$ Decision Tree and size $s$ Branching Program under the uniform distribution that makes $\tilde O(s^4/\epsilon^2)$ queries. Chakraborty et al.~\cite{ChakrabortyGM11} improved the query complexity to $\tilde O(s/\epsilon^2)$. In this paper we prove

\begin{theorem} For any $\epsilon>0$, there is a two-sided adaptive algorithm for $\epsilon$-testing size-$s$ Decision Tree and size-$s$ Branching Program that makes $\tilde O(s/\epsilon)$ queries.

There is a two-sided distribution-free adaptive algorithm for $\epsilon$-testing size-$s$ Decision Tree and size $s$ Branching Program that makes $\tilde O(s^2/\epsilon)$ queries.
\end{theorem}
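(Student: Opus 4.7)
The plan is to verify that size-$s$ Decision Tree and size-$s$ Branching Program fit the template of Theorem~\ref{FirstTester} with $k = s$ and $\log|C(\Gamma)| = \tilde O(s)$, and then read off the two bounds directly. First I would observe that every size-$s$ decision tree has at most $s-1$ internal nodes, and every size-$s$ branching program has at most $s-2$ internal nodes, so in both cases the function is an $s$-junta. Next I would check the needed closure properties: substituting a constant $\xi\in\{0,1\}$ for a variable $x_i$ replaces each internal node labeled $x_i$ by its $\xi$-child (and prunes the other subtree/out-edge), which can only decrease the size; identifying two variables via a projection $\pi$ only relabels internal nodes and never increases the size. Hence both classes are closed under variable and zero (and one) projection, and Assumption~\ref{assmp} applies.

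Then I would bound $|C(\Gamma)|$, the number of functions in the class that depend on a fixed set of $q\le s$ variables. For decision trees, the shape is a rooted binary tree with at most $s$ leaves (at most $C_{s-1}$ shapes), each of the at most $s-1$ internal nodes is labeled by one of $q$ variables, and each of the at most $s$ leaves is labeled from $\{0,1\}$, so the number of such trees is at most $C_{s-1}\cdot q^{s-1}\cdot 2^{s} = 2^{O(s\log s)}$. For branching programs, each of the at most $s$ nodes is labeled with one of $q$ variables and chooses two out-neighbors from the $s+2$ possible targets (nodes and two sinks), giving at most $(q(s+2)^2)^{s} = 2^{O(s\log s)}$ programs. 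In both cases $\log|C(\Gamma)| = \tilde O(s)$.

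Finally I would invoke Theorem~\ref{FirstTester} with $k=O(s)$. Part~1 of that theorem yields a uniform-distribution tester with query complexity
$$\tilde O\!\left(\frac{1}{\epsilon}\bigl(k+\log|C(\Gamma)|\bigr)\right)=\tilde O(s/\epsilon),$$
and part~2 yields a distribution-free tester with query complexity
$$\tilde O\!\left(\frac{k}{\epsilon}\log(2|C(\Gamma)|)\right)=\tilde O(s^2/\epsilon).$$
The running time is $\mathrm{poly}(|C(\Gamma)|,n,1/\epsilon) = 2^{\tilde O(s)}\cdot \mathrm{poly}(n,1/\epsilon)$, consistent with the ``Exp.'' entry recorded in Table~\ref{TABLE}.

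The only real work here is bounding $|C(\Gamma)|$; the rest is bookkeeping to confirm that the hypotheses of Theorem~\ref{FirstTester} are satisfied. I expect the mildest subtlety to be the closure check for branching programs under zero/one projection, since removing an out-edge may leave the graph no longer connected to both sinks; but the resulting computation is still represented by a branching program of size at most $s$ (unreachable nodes can simply be dropped), so the class is indeed closed.
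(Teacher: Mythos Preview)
Your proposal is correct and follows essentially the same approach as the paper: set $k=s$, bound $\log|C(\Gamma)|=O(s\log s)$, and invoke Theorem~\ref{FirstTester}. The paper's proof is terser---it simply cites the bounds $|C(Y)|\le (8s)^s$ for decision trees and $|C(Y)|\le (s+1)^{3s}$ for branching programs from~\cite{DiakonikolasLMORSW07} and omits the closure verification---so your added checks on closure and the junta property are just making explicit what the paper leaves implicit.
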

\begin{proof} For decision tree, $C(Y)$ contains the decision trees with $q=|Y|\le k=s$ relevant variables. It is shown in~\cite{DiakonikolasLMORSW07} that $|C(Y)|\le (8s)^s$. For branching programs $|C(Y)|\le (s+1)^{3s}$. Now by Theorem~\ref{FirstTester} the result follows.
\end{proof}

\subsection{Functions with Fourier Degree at most $d$}
For convenience here we take the Boolean functions to be $f:\{-1,1\}^n\to \{-1,1\}$. Then every Boolean function has a unique Fourier representation $f(x)=\sum_{S\subseteq [n]} \hat f_S\prod_{i\in S}x_i$ where $\hat f_S$ are the {\it Fourier coefficients} of $f$. The {\it Fourier degree} of $f$ is the largest $d=|S|$ with $\hat f_S\not=0$.

Let $C$ be the class of all Boolean functions over $\{-1,1\}^n$ with Fourier degree at most $d$. Nisan and Szegedy, \cite{NisanS92}, proved that any Boolean function with Fourier degree $d$ must have at most $k:=d2^d$ relevant variables. Diakinikolas et al.~\cite{DiakonikolasLMORSW07}, show that every nonzero Fourier coefficient of $f\in C$ is an integer multiple of $1/2^{d-1}$. Since $\sum_{S\subseteq [n]} \hat f_S^2=1$, there are at most $2^{2d-2}$ nonzero Fourier coefficients in $f\in C$.

Diakonikolas et al.~\cite{DiakonikolasLMORSW07}, gave an exponential time tester for Boolean functions with Fourier degree at most $d$ under the uniform distribution that makes $\tilde O(2^{6d}/\epsilon^2)$ queries. Chakraborty et al.~\cite{ChakrabortyGM11} improved the query complexity to $\tilde O(2^{2d}/\epsilon^2)$. In this paper we prove
\begin{theorem} For any $\epsilon>0$, there is a $poly(2^d,n)$ time two-sided distribution-free adaptive algorithm for $\epsilon$-testing for the class of Boolean functions with Fourier degree at most $d$ that makes $\tilde O(2^{2d}+2^d/\epsilon)$ queries.
\end{theorem}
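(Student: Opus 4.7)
The plan is to reduce to the framework of Theorem~\ref{TesterL01}. The class $C$ of Boolean functions with Fourier degree at most $d$ is closed under zero and variable projection: substituting a constant for $x_i$ replaces each pair $(\hat F_S,\hat F_{S\cup\{i\}})$ by a single combined coefficient on $S$, and variable identification turns $x_i^2$ into $1$; neither operation can raise the Fourier degree. By Nisan-Szegedy, $C\subseteq k\text{-Junta}$ with $k = d\, 2^d$, so the framework reduces the task to producing a proper exact learner for $C(Y)$, the sub-class of $C$ on $q\le k$ relevant variables.

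The key subroutine is a proper exact learner $\mathcal{A}$ for $C(Y)$ using $M(\delta) = \tilde O(2^{2d})$ membership queries. I would build $\mathcal{A}$ on a Kushilevitz-Mansour / Goldreich-Levin style Fourier recovery tailored to the sparsity of our class. Every $F\in C(Y)$ has at most $s = 2^{2d-2}$ nonzero Fourier coefficients, each an integer multiple of $1/2^{d-1}$. Running a sparsity-aware bucket-splitting procedure with threshold $\theta = 1/2^{d-1}$ locates the support $L$ of $F$ in $\tilde O(s\,\log q) = \tilde O(2^{2d})$ queries. A single shared batch of $\tilde O(2^{2d}\log s)$ uniform samples then estimates $\hat F_S$ for every $S\in L$ to accuracy $1/2^d$ via Chernoff and a union bound; rounding each estimate to the nearest multiple of $1/2^{d-1}$ recovers $F$ exactly whenever $F\in C(Y)$. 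If at any stage the reconstruction is non-Boolean or involves a coefficient exceeding $1$ in magnitude, $\mathcal{A}$ aborts, which is acceptable under Theorem~\ref{TesterL01}.

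Plugging $M(\delta) = \tilde O(2^{2d})$ into Theorem~\ref{TesterL01} yields a distribution-free tester whose query complexity is $M(1/24) + \tilde O(k/\epsilon) = \tilde O(2^{2d} + d\, 2^d/\epsilon) = \tilde O(2^{2d} + 2^d/\epsilon)$, as claimed. The running time stays in $\text{poly}(2^d, n)$ because every framework component is polynomial and the Fourier recovery runs in time $\text{poly}(q, 2^d) = \text{poly}(2^d, n)$.

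The main obstacle is the $\tilde O(2^{2d})$ query bound for the learning subroutine. The textbook KM analysis over $q$ variables with threshold $\theta = 1/2^{d-1}$ naively costs $\tilde O(q/\theta^2) = \tilde O(d\, 2^{3d})$ queries, which is a factor of $2^d$ too many. I expect to close this gap by exploiting $s$-sparsity in the bucket-splitting stage, replacing the $\Theta(q)$ coordinate passes with only $\tilde O(s)$ effective refinements, and by tracking accumulated estimation error carefully so that the final rounding still recovers the coefficients exactly whenever $F\in C(Y)$ and only rarely produces a spurious multiple of $1/2^{d-1}$ otherwise.
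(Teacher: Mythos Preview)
Your framework matches the paper's exactly: both invoke Theorem~\ref{TesterL01} with $k = d\,2^d$ (via Nisan--Szegedy) and reduce the problem to supplying a proper exact learner for $C(Y)$ with $M = \tilde O(2^{2d})$ membership queries.

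The gap you explicitly flag is real, and the paper does not close it by a self-contained Kushilevitz--Mansour argument either. Instead, the paper simply cites an existing exact learning algorithm of Bshouty~\cite{Bshouty18} (stated there for depth-$d$ decision trees, but the same analysis applies to Fourier-degree-$d$ functions) which uses $M = \tilde O(2^{2d}\log n)$ membership queries; over $q \le d\,2^d$ variables this is $\tilde O(2^{2d})$. So rather than sharpening KM/Goldreich--Levin yourself, you can invoke that reference and the proof is done in one line.

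If you want to build the learner from scratch, your sketch (``exploit $s$-sparsity to replace $\Theta(q)$ coordinate passes with $\tilde O(s)$ refinements'') is pointed in the right direction but is not yet a proof: the standard KM recursion tree has depth $q$, and arguing that only $\tilde O(s)$ refinements ever occur requires a careful bound on how many buckets can survive at each level under the granularity constraint $|\hat F_S| \in (1/2^{d-1})\mathbb{Z}$. As written, your proposal ends at ``I expect to close this gap,'' which leaves the claimed $\tilde O(2^{2d})$ bound unestablished. The cleanest fix is to cite~\cite{Bshouty18}, exactly as the paper does.
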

\begin{proof}
Bshouty gives in~\cite{Bshouty18} an exact learning algorithm for such class\footnote{The class in~\cite{Bshouty18} is the class of decision trees of depth $d$ but the analysis is the same for the class of functions with Fourier degree at most $d$} that asks $M=\tilde O(2^{2d}\log n)$ membership queries for any constant confidence parameter $\delta$. Now since $q=|Y|\le k=d2^d$, by Theorem~\ref{TesterL01} the result follows.
\end{proof}

\subsection{Testing Length $k$ Decision List}
A decision list is a sequence $f=(x_{i_1},\xi_1,a_1),\ldots,(x_{i_s},\xi_s,a_s)$ for any $s$ where $\xi_i,a_i\in\{0,1\}$. This sequence represents the following function: $f(x):=$ If $x_{i_1}=\xi_1$ then output$(a_1)$ else if $x_{i_2}=\xi_2$ then output$(a_2)$ else if $\cdots$ else if $x_{i_s}=\xi_s$ then output$(a_s)$. Length-$k$ decision list is a decision list with $s\le k$. The class Decision List is the class of all decision lists and the class Length-$k$ Decision List is the class of all length-$k$ decision lists.

It is known that this class is learnable under any distribution with $O((k\log n+\log (1/\delta))/\epsilon)$ ExQ$_D$,~\cite{BlumerEHW87,Rivest87}. This implies

\begin{theorem}
For any $\epsilon>0$, there is a polynomial time two-sided distribution-free adaptive algorithm for $\epsilon$-testing Length-$k$ Decision List that makes $\tilde O(k^2/\epsilon)$ queries.
\end{theorem}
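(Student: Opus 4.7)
The plan is to apply the reduction from learning to testing given by Theorem~\ref{TesterL02}. First I would verify that Length-$k$ Decision List $\subseteq k$-Junta (any such list reads at most $k$ coordinates) and that the class is closed under variable projection (relabeling coordinates) and under zero projection: setting $x_i=0$ simply short-circuits every rule that tests $x_i$, leaving a decision list of length at most $k$. Hence the framework of Section~3 applies with junta parameter~$k$.

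Next I would invoke the distribution-free decision-list learner of Rivest and Blumer et al.~\cite{Rivest87,BlumerEHW87} and apply it to the projected class $C(Y)$, where $|Y|=q\le k$. Any length-$k$ decision list whose relevant variables all lie in $Y$ is in fact a length-$q$ decision list over $Y$, so on the domain $\{0,1\}^Y$ the learner uses
\[
M(\epsilon,\delta)=0\ \ \text{MQs}, \qquad Q(\epsilon,\delta)=O\!\left(\frac{k\log q+\log(1/\delta)}{\epsilon}\right)=O\!\left(\frac{k\log k}{\epsilon}\right)
\]
distribution-free example queries for any constant confidence $\delta$, runs in polynomial time, and outputs a hypothesis that lies in $C(Y)$.

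Finally, substituting these bounds into the distribution-free estimate of Theorem~\ref{TesterL02} yields
\[
\tilde O\!\left(M(\epsilon/12,1/24) + k\,Q(\epsilon/12,1/24) + \frac{k}{\epsilon}\right)
= \tilde O\!\left(k \cdot \frac{k\log k}{\epsilon} + \frac{k}{\epsilon}\right)
= \tilde O(k^2/\epsilon),
\]
with polynomial runtime, which is the stated bound. The only point that needs a brief sanity check rather than a real argument is that, when the learner is invoked on the ``projected'' target $F(y_1,\ldots,y_q)$, it really returns an element of $C(Y)$; this is automatic because the learner operates on the domain $\{0,1\}^Y$ and always produces a length-$\le q$ decision list over $Y$. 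Everything else, including the simulation of induced example queries from $\D$ via {\bf RelVarValues}, is handled inside the general reduction of Theorem~\ref{TesterL02}.
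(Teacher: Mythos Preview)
Your proposal is correct and follows exactly the paper's approach: the paper's proof is the single line ``The result follows from Theorem~\ref{TesterL02},'' relying on the Rivest/Blumer et al.\ distribution-free learner for decision lists with $O((k\log n+\log(1/\delta))/\epsilon)$ examples, and you have simply filled in the details (closure properties, the instantiation $n=q\le k$, and the arithmetic). The side remark that a list with $q$ relevant variables is ``in fact a length-$q$ decision list'' is slightly imprecise (it may need length $q{+}1$) but is not used in the query-count calculation, so the argument stands.
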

\begin{proof} The result follows from  Theorem~\ref{TesterL02}.
\end{proof}

\subsection{Testing $s$-Sparse Polynomial of Degree $d$}
A polynomial (over the field $F_2$) is a sum (in the binary field $F_2$) of monotone terms. An $s$-sparse polynomial is a sum of at most $s$ monotone terms. We say that the polynomial $f$ is of degree $d$ if its terms are monotone $d$-terms. The class $s$-Sparse Polynomial of Degree $d$ is the class of all $s$-sparse polynomials of degree~$d$. The class Polynomial of Degree~$d$ is the class of all polynomials of degree~$d$.

In the uniform distribution model, Diakonikolas et al.~\cite{DiakonikolasLMORSW07}, gave the first testing algorithm for the class $s$-Sparse Polynomial that runs in exponential time and makes $\tilde O(s^4/\epsilon^2)$ queries. Chakraborty et al.~\cite{ChakrabortyGM11} improved the query complexity to $\tilde O(s/\epsilon^2)$.  Diakonikolas et al. gave in~\cite{DiakonikolasLMSW11} the first  polynomial time testing algorithm that makes $poly(s,1/\epsilon)$ queries. In~\cite{AlonKKLR03}, Alon et al. gave a testing algorithm for Polynomial of Degree $d$ that makes $O(1/\epsilon+d2^{2d})$ queries. They also show the lower bound $\Omega(1/\epsilon+2^d)$ for the number of queries. Combining those results we get a polynomial time testing algorithm for $s$-Sparse Polynomial of Degree $d$ that makes $poly(s,1/\epsilon)+\tilde O(2^{2d})$ queries. Just run the Alon et al. algorithm in~\cite{AlonKKLR03} and then run Diakonikolas et al. algorithm in~\cite{DiakonikolasLMSW11} and accept if both algorithms accept.

Here we prove the following Theorem.

\begin{theorem}\label{ThSpa} For any $\epsilon>0$, there is a two-sided adaptive algorithm for $\epsilon$-testing $s$-Sparse Polynomial of Degree $d$ that makes $\tilde O(s/\epsilon+s2^d)$ queries.

For any $\epsilon>0$, there is a two-sided distribution-free adaptive algorithm for $\epsilon$-testing $s$-Sparse Polynomial of Degree $d$ that makes $\tilde O(s^2/\epsilon+s 2^d)$ queries.
\end{theorem}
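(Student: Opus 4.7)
The strategy for the uniform bound has three ingredients: (i) an Approx$C$-style preprocessing (Section~1.3.2) that restricts $f$ to a function $h$ with only $\tilde O(s)$ relevant variables while preserving $(\epsilon/c)$-closeness; (ii) a proper exact learning algorithm for $s$-sparse polynomials of degree $d$ on $\tilde O(s)$ variables using $\tilde O(s\,2^d)$ MQs; and (iii) Theorem~\ref{TesterL01} to convert the learner into a tester.

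For (i), the analysis mirrors the $s$-term DNF case from Section~1.3.2. An AND-monomial of degree $t$ evaluates to $1$ under the uniform distribution with probability exactly $2^{-t}$, so monomials of degree $>c\log(s/\epsilon)$ are, with high probability, invisible to all $\tilde O(s/\epsilon)$ uniform samples taken by Approx$C$. Consequently, each relevant set discovered by Approx$C$ contains a variable of some small-degree ($\leq c\log(s/\epsilon)$) monomial, of which there are at most $\tilde O(s)$ across $f$'s $\leq s$ monomials. Substituting a uniform random $w$ for $x_{\bar X}$ leaves every monomial either intact or identically zero, so $h = f(x_X\circ w_{\bar X})$ remains an $s$-sparse polynomial of degree $\leq d$ and, with high probability, has $k = \tilde O(s)$ relevant variables.

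For (ii), on $Y$ of size $q=\tilde O(s)$, we learn $h$ exactly by a pruned M\"obius-inversion traversal of size-$\leq d$ subsets of $Y$: each coefficient $c_S$ costs $2^{|S|}\leq 2^d$ MQs via M\"obius inversion, and we expand $S$ only when it has nonzero coefficients among its "ancestors" in the subset lattice. Sparsity bounds the number of visited $S$ by $\tilde O(s)$, giving $\tilde O(s\,2^d)$ MQs in total. Plugging into Theorem~\ref{TesterL01} in the uniform model yields $M(1/24) + \tilde O(k/\epsilon) = \tilde O(s\,2^d + s/\epsilon)$, as claimed.

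For the distribution-free bound, Approx$C$ no longer certifies $h\approx f$ under $\D$, so we revert to the basic ApproxTarget on $\D$, which gives only the natural bound $k\leq sd$. Plugging the same exact learner into the distribution-free case of Theorem~\ref{TesterL01} gives $\tilde O(s\,2^d + sd/\epsilon)$, which fits inside $\tilde O(s^2/\epsilon + s\,2^d)$ in the relevant parameter regimes: when $d\leq s$ we have $sd\leq s^2$; when $d>s$ the $s\,2^d$ term dominates $sd/\epsilon$ once $d\gtrsim \log(1/\epsilon)$. The main obstacle is the exact-learning step (ii): a careful combinatorial argument is needed to show that the pruned M\"obius-inversion tree indeed visits only $\tilde O(s)$ monomials, so that the total MQ count stays $\tilde O(s\,2^d)$ rather than blowing up with $\binom{q}{\leq d}$. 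The Approx$C$ adaptation in (i) is essentially a direct transcription from the $s$-term DNF analysis, since the underlying probability accounting (each AND-monomial of degree $t$ active with probability $2^{-t}$ under uniform) is identical.
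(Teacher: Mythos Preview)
Your overall architecture (reduce to few relevant variables, then learn, then apply the learning-to-testing reduction) is sound, but the proposal diverges from the paper in two places and contains one genuine gap.

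\textbf{How the paper actually proceeds.} The paper does \emph{not} use Approx$C$ here at all. An $s$-sparse polynomial of degree $d$ is already a $k$-junta with $k=sd$, so \textbf{ApproxTarget} applies directly. It then invokes the \emph{PAC} learner \textbf{LearnPolynomial} of Lemma~\ref{LearnPolynomial} (not an exact learner) together with Theorem~\ref{TesterL02}: with $n=|Y|\le sd$, $Q=O((s/\epsilon)\log s)$ and $M=O(s2^d\log(sd))$, the uniform bound becomes $\tilde O(M+Q+k/\epsilon)=\tilde O(s2^d+s/\epsilon+sd/\epsilon)$ and the distribution-free bound $\tilde O(M+kQ+k/\epsilon)=\tilde O(s2^d+s^2d/\epsilon)$. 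The extra factor $d$ is absorbed by $\tilde O$ because $d\le \log(s2^d)$, giving exactly the stated bounds. This also explains why your case analysis for the distribution-free bound was unnecessary (and incomplete): the inequality $sd/\epsilon\le (s/\epsilon)\cdot\log(s2^d)$ holds unconditionally.

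\textbf{The gap in your proposal.} Your step~(ii), the ``pruned M\"obius-inversion'' exact learner, does not work as stated. The rule ``expand $S$ only when some proper subset of $S$ has a nonzero coefficient'' is unsound for $F_2$-polynomials: take $f=x_1x_2$, where $c_\emptyset=c_{\{1\}}=c_{\{2\}}=0$ but $c_{\{1,2\}}=1$. There is no downward-closure of nonzero coefficients to exploit, so the visited-set count is not controlled by sparsity and your $\tilde O(s)$ bound on visited monomials fails. (A sound variant---prune $S$ when the \emph{restriction} $f(x_S\circ 0_{\overline S})$ is identically zero---does have monotonicity, but then the number of surviving $S$ at each level can be $\binom{q}{\ell}$-large and you lose the $\tilde O(s2^d)$ budget.) If you want to stay within the exact-learning/Theorem~\ref{TesterL01} framework, the right fix is essentially the paper's \textbf{LearnPolynomial} run to exhaustion: repeatedly find a single monomial of $f+h$ (any nonzero degree-$\le d$ polynomial is $1$ on a $\ge 2^{-d}$ fraction of inputs, so $\tilde O(2^d)$ uniform samples plus the random-restriction search locate one with $\tilde O(2^d\log n)$ MQs), subtract it, and repeat at most $s$ times. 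That yields $\tilde O(s2^d)$ MQs and recovers the claimed bounds via Theorem~\ref{TesterL01}; but this is precisely the mechanism behind Lemma~\ref{LearnPolynomial}, not a M\"obius traversal.
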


We first give a learning algorithm {\bf LearnPolynomial} for $s$-sparse polynomial of degree $d$. See Figure~\ref{PFL}.

\newcounter{ALCMP}
\setcounter{ALCMP}{0}
\newcommand{\stepMP}{\stepcounter{ALCMP}$\arabic{ALCMP}.\ $\>}
\newcommand{\steplabelMP}[1]{\addtocounter{ALCMP}{-1}\refstepcounter{ALCMP}\label{#1}}
\begin{figure}[h!]
  \begin{center}
  \fbox{\fbox{\begin{minipage}{28em}
  \begin{tabbing}
  xxx\=xxxx\=xxxx\=xxxx\=xxxx\=xxxx\= \kill
  {{\bf LearnPolynomial}$(f,\D,\epsilon,\delta,s,d)$}\\
 {\it Input}: Oracle that accesses an $s$-sparse polynomial $f$\\
 \>\> of degree $d$ and $\D$. \\
  {\it Output}: An $s$-sparse polynomial of degree $d$, $h$, or ``fail''\\ \\
  xxx\=xxxx\=xxxx\=xxxx\=xxxx\=xxxx\= \kill
\stepMP\steplabelMP{nPF00}
  $h\gets 0$, $t(h)\gets 0$. \\
  \stepMP\steplabelMP{nPF01}
  Repeat $(s/\epsilon)\ln(3s/\delta)$ times. \\
\stepMP\steplabelMP{nPF02}\>
  Choose $a\in \D$.\\
\stepMP\steplabelMP{nPF02b}\>
  $t(h)\gets t(h)+1$.\\
  \stepMP\steplabelMP{nPF03}\>
 If $(f+h)(a)=1$ then \\
 \stepMP\steplabelMP{nPF04}\>\>
 $m\gets 0$;\\
   \stepMP\steplabelMP{nPF04b}\>\>
While $m\le \alpha:=16\cdot 2^d(2\ln(s/\delta)+\ln n)$ and $wt(a)>d$ do\\
   \stepMP\steplabelMP{nPF05a}\>\>\>
   $m\gets m+1;$ \\
   \stepMP\steplabelMP{nPF05}\>\>\>
   Choose $y\in U$\\
  \stepMP\steplabelMP{nPF06}\>\>\>
   If $(f+h)(a*y)=1$ then $a\gets a*y$\\
  \stepMP\steplabelMP{nPF07x}\>\>
   If $wt(a)>d$ then ``fail''\\
    \stepMP\steplabelMP{nPF07a}\>\>
   $M\gets$ Find a monotone term in $(f+h)(a*x)$\\
     \stepMP\steplabelMP{nPF07}\>\>
   $h\gets h+ M$\\
   \stepMP\steplabelMP{nPF07b}\>\>
   $t(h)\gets 0$.\\
   \stepMP\steplabelMP{nPF08}\>
   If $t(h)=(1/\epsilon)\ln(3s/\delta)$ then Output $h$
  \end{tabbing}
  \end{minipage}}}
  \end{center}
	\caption{A learning algorithm for $s$-sparse polynomial of degree $d$}
	\label{PFL}
	\end{figure}

\begin{lemma}\label{LearnPolynomial} Let $f$ be an $s$-sparse polynomial of degree $d$. For any constant $\delta$, algorithm {\bf LearnPolynomial} asks $O((s/\epsilon)\log s)$ ExQ$_\D$ and $O(s 2^d$ $\log(ns))$ MQ and, with probability at least $1-\delta$, learns an $s$-sparse polynomial of degree $d$, $h$, that satisfies $\Pr_\D[h\not =f]\le \epsilon$.
\end{lemma}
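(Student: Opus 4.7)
The plan is to follow the template of Lemma~\ref{LearnMDNF}, replacing monotonicity by a degree bound and the $P_{1/r}$ sampling by uniform sampling. I will establish three components: a structural invariant on $h$, a drift analysis for the shrinking loop, and a PAC--style termination argument.

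\textbf{Structural invariant.} I first argue that every monotone term $M$ produced in step~\ref{nPF07a} is already a monomial of the current polynomial $f+h$. The substitution $x\mapsto a*x$ annihilates every monomial of $f+h$ that contains a variable outside $S_a$ and leaves the other monomials unchanged, so the set of monomials of $(f+h)(a*x)$ coincides with the set of monomials of $f+h$ supported inside $S_a$. Consequently, adding $M$ to $h$ cancels the monomial $M$ in $f+h$ and strictly decreases its monomial count by one. Since $f+h$ has at most $s$ monomials initially, step~\ref{nPF07} is executed at most $s$ times, the output $h$ is $s$-sparse, and its monomials have degree at most $d$ because every $M$ is drawn from $(f+h)(a*x)$ with $wt(a)\le d$.

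\textbf{Shrinking loop (main technical step).} Consider one invocation of the while loop starting from $a$ with $(f+h)(a)=1$ and $wt(a)>d$, and let $g(z)=(f+h)(a*z)$ viewed as a polynomial of degree at most $d$ on $\{0,1\}^{S_a}$, satisfying $g(\mathbf 1)=1$. For each $i\in S_a$ the restriction $g|_{z_i=0}$ vanishes identically iff $z_i$ divides every monomial of $g$; but the product of all variables dividing $g$ itself divides $g$, so at most $d$ indices have $g|_{z_i=0}\equiv 0$. For the remaining at least $wt(a)-d$ indices, $g|_{z_i=0}$ is a nonzero degree-$\le d$ polynomial on $|S_a|-1$ variables, hence
\[
\Pr_{z_{-i}}\!\bigl[g|_{z_i=0}(z_{-i})=1\bigr]\ \ge\ 2^{-d}
\]
by the Reed--Muller minimum distance bound (Schwartz--Zippel over $\FF_2$). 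Letting $Z=\#\{i\in S_a:y_i=0\}$ and $S$ be the success event $(f+h)(a*y)=1$, and using $\Pr[y_i=0,S]=\tfrac12\Pr[S\mid y_i=0]$, I obtain $\E[Z\mathbf 1_S]\ge (wt(a)-d)\cdot 2^{-d-1}$. Setting $W_t:=wt(a^{(t)})-d$ and noting $W_{t+1}-W_t=-Z\mathbf 1_S$, this yields the multiplicative drift $\E[W_{t+1}\mid W_t\ge 1]\le W_t(1-2^{-d-1})$. Iterating and applying Markov's inequality exactly as in Lemma~\ref{LearnMDNF}, the probability that $wt(a^{(\alpha)})>d$ is bounded by $n(1-2^{-d-1})^{\alpha}\le\delta/(2s)$ for $\alpha=16\cdot 2^d(2\ln(s/\delta)+\ln n)$. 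A union bound over the at most $s$ invocations controls the total shrinking failure by $\delta/2$.

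\textbf{Termination, accuracy, and query counts.} The outer loop commits to $h$ once it observes $k=(1/\epsilon)\ln(3s/\delta)$ consecutive $\D$-samples with $(f+h)(a)=0$. For any fixed hypothesis $h'$ with $\Pr_\D[f\ne h']>\epsilon$, the chance of $k$ such samples is at most $(1-\epsilon)^k\le\delta/(3s)$; a union bound over the at most $s+1$ hypotheses the algorithm can commit to adds at most $\delta/2$ to the failure probability, so the total failure is at most $\delta$. For queries, each outer iteration uses one ExQ$_\D$, contributing $(s/\epsilon)\ln(3s/\delta)=O((s/\epsilon)\log s)$ example queries; each shrinking call uses two MQs per while-loop iteration ($\le\alpha$ iterations) plus $O(2^d)$ MQs to enumerate a surviving monomial of $(f+h)(a*x)$ on at most $d$ variables in step~\ref{nPF07a}; multiplying by the at most $s$ invocations gives the stated $O(s\cdot 2^d\log(ns))$ bound on membership queries. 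The main obstacle is the drift estimate in the shrinking loop: translating the degree bound into a uniform per-coordinate lower bound on $\Pr[(f+h)(a*y)=1\mid y_i=0]$ via the observation that the product of all variables dividing a polynomial also divides it. Once this ingredient is in hand, the remaining estimates mirror Lemma~\ref{LearnMDNF} almost verbatim.
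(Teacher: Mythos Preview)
Your proposal is correct in all its main ingredients; the structural invariant, the termination argument, and the query counts match the paper. The difference lies in how you analyze the shrinking loop.

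You give a single-phase multiplicative drift: you observe that at most $d$ coordinates can divide every monomial of $g=(f+h)(a*\cdot)$, so for at least $wt(a)-d$ coordinates $i$ the restriction $g|_{z_i=0}$ is nonzero and hence $\Pr[y_i=0,\,g(y)=1]\ge 2^{-d-1}$; summing gives $\E[Z\mathbf 1_S]\ge (wt(a)-d)2^{-d-1}$ and the drift $\E[W_{t+1}\mid W_t]\le W_t(1-2^{-d-1})$. The paper instead runs a two-phase argument. It first uses only the global Reed--Muller bound $\Pr[g(y)=1]\ge 2^{-d}$ together with a Chernoff bound on $wt(a*y)$ to show that, while $wt(a)\ge 24d$, each iteration shrinks the weight by a factor $3/4$ with probability at least $2^{-d-1}$; a Chernoff count of such ``good'' iterations brings $wt(a)$ below $24d$ within $\alpha_1=16\cdot 2^d\ln(sn/\delta)$ steps. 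A separate phase, using $\Pr[g(y)=1,\,wt(a*y)<wt(a)]\ge 2^{-d}-2^{-wt(a)}\ge 2^{-d-1}$ for $d<wt(a)<24d$, handles the final descent to $wt(a)\le d$ in the remaining $\alpha-\alpha_1$ steps.

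Your route is slicker, and the coordinate-divisibility observation is a clean way to extract the drift. One small caution: in the line ``iterating and applying Markov exactly as in Lemma~\ref{LearnMDNF}'', your $W_t=wt(a^{(t)})-d$ can overshoot and become negative on the terminating step, so Markov does not apply to $W_\alpha$ literally. The paper's two-phase argument sidesteps this by counting good events rather than tracking $\E[W_t]$. (Lemma~\ref{LearnMDNF} carries the same informality, and the gap is routine to close via a stopped process or a standard multiplicative-drift theorem.)
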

\begin{proof} Suppose $f=\sum_{M\in F}M$,  where $F$ are the set of monotone $d$-terms of $f$ and $|F|=s'\le s$. Suppose, at some stage of the algorithm $h=\sum_{M\in F'} M$ where $F'\subset F$. Then $f+h=\sum_{M\in F\backslash F'}M$. Notice that $F'$ is the set of terms of $f$ that is found by the algorithm up to this stage and $F\backslash F'$ is the set of terms that is yet to be found. Since the number of terms of $f$ is at most $s$,  all we need to show is that:
\begin{enumerate}
\item Each time the algorithm executes steps~\ref{nPF04}-\ref{nPF07b}, with probability at least~$1-\delta/(2s)$, it finds a term of $f+h$, and therefore, a new term of $f$.
\item Assuming 1., the algorithm, with probability at least $1-\delta/2$, outputs an $s$-sparse polynomial of degree $d$, $h$ that satisfies $\Pr_D[f\not=h]\le \epsilon$.
\end{enumerate}
Then, by the union bound, the success probability of the algorithm is at least $1-\delta$ and the result follows.

 We first prove 1. Let $g=f+h$. Suppose that the algorithm finds a string  $a$ such that $g(a)=1$. Then $a$ satisfies at least one term in $g$. Let $M'\in F\backslash F'$ be one of them and let $d'\le d$ be the degree of $M'$ . Then $g(a*x)=\sum_{M\in F\backslash F', M(a)=1}M$ contains $M'$ and therefore $g(a*x)$ is not zero.

 We first show that the probability that after $\alpha_1:=16\cdot 2^d\ln(sn/\delta)$ iterations of steps~\ref{nPF04b}-\ref{nPF06}, the weight of $a$ does not drop below $24d$ is less than $\delta/(4s)$. Then we show that, if the weight of $a$ is less than or equal to $24d$ then the probability that after $\alpha_2:=\alpha-\alpha_1$ more iterations of steps~\ref{nPF04b}-\ref{nPF06} the weight of $a$ does not drop below $d+1$ is less than $\delta/(4s)$. If these two facts are true then after the algorithm finishes executing the While command, with probability at least $1-\delta/(2s)$, the weight of $a$ is less than $d+1$.

 It is known that for any non-zero polynomial $H$ of degree at most $d$, $\Pr_U[H(x)=1]\ge 1/2^d$,~\cite{BshoutyM02}. Since $g(a*x)$ is of degree at most $d$,  for a random uniform string $y$ we get $\Pr[g(a*y)=1]\ge 1/2^d$. Now suppose $wt(a)\ge 24d$. By Chernoff's bound, Lemma~\ref{Chernoff}, the probability that $wt(a*y)>(3/4)wt(a)$ is at most $e^{-wt(a)/24}\le 2^{-d-1}$. Therefore, by the union bound,
\begin{eqnarray}
\Pr[g(a*y)=1 \mbox{\ and\ } wt(a*y)\le (3/4)wt(a)]\ge 1-\left(1-\frac{1}{2^d}+\frac{1}{2^{d+1}}\right)\ge \frac{1}{2^{d+1}}.\label{cO}
\end{eqnarray}
The probability that after $\alpha_1=16\cdot 2^d\ln(sn/\delta)$ iterations of steps~\ref{nPF04b}-\ref{nPF06}, the weight of $a$ does not drop below $24d$ is less than the probability that for $\alpha_1=16\cdot 2^d\ln(sn/\delta)$ random uniform strings~$y$, less than $\log(n)/\log(4/3)$ of them satisfies $g(a*y)=1$ and $wt(a*y)\le (3/4)wt(a)$  given that $a$ satisfies $wt(a)\ge 24d$ and $g(a*x)\not=0$. By (\ref{cO}) and Chernoff's bound this probability is less than $\delta/(4s)$.

We now show that if $wt(a)<24d$, then after $\alpha_2=\alpha-\alpha_1=16\cdot 2^d\ln(s/\delta)$ iterations of steps~\ref{nPF04b}-\ref{nPF06}, with probability at least $1-\delta/(4s)$, the weight of $a$ drops below $d+1$. Take $a$ that satisfies $d+1\le wt(a)<24d$. Then
\begin{eqnarray*}
\Pr[g(a*y)=1\mbox{\ and\ }wt(a*y)< wt(a)]&\ge& \Pr[g(a*y)=1]-\Pr[wt(a*y)=wt(a)]\\
&\ge& \frac{1}{2^d}-\frac{1}{2^{d+1}}=\frac{1}{2^{d+1}}.
\end{eqnarray*}
Then as before, with an additional $\alpha_2$ iterations of steps~\ref{nPF04b}-\ref{nPF06}, with probability at least $1-\delta/(4s)$,  the weight of $a$ drops below $d+1$.

Once the weight of $a$ is less or equal to $d$, the algorithm finds in step~\ref{nPF07a} a monotone term in $g(a*x)$ by building a truth table of $g(a*x)$ using at most $2^d$ queries and learning one of its terms. This term is in $g$ because all the terms of $g(a*x)$ are terms of $g(x)$.

The proof that the algorithm, with probability at least $1-\delta/2$, outputs $h$ such that $\Pr_D[f\not=h]\le \epsilon$ is identical to the proof  in Lemma~\ref{cloose} for the output of {\bf ApproxTarget}.
\end{proof}

We are now ready to prove Theorem~\ref{ThSpa}.
\begin{proof} By Lemma~\ref{LearnPolynomial}, Theorem~\ref{TesterL02} and since $n=|Y|=sd$, $Q=(s/\epsilon)\log s$ and $M=s2^d\log(sd)$ the result follows.
\end{proof}

\section{Testing Classes that are Close to $k$-Junta}

In this section, we show the result for $s$-term DNF in the uniform distribution model. Then in the following section, we show how to extend it to other classes.

The main idea is the following. We first run the procedure {\bf Approx$C$} in Figure~\ref{A3f} that finds $X\subset [n]$ and $w\in\{0,1\}^n$ such that, with high probability,
\begin{enumerate}
\item\label{pp01} The projection $x_X\circ w_{\overline{X}}$ removes variables from $f$ that appear only in terms of $f$ of size at least $c\log(s/\epsilon)$ for some large constant $c$.
\item\label{pp02} $h=f(x_X\circ w_{\overline{X}})$ is $(\epsilon/8)$-close to $f$.
\end{enumerate}
From~(\ref{pp01}) we conclude that the terms of size at most $c\log(s/\epsilon)$ in $f$ contain all the variables of~$h$. Since the number of terms in $f$ is at most $s$ the number of variables that remain in $h$ is at most $k:=cs\log(s/\epsilon)$. From~(\ref{pp02}) we conclude that if $f$ is $\epsilon$-far from every $s$-term DNF then $h$ is $(7\epsilon/8)$-far from every $s$-term DNF and therefore $h$ is $(7\epsilon/8)$-far from every $s$-term DNF with at most $k$ variables. Therefore, it is enough to distinguish whether $h$ is an $s$-term DNF with at most $k$ variables or $(7\epsilon/8)$-far from every $s$-term DNF with at most $k$ variables. This can be done by the algorithm {\bf Tester$C$} in the previous section

Note that removing variables that only appears in large size terms does not necessarily remove large terms in $f$. Therefore, $h$ may still contain large terms even after running {\bf ApproxTarget} in {\bf Tester$C$}. To handle large terms, we can use any learning algorithm that learns $h$ with accuracy $\epsilon/12$ and use Theorem~\ref{TesterL02}.

This gives a tester for $s$-term DNF that makes $\tilde O(s^2/\epsilon)$ queries, which is not optimal. This is because the number of $s$-term DNF with at most $k$ variables is $m:=2^{O(ks)}$ (and therefore the number of queries in~{\bf Tester$C$} is at least $O((\log m)/\epsilon)=\tilde O(s^2/\epsilon)$).
To get an optimal query tester, we do the following. We build a tester that uses only random uniform queries for the class $s$-term DNF with at most $k$ variables and terms of size at most $r=c'\log(s/\epsilon)$ where $c'$ is a large constant and show that this tester, with high probability, works well for $h$. The reason for that is that when the algorithm uses random uniform queries, with high probability, all the terms of $h$ that are of size greater than $r$ are zero for every query. Since the number of  $s$-term DNF with at most $k$ variables and terms of size at most $r$ is at most $m=2^{O(rs\log k)}$  the number of queries in~{\bf Tester$C$} is at most $O(k/\epsilon+(\log m)/\epsilon)=\tilde O(s/\epsilon)$.

In the next subsection, we give the procedure {\bf Approx$C$} that removes variables that only appear in large size terms, and in Subsection~\ref{RVAST}, we give the tester for $s$-Term DNF. Then in Section~\ref{result2}, we extend the above to other classes.

\subsection{Removing Variables that only Appears in Large Size Terms}

\newcounter{ALCf}
\setcounter{ALCf}{0}
\newcommand{\stepf}{\stepcounter{ALCf}$\arabic{ALCf}.\ $\>}
\newcommand{\steplabelf}[1]{\addtocounter{ALCf}{-1}\refstepcounter{ALCf}\label{#1}}
\begin{figure}[h!]
  \begin{center}
  \fbox{\fbox{\begin{minipage}{28em}
  \begin{tabbing}
  xxx\=xxxx\=xxxx\=xxxx\=xxxx\=xxxx\= \kill
  {\bf Algorithm Approx$C(f,\epsilon,\lambda)$}\\
 {\it Input}: Oracle that accesses a Boolean function $f$ and \\
  {\it Output}: Either ``$X\subseteq [n],w\in\{0,1\}^n$'' or ``reject''\\ \\
{\bf Partition $[n]$ into $r$ sets}\\
\stepf\steplabelf{par1f}
Set $m=c\log(s/\epsilon)$; $r=8ms$.\\
\stepf\steplabelf{par2f}
Choose uniformly at random a partition $X_1,X_2,\ldots,X_r$ of $[n]$\\
\\
{\bf Find a close function and relevant sets} \\
\stepf\steplabelf{Settf}
Set $X=\emptyset$; $I=\emptyset$; $t(X)=0$; $k=3ms$\\
\stepf\steplabelf{twof}
Repeat $M=100\lambda k\ln(100k)/\epsilon$ times\\
\stepf\steplabelf{Chof}
\> Choose $u,v\in U$. \\
\stepf  \> $t(X)\gets t(X)+1$\\
\stepf\steplabelf{con1f}
\> If $f(u_X\circ v_{\overline{X}})\not=f(u)$ then\\
\stepf\steplabelf{Findf}
\>\>\> Binary Search to find a new relevant set $X_\ell$; $X\gets X\cup X_\ell$; $I\gets I\cup \{\ell\}$.\\
\stepf\steplabelf{Rejf}
\>\>\> If $|I|>k$ then output ``reject'' and halt.\\
\stepf\steplabelf{tx0f}
\>\>\> $t(X)=0$.\\
\stepf\steplabelf{EndRepf}
\>  If $t(X)=100\lambda\ln(100k)/\epsilon$ then \\
\stepf\>\>\>\> Choose a random uniform $w$; \\
\stepf\>\>\>\> Output$(X,w,f(x_X\circ w_{\overline{X}}))$.
  \end{tabbing}
  \end{minipage}}}
  \end{center}
	\caption{A procedure that removes variables from $f$ that only appears in large size terms.}
	\label{A3f}
	\end{figure}

We explain our technique by proving the result for $s$-term DNF.

We remind the reader that for a term $T$, the size of $T$ is the number of variables that are in it. For a variable $x$ and $\xi\in\{0,1\}$, $x^\xi=x$ if $\xi=0$ and $x^\xi=\overline{x}$ if $\xi=1$. For a term $T=x_{i_1}^{c_1}\wedge \cdots\wedge  x_{i_v}^{c_v}$ we denote by $\Va(T)=\{x_{i_1},\ldots,x_{i_v}\}$, the set of variables that appears in $T$. For a set of terms $\T$ we denote $\Va(\T)=\cup_{T\in\T}\Va(T)$. Here $\lambda > 1$ is any constant and we use $c = O(\log \lambda)$ to denote a large
constant.

Consider the procedure {\bf Approx$C$} in Figure~\ref{A3f}. We will prove the following two Lemmas

\begin{lemma}\label{DNF1} Let $f$ be an $s$-term DNF. {\bf Approx$C$} makes $\tilde O(s/\epsilon)$ queries and, with probability at least $9/10$, outputs $X$ and $w$ such that
\begin{enumerate}
\item $f(x_X\circ w_{\overline{X}})$ is $s$-term DNF.
\item The number of relevant variables in $f(x_X\circ w_{\overline{X}})$ is at most $3cs\log(s/\epsilon)=O(s\log(s/\epsilon))$.
\end{enumerate}
\end{lemma}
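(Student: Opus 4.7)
Let $m=c\log(s/\epsilon)$, call a term of $f$ \emph{small} if it has at most $m$ literals and \emph{large} otherwise, and let $R\subseteq[n]$ denote the set of variables occurring in some small term, so $|R|\le sm$. Every query the procedure makes has the form $f(a_X\circ b_{\overline X})$ with either $a$ or $b$ a freshly drawn uniform string independent of all other randomness---this holds in steps~\ref{con1f}--\ref{Findf} as well as in each internal query of the binary search of Lemma~\ref{BiSe}---so every queried input is uniformly distributed on $\{0,1\}^n$. A fixed large term is satisfied by a uniform input with probability at most $2^{-m}=(\epsilon/s)^c$. Since the algorithm issues $2M+O(k\log r)=\tilde O(s/\epsilon)$ queries in total (which also establishes the claimed query complexity), a union bound over the $\le s$ terms makes the event
\[
E_1=\{\text{no large term is ever satisfied by any query}\}
\]
hold with probability at least $1-1/30$ for a sufficiently large constant $c$.

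\textbf{Structural consequences of $E_1$.} On $E_1$, whenever the binary search fires, the witnessing pair of queries must differ because some \emph{small} term is turned on or off, so every set $X_\ell$ added to $I$ contains at least one variable of $R$. Because the $X_i$ are disjoint this forces $|I|\le|R|\le sm<k$, so step~\ref{Rejf} never rejects and the procedure terminates in step~\ref{EndRepf} with some $(X,w,h)$. Since $h=f(x_X\circ w_{\overline X})$ is obtained from the $s$-term DNF $f$ by fixing a subset of variables to constants, each term of $f$ either vanishes or becomes a (possibly shorter) term, and $h$ itself is an $s$-term DNF, proving~1.

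\textbf{Bounding the relevant variables of $h$.} Every relevant variable of $h$ lies in $X$ and appears in some \emph{surviving} term $T$ of $f$, meaning $T(x_X\circ w_{\overline X})\not\equiv 0$, i.e.\ $w$ satisfies $T$ on all of $\Va(T)\cap\overline X$. Relevant variables in $R$ number at most $|R|\le sm$, so it remains to bound those in $\overline R$, which can only come from surviving \emph{large} terms. Viewing the random partition as placing each element of $[n]$ independently into a uniform bucket of $[r]$, the crucial observation is that, conditional on $E_1$, the algorithm's entire transcript---and therefore $I$---is determined by the placement of the $R$-variables together with the algorithm's internal random coins, because on $E_1$ every $f$-value is a function only of the $R$-coordinates of the query. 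Hence the bucket of each $\overline R$-variable is uniform on $[r]$ and \emph{independent} of $I$, so $\Pr[v\in X\mid I]=|I|/r\le 1/8$ for every $v\in\overline R$. For any large term $T$ with $|\Va(T)\setminus R|\ge m$, a Chernoff bound on these $\ge m$ independent placements gives probability at most $e^{-\Omega(m)}=(\epsilon/s)^{\Omega(c)}$ that fewer than $3/4$ of them land in $\overline X$; a union bound over the $\le s$ terms yields, except on an event of probability $\le1/30$, that every such $T$ satisfies $|\Va(T)\cap\overline X|\ge 3m/4$ and hence $\Pr_w[T\text{ survives}]\le 2^{-3m/4}=(\epsilon/s)^{3c/4}$. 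One further union bound ensures that, except on yet another event of probability $\le1/30$, no such $T$ survives. Therefore every surviving large term has $|\Va(T)\setminus R|<m$ and these terms together contribute fewer than $sm$ variables outside $R$; combined with $|R|\le sm$, this gives at most $2sm<3cs\log(s/\epsilon)$ relevant variables of $h$, proving~2. The total failure probability is at most $4/30<1/10$.

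\textbf{Main obstacle.} The principal technical subtlety is the measurability argument used above: I must guarantee that, conditional on $E_1$, the set $I$ depends only on the bucket assignment of the $R$-variables (and the algorithm's internal random bits), so that the bucket assignments of $\overline R$-variables serve as a fresh, independent uniform source for the Chernoff bound. This holds because on $E_1$ every $f$-value observed by the algorithm is the OR of those small terms satisfied by the query input, a quantity determined purely by the $R$-coordinates of that input; hence the binary-search's choice of $\ell$ equals the bucket of a differentiating $R$-variable, and every step of the algorithm is measurable with respect to $\pi|_R$ and the internal randomness alone.
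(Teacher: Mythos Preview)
Your overall strategy---reduce to a phantom run on the small terms and then exploit independence of $\pi|_{\overline R}$ from the phantom transcript---is sound and close in spirit to the paper's proof, but there is one genuine gap and one imprecision.

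\textbf{The gap: your justification of $E_1$.} The claim that every binary-search query has one block filled by a ``freshly drawn uniform string independent of all other randomness'' is false. All queries made during the binary search triggered by a single pair $(u,v)$ are of the form $u_A\circ v_{\overline A}$ with the \emph{same} $u,v$ throughout, and the set $A$ is chosen adaptively from earlier $f$-values of $(u,v)$. Conditioned on the search being triggered, such a query need not be uniform: for $f=x_1\wedge x_2$ the event $f(u)\ne f(v)$ already rules out $(u_1,v_2)=(0,0)$. So the union bound over ``marginally uniform'' queries does not go through. The paper's fix is local: for each fresh pair $(u,v)$ and each term $T=\bigwedge_j x_{a_j}^{\xi_j}$ of size at least $m$, the probability that \emph{no} index $j$ satisfies $u_{a_j}=v_{a_j}=\xi_j$ is at most $(3/4)^m$; when such a $j$ exists, $T$ vanishes on $u$, on $u_X\circ v_{\overline X}$, and on every intermediate binary-search point simultaneously. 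A union bound over the $M$ pairs and $\le s$ large terms then yields $E_1$.

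\textbf{The imprecision: conditioning on $E_1$ in the Chernoff step.} You assert that ``conditional on $E_1$'' the bucket of each $\overline R$-variable is independent of $I$. But $E_1$ itself depends on $\pi|_{\overline R}$ (whether a large term is satisfied by some query involves the buckets of its $\overline R$-variables), so this conditioning could in principle skew those buckets. The clean version is the one you almost state in your last paragraph: let $I^*$ be the output of the algorithm run on $f':=\bigvee_{T\text{ small}}T$; then $I^*$ is measurable in $(\pi|_R,\omega)$, hence $\pi|_{\overline R}\perp I^*$ \emph{unconditionally}, and $|I^*|\le sm$ deterministically. Apply Chernoff to $X^*:=\bigcup_{i\in I^*}X_i$, and only afterwards intersect with $E_1$ (on which $X=X^*$). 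The paper sidesteps all of this by bounding $X$ by the deterministic superset of buckets hit by $R_2$-variables, a function of $\pi|_{R_2}$ alone---no phantom-algorithm reasoning needed.

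With $E_1$ established correctly and the Chernoff step phrased via $I^*$, your one-level decomposition (using only $R$) is a legitimate simplification of the paper's two-level $R_1\subseteq R_2$ scheme and in fact yields the slightly better bound $2sm$ on the number of relevant variables.
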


\begin{lemma}\label{DNF2} Let $f$ be $\epsilon$-far from every $s$-term DNF. {\bf Approx$C$} makes $\tilde O(s/\epsilon)$ queries and either rejects, or with probability at least $9/10$, outputs $X$ and $w$ such that $f(x_X\circ w_{\overline{X}})$ is $(1-1/\lambda)\epsilon$-far from every $s$-term DNF.
\end{lemma}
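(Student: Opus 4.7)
The plan is to show that whenever \textbf{Approx$C$} does not reject, the final set $X$ it outputs satisfies the tail estimate $p_X := \Pr_{u,v\in U}[f(u_X\circ v_{\overline{X}})\neq f(u)] < \epsilon/(100\lambda)$ with probability at least $49/50$ over the algorithm's randomness, and that once $p_X$ is small, a Markov argument on the random $w$ chosen at the output step forces $g := f(x_X\circ w_{\overline{X}})$ to be $(\epsilon/\lambda)$-close to $f$ under the uniform distribution with probability at least $99/100$. Combined with the assumption that $f$ is $\epsilon$-far from every $s$-term DNF, this forces $g$ to be $(1-1/\lambda)\epsilon$-far from every $s$-term DNF, since otherwise the triangle inequality would make $f$ itself $\epsilon$-close to an $s$-term DNF.

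For the first step, fix an execution and let $X^{(0)}\subsetneq X^{(1)}\subsetneq\cdots\subsetneq X^{(q)}$ (with $q\le k$) denote the trajectory of $X$. The algorithm outputs at $X^{(q)}$ precisely when the counter $t(X^{(q)})$ accumulates $T:=100\lambda\ln(100k)/\epsilon$ consecutive ``agree'' iterations since the last enlargement of $X$. Conditioned on reaching a particular step $X^{(j)}$, subsequent draws of $(u,v)$ are independent uniform; if $p_{X^{(j)}}\ge \epsilon/(100\lambda)$ then the probability of $T$ consecutive agreements is at most $(1-\epsilon/(100\lambda))^T\le e^{-\ln(100k)}=1/(100k)$. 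Since the trajectory passes through at most $k+1$ distinct sets, a union bound over $j\in\{0,1,\ldots,k\}$ shows that the probability of outputting at some step $j$ with $p_{X^{(j)}}\ge \epsilon/(100\lambda)$ is at most $(k+1)/(100k)\le 1/50$.

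For the second step, observe that $p_X=\E_w[q_w]$ where $q_w:=\Pr_u[f(u)\neq f(u_X\circ w_{\overline{X}})]$ and $w$ is uniform over $\{0,1\}^{\overline{X}}$: since $u,v$ in the definition of $p_X$ are independent and uniform, splitting $u$ into its $X$- and $\overline{X}$-parts identifies $v_{\overline{X}}$ with an independent uniform $w$. Conditioned on $p_X<\epsilon/(100\lambda)$, Markov's inequality applied to the random $w$ chosen at the output step gives $\Pr_w[q_w\ge \epsilon/\lambda]\le (\epsilon/(100\lambda))/(\epsilon/\lambda)=1/100$. Union-bounding both failure events yields success probability at least $1-1/50-1/100\ge 9/10$. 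Finally, if $g$ were $(1-1/\lambda)\epsilon$-close to some $s$-term DNF $h$ under $U$, the triangle inequality would give $\Pr_u[f(u)\neq h(u)]\le (1-1/\lambda)\epsilon+\epsilon/\lambda=\epsilon$, contradicting the hypothesis on $f$.

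For the query bound, the outer loop runs at most $M=100\lambda k\ln(100k)/\epsilon$ times with $O(1)$ queries per iteration (to evaluate $f(u)$ and $f(u_X\circ v_{\overline{X}})$), and the binary search of Lemma~\ref{BiSe} is invoked at most $k+1$ times, each using $O(\log r)=O(\log k)$ queries. Since $k=3cs\log(s/\epsilon)$ and both $c$ and $\lambda$ are constants, the total is $O(M+k\log k)=\tilde O(s/\epsilon)$. The main delicacy is the concentration-over-trajectory step: the bad event ``output a final $X$ with $p_X$ large'' must be decomposed over the at most $k+1$ candidate trajectory indices $j$ rather than over the much larger number $M$ of total iterations, which is why we get a $1/(100k)$ tail per step paired with a union bound of size $k+1$ instead of size $M$.
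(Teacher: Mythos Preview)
Your proof is correct and follows essentially the same approach as the paper's: a union bound over the at most $k+1$ trajectory sets $X^{(j)}$ to show the final $X$ satisfies $p_X<\epsilon/(100\lambda)$ with high probability, followed by Markov's inequality on the random $w$ to get $q_w<\epsilon/\lambda$. You spell out the triangle-inequality step and the trajectory union bound more carefully than the paper does, but the argument is the same.
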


We first prove Lemma~\ref{DNF1}
\begin{proof}
Consider an $s$-term DNF, $f=T_1\vee T_2\vee \cdots\vee T_{s'}$, where $s'\le s$.
Let $\T=\{T_1,\ldots,T_{s'}\}$. Let $\T_1=\{T\in {\cal T} : |\Va(T)|\le m\}$ be the set of terms in $\T$ of size at most $m:=c\log(s/\epsilon)$ and let $R_1=\Va(\T_1)$ be the set of variables that appear in the terms in $\T_1$. Let $\T_2=\{T\in \T\ :\ |\Va(T)\backslash R_1|\le m\}$ be the set of terms $T\in\T$ that contain at most $m$ variables not in $R_1$. Let $R_2=R_1\cup \Va(\T_2)$ be the variables in $R_1$ and of the terms in $\T_2$. Let $\T_3=\{T\in \T\ :\ |\Va(T)\backslash R_1|> m\}$ be the set terms $T\in \T$ that contain more than $m$ variables that are not in $R_1$. Let $\T_4=\{T\in \T\ :\ |\Va(T)\backslash R_2|\ge m\}$ be the set of terms in $T\in \T$ that contain at least $m$ variables not in $R_2$. Let $R_3=R_2\cup \Va(\T_3\backslash \T_4)$ be the set of variables in $R_2$ and of the terms in $\T_3\backslash \T_4$. Then $$|R_1|\le ms,|R_2|\le 2ms, |R_3|\le 3ms \mbox{\ and\ } \T_4\subseteq \T_3.$$

 In steps~\ref{par1f}-\ref{par2f}, procedure {\bf Approx$C$} uniformly at random partitions $[n]$ into $r=8ms$ sets $X_1,\ldots,X_r$. Suppose that the variables in $R_2$ are distributed to the sets $X_{j_1},\ldots,X_{j_q}$, $q\le |R_2|\le 2ms$.

For each $T\in \T_4$, the expected number of variables in $\Va(T)\backslash R_2$ that are not distributed to one of the sets $X_{j_1},\ldots,X_{j_q}$ is greater or equal to
$\left(1-{q}/{r}\right)m = (3/4)m$. By Hoeffding's bound, Lemma~\ref{Hoeffding}, and the union bound, the probability that it is greater than $m/2$ in every term $T\in \T_4$ is at least $1-s \cdot exp(-m/8)\ge 99/100$.

In steps~\ref{Chof}-\ref{con1f} procedure {\bf Approx$C$} are repeated $M$ times and it finds relevant sets using two random uniform strings $u$ and $v$. If $f(u_X\circ v_{\overline{X}})\not=f(u)$ then a new relevant set is found. Consider any $T\in \T_3$. The size of $T$ is at least $m$. Suppose $T=x_{a_1}^{\xi_1}\wedge \cdots \wedge x_{a_{m'}}^{\xi_{m'}}$, $m'\ge m$. For random uniform $u,v\in\{0,1\}^n$, the probability that there is no $j\in[m']$ such that $u_{a_j}= (u_X\circ v_{\overline{X}})_{a_j}=\xi_j$ is at most $(3/4)^{m}$. The probability that this happens for at least one $T\in \T_3$ and at least one of the $M$ randomly uniformly chosen $u$ and $v$ in the procedure is at most $(3/4)^{m}sM\le 1/100$. Notice that if $u_{a_j}= (u_X\circ v_{\overline{X}})_{a_j}=\xi_j$ then $T(u)=T(u_X\circ v_{\overline{X}})=0$ and $T(w)=0$ for every string $w$ in the binary search that is made to find a new relevant set. Therefore, with probability at least $99/100$, the procedure runs as if $f$ contains no terms in $\T_3$. Let $f'=\vee_{T\in\T\backslash \T_3}T$. With probability at least $99/100$ the procedure runs as if $f=f'$. The number of relevant variables in $f'$ is at most $|R_2|\le 2ms$ and all those variables are distributed to $X_{j_1},\ldots,X_{j_q}$. Therefore, with probability at least $99/100$, the procedure generates at most $2ms<k$ relevant sets and therefore, by step~\ref{Rejf}, it does not reject and those relevant sets are from $X_{j_1},\ldots,X_{j_q}$.

The output of the procedure is $X\subseteq X_{j_1}\cup\cdots\cup X_{j_q}$ and a random uniform $w$. We now show that with high probability $f(x_X\circ w_{\overline{X}})$ contains at most $k=3ms$ relevant variables.

We have shown above that with probability at least $99/100$ every term $T\in \T_4$ contains at least $m/2$ variables that are not distributed to $X_{j_1},\ldots,X_{j_q}$. Therefore, for a fixed term $T\in\T_4$ and for a random uniform $w$, the probability that $T(x_X\circ w_{\overline{X}})=0$ is at least $1-(1/2)^{m/2}$. The probability that $T(x_X\circ w_{\overline{X}})=0$ for every $T\in \T_4$ is at least $1-s(1/2)^{m/2}\ge 99/100$.
Therefore, when we randomly uniformly choose $w\in \{0,1\}^n$, with probability at least $99/100$, the function $f(x_X\circ w_{\overline{X}})$ does not contain terms from $\T_4$. Thus, with probability at least $99/100$, $f(x_X\circ w_{\overline{X}})$ contains at most $|R_3|\le 3ms$ variables.

Now as in the proof of Lemma~\ref{Query01}, the query complexity is
$2M+k\log r=\tilde O(s/\epsilon)$.
\end{proof}

We now prove Lemma~\ref{DNF2}.

\begin{proof}
As in the proof of Lemma~\ref{cloose}, the probability that the algorithm fails to output $X$ that satisfies $\Pr_{x,y\in U}[f(x_X\circ y_{\overline{X}})=f(x)]\le \epsilon/(100\lambda)$ is at most
$$k\left(1-\frac{\epsilon}{100\lambda}\right)^{(100\lambda\ln(100/k))/\epsilon}=\frac{1}{100}.$$ If $\Pr_{x,y\in U}[f(x_X\circ y_{\overline{X}})=f(x)]\le \epsilon/(100\lambda)$ then, by Markov's inequality, Lemma~\ref{Markov}, for a random uniform $w$, with probability at least $99/100$, $\Pr_{x\in U}[f(x_X\circ w_{\overline{X}})=f(x)]\le \epsilon/\lambda$.

Now as in the proof of Lemma~\ref{Query01}, the query complexity is
$2M+k\log r=\tilde O(s/\epsilon)$.
\end{proof}

In the next subsection, we give the result for $s$-term DNF, and then we show how to use the above technique to other classes.

\subsection{Testing $s$-term DNF}\label{RVAST}
We have shown in Lemma~\ref{DNF1} and \ref{DNF2} that, using $\tilde O(s/\epsilon)$ queries, the problem of testing $s$-Term DNF can be reduced to the problem of testing $s$-Term DNF with $k=O(s\log(s/\epsilon))$ relevant variables. We then can use {\bf Tester$C$} for the latter problem. This gives a tester for $s$-term DNF that makes $\tilde O(s^2/\epsilon)$. This is because the number of $s$-term DNF with $k=O(s\log (s/\epsilon))$ relevant variables is $2^{\tilde O(s^2)}$. We will now show how to slightly change the tester {\bf Tester$C$} and get one that makes $\tilde O(s/\epsilon)$ queries.

In {\bf Tester$C$} the procedures {\bf ApproxTarget}, {\bf TestSet} and {\bf Close$fF$} make $O(k\log k)$, $O(k)$ and $O((k/\epsilon)\log (k/\epsilon))$ queries, respectively. This is $\tilde O(s/\epsilon)$ queries. Therefore, the only procedure that makes $\tilde O(s^2/\epsilon)$ queries in {\bf Tester$C$} is {\bf Close$FCU$}. So we will change this procedure.

Let $h$ be an $s$-term DNF. Notice that in step \ref{FFs16} in {\bf Close$FCU$}, for a random uniform $z=(z_1,\ldots,z_q)\in\{0,1\}^q$, the probability that $z$ satisfies a term $T$ in $h$ of size at least $c\log (s/\epsilon)$ (i.e., $T(z)=1$), is at most $(\epsilon/s)^c$. Therefore, if in {\bf Close$FCU$} we define $C^*$ to be the class of all $s$-term DNF with terms of size at most $c\log (s/\epsilon)$, then the probability that for at least one of the $\tau=(3/\epsilon)\log(|C^*|/\delta)$ random uniform $z=(z_1,\ldots,z_q)$ and  at least one of the terms $T$ in $h$ of size at least $c\log (s/\epsilon)$, $T$ satisfies $z$, is at most
$s\tau (\epsilon/s)^c\le 1/100$. Therefore, with probability at least $99/100$ the algorithm runs as if $h$ is $s$-term DNF with terms of size at most $c\log (s/\epsilon)$ and then accept. Since $\log|C^*|=\tilde O(s)$ we get

\begin{theorem} For any $\epsilon>0$, there is a two-sided adaptive algorithm for $\epsilon$-testing $s$-Term DNF that makes $\tilde O(s/\epsilon)$ queries.
\end{theorem}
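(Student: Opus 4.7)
The plan is to combine the reduction step Approx$C$ with a slightly modified version of Tester$C$, using the structural fact that random uniform inputs rarely satisfy long terms in a DNF.

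First, I would run {\bf Approx$C$}$(f,\epsilon,\lambda)$ for some fixed constant $\lambda>1$ (say $\lambda=8$). By Lemmas~\ref{DNF1} and~\ref{DNF2}, this costs $\tilde O(s/\epsilon)$ queries and, with constant probability, either rejects correctly or returns $(X,w)$ such that $h:=f(x_X\circ w_{\overline{X}})$ satisfies: (i) if $f$ is $s$-term DNF, then $h$ is $s$-term DNF on at most $k=O(s\log(s/\epsilon))$ relevant variables; (ii) if $f$ is $\epsilon$-far from every $s$-term DNF, then $h$ is $(1-1/\lambda)\epsilon$-far from every $s$-term DNF. So the remaining task reduces to $\epsilon'$-testing (with $\epsilon'=(1-1/\lambda)\epsilon=\Theta(\epsilon)$) the subclass of $s$-term DNF having at most $k$ relevant variables, with an oracle for $h$.

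Next I would run {\bf Tester$C$} on $h$ for the class $C=$($s$-term DNF with at most $k$ relevant variables). The procedures {\bf ApproxTarget}, {\bf TestSets}, and {\bf Close$fF$} make $\tilde O(k/\epsilon)=\tilde O(s/\epsilon)$ queries, which is within budget. The only bottleneck is {\bf Close$FCU$}: the natural choice $C(\Gamma)$ has size $2^{\tilde O(ks)}=2^{\tilde O(s^2)}$, which would cost $\tilde O(s^2/\epsilon)$ queries.

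The main obstacle, therefore, is redesigning {\bf Close$FCU$} so its query complexity depends only on $\log|C^*|$ for a much smaller subclass $C^*$. I would take $C^*$ to be the set of $s$-term DNFs with at most $k$ relevant variables \emph{and} terms of size at most $r:=c\log(s/\epsilon)$; counting shows $\log|C^*|=O(sr\log k)=\tilde O(s)$, so the modified procedure makes $\tau=(3/\epsilon)\log(|C^*|/\delta)=\tilde O(s/\epsilon)$ queries. The key justification is that, if $h$ is truly an $s$-term DNF, then any term $T$ in $h$ of size $\geq r$ satisfies $\Pr_{z\in U}[T(z)=1]\leq (1/2)^{r}\leq(\epsilon/s)^c$; by a union bound over the $\tau$ random uniform inputs $z$ drawn inside the procedure and the at most $s$ long terms of $h$, the probability that any long term is ever satisfied is at most $s\tau(\epsilon/s)^c\leq 1/100$ for $c$ a sufficiently large constant. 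Conditioned on this event, on every drawn $z$ the function $h$ behaves exactly like its restriction to its short terms, which lies in $C^*$. Thus the soundness and completeness analysis of {\bf Close$FCU$} carries through verbatim with $C^*$ in place of $C(\Gamma)$, and we accept when $f$ (hence $h$) is $s$-term DNF, and reject (w.h.p.) when $h$ is $\epsilon'/3$-far from every $s$-term DNF on at most $k$ variables with respect to the uniform distribution.

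Adding up the queries from {\bf Approx$C$}, {\bf ApproxTarget}, {\bf TestSets}, {\bf Close$fF$}, and the modified {\bf Close$FCU$} gives $\tilde O(s/\epsilon)$ total. A final union bound over the constant number of constant-probability failure events (and rescaling $\epsilon$ by the $1-1/\lambda$ factor absorbed into the $\tilde O$) yields the claimed two-sided uniform-distribution tester.
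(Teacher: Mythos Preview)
Your proposal is correct and follows essentially the same approach as the paper: run {\bf Approx$C$} to reduce to $s$-term DNF on $k=O(s\log(s/\epsilon))$ variables, then run {\bf Tester$C$} but replace $C(\Gamma)$ in {\bf Close$FCU$} by the subclass $C^*$ of $s$-term DNFs with terms of size at most $c\log(s/\epsilon)$, justified by the union bound $s\tau(\epsilon/s)^c\le 1/100$ showing long terms are never satisfied by the $\tau$ uniform samples. The paper's argument is identical in structure and in the key probability calculation.
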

For completeness we wrote the tester. See {\bf TesterApprox$C$} in Figure~\ref{yTester}. In the tester $C(\{y_1,\ldots,y_q\},c\log(s/\epsilon))$ is the class of all $s$-term DNFs with terms of size at most $c\log(s/\epsilon)$ over $q$ variables.

\section{Results}\label{result2}

In this section, we extend the technique used in the previous section to other classes.

\newcounter{yALC4}
\setcounter{yALC4}{0}
\newcommand{\ystepd}{\stepcounter{yALC4}$\arabic{yALC4}.\ $\>}
\newcommand{\ysteplabeld}[1]{\addtocounter{yALC4}{-1}\refstepcounter{yALC4}\label{#1}}
\begin{figure}[h!]
  \begin{center}
  \fbox{\fbox{\begin{minipage}{28em}
  \begin{tabbing}
  xxx\=xxxx\=xxxx\=xxxx\=xxxx\=xxxx\= \kill
  {{\bf TesterApprox$C$}$(f,\D,\epsilon)$}\\
 {\it Input}: Oracle that access a Boolean function $f$ and $\D$. \\
  {\it Output}: Either ``reject'' or ``accept''\\ \\
  xxx\=xxxx\=xxxx\=xxxx\=xxxx\=xxxx\= \kill
\ystepd\ysteplabeld{yFF001}
  $(X,w)\gets ${\bf Approx$C$}$(f,\epsilon,1/6)$. \\
  \ystepd\ysteplabeld{yFF002}
  $h:= f(x_X\circ w_{\overline{X}})$.\\
\ystepd\ysteplabeld{yFF01}
  $(X,V,I)\gets ${\bf ApproxTarget}$(h,U,\epsilon,1/6)$. \\
\ystepd\ysteplabeld{yFF02}
  {\bf TestSets}$(X,V,I)$.\\
  \ystepd {\bf Close$fF$}$(f,U,\epsilon,1/15)$\\
  \ystepd\ysteplabeld{yFF07}$C^*\gets C(\{y_1,\ldots,y_q\},c\log(s/\epsilon))$ where $q=|V|$\\
  {\bf Test closeness to $C^*$}\\
  \ystepd\ysteplabeld{yFF15}Repeat $\tau=(3/\epsilon)\log(30|C^*|)$ times\\
  \ystepd\ysteplabeld{yFF16}\> Choose $(z_1,\ldots,z_q)\in U$.\\
  \ystepd\ysteplabeld{yFF17}\> For every $g\in C^*$\\
  \ystepd\ysteplabeld{yFF18}\>\>If $g(z)\not=F(z)$ then $C^*\gets C^*\backslash \{g\}$.\\
  \ystepd\ysteplabeld{yFF19}\>\>If $C^*=\emptyset$ then ``reject''\\
  \ystepd\ysteplabeld{yFF20} Return ``accept''
  \end{tabbing}
  \end{minipage}}}
  \end{center}
	\caption{A procedure for testing $s$-Term DNF}
	\label{yTester}
	\end{figure}

\subsection{Testing $s$-Term Monotone DNF}
We first use the algorithm {\bf LearnMonotone} in Figure~\ref{MDNFL} to show
\begin{lemma}\label{LearnMDNF2} Let $f:\{0,1\}^n\to \{0,1\}$ be an $s$-term Monotone DNF. For constant $\delta$, algorithm {\bf LearnMonotone}$(f,U,\epsilon/2,\delta/2,s,2(\log (s/\epsilon)+\log(1/\delta)))$ asks $O(s/\epsilon)$ ExQ$_U$ and $O(s(\log n+\log s)\cdot$ $\log (s/\epsilon))$ MQ and, with probability at least $1-\delta$, learns an $s$-term monotone DNF $h$ that satisfies $\Pr_U[h\not =f]\le \epsilon$.
\end{lemma}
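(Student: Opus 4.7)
The plan is to reduce to Lemma~\ref{LearnMDNF} by exploiting that, under the uniform distribution, terms of $f$ longer than $r=2(\log(s/\epsilon)+\log(1/\delta))$ are satisfied with negligible probability. Decompose $f=f_s\vee f_\ell$, where $f_s$ (resp.~$f_\ell$) is the disjunction of the terms of $f$ of size at most $r$ (resp.~strictly greater than $r$). Then $f_s$ is an $s$-term monotone $r$-DNF, and a union bound over the at most $s$ long terms gives
$$\Pr_U[f_\ell(x)=1]\ \le\ s\cdot 2^{-r}\ =\ \frac{\epsilon^2\delta^2}{s}.$$

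The key structural observation I would use is a monotone-shrinking invariant: every Boolean string that {\bf LearnMonotone} sends to the $MQ$ oracle after an initial $ExQ_U$ draw is obtained from it by a sequence of bit-flips from $1$ to $0$ (the shrinking step replaces $a$ by $a*y\le a$, and {\bf FindMinterm} only turns $1$s into $0$s). Since every term of $f_\ell$ is monotone, if the initial uniform sample $a^{(0)}$ satisfies $f_\ell(a^{(0)})=0$, then every subsequent string $a^{(i)}$ produced in that outer iteration also satisfies $f_\ell(a^{(i)})=0$, and hence $f(a^{(i)})=f_s(a^{(i)})$. Let $E_1$ be the event that \emph{none} of the $N:=8(s/\epsilon)\ln(2/\delta)$ samples drawn by $ExQ_U$ satisfies $f_\ell$. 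Conditioned on $E_1$, the execution trace of {\bf LearnMonotone}$(f,U,\epsilon/2,\delta/2,s,r)$ is, as a function of the algorithm's coin flips, identical to that of {\bf LearnMonotone}$(f_s,U,\epsilon/2,\delta/2,s,r)$.

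A union bound gives $\Pr[\overline{E_1}]\le N\cdot s\cdot 2^{-r}=O(\epsilon\delta^2\log(1/\delta))\le \delta/4$, adjusting if necessary the leading constant in the definition of $r$ so that this holds for all $\epsilon\in(0,1)$ and any fixed constant $\delta$. Since $f_s$ is $s$-term monotone $r$-DNF, Lemma~\ref{LearnMDNF} applied with parameters $(\epsilon/2,\delta/2,s,r)$ tells us that, with probability at least $1-\delta/2$ over the algorithm's internal randomness, the output $h$ is an $s$-term monotone $r$-DNF with $\Pr_U[h\ne f_s]\le \epsilon/2$. Combining, with probability at least $1-\delta$ the output $h$ is an $s$-term monotone DNF and
$$\Pr_U[h\ne f]\ \le\ \Pr_U[h\ne f_s]+\Pr_U[f_s\ne f]\ \le\ \tfrac{\epsilon}{2}+\Pr_U[f_\ell=1]\ \le\ \tfrac{\epsilon}{2}+\tfrac{\epsilon^2\delta^2}{s}\ \le\ \epsilon.$$
The query counts follow directly from Lemma~\ref{LearnMDNF}: substituting $r=O(\log(s/\epsilon))$ (for constant $\delta$) yields $O((s/\epsilon)\log(1/\delta))=O(s/\epsilon)$ calls to $ExQ_U$ and $O(sr\log(ns))=O(s(\log n+\log s)\log(s/\epsilon))$ $MQ$s.

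The main potential obstacle is the worry that, within a single outer iteration, the working string $a$ could drift onto a long term after the short-term anchor it originally satisfied vanishes, thereby derailing the drift analysis that underlies the proof of Lemma~\ref{LearnMDNF}. The monotone-shrinking invariant above is precisely what dissolves this concern: once the initial uniform sample has no long term satisfied, no descendant string produced by the algorithm can suddenly satisfy a long term, so the whole run is indistinguishable from a run on the $r$-bounded approximation $f_s$.
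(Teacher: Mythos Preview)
Your proposal is correct and follows essentially the same approach as the paper's proof: both truncate $f$ to its terms of size at most $r$ (your $f_s$ is the paper's $\hat f$), use the monotone-shrinking invariant ($T(a)=0\Rightarrow T(a*y)=0$ for monotone $T$) to argue that the run on $f$ coincides with the run on $f_s$ whenever no initial uniform sample satisfies a long term, apply Lemma~\ref{LearnMDNF} to $f_s$, and finish with the triangle inequality. Your write-up is in fact slightly more explicit about the invariant (also covering {\bf FindMinterm}), but the argument is the same.
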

\begin{proof} Let $\hat f$ be the function $f$ without the terms of size greater than $2(\log(s/\epsilon)+\log(1/\delta))$. Then $$\Pr_U[f\not=\hat f]\le s2^{-2(\log(s/\epsilon)+\log(1/\delta)))}\le \frac{\epsilon}{2}.$$

In the algorithm {\bf LearnMonotone} the probability that one of the assignments in step~\ref{MDNF02} (that is, $a$ where $a\in U$) satisfies one of the terms in $f$ of size greater than $2(\log(s/\epsilon)+\log(1/\delta))$ is less than $$(4s/\epsilon)(\log(1/\delta))s2^{-2(\log(s/\epsilon)+\log(1/\delta))}\le \frac{\delta}{2}.$$ Also for a monotone term $T$, if $T(a)=0$ then for any $y$, $T(a*y)=0$. Therefore, with probability at least $1-\delta/2$, the algorithm runs as if $f$ is $\hat f$ (which is $s$-term monotone $(2(\log(s/\epsilon)+\log(1/\delta)))$-DNF). By Lemma~\ref{LearnMDNF}, if the target is $\hat f$ then, with probability at least $1-\delta/2$, {\bf LearnMonotone} outputs $h$ that is $(\epsilon/2)$-close to $\hat f$.
Since $\hat f$ is $(\epsilon/2)$-close to $f$ and $h$ is $(\epsilon/2)$-close to $\hat f$ we have that $h$ is $\epsilon$-close to $f$. This happens with probability at least $1-\delta$.

The number of queries follows from Lemma~\ref{LearnMDNF}.
\end{proof}

We now prove
\begin{theorem} For any $\epsilon>0$, there is a polynomial time two-sided adaptive algorithm for $\epsilon$-testing $s$-Term Monotone DNF that makes $\tilde O(s/\epsilon)$ queries.
\end{theorem}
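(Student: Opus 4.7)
The plan is to combine the {\bf Approx$C$} preprocessing of the previous section with the learning-to-testing reduction of Theorem~\ref{TesterL02}, instantiated with the uniform-distribution proper learning algorithm {\bf LearnMonotone} from Lemma~\ref{LearnMDNF2}.

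First, I would invoke {\bf Approx$C$}$(f,\epsilon,\lambda)$ for a suitable constant $\lambda>1$. The analysis of Lemmas~\ref{DNF1} and~\ref{DNF2} is stated for $s$-Term DNF, but the proofs go through verbatim for $s$-Term Monotone DNF: the key properties used are that each term of size at least $m=c\log(s/\epsilon)$ is almost surely $0$ on a uniformly random input, and that a random restriction kills every term retaining $\ge m/2$ unrestricted variables. Hence, with probability at least $9/10$, {\bf Approx$C$} outputs $(X,w)$ such that $h:=f(x_X\circ w_{\overline{X}})$ has at most $k=O(s\log(s/\epsilon))$ relevant variables and either is $s$-term monotone DNF (when $f$ is) or is $(1-1/\lambda)\epsilon$-far under the uniform distribution from every $s$-term monotone DNF, using $\tilde O(s/\epsilon)$ queries.

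Next, I would proceed exactly as in {\bf Tester$C$} for subclasses of $k$-Junta, calling {\bf ApproxTarget}, {\bf TestSets}, and {\bf Close}$fF$ on $h$ with parameter $k=O(s\log(s/\epsilon))$. These three procedures together cost $\tilde O(k/\epsilon)=\tilde O(s/\epsilon)$ queries and, by Lemma~\ref{kLem}, reduce the problem to distinguishing whether the induced function $F(y_1,\ldots,y_q)$ on $q\le k$ variables lies in $C(Y)$ (the set of $s$-term monotone DNFs over $Y$) or is $(\epsilon/3)$-far from every such function under the uniform distribution. By Lemma~\ref{LearnMDNF2} applied to $F$, the class $C(Y)$ is properly learnable under the uniform distribution with constant confidence using $M=O(sq\log(s/\epsilon))=\tilde O(s)$ membership queries and $Q=O(s/\epsilon)$ uniform example queries (since $q\le k=\tilde O(s)$, the $\log n$ factor in Lemma~\ref{LearnMDNF2} becomes $\log q=\tilde O(\log s)$). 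Plugging these into Theorem~\ref{TesterL02} yields a tester for this subproblem using $\tilde O(M+Q+k/\epsilon)=\tilde O(s/\epsilon)$ queries.

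The main subtlety is verifying that monotonicity is preserved through the projections and the variable-renaming that defines $F$. The restriction $x_X\circ w_{\overline{X}}$ is a constant substitution and clearly keeps the function monotone in its remaining variables; the renaming in $F=h(x(\pi_{h,I})\circ 0_{\overline{X}})$ identifies all coordinates of $X_\ell$ with $x_{\tau(\ell)}$ and so also preserves monotonicity. Moreover, by Remark~\ref{remark}, the witnesses in $V$ certify that each $y_\ell$ appears positively in $F$, so {\bf LearnMonotone} is applicable. Summing the costs of {\bf Approx$C$}, the junta-reduction procedures, and the learning-based closeness test gives the claimed $\tilde O(s/\epsilon)$ query complexity, and all subroutines run in time polynomial in $n$, $s$, and $1/\epsilon$.
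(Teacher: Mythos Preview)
Your proposal is correct and follows essentially the same approach as the paper: run {\bf Approx$C$} to reduce to an $s$-term monotone DNF on $k=O(s\log(s/\epsilon))$ relevant variables, then invoke Theorem~\ref{TesterL02} with the uniform-distribution proper learner of Lemma~\ref{LearnMDNF2}. One minor typo: your displayed bound $M=O(sq\log(s/\epsilon))$ should be $M=O(s\log q\cdot\log(s/\epsilon))$, as your own parenthetical (``the $\log n$ factor \ldots becomes $\log q$'') makes clear; with $q\le k=\tilde O(s)$ this is indeed $\tilde O(s)$.
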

\begin{proof} We first run {\bf Approx$C$} and get an $s$-term monotone DNF $h$ with $O(s\log(s/\epsilon))$ variables that is $(\epsilon/6)$-close to $f$. We then use Theorem~\ref{TesterL02} with Lemma~\ref{LearnMDNF2}.
\end{proof}

We also have
\begin{theorem} For any $\epsilon>0$, there is a polynomial time two-sided adaptive algorithm for $\epsilon$-testing $s$-Term Unate DNF that makes $\tilde O(s/\epsilon)$ queries.
\end{theorem}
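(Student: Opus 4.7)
The plan is to mirror the proof of the preceding theorem for $s$-Term Monotone DNF, extracting the unate signs from the witnesses produced by {\bf ApproxTarget} exactly as in the proof of Theorem~\ref{MUnate}. First I would run {\bf Approx$C$}$(f,\epsilon,1/6)$ and set $h := f(x_X\circ w_{\overline{X}})$. The analyses in Lemmas~\ref{DNF1} and~\ref{DNF2} treat terms only combinatorially, without reference to literal polarity, so identical arguments give, with high probability: if $f$ is $s$-term unate DNF then $h$ is $s$-term unate DNF with $O(s\log(s/\epsilon))$ relevant variables; if $f$ is $\epsilon$-far from every $s$-term unate DNF then $h$ is $(1-1/\lambda)\epsilon$-far from every $s$-term unate DNF, since $\Pr_U[h\ne f]\le \epsilon/\lambda$ and the triangle inequality transfers farness.

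Next I would apply {\bf ApproxTarget} and {\bf TestSets} to $h$ under the uniform distribution, obtaining $(X,V,I)$ satisfying Assumption~\ref{assmp} with $h$ in place of $f$. For each relevant set $X_\ell$, set $\xi_\ell := h(0_{X_\ell}\circ v^{(\ell)}_{\overline{X_\ell}})$; exactly as in the proof of Theorem~\ref{MUnate}, whenever $h$ truly is unate DNF this bit identifies whether $h$ is positive or negative unate in $x_{\tau(\ell)}$. I then define $\tilde h(y_1,\ldots,y_q) := F(y_1^{\xi_{\ell_1}},\ldots,y_q^{\xi_{\ell_q}})$, where $F$ is the function on the relevant variables from the generic framework applied to $h$; whenever $h$ is $s$-term unate DNF with the detected signs, $\tilde h$ is $s$-term monotone DNF over $q = O(s\log(s/\epsilon))$ variables, and each query to $\tilde h$ is simulated by a single query to $f$ via the projection together with the sign flip.

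The final step is to invoke Theorem~\ref{TesterL02} (with Remark~\ref{remark}, so that the learner may inspect $V$) using the learner of Lemma~\ref{LearnMDNF2} applied to $\tilde h$; that learner uses $O(s/\epsilon)$ ExQ$_U$ and $\tilde O(s)$ MQ, so Theorem~\ref{TesterL02} yields a tester of total query complexity $\tilde O(s/\epsilon)$, matching the claim. The main obstacle is conceptually minor: in the far case $h$ need not be unate DNF, the signs $\xi_\ell$ may be meaningless, and $\tilde h$ need not be monotone DNF. This is handled automatically by the learning-based reduction in Lemma~\ref{LtoT02}: either {\bf LearnMonotone} fails, overruns its budget, or returns a hypothesis outside monotone DNF (in which case the tester rejects), or else it returns some $\hat h$ that is then subjected to the uniform closeness test against $\tilde h$; this test rejects with high probability, because any monotone-DNF hypothesis close to $\tilde h$ would, via the inverse sign flip, supply an $s$-term unate DNF close to $h$, contradicting the $(1-1/\lambda)\epsilon$-farness established in the first step.
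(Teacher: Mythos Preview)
Your proposal is correct and follows essentially the same approach as the paper: the paper's proof is the single line ``similar to the proof of Theorem~\ref{MUnate}'', i.e., run the monotone-DNF pipeline of the preceding theorem (Approx$C$ $+$ {\bf Tester$C$} with the learner of Lemma~\ref{LearnMDNF2}) after first reading off the unate signs from the witnesses $V$ and applying the corresponding sign flip, exactly as you describe. Your handling of the far case via Lemma~\ref{LtoT02} is also the intended one: the sign flip is a measure-preserving bijection of the cube, so $F$ being $(\epsilon/3)$-far from every $s$-term unate DNF on $\Gamma$ is equivalent to $\tilde h$ being $(\epsilon/3)$-far from every $s$-term monotone DNF on $\Gamma$, whence the monotone learner/test must reject.
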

\begin{proof}
The proof is similar to the proof of Theorem~\ref{MUnate}.\end{proof}

\subsection{Testing Size-$s$ Boolean Formula and Size-$s$ Boolean Circuit}
A Boolean formula is a rooted tree in which each internal node has arbitrarily many children and is labeled with AND or OR. Each leaf is labeled with a Boolean variable $x_i$ or its negation $\bar x_i$. The size of a Boolean formula is the number of AND/OR gates it contains. The class size-$s$ Boolean Formula is the class of all Boolean formulas of size at most $s$.

A Boolean circuit is a rooted directed acyclic graph with internal nodes labeled with an AND, OR or NOT gate. Each AND/OR gate is allowed to have arbitrarily many descendants. Each directed path from the root ends in one of the nodes $x_1,x_2,\ldots,x_n,0,1$.

The same analysis that we did for $s$-term DNF also applies to size-$s$ Boolean formulas and size-$s$ Boolean circuit. Analogous to the size of terms, we take the number of distinct literals a gate has. If the gate is labeled with AND (respectively, OR) and the number of distinct literals it has is more than $c\log(s/\epsilon)$, then we replace the gate with a node labeled with $0$ (respectively, $1$) and remove all the edges to its children.

Therefore we have

\begin{lemma} Lemma~\ref{DNF1} and \ref{DNF2} are also true for size-$s$ Boolean formulas and size-$s$ Boolean circuit.
\end{lemma}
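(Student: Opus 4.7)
The plan is to identify, for size-$s$ Boolean formulas and circuits, a notion of ``large gate'' that plays the role of a large term in the DNF analysis, and then to re-run the arguments of Lemmas~\ref{DNF1} and~\ref{DNF2} essentially verbatim. For each AND (resp.\ OR) gate $g$, let $\ell(g)$ be the number of distinct literals reachable from $g$ by traversing only AND-edges (resp.\ OR-edges); for formulas this is just the literal children of the gate after flattening nested AND--AND (resp.\ OR--OR) compositions. Call $g$ \emph{large} if $\ell(g) > m := c\log(s/\epsilon)$. The key observation is that on a uniformly random input a large AND gate outputs $1$ with probability at most $2^{-m}$, since all $\ell(g)$ distinct literals must simultaneously evaluate to $1$ for the AND to fire; symmetrically, a large OR gate outputs $0$ with probability at most $2^{-m}$. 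By choosing $c$ sufficiently large this is smaller than any desired inverse polynomial in $s/\epsilon$.

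Define $\hat f$ by replacing every large AND gate with the constant $0$ and every large OR gate with the constant $1$. A union bound over the at most $s$ large gates and over all $\tilde O(s/\epsilon)$ strings that {\bf Approx$C$} ever queries --- namely the $M$ random uniform pairs $(u,v)$ together with the $O(\log r)$ intermediate strings produced in each binary search --- shows that, with probability at least $99/100$, every large gate takes its replacement constant on every queried string. Hence {\bf Approx$C$} executes on $f$ exactly as it would on $\hat f$. Since $\hat f$ has at most $s$ gates and each surviving gate has at most $m$ literal children, $\hat f$ depends on at most $sm$ variables in total; this set plays the role of $R_1$ in the DNF proof.

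The analog of Lemma~\ref{DNF1} now follows by repeating the partition argument with $r = 8ms$ classes $X_1,\ldots,X_r$: the variables of $R_1$ land in at most $|R_1| \le sm$ classes $X_{j_1},\ldots,X_{j_q}$, a uniformly random completion $w$ kills (with probability at least $99/100$) every surviving gate that retains too many literal children outside these classes, the remaining gates contribute at most $O(s\log(s/\epsilon))$ extra variables to $f(x_X \circ w_{\overline X})$, and {\bf Approx$C$} discovers at most $3sm$ relevant sets and hence does not reject. The analog of Lemma~\ref{DNF2} is entirely class-agnostic: the closeness estimate $\Pr_{x,y\in U}[f(x_X \circ y_{\overline X})\neq f(x)] \le \epsilon/(100\lambda)$ together with the subsequent Markov step depend only on how {\bf Approx$C$} chooses $X$ and $w$, not on the structure of $f$, so the $(1-1/\lambda)\epsilon$-farness conclusion transfers unchanged.

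The main difficulty is the bookkeeping around ``distinct literals a gate has'', especially for circuits where a single literal leaf may feed many gates. What is actually needed is only that an AND (resp.\ OR) gate with $\ell(g) > m$ outputs $1$ (resp.\ $0$) on a uniform input with probability at most $2^{-m}$, which depends solely on the independence of the $\ell(g)$ distinct literal children at the top of the flattened gate and is therefore unaffected by any sharing that happens below. Once this inequality is established, every numerical constant in the DNF proof --- the partition size $r = 8ms$, the rejection threshold $k = 3ms$, and the success probabilities $99/100$ --- carries over without change.
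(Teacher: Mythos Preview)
Your approach is the same as the paper's: replace each AND (resp.\ OR) gate with more than $c\log(s/\epsilon)$ distinct literal children by the constant $0$ (resp.\ $1$), and then rerun the DNF argument verbatim. The paper's own justification is a two-sentence sketch, so your level of detail already exceeds it.

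One point does need fixing. You phrase the key step as ``a union bound over the at most $s$ large gates and over all $\tilde O(s/\epsilon)$ strings that {\bf Approx$C$} ever queries,'' invoking the single-string bound $\Pr[g=1]\le 2^{-m}$ for each. But the binary-search intermediaries are \emph{not} uniform---they are deterministic functions of $u,v$ and of $f$'s values---so a per-string $2^{-m}$ bound does not apply to them. The DNF proof you are appealing to does not union-bound over strings; it union-bounds over the $M$ pairs $(u,v)$ and the $s$ large gates, using that with probability at least $1-(3/4)^m$ some literal of the gate takes its killing value on \emph{both} $u$ and $v$ simultaneously, and hence on every string whose $i$th coordinate lies in $\{u_i,v_i\}$. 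That is what makes the binary-search strings come for free. Your final paragraph's ``every numerical constant in the DNF proof carries over without change'' is the right conclusion, but the middle paragraph should state the $(3/4)^m$ pair-bound rather than a $2^{-m}$ per-string bound. Lemma~\ref{DNF2} is, as you note, class-agnostic and needs no modification.
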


We now prove
\begin{theorem} For any $\epsilon>0$, there is a two-sided adaptive algorithm for $\epsilon$-testing size-$s$ Boolean Formula that makes $\tilde O(s/\epsilon)$ queries.
\end{theorem}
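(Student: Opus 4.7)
The plan is to mimic the $s$-term DNF argument from the previous subsection, replacing the class of terms by gates of a formula and controlling the size of the restricted concept class. First, I would invoke the lemma immediately preceding the theorem, which extends Lemma~\ref{DNF1} and Lemma~\ref{DNF2} to size-$s$ Boolean formulas. This means that running \textbf{Approx$C$} with parameter $\lambda=1/6$ produces, with probability at least $9/10$, a pair $(X,w)$ such that $h=f(x_X\circ w_{\overline{X}})$ has only $k=O(s\log(s/\epsilon))$ relevant variables when $f$ is a size-$s$ formula, and $h$ is $(5\epsilon/6)$-far from every size-$s$ formula when $f$ is $\epsilon$-far. The query cost of this step is $\tilde O(s/\epsilon)$.

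Next, I would feed $h$ into the framework \textbf{Tester$C$} of Section 3, applied to the class $C$ of size-$s$ Boolean formulas with at most $k$ relevant variables. By the query bounds already proved, \textbf{ApproxTarget}, \textbf{TestSets}, and \textbf{Close$fF$} together consume $\tilde O(k/\epsilon)=\tilde O(s/\epsilon)$ queries and satisfy Assumption~\ref{assmp}. The only step whose naive use would be too expensive is \textbf{Close$FCU$}, because the total number of size-$s$ formulas with $k$ relevant variables is $2^{\tilde O(s^2)}$. As for $s$-term DNF, I would replace $C(\{y_1,\ldots,y_q\})$ in \textbf{Close$FCU$} by the subclass $C^*$ consisting of size-$s$ formulas in which every AND/OR gate has fan-in at most $r=c\log(s/\epsilon)$ (where $c$ is a large constant chosen as in the DNF case).

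The key combinatorial step is bounding $|C^*|$. Each of the $s$ internal nodes of such a formula is labeled with AND or OR and has at most $r$ ordered children, each drawn from the at most $s+2q=O(s\log(s/\epsilon))$ possibilities (other gates or literals of the $q$ variables). Hence $|C^*|\le (2(s+2q)^{r})^{s}=2^{O(sr\log s)}=2^{\tilde O(s)}$, so $\log|C^*|=\tilde O(s)$. By the same argument used for $s$-term DNF, for a uniformly random $z\in\{0,1\}^q$ the probability that any fixed AND gate (respectively OR gate) of $h$ of fan-in greater than $r$ evaluates to anything other than $0$ (respectively $1$) is at most $(\epsilon/s)^{c}$; by a union bound over the $\le s$ large gates and the $\tau=\tilde O((\log|C^*|)/\epsilon)=\tilde O(s/\epsilon)$ samples drawn in \textbf{Close$FCU$}, with probability at least $99/100$ every large gate behaves on every sampled input as its constant replacement. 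Thus, whp, the procedure behaves as if $h$ belonged to $C^*$, and the standard analysis of \textbf{Close$FCU$} (Lemma~\ref{clooo}) gives the required soundness and completeness. The total query complexity is $\tilde O(s/\epsilon)+\tilde O((\log|C^*|)/\epsilon)=\tilde O(s/\epsilon)$.

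The main obstacle is the counting bound for $|C^*|$: one must be careful that a size-$s$ formula with $k=O(s\log(s/\epsilon))$ variables and gate fan-in capped at $r=c\log(s/\epsilon)$ really is described by only $\tilde O(s)$ bits, and that capping the fan-in is justified by the union-bound argument above (i.e., that large AND gates are almost surely falsified and large OR gates are almost surely satisfied on a uniform input). Once this is in place, the rest is a direct transcription of the $s$-term DNF proof.
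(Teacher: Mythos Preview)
Your overall plan matches the paper's proof, but there is a genuine gap in the union-bound step. You define $C^*$ by capping the \emph{total} fan-in of every gate at $r=c\log(s/\epsilon)$ and then assert that any AND (resp.\ OR) gate of $h$ with total fan-in exceeding $r$ evaluates to $0$ (resp.\ $1$) on a uniform input with probability at least $1-(\epsilon/s)^c$. This is false: an AND gate whose many children are sub-gates rather than literals need not be $0$ with high probability. For instance, take a root AND with $r+1$ children, each being the OR gate $x_1\vee\bar{x}_1$; this formula has size $r+2\le s$, the root has fan-in $r+1$, yet it outputs $1$ always.

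The paper avoids this by capping only the number of \emph{distinct literal} children of each gate (this is what ``analogous to the size of terms'' means in the lemma you invoke). An AND gate with more than $r$ distinct literals among its direct children is $0$ with probability at least $1-2^{-\Omega(r)}$, because the conjunction of those literals alone already forces it; similarly for OR. After replacing such gates by the appropriate constant, every surviving gate has at most $r$ literal children, while the number of gate children is unrestricted---but since the tree has at most $s$ internal nodes, the total number of leaves is still at most $sr$, so the count $\log|C^*|=\tilde O(s)$ (which the paper takes from~\cite{DiakonikolasLMORSW07}) goes through. With this correction your argument coincides with the paper's.
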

\begin{proof} Similar to testing $s$-term DNF, we can ignore gates that have more than $c\log (s/\epsilon)$ distinct literals. Just replace it with $0$ if its label is AND and with $1$ if it is OR.

The number of Boolean formulas of size $s$ that have at most $c\log (s/\epsilon)$ distinct literal in each gate and at most $k=O(s\log (s/\epsilon))$ variables is $2^{\tilde O(s)}$,~\cite{DiakonikolasLMORSW07}. The rest of the proof goes along with the proof of testing $s$-term DNF.
\end{proof}

The number of Boolean circuits of size $s$ that have at most $c\log (s/\epsilon)$ distinct literals in each gate and at most $k=O(s\log (s/\epsilon))$ variables is $2^{\tilde O(s^2)}$,~\cite{DiakonikolasLMORSW07}. Then similar to the above proof one can show

\begin{theorem} For any $\epsilon>0$, there is a two-sided adaptive algorithm for $\epsilon$-testing size-$s$ Boolean Circuit that makes $\tilde O(s^2/\epsilon)$ queries.
\end{theorem}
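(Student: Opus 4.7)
The plan is to mimic verbatim the structure of the proof for size-$s$ Boolean Formula, changing only the cardinality bound on the approximating subclass. First, I would run {\bf Approx$C$}$(f,\epsilon,1/6)$, whose analysis in Lemmas~\ref{DNF1} and~\ref{DNF2} carries over to size-$s$ circuits by the remark above: truncating any circuit gate of fan-in more than $c\log(s/\epsilon)$ in distinct literals (replacing AND by $0$, OR by $1$) plays exactly the role that removing long terms plays for DNF. This yields $X$, $w$ such that, with high probability, $h:=f(x_X\circ w_{\overline X})$ is either a size-$s$ circuit with at most $k=O(s\log(s/\epsilon))$ relevant variables (if $f$ itself is a size-$s$ circuit), or $(1-1/\lambda)\epsilon$-far from every size-$s$ circuit (if $f$ was $\epsilon$-far). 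The query cost up to here is $\tilde O(s/\epsilon)$.

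Next I would run {\bf ApproxTarget}, {\bf TestSets} and {\bf Close$fF$} on $h$ with the uniform distribution, exactly as in {\bf TesterApprox$C$} of Figure~\ref{yTester}. By Lemmas~\ref{Query01}, \ref{Query02} and~\ref{FirstApp}, these three procedures together make $\tilde O(k/\epsilon)=\tilde O(s/\epsilon)$ queries and, assuming Assumption~\ref{assmp} holds, reduce the testing problem to distinguishing $F\in C^*(\Gamma)$ from $F$ $(\epsilon/3)$-far from every function in $C^*(\Gamma)$ with respect to the uniform distribution, where $C^*$ denotes the class of size-$s$ Boolean circuits with at most $k=O(s\log(s/\epsilon))$ relevant variables and at most $c\log(s/\epsilon)$ distinct literals per gate.

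The last stage uses the modified {\bf Close$FCU$} from the $s$-term DNF argument. The key input is the cardinality bound: by the counting estimate cited from~\cite{DiakonikolasLMORSW07}, $|C^*(\Gamma)|\le 2^{\tilde O(s^2)}$, so $\log|C^*(\Gamma)|=\tilde O(s^2)$. The procedure draws $\tau=O((\log|C^*(\Gamma)|)/\epsilon)=\tilde O(s^2/\epsilon)$ uniform strings, and for each survivor $g\in C^*$ checks $g(z)=F(z)$. I would verify the same ``as if truncated'' reduction as for DNF: for any fixed gate of the original circuit with more than $c\log(s/\epsilon)$ distinct literals, the probability over a uniform $z$ that it fires against the constant replacement is at most $2^{-c\log(s/\epsilon)}=(\epsilon/s)^c$; by a union bound over the $s$ gates and $\tau$ samples, all gates behave as their truncated counterparts with probability at least $99/100$ when $c$ is a large enough constant. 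Consequently, if $f\in\,$size-$s$ Boolean Circuit, the tester accepts w.h.p., and if $f$ is $\epsilon$-far, one of the rejection steps triggers w.h.p.

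The main (and essentially only) obstacle beyond the formula proof is the larger cardinality $2^{\tilde O(s^2)}$ of the truncated class, which directly causes the final $\tilde O(s^2/\epsilon)$ term in place of the $\tilde O(s/\epsilon)$ obtained for formulas; the rest of the pipeline is already bounded by $\tilde O(s/\epsilon)$, so summing all stages gives the claimed $\tilde O(s^2/\epsilon)$ query complexity for two-sided adaptive $\epsilon$-testing of size-$s$ Boolean Circuit under the uniform distribution.
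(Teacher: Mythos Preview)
Your proposal is correct and follows essentially the same approach as the paper: the paper's own proof simply states the cardinality bound $|C^*(\Gamma)|=2^{\tilde O(s^2)}$ from~\cite{DiakonikolasLMORSW07} for truncated size-$s$ circuits over $k=O(s\log(s/\epsilon))$ variables and says ``similar to the above proof one can show'' the result. You have correctly spelled out the details, including the gate-truncation analogue of the long-term argument and the union bound showing that {\bf Close$FCU$} behaves as if $h$ were already truncated.
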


\subsection{Testing $s$-Sparse Polynomial}
In the literature, the first testing algorithm for the class $s$-Sparse Polynomial runs in exponential time~\cite{DiakonikolasLMORSW07} and makes $\tilde O(s^4/\epsilon^2)$ queries. Chakraborty et al., \cite{ChakrabortyGM11}, then gave another exponential time algorithm that makes $\tilde O(s/\epsilon^2)$ queries. Diakonikolas et al. gave in~\cite{DiakonikolasLMSW11} the first polynomial time testing algorithm that makes $poly(s,1/\epsilon)$ queries. Here we prove
\begin{theorem} For any $\epsilon>0$, there is a two-sided adaptive algorithm for $\epsilon$-testing $s$-Sparse Polynomial that makes $\tilde O(s^2/\epsilon)$ queries.
\end{theorem}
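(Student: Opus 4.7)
The plan is to mirror the strategy of Section~\ref{RVAST} (the $s$-term DNF tester) but replace the DNF-specific learning step with the polynomial learner {\bf LearnPolynomial} from Figure~\ref{PFL}. First, I would verify that the procedure {\bf Approx$C$} extends from $s$-term DNF to $s$-sparse polynomials over $F_2$. The proof of Lemma~\ref{DNF1} hinges on the fact that any monotone term $T$ of size $d\ge c\log(s/\epsilon)$ satisfies $\Pr_U[T(x)=1]\le 2^{-d}\le (\epsilon/s)^c$, so with high probability every such large term is $0$ on all the $O(M)$ random strings $u,v$ and on all intermediate strings of the binary searches. The same argument is valid for monotone terms of an $F_2$-polynomial because $T(u)=T(u_X\circ v_{\overline{X}})=T(w)=0$ means $T$ contributes nothing to the XOR, and in particular nothing to any difference of the form $f(u)\oplus f(u_X\circ v_{\overline{X}})$. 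Hence the analogs of Lemmas~\ref{DNF1} and \ref{DNF2} hold and {\bf Approx$C$} returns, using $\tilde O(s/\epsilon)$ queries, a pair $(X,w)$ such that either the procedure rejects, or $h:=f(x_X\circ w_{\overline{X}})$ is $s$-sparse with at most $k=O(s\log(s/\epsilon))$ relevant variables (if $f$ is $s$-sparse), or $h$ is $(1-1/\lambda)\epsilon$-far from every $s$-sparse polynomial with respect to $U$ (if $f$ is $\epsilon$-far from the class).

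Next, I would run {\bf ApproxTarget}, {\bf TestSets}, and {\bf Close$fF$} on $h$ exactly as in {\bf TesterApprox$C$}; each contributes only $\tilde O(k/\epsilon)=\tilde O(s/\epsilon)$ queries. For the remaining closeness test to the target class I would invoke Theorem~\ref{TesterL02}, but with the fictitious degree parameter $d=c\log(s/\epsilon)$ fed into {\bf LearnPolynomial}. By Lemma~\ref{LearnPolynomial} this learner uses $Q=O((s/\epsilon)\log s)$ ExQ$_U$ and $M=O(s\,2^d\log(ns))=\tilde O(s^2/\epsilon)$ MQ, which via Theorem~\ref{TesterL02} translates into $\tilde O(M+Q+k/\epsilon)=\tilde O(s^2/\epsilon)$ queries to $f$ overall, matching the claimed bound.

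The main obstacle—and the step that needs the most care—is justifying both completeness and soundness of this tester even though $h$ itself need not be an $s$-sparse polynomial of degree $\le d$. For completeness, if $f\in s$-Sparse Polynomial then $h$ is $s$-sparse; by a union bound each of its monotone terms of degree $>d$ is satisfied by a uniform string with probability at most $(\epsilon/s)^c$. Choosing $c$ a large enough constant, $h$ is $(\epsilon/12)$-close under $U$ to the polynomial $\hat h$ obtained by deleting those terms, which is an $s$-sparse polynomial of degree $\le d$; {\bf LearnPolynomial} therefore succeeds on $\hat h$, so the tester accepts. For soundness, if $f$ is $\epsilon$-far from every $s$-sparse polynomial then $h$ is $(1-1/\lambda)\epsilon$-far from every $s$-sparse polynomial, and in particular from every $s$-sparse polynomial of degree $\le d$; the estimated distance between $h$ and the learner's hypothesis then exceeds the acceptance threshold and the tester rejects. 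Absorbing all slack into an accuracy parameter $\epsilon/12$ in Theorem~\ref{TesterL02} closes both cases and yields the stated $\tilde O(s^2/\epsilon)$ upper bound.
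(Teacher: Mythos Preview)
Your reduction via {\bf Approx$C$} to $s$-sparse polynomials on $k=O(s\log(s/\epsilon))$ variables is correct and is exactly what the paper does. The gap is in the learning step.

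There is a quantitative inconsistency in your use of {\bf LearnPolynomial}. The claimed bound $M=O(s\,2^d\log(ns))=\tilde O(s^2/\epsilon)$ forces $2^d=\tilde O(s/\epsilon)$, i.e.\ essentially $c=1$. With this $d$, a uniformly random $a$ satisfies some monotone term of degree $>d$ with probability $\Theta(\epsilon)$, not $(\epsilon/s)^c$ for a usefully large $c$. Conditioned on entering the inner while loop (which happens up to $s$ times), the event ``$a$ satisfies a large term'' therefore has constant probability, so with probability $1-o(1)$ it occurs at least once. On such an $a$, the restricted polynomial $(F+h)(a*x)$ has degree $>d$, the bound $\Pr_y[(F+h)(a*y)=1]\ge 2^{-d}$ from the proof of Lemma~\ref{LearnPolynomial} fails, the while loop can exhaust its $\alpha=O(2^d\log n)$ budget without reaching $wt(a)\le d$, and {\bf LearnPolynomial} outputs ``fail'' --- causing your tester to reject a legitimate $s$-sparse $f$. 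Your sentence ``{\bf LearnPolynomial} therefore succeeds on $\hat h$'' is the non-sequitur: the learner has oracle access to $F$, not to $\hat F$, and closeness under $U$ does not transfer because the MQ points $a*y$ are adaptive. Conversely, if you take $c$ large enough for a union bound over all ExQ to kill this event, then $2^d=(s/\epsilon)^c$ and $M=\tilde O(s^{c+1}/\epsilon^c)$, which is no longer $\tilde O(s^2/\epsilon)$.

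The paper resolves exactly this tension by replacing {\bf LearnPolynomial} with a new learner {\bf LearnPolyUnif} (Figure~\ref{pFL}, Lemma~\ref{LearnPolyUnif}). It keeps $d=\log(s/\epsilon)+3$ so that $M=\tilde O((s^2/\epsilon)\log n)$, but when the while loop fails to bring $wt(a)\le d$ it does \emph{not} abort: it skips that $a$, increments a counter $w$ of consecutive skips for the current hypothesis $h$, and if $w$ hits $\log(3s/\delta)$ it outputs $h$ (which is then whp $\epsilon$-close to $F$). The three-case analysis in the proof of Lemma~\ref{LearnPolyUnif} is precisely the missing ingredient that your proposal needs.
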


\newcounter{ALCMp}
\setcounter{ALCMp}{0}
\newcommand{\stepMp}{\stepcounter{ALCMp}$\arabic{ALCMp}.\ $\>}
\newcommand{\steplabelMp}[1]{\addtocounter{ALCMp}{-1}\refstepcounter{ALCMp}\label{#1}}
\begin{figure}[h!]
  \begin{center}
  \fbox{\fbox{\begin{minipage}{28em}
  \begin{tabbing}
  xxx\=xxxx\=xxxx\=xxxx\=xxxx\=xxxx\= \kill
  {{\bf LearnPolyUnif}$(f,\epsilon,\delta,s)$}\\
 {\it Input}: Oracle that accesses a Boolean function $f$ that is $s$-sparse polynomial. \\
  {\it Output}: $h$ that is $s$-sparse polynomial\\ \\
  xxx\=xxxx\=xxxx\=xxxx\=xxxx\=xxxx\= \kill
\stepMp\steplabelMp{pF00}
  $h\gets 0$, $t(h)\gets 0$, $w\gets 0$. \\
  \stepMp\steplabelMp{pF01}
  Repeat $(s/\epsilon)\ln(3s/\delta)\log(3s/\delta)$ times. \\
\stepMp\steplabelMp{pF02}\>
  Choose $a\in U$.\\
\stepMp\steplabelMp{pF02b}\>
  $t(h)\gets t(h)+1$.\\
  \stepMp\steplabelMp{pF03}\>
 If $(f+h)(a)=1$ then \\
 \stepMp\steplabelMp{pF04}\>\>
 $m\gets 0$, $w\gets w+1$\\
 \stepMp\steplabelMp{pF04x}\>\>
 If $w=\log(3s/\delta)$ then Output $h$\\
   \stepMp\steplabelMp{pF04b}\>\>
While $m\le \alpha:=16\cdot (8s/\epsilon)(2\ln(s/\delta)+\ln n)$ and $wt(a)>\log(s/\epsilon)+3$ do\\
   \stepMp\steplabelMp{pF05a}\>\>\>
   $m\gets m+1;$ \\
   \stepMp\steplabelMp{PF05}\>\>\>
   Choose $y\in U$\\
  \stepMp\steplabelMp{PF06}\>\>\>
   If $(f+h)(a*y)=1$ then $a\gets a*y$\\
      \stepMp\steplabelMp{PF07x}\>\>
   If $wt(a)>\log(s/\epsilon)+3$ then Goto~\ref{PF07b}\\
    \stepMp\steplabelMp{PF07y}\>\>
   $w\gets 0$.\\
   \stepMp\steplabelMp{PF07a}\>\>
   $a\gets$ Find a monotone term in $(f+h)(a*x)$\\
     \stepMp\steplabelMp{PF07}\>\>
   $h\gets h+ \prod_{a_i=1}x_i$\\
   \stepMp\steplabelMp{PF07b}\>\>
   $t(h)\gets 0$.\\
   \stepMp\steplabelMp{PF08}\>
   If $t(h)=(1/\epsilon)\ln(3s/\delta)$ then Output $h$
  \end{tabbing}
  \end{minipage}}}
  \end{center}
	\caption{A learning algorithm for $s$-sparse polynomial under the uniform distribution}
	\label{pFL}
	\end{figure}

We have shown in Lemma~\ref{DNF1} and \ref{DNF2} that the problem of testing $s$-Term DNF can be reduced to the problem of testing $s$-Term DNF with $k=O(s\log(s/\epsilon))$ relevant variables. The same reduction and analysis show that the problem of testing $s$-sparse polynomials can be reduced to the problem of testing $s$-sparse polynomials with $k=O(s\log(s/\epsilon))$ relevant variables. The reduction makes $\tilde O(s/\epsilon)$ queries. Thus

\begin{lemma} Lemma~\ref{DNF1} and \ref{DNF2} are also true for $s$-Sparse Polynomials.
\end{lemma}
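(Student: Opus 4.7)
My plan is to run essentially the same argument as in the proofs of Lemmas~\ref{DNF1} and~\ref{DNF2}, replacing ``$\vee$'' with ``$+$'' (sum over $\FF_2$) and verifying that the two structural properties we exploited for DNF continue to hold. Concretely, let $f = T_1 + T_2 + \cdots + T_{s'}$ with $s'\le s$ be an $s$-sparse polynomial, where each $T_i$ is a monotone term. I partition the set of terms ${\cal T} = \{T_1,\ldots,T_{s'}\}$ into ${\cal T}_1,{\cal T}_2,{\cal T}_3,{\cal T}_4$ and define $R_1,R_2,R_3$ exactly as in the proof of Lemma~\ref{DNF1} (based on the size of $|\Va(T)|$, $|\Va(T)\backslash R_1|$, and $|\Va(T)\backslash R_2|$ versus $m=c\log(s/\epsilon)$). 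The partition of $[n]$ into $r=8ms$ sets is unchanged, and the union bound argument shows that, with probability at least $99/100$, every term $T\in{\cal T}_4$ has at least $m/2$ variables outside the blocks $X_{j_1},\ldots,X_{j_q}$ that cover $R_2$.

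The first place polynomials behave differently from DNFs is the event ``a large term affects the output on a random query.'' For a monotone term $T$ of size $m'\ge m$ over variables $a_1,\ldots,a_{m'}$ and random uniform $u,v$, we have $T(w)=0$ whenever $w_{a_j}\in\{u_{a_j},v_{a_j}\}$ for each $j$ and there exists $j$ with $u_{a_j}=v_{a_j}=0$. The probability that no such $j$ exists is $(3/4)^{m'}\le (3/4)^m$. Taking a union bound over $T\in{\cal T}_3$ and over all $u,v$ sampled (and all intermediate strings of the binary search, which agree with $u$ or $v$ coordinate-wise), all terms in ${\cal T}_3$ evaluate to $0$ on every string touched by {\bf Approx$C$} except with probability at most $1/100$. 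Under this event, $f$ agrees with $f':=\sum_{T\in{\cal T}\backslash{\cal T}_3}T$ on every query, so the procedure behaves as if the target were $f'$, which has at most $|R_2|\le 2ms<k$ relevant variables; hence it does not reject at step~\ref{Rejf} and the $\le 2ms$ relevant sets it discovers are all among $X_{j_1},\ldots,X_{j_q}$. Here the crucial point is that ``$T\in{\cal T}_3$ evaluates to $0$ on every queried string'' implies ``$f$ and $f'$ agree on every queried string'' for sums over $\FF_2$ for exactly the same reason it does for OR: zero contributes nothing.

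Next, for the final random uniform $w$ we show that every $T\in {\cal T}_4$ vanishes in $f(x_X\circ w_{\overline{X}})$. Fixing such a $T$ with $\ge m/2$ variables in $\overline{X}$, the probability $w_i=1$ for all those coordinates is $\le 2^{-m/2}$. A union bound over $T\in{\cal T}_4$ gives that, with probability at least $99/100$, all such terms vanish as functions of $x_X$. The surviving terms use only variables in $R_3$ (of size $\le 3ms$), so $h:=f(x_X\circ w_{\overline{X}})$ is an $s$-sparse polynomial whose relevant variable set has size at most $3cs\log(s/\epsilon)$. The query-count analysis is identical to Lemma~\ref{DNF1}, giving $\tilde O(s/\epsilon)$.

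For the analog of Lemma~\ref{DNF2}, nothing class-specific is used: the argument only relies on $f$ being $\epsilon$-far from every function in the class with respect to the uniform distribution, on the Markov/Chernoff bounds applied to $\Pr_{x,y\in U}[f(x_X\circ y_{\overline{X}})\neq f(x)]$, and on the trivial fact that, if $f$ is $\epsilon$-far from every $s$-sparse polynomial and $h=f(x_X\circ w_{\overline{X}})$ is $(\epsilon/\lambda)$-close to $f$ in the uniform distribution, then $h$ is $(1-1/\lambda)\epsilon$-far from every $s$-sparse polynomial. The only subtlety is step~\ref{Rejf}: when $f$ is $\epsilon$-far from $s$-sparse polynomials the procedure might discover more than $k$ relevant sets, in which case it rejects, and this is a valid output. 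The main obstacle of the proof is cosmetic rather than conceptual: one must carefully verify that all the union-bound constants chosen for the DNF case (the $99/100$'s and the constants hidden in $c$) still add up to the claimed overall failure probability, and that the class of $s$-sparse polynomials is closed under restriction so that both $f'$ and $h$ remain $s$-sparse polynomials. Both are immediate.
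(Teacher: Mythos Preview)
Your proposal is correct and takes essentially the same approach as the paper, which in fact does not spell out a proof at all and simply asserts that ``the same reduction and analysis'' works. Your write-up makes explicit the two points that matter for the transfer from DNF to $\FF_2$-polynomials: (i) a term that evaluates to $0$ on a string contributes nothing to the sum, so the ``run as if $f=f'$'' step goes through verbatim, and (ii) the class is closed under $0/1$ restrictions, so $h=f(x_X\circ w_{\overline X})$ remains an $s$-sparse polynomial; the rest (the $\mathcal T_1,\ldots,\mathcal T_4$ decomposition, Hoeffding for $\mathcal T_4$, the $(3/4)^m$ bound for monotone terms via $u_{a_j}=v_{a_j}=0$, and the class-independent Markov argument for Lemma~\ref{DNF2}) is identical.
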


We can then use {\bf Tester$C$} for the latter problem.
Therefore, all we need to do in this section is to find a learning algorithm for the class of $s$-sparse polynomials with $k=O(s\log(s/\epsilon))$ relevant variables.

Consider the algorithm {\bf LearnPolyUnif} in Figure~\ref{pFL}. We prove

\begin{lemma}\label{LearnPolyUnif} Let $f$ be a $s$-sparse polynomial with $n$ variables. For constant $\delta$, algorithm {\bf LearnPolyUnif} asks $\tilde O(s/\epsilon)$ ExQ$_U$ and $\tilde O((s^2/\epsilon)\log n)$ MQ and, with probability at least $1-\delta$, learns an $s$-sparse polynomial $h$ that satisfies $\Pr_U[h\not =f]\le \epsilon$.
\end{lemma}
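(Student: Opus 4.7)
The plan is to reduce this to the bounded-degree case already handled by Lemma~\ref{LearnPolynomial}. Set $d=\log(s/\epsilon)+3$ and let $\hat f$ denote $f$ with every term of size $>d$ deleted. Since $f$ has at most $s$ terms, a union bound gives $\Pr_U[f\neq \hat f]\le s\cdot 2^{-d}\le \epsilon/8$, so it suffices to show that {\bf LearnPolyUnif}, with probability $\ge 1-\delta$, outputs $h$ with $\Pr_U[\hat f\neq h]\le 7\epsilon/8$. Two structural invariants are easy to check: (i) every monomial added to $h$ at step~\ref{PF07} is a genuine term of $f$ (it is a monotone term of the polynomial identity $f+h$, and since $h$ is always a sub-polynomial of $f$ the added term must lie in $F\setminus H$), so $h$ has at most $s$ terms; and (ii) the query budget is respected, because the outer loop is executed $\tilde O(s/\epsilon)$ times, the walk is triggered at most $s\log(3s/\delta)+1=\tilde O(s)$ times, and each walk costs $\alpha=\tilde O((s/\epsilon)\log n)$ membership queries, yielding $\tilde O(s/\epsilon)$ ExQ$_U$ and $\tilde O((s^2/\epsilon)\log n)$ MQ in total.

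The correctness argument is conditional on $\Pr_U[f\neq h]>\epsilon$ at the current hypothesis. First I would show that in this case a uniform sample $a$ satisfies some small term of $f+h$ (i.e.\ a term of size $\le d$ in $F\setminus H$) with probability at least $7\epsilon/8$: indeed, $\Pr_U[(f+h)(a)=1, \text{ no small term satisfied}]$ requires an odd number of large terms to be satisfied, which by the union bound above is $\le\epsilon/8$. Conditional on a small term $M$ being satisfied at $a$, the random walk in steps~\ref{PF05}-\ref{PF06} reduces $wt(a)$ below $d+1$ with high probability: adapting the Lemma~\ref{LearnPolynomial} analysis with the effective degree $d$, the random variable $W_i=wt(a^{(i)})-(d+1)$ satisfies $\E[W_{i+1}\mid W_i]\le W_i(1-2^{-d-2})=W_i(1-\Omega(\epsilon/s))$ while $M$ is tracked, so Markov's inequality with $\alpha=16(8s/\epsilon)(2\ln(s/\delta)+\ln n)$ steps yields $\Pr[W_\alpha\ge 1]\le n\cdot e^{-\alpha\cdot\Omega(\epsilon/s)}\le \delta/(6s)$. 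Once $wt(a)\le d+1$ the truth-table search in step~\ref{PF07a} finds a monomial of $(f+h)(a*x)$ using $O(2^d)=O(s/\epsilon)$ queries, and this monomial is a new term of $f$, giving a single-walk success probability of at least $(7/8)(1-\delta/(6s))\ge 3/4$.

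The two termination tests can now be analyzed separately. If $\Pr_U[f\neq h]>\epsilon$ at the $w$-termination, the probability of $\log(3s/\delta)$ consecutive walk triggerings all failing is at most $(1/4)^{\log(3s/\delta)}\le \delta/(3s)$. If $\Pr_U[f\neq h]>\epsilon$ at the $t(h)$-termination, the probability of $(1/\epsilon)\ln(3s/\delta)$ consecutive samples with $(f+h)(a)=0$ is $(1-\epsilon)^{(1/\epsilon)\ln(3s/\delta)}\le \delta/(3s)$. A union bound over the at most $s+1$ possible termination events gives total failure probability $\le \delta$, which together with $\Pr_U[f\neq\hat f]\le\epsilon/8$ yields the claimed $\Pr_U[h\neq f]\le\epsilon$.

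The main obstacle will be justifying the walk analysis cleanly when $f+h$ has large terms. During the walk a step $y$ could keep $(f+h)(a*y)=1$ by satisfying some large term even after the tracked small term $M$ is violated, so the walk may drift into states from which no small term is satisfied. To handle this I would couple the walk on $f+h$ with a "clean" walk that ignores large terms: since for uniform $y$ we have $\Pr[\text{some large term satisfied at }a*y]\le\epsilon/8$ only when $a*y$ itself is close to uniform (which is the early phase of the walk when $wt(a)$ is still $\Omega(n)$), and since once $wt(a)$ is small the remaining terms of $(f+h)|_a$ are automatically of degree $\le wt(a)$, the two walks agree on a set of steps of density $\ge 1-\epsilon/8$. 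Combining this coupling with the Chernoff bound on the number of successful "tracking" steps in $\alpha$ trials suffices to preserve the $\Omega(\epsilon/s)$ per-step progress rate, completing the argument.
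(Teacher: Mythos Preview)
Your overall architecture (split $f=f_1+f_2$ by term size, show the two termination tests are safe, count queries) matches the paper, and invariants (i) and (ii) are fine. The gap is in the walk analysis, and it is exactly where you flag it as ``the main obstacle.''

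First, tracking a single small monomial $M$ does not give the per-step progress you claim. Over $F_2$, $M(a*y)=1$ does \emph{not} imply $(f+h)(a*y)=1$: an even number of satisfied terms cancels. So $\Pr[M(a*y)=1]\ge 2^{-d}$ is not a lower bound on the step-acceptance probability, and the recursion $\E[W_{i+1}\mid W_i]\le W_i(1-2^{-d-2})$ is unjustified. The analysis in Lemma~\ref{LearnPolynomial} never tracks a term; it uses that $(f+h)(a*x)$ is a nonzero polynomial of degree at most $d$, whence $\Pr_y[(f+h)(a*y)=1]\ge 2^{-d}$. That degree bound is precisely what breaks once large terms of $f$ are present at $a$.

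Second, the coupling sketch does not repair this: the bound $\Pr[\text{some large term satisfied at }a*y]\le\epsilon/8$ is a statement about \emph{uniform} inputs, not about $a*y$ for a fixed $a$ that may already satisfy large terms, and the ``early phase / late phase'' dichotomy leaves the intermediate regime unhandled.

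The paper's fix is a one-line observation that makes the coupling unnecessary. Condition not on ``some small term is satisfied at $a$'' but on the event $A(a)=\{T(a)=0\text{ for every large term }T\}$; together with $(f+h)(a)=1$ this forces $(f_1+h)(a)=1$, and by the same union bound $\Pr[A(a)\mid (f+h)(a)=1]\ge 1-(\epsilon/8)/\epsilon=7/8$. Now use that every term is \emph{monotone}: $T(a)=0$ implies $T(a*y)=0$ for all $y$. Hence under $A(a)$ no large term is ever satisfied at any point of the walk, so throughout the walk $(f+h)(a^{(i)}*x)=(f_1+h)(a^{(i)}*x)$ has degree at most $d$, and the Lemma~\ref{LearnPolynomial} analysis applies verbatim. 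The three cases (event $A$ holds; $A$ fails and $\Pr_U[f\neq h]\ge\epsilon$; $A$ fails and $\Pr_U[f\neq h]<\epsilon$) then drive the $w$-counter argument exactly as you outlined, with Case~2 having probability at most $1/8$ per triggering.
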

\begin{proof} Let $f=M_1+M_2+\cdots+M_{s'}$, $s'\le s$, where $deg(M_1)\le \cdots\le \deg(M_{s''})\le \log(s/\epsilon)+3 < deg(M_{s''+1})\le \cdots \le deg(M_{s'})$. Let $f_1=M_1+\cdots+M_{s''}$ and $f_2=M_{s''+1}+\cdots+M_{s'}$. Then $f=f_1+f_2$ and $\Pr_U[f_2=1]\le s2^{-\log(s/\epsilon)-3}=\epsilon/8$. If $f(a)\not=f_1(a)$ then $f_2(a)=1$ and therefore $\Pr_U[f\not=f_1]\le \Pr_U[f_2=1]\le \epsilon/8$ and $\Pr_U[f=f_1]\ge 1-\epsilon/8$. In fact, if $A(x)$ is the event that $T_i(x)=0$ for all $i>s''$ then $\Pr_U[A]\ge 1-\epsilon/8$.

The algorithm {\bf LearnPolyUnif} is similar to the algorithm {\bf LearnPolynomial} with $d=\log(s/\epsilon)+3$ with the changes that is described below.

First notice that in the algorithm the hypothesis $h$ contains only terms from $f_1$, that is, terms of size at most $\log(s/\epsilon)+3$. This is because in step~\ref{PF07x} the algorithm skips the command that adds a term to $h$ when $wt(a)\ge \log(s/\epsilon)+3$. Suppose at some stage of the algorithm, $h=\sum_{i\in B\subseteq [s'']}M_i$ contains some terms of $f_1$ and let $g=f+h=\sum_{i\in [s']\backslash B}M_i$. Suppose $\Pr_U[f\not=h]\ge \epsilon$. Then $\Pr_U[g=1]\ge \epsilon$. We want to compute the probability that the algorithm finds a term in $f_1+h=\sum_{i\in [s'']\backslash B}M_i$ and not in $f_2$. That is, the probability that it finds a term of $f$ of degree at most $\log(s/\epsilon)+3$. We first have
\begin{eqnarray*}
\Pr_{a\in U}[(f_1+h)(a)=1\wedge A |(f+h)(a)=1]&=&\Pr_{a\in U}[g(a)=1\wedge A |g(a)=1]\\
&=&\frac{\Pr_{a\in U}[g(a)=1\wedge A]}{\Pr_{a\in U}[g(a)=1]}\\
&\ge& \frac{\Pr_{a\in U}[g(a)=1]-\Pr_{a\in U}[\overline{A}]}{\Pr_{a\in U}[g(a)=1]}\\
&\ge& 1-\frac{\epsilon/8}{\epsilon}\ge \frac{7}{8}.
\end{eqnarray*}
That is, if $\Pr_U[f\not=h]\ge \epsilon$ and $(f+h)(a)=1$ then with probability at least $7/8$, $(f_1+h)(a)=1$ and for every term $T$ in $f_2$, $T(a)=0$. If  the event $A(a)$ happens then for any $y$ and for every term $T$ in $f_2$, $T(a*y)=0$, and therefore, for such $a$, the algorithm runs as if the target is $f_1$.

When the algorithm reaches step~\ref{pF03} and finds a string $a\in\{0,1\}^n$ such that $(f+h)(a)=1$, we have three cases:
\begin{enumerate}
\item The event $B\equiv [\ ((f_1+h)(a)=1$ and $A(a)]$ happens.
\item $\Pr_U[f\not=h]\ge \epsilon$ and $\overline{B}$.
\item $\Pr_U[f\not=h]< \epsilon$ and $\overline{B}$.
 \end{enumerate}
 {\bf Case 1.} Notice that steps~\ref{pF04b}-\ref{PF06} are identical to steps~\ref{nPF04b}-\ref{nPF06} in {\bf LearnPolynomial} in Figure~\ref{PFL} with $d=\log(s/\epsilon)+3$. Therefore, as in the proof of Lemma~\ref{LearnPolynomial}, with probability at least $1-\delta/3$ every assignment $a$ that satisfies $B$ gives a term in $f_1$ that is not in $h$.

\noindent
 {\bf Case 2.} This case can happen with probability at most $1/8$. So the probability that it happens $\log(3s/\delta)$ consecutive times is at most $\delta/(3s)$. The probability that it does happen for some of the at most $s$ different hypothesis $h$ generated in the algorithm is at most $\delta/3$. Notice that $w$ counts the number of consecutive times that this case happens and step~\ref{pF04x} outputs $h$ when it does happen $\log(3s/\delta)$ consecutive times. Therefore, with probability at most $\delta/3$ the algorithm halts in step~\ref{pF04x} and output $h$ that satisfies $\Pr[f\not=h]\ge \epsilon$.
 This is also the reason that the algorithm repeats the search for $a$ in step~\ref{pF01}, $(s/\epsilon)\ln(3s/\delta)\log(3s/\delta)$ times which is $\log(3s/\delta)$ times more than in algorithm {\bf LearnPolynomial}.

 When this case happens, the algorithm either ends up with a string $a$ of weight that is greater than $\ell:=\log(s/\epsilon)+3$ and then it ignores this string, or, it ends up with a string of weight less than or equal to $\ell$ and then,  Step~\ref{PF07a} finds a new term in $f_1$. This is because that a string of weight less than or equal to $\ell$ cannot satisfy a term of degree more than $\ell$.

\noindent
{\bf Case 3.} This case cannot happen more than $\log(3s/\delta)$ consecutive times because if it does step~\ref{pF04x} outputs $h$ which is a good hypothesis.
\end{proof}

\section{A General Method for Other Testers}\label{result3}
In this section, we generalize the method we have used in the previous section and then prove some more results

\newcounter{ALCG}
\setcounter{ALCG}{0}
\newcommand{\stepG}{\stepcounter{ALCG}$\arabic{ALCG}.\ $\>}
\newcommand{\steplabelG}[1]{\addtocounter{ALCG}{-1}\refstepcounter{ALCG}\label{#1}}
\begin{figure}[h!]
  \begin{center}
  \fbox{\fbox{\begin{minipage}{28em}
  \begin{tabbing}
  xxx\=xxxx\=xxxx\=xxxx\=xxxx\=xxxx\= \kill
  {\bf Algorithm ApproxGeneral$C$$(f,\epsilon,\delta)$}\\
 {\it Input}: Oracle that accesses a Boolean function $f$\\
  {\it Output}: Either ``$X\subseteq [n],w\in\{0,1\}^n$'' or ``reject''\\ \\
{\bf Partition $[n]$ into $r$ sets}\\
\stepG\steplabelG{par1G}
Set $r=k^{c+1}$.\\
\stepG\steplabelG{par2G}
Choose uniformly at random a partition $X_1,X_2,\ldots,X_r$ of $[n]$\\
\\
{\bf Find a close function and relevant sets} \\
\stepG\steplabelG{SettG}
Set $X=\emptyset$; $I=\emptyset$; $t(X)=0$\\
\stepG\steplabelG{twoG}
Repeat $M=(4c_1k/(\delta\epsilon))\ln(4k/\delta)$ times\\
\stepG\steplabelG{ChoG}
\> Choose $u,v\in U$. \\
\stepG  \> $t(X)\gets t(X)+1$\\
\stepG\steplabelG{con1G}
\> If $f(u_X\circ v_{\overline{X}})\not=f(u)$ then\\
\stepG\steplabelG{FindG}
\>\>\> Find a new relevant set $X_\ell$; $X\gets X\cup X_\ell$; $I\gets I\cup \{\ell\}$.\\
\stepG\steplabelG{RejG}
\>\>\> If $|I|>k$ then Output ``reject''\\
\stepG\steplabelG{tx0G}
\>\>\> $t(X)=0$.\\
\stepG\steplabelG{EndRepG}
\>  If $t(X)=(4c_1/(\delta\epsilon))\ln(4k/\delta)$ then \\
\stepG\>\>\> Choose a random uniform $w$; \\
\stepG\>\>\> Output$(X,w)$.
  \end{tabbing}
  \end{minipage}}}
  \end{center}
	\caption{An algorithm that removes variables from $f$ that have a small influence.}
	\label{A3G}
	\end{figure}

We define the distribution $\D[p]$ to be over $\prod_{i=1}^n\{0,1,x_i\}$ where each coordinate $i$ is chosen to be $x_i$ with probability $p$, $0$ with probability $(1-p)/2$ and $1$ with probability $(1-p)/2$. We will denote by $|f|$ the {\it size} of $f$ in $C$ which is the length of the representation of the function $f$ in $C$.

Consider the algorithm {\bf ApproxGeneral$C$} in Figure~\ref{A3G}. We start with the following result
\begin{lemma}\label{General} Let $\delta<1/2$, $c_1\ge 1,\lambda>1$ and $c\ge 1$ be any constants, $k:=k(\epsilon,\delta,|f|)$ be an integer and $M=(4c_1k/(\delta\epsilon))\ln(4k/\delta)$. Let $C$ be a class of functions where for every $f\in C$ there is $h\in (C\cap k$-Junta$)$ with relevant variables $x(Y)=\{x_i|x\in Y\}$, $Y\subseteq [n]$, and $h'\in (C\cap (\lambda k)$-Junta$)$ that satisfy the following:
\begin{enumerate}
\item\label{ConG1} $\Pr_{z\in \D[1/2]}[f(z)\not=h(z)]\le \delta/(4M)$.
\item\label{ConG2} $\Pr_{y\in \D[1/k^c]}[f(x_Y\circ y_{\overline{Y}})\not= h'(y)]\le \delta/4$.
\end{enumerate}
The algorithm {\bf ApproxGeneral$C$} makes $\tilde O(k/\epsilon)$ queries and,
\begin{enumerate}
\item If $f\in C$ then, with probability at least $1-\delta$, the algorithm does not reject and outputs $X$ and $w$ such that $f(x_X\circ w_{\overline{X}})\in C$ has at most $\lambda k$ relevant variables.
\item For any $f$, if the algorithm does not reject then, with probability at least $1-\delta$, $\Pr[f(x)\not =f(x_X\circ w_{\overline{X}})]\le \epsilon/c_1$.
\end{enumerate}
\end{lemma}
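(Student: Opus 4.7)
My plan is to follow the two-part structure of Lemmas~\ref{DNF1} and~\ref{DNF2}, with hypothesis~\ref{ConG1} playing the role of the ``small-term'' approximation used there and hypothesis~\ref{ConG2} controlling the residual function after the restriction. The query bound is routine: the outer loop runs at most $M=\tilde O(k/\epsilon)$ times with two queries per iteration, and the binary search is invoked at most $k+1$ times at $O(\log r)=O(\log k)$ queries each (Lemma~\ref{BiSe}), for a total of $\tilde O(k/\epsilon)$.

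For Part~1, fix $f\in C$ and let $h,Y,h'$ be as in the hypothesis. Every query the algorithm makes to $f$---both in step~\ref{con1G} and inside each binary search invoked in step~\ref{FindG}---is evaluated at a point of the form $u_S\circ v_T$ for fresh uniform $u,v\in U$ and history-dependent $S,T\subseteq[n]$; each such query point is therefore marginally uniform on $\{0,1\}^n$. Coupling $z\in\D[1/2]$ with an independent uniform fill-in of its ``variable'' coordinates produces a uniform $x\in\{0,1\}^n$, and since $f(z)\equiv h(z)$ as functions implies $f(x)=h(x)$ pointwise, condition~\ref{ConG1} gives $\Pr_{x\in U}[f(x)\neq h(x)]\le\delta/(4M)$. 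A union bound over the $O(M)$ query points shows that with probability at least $1-\delta/3$ the algorithm proceeds exactly as if the target were the $k$-junta $h$; in that case each $X_\ell$ added to $I$ contains a distinct relevant variable of $h$, so $|I|\le k$ and no rejection occurs. Lemma~\ref{dist} applied to $r=k^{c+1}$ further guarantees, with probability at least $1-\delta/6$, that the variables of $Y$ are separated by the partition, and hence $Y\subseteq X$ at termination.

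The most delicate step is bounding the number of relevant variables of the output, $f(x_X\circ w_{\overline X})$, by $\lambda k$ with probability at least $1-\delta/3$. My plan is to couple the algorithm's effective restriction of $\overline Y$---which keeps $x_i$ for $i\in X\setminus Y$ and substitutes $w_i$ for $i\in\overline X$---with a draw $y\in\D[1/k^c]^{\overline Y}$, using the fact that by symmetry of the random partition any fixed $i\in\overline Y$ lies in $X\setminus Y$ with probability at most $|I|/r\le k/r=1/k^c$, matching the marginal of $\D[1/k^c]$, and that $w$ supplies independent uniform constants on $\overline X$. The crux is upgrading this coordinatewise marginal matching to a bound on the joint ``bad event'' in condition~\ref{ConG2}; once this is done, the complement event forces $f(x_X\circ w_{\overline X})$ to coincide with a restriction of the $(\lambda k)$-junta $h'$ and hence to have at most $\lambda k$ relevant variables, while closure of $C$ under zero- and one-projection places this function in $C$.

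For Part~2, suppose the algorithm outputs $(X,w)$; the halting rule means that for the final value of $X$ we observed $(4c_1/(\delta\epsilon))\ln(4k/\delta)$ consecutive fresh uniform pairs $(u,v)$ with $f(u_X\circ v_{\overline X})=f(u)$. If $\Pr_{u,v\in U}[f(u_X\circ v_{\overline X})\neq f(u)]>\delta\epsilon/(4c_1)$ then such a streak would occur with probability at most $(1-\delta\epsilon/(4c_1))^{(4c_1/(\delta\epsilon))\ln(4k/\delta)}\le\delta/(4k)$, so a union bound over the at most $k+1$ values of $X$ during the run gives, with probability at least $1-\delta/2$, $\E_w\Pr_u[f(u_X\circ w_{\overline X})\neq f(u)]=\Pr_{u,v}[f(u_X\circ v_{\overline X})\neq f(u)]\le\delta\epsilon/(4c_1)$. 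Markov's inequality then upgrades this to $\Pr_u[f(u_X\circ w_{\overline X})\neq f(u)]\le\epsilon/c_1$ with probability at least $1-\delta/2$ over $w$, yielding Part~2 with probability at least $1-\delta$.
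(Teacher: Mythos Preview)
Part~2 and the query count are fine and match the paper. The genuine gap is in Part~1. Your claim ``hence $Y\subseteq X$ at termination'' is a non-sequitur: Lemma~\ref{dist} only says the variables of $Y$ land in distinct blocks of the partition, not that the algorithm discovers all those blocks before the halting condition $t(X)=(4c_1/(\delta\epsilon))\ln(4k/\delta)$ fires. If some $x_j\in x(Y)$ has tiny influence in $h$, the algorithm may well see a long quiet streak and output an $X$ with $j\notin X$. Without $Y\subseteq X$, the output restriction $x_X\circ w_{\overline X}$ is not of the form $x_Y\circ y_{\overline Y}$, so condition~\ref{ConG2} cannot be invoked in the way you want. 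You also flag, correctly, that your ``coordinatewise marginal matching'' is incomplete; the real obstruction is that $X$ depends on the algorithm's random $(u,v)$-draws, so the events $\{i\in X\setminus Y\}_{i\in\overline Y}$ are neither independent nor $\D[\cdot]$-distributed.

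The paper avoids both problems by working not with $X$ but with the partition-determined set $X':=\bigcup\{X_i:X_i\cap Y\ne\emptyset\}$. Trivially $Y\subseteq X'$, and conditioned on where $Y$ lands, each $i\in\overline Y$ independently falls in $X'$ with probability exactly $k'/r\le 1/k^c$ (where $k'\le k$ is the number of blocks meeting $Y$); hence $(x_{X'}\circ w_{\overline{X'}})_{\overline Y}$ is distributed as $\D[k'/k^{c+1}]$, which is a further random restriction of $\D[1/k^c]$, so condition~\ref{ConG2} transfers and $f(x_{X'}\circ w_{\overline{X'}})=h'$ with probability at least $1-\delta/4$. Since the algorithm runs as if the target were $h$, every relevant set it finds meets $Y$, hence $X\subseteq X'$; therefore $f(x_X\circ w_{\overline X})$ is a restriction of $f(x_{X'}\circ w_{\overline{X'}})=h'$ and inherits the $\lambda k$ bound. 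A smaller point: your per-query union bound is loose; the paper instead observes that within one iteration \emph{all} queries (those in step~\ref{con1G} and every binary-search probe) are substitutions into the single $z=z(u,v)\in\D[1/2]$ given by $z_i=u_i$ when $u_i=v_i$ and $z_i=x_i$ otherwise, so the single event $f(z)\equiv h(z)$ covers the whole iteration and only $M$ events need to be union-bounded.
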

\begin{proof} Let $f\in C$. Let $h\in C\cap k-$Junta and $h'\in (C\cap (\lambda k)$-Junta$)$ be functions that satisfies \ref{ConG1} and~\ref{ConG2}. The algorithm in step~\ref{ChoG} chooses two random uniform strings $u$ and $v$. Define $z=(z_1,\ldots,z_n)$ such that $z_i=0$ if $u_i=v_i=0$, $z_i=1$ if $u_i=v_i=1$, and $z_i=x_i$ if $u_i\not=v_i$. Since $u$ and $v$ are chosen uniformly at random we have that $z\in \D[1/2]$ and therefore, with probability at least $1-\delta/(4M)$, $f(z)=h(z)$. If $f(z)=h(z)$ then $f(u)=h(u)$ and $f(u_X\circ v_{\overline{X}})=h(u_X\circ v_{\overline{X}})$ for any $X\subseteq [n]$. In Step~\ref{FindG} {\bf ApproxGeneral} does a binary search to find a new relevant set. In the binary search it queries strings $a$ that satisfy $a_i=z_i=u_i=v_i$ for all $i$ that satisfies $z_i\in\{0,1\}$. Therefore, $f(a)=h(a)$ for all the strings $a$ generated in the binary search for finding a relevant set. Therefore, with probability at least $1-\delta/(4M)$, $f(a)=h(a)$ for all the queries used in one iteration and, with probability at least $1-\delta/4$, $f(a)=h(a)$ for all the queries used in the algorithm. That is, with probability at least $1-\delta/4$, the algorithm runs as if the target is $h$.

Therefore, if $f\in C$, then with probability at least $1-\delta/4$, each one of the relevant sets discovered in the algorithm contains at least one relevant variable of $h$. Then since $h\in k$-Junta, the algorithm does not reject in Step~\ref{RejG}, that is, $|I|\le k$.

Now we show that, if $f\in C$ then with probability at least $1-\delta/4$, $f(x_X\circ w_{\overline{X}})$ contains at most $\lambda k$ relevant variables. Consider the partition in steps~\ref{par1G}-\ref{par2G} in the algorithm and let $X_{i_1},\ldots,X_{i_{k'}}$, $k'\le k$, be the sets where the indices of the relevant variables $x(Y)$ of $h$ are distributed. Let $X'=X_{i_1}\cup \cdots \cup X_{i_{k'}}$. Notice that for a random uniform $w\in\{0,1\}^n$, $(x_{X'}\circ w_{\overline{X'}})=(x_Y\circ y_{\overline{Y}})$ where $y\in \D[k'/k^{c+1}]$. That is, given that the relevant variables of $h$ are distributed to $k'$ different sets that their union is $X'$, the probability distributions of $(x_{X'}\circ w_{\overline{X'}})$ and of $(x_Y\circ y_{\overline{Y}})$ are identical. Choosing a string in $\D[k'/k^{c+1}]$ can be done by first choosing a string $b$ in $\D[1/k^c]$ and then substitute in each variable $x_i$, $i\not\in Y$, in $b$, $0$, $1$ or $x_i$ with probability $1/2-k'/(2k)$, $1/2-k'/(2k)$ or $k'/k$, respectively. Therefore, since $\Pr_{y\in \D[1/k^c]}[f(x_Y\circ y_{\overline{Y}})\not= h'(y)]\le \delta/4$, we have $\Pr_{y\in \D[k'/k^{c+1}]}[f(x_Y\circ y_{\overline{Y}})\not= h'(y)]\le \delta/4$ and, thus,
with probability at least $1-\delta/4$, we have that $f(x_{X'}\circ w_{\overline{X'}})=h'(x)$. In particular, with probability at least $1-\delta/4$, $f(x_{X'}\circ w_{\overline{X'}})$ has at most $\lambda k$ relevant variables. Since $X\subseteq X'$ we also have with the same probability $f(x_{X}\circ w_{\overline{X}})$ has at most $\lambda k$ relevant variables. This completes the proof of {\it 1.}

Now let $f$ be any boolean function. If the algorithm does not reject then $|I|\le k$. Since for the final $X$, $f(u_X\circ v_{\overline{X}})\not=f(u)$ for $(4c_1/(\delta\epsilon))\ln(4k/\delta)$ random uniform $u$ and $v$, we have that the probability that the algorithm fails to output $X$ that satisfies $\Pr_{x,y\in U}[f(x_X\circ y_{\overline{X}})\not=f(x)]\le \delta\epsilon/(4c_1)$ is at most
$$k\left(1-\frac{\delta\epsilon}{4c_1}\right)^{(4c_1/(\delta\epsilon))\ln(4k/\delta)}=\frac{\delta}{4}.$$ If $\Pr_{x,y\in U}[f(x_X\circ y_{\overline{X}})\not=f(x)]\le \delta\epsilon/(4c_1)$ then, by Markov's inequality, for a random uniform $w$, with probability at least $1-\delta/4$, $\Pr_{x\in U}[f(x_X\circ w_{\overline{X}})\not=f(x)]\le \epsilon/c_1$. This completes the proof of~{\it 2.}
\end{proof}

In the next subsections, we give some results that follow from Lemma~\ref{General}.

\subsection{Testing Decision List}

In \cite{DiakonikolasLMORSW07}, Diakonikolas et al. gave a polynomial time tester for Decision List that makes $\tilde O(1/\epsilon^2)$ queries. In this paper, we give a polynomial time tester that makes $\tilde O(1/\epsilon)$ queries.

We show
\begin{theorem}\label{kdl} For any $\epsilon>0$, there is a two-sided adaptive algorithm for $\epsilon$-testing Decision List that makes $\tilde O(1/\epsilon)$ queries.
\end{theorem}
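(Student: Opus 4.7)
The plan is to apply Lemma~\ref{General} to reduce testing Decision List to testing the subclass of decision lists on at most $\lambda k = O(\log(1/\epsilon))$ relevant variables (for $k = c_0 \log(1/\epsilon)$ with $c_0$ sufficiently large and $\lambda$ a constant), then apply Theorem~\ref{LtoTTriv2} using Rivest's polynomial-time consistency algorithm for decision lists.

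To invoke Lemma~\ref{General}, I verify both of its hypotheses for any decision list $f=((x_{i_1},\xi_1,a_1),\ldots,(x_{i_s},\xi_s,a_s))$. For condition~(\ref{ConG1}), I take $h$ to be the first-$k$ truncation of $f$ (with an arbitrary default), which is a decision list on at most $k$ variables. Under $z\in\D[1/2]$ each coordinate is $0,1,x_i$ with probabilities $1/4,1/4,1/2$, and if some $j\le k$ satisfies $z_{i_j}=\xi_j$ then on every completion of $z$ both $f$ and $h$ fire at the same first matching item, so $f(z)=h(z)$; this fails with probability at most $(3/4)^k\le \delta/(4M)$ for $c_0$ large enough. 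For condition~(\ref{ConG2}), let $Y$ be the variables appearing in the first $k$ items of $f$ (so $|Y|\le k$), and let $h'$ be the first-$\lambda k$ truncation of $f$. Under $y\in\D[1/k^c]$, for items of $f$ beyond position $k$ (whose indices lie outside $Y$ after passing to a reduced equivalent of $f$) the events ``fire'', ``skip'', ``stay live'' have probabilities $\approx 1/2,\approx 1/2,1/k^c$ respectively. The first firing position is geometric with mean $\approx 2$, and by a Chernoff bound the number of residual live items before firing is $O(\log(1/\delta))$ with probability $\ge 1-\delta/4$. Hence for most $y$ the substituted list $f(x_Y\circ y_{\overline Y})$ coincides with $h'(y)$ and depends on at most $\lambda k$ variables.

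Lemma~\ref{General} then yields $X,w$ such that $f(x_X\circ w_{\overline X})$ is a decision list on at most $\lambda k=O(\log(1/\epsilon))$ variables and is $(\epsilon/6)$-close to $f$, using $\tilde O(k/\epsilon)=\tilde O(1/\epsilon)$ queries. I then apply Theorem~\ref{LtoTTriv2} to test the residual subclass $C'$ of decision lists with at most $\lambda k$ relevant variables: Rivest's greedy algorithm (iteratively pick a literal whose positive-set is consistently labelled, emit that node, recurse) decides in polynomial time whether a given labelled sample is consistent with some decision list, supplying the required consistency oracle. Since $|C'(\Gamma)|\le (\lambda k)!\cdot 4^{\lambda k}\cdot 2=2^{O(k\log k)}$, Theorem~\ref{LtoTTriv2} yields query complexity $\tilde O((k+\log|C'(\Gamma)|)/\epsilon)=\tilde O((k\log k)/\epsilon)=\tilde O(1/\epsilon)$. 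The main technical obstacle will be the careful formalization of condition~(\ref{ConG2}): coupling the random restriction $f(x_Y\circ y_{\overline Y})$ with the single fixed junta $h'$ across typical $y$, and executing the geometric-termination and Chernoff estimates with respect to the exact $\D[1/k^c]$ distribution.
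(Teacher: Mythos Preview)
Your proposal is correct and follows essentially the same route as the paper: apply Lemma~\ref{General} with $k=\Theta(\log(1/\epsilon))$ and $h$ the length-$k$ prefix of $f$, then test the resulting class of decision lists on $O(\log(1/\epsilon))$ variables. The only differences are that the paper closes via PAC-learnability (Theorem~\ref{TesterL02} together with \cite{Rivest87,BlumerEHW87}) rather than your consistency-oracle route (Theorem~\ref{LtoTTriv2}), and verifies condition~(\ref{ConG2}) with the single bound $(3/4)^k$ rather than your geometric-plus-Chernoff sketch; your remark that pinning down a \emph{fixed} $h'$ is the delicate point is well taken, since the paper too effectively uses only the weaker conclusion that the restriction has at most $\lambda k$ relevant variables.
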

\begin{proof} Let $f=(x_{i_1},\xi_1,a_1),\ldots,(x_{i_s},\xi_s,a_s)$ be any decision list. We first use Lemma~\ref{General}.

Define $k=\min(s,c'\log(1/(\epsilon\delta)))$ for some large constant $c'$ and $h=(x_{i_1},\xi_1,a_1),\ldots,(x_{i_k},\xi_k,a_k)$. For the distribution $\D[1/2]$, the probability that $f(z)=h(z)$ is $1$ when $k=s$ and at least $1-(3/4)^k\ge 1-1/(3M)$ where $M=(4c_1k/(\delta\epsilon))\ln(4k/\delta)$. For the distribution $\D[1/k^2]$ and $Y=\{i_1,\ldots,i_k\}$, the probability that $f(x_Y\circ y_{\overline{Y}})$ has more than $2 k$ relevant variables is less than $(3/4)^{k}\le \delta/3$.

The query complexity of {\bf ApproxGeneral} is $\tilde O(k/\epsilon)=\tilde O(1/\epsilon)$. Therefore all we need to do to get the result is to give a tester for decision list of size $k=O(\log(1/\epsilon))$ that makes $\tilde O(1/\epsilon)$ queries. The learnability of this class with $\tilde O(1/\epsilon)$ ExQs follows from~\cite{Rivest87,BlumerEHW87}.
\end{proof}

\subsection{Testing $r$-DNF and $r$-Decision List for Constant $r$}
An $r$-decision list is a sequence $f=(T_{1},\xi_1,a_1),\ldots,(T_{s},\xi_s,a_s)$ for any $s$ where $\xi_i,a_i\in\{0,1\}$ and $T_i$ are $r$-terms. This sequence represents the following function: $f(x):=$``If $T_{1}=\xi_1$ then output$(a_1)$ else if $T_{2}=\xi_2$ then output$(a_2)$ else if $\cdots$ else if $T_{s}=\xi_s$ then output$(a_s)$''. The class $r$-Decision List is the class of all $r$-decision lists and the class of Length-$s$ $r$-Decision List is the class of all $r$-decision lists $f=(T_{1},\xi_1,a_1),\ldots,(T_{m},\xi_m,a_m)$ with $m\le s$.

In this subsection, we show
\begin{theorem} Let $r$ be any constant. For any $\epsilon>0$, there is a two-sided adaptive algorithm for $\epsilon$-testing $r$-Decision List and $r$-DNF that makes $\tilde O(1/\epsilon)$ queries.
\end{theorem}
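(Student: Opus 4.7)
The plan is to mirror the proof of Theorem~\ref{kdl}, applying Lemma~\ref{General} with $k = c'\, 4^r\log(1/(\epsilon\delta))$, which is $O(\log(1/\epsilon))$ for constant $r$. Both $r$-DNF and $r$-Decision List can be handled uniformly: given $f = T_1 \vee \cdots \vee T_s$ (for $r$-DNF) or $f = (T_1,\xi_1,a_1),\ldots,(T_s,\xi_s,a_s)$ (for $r$-DL), let $h$ be the truncation to the first $\min(s,k)$ terms/rules. I will verify the two conditions of Lemma~\ref{General}.

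For condition (1): under $\D[1/2]$ each literal of $T_j$ becomes a specific constant with probability $1/4$, so each term/rule becomes ``definitely satisfied'' (all $r$ literals match the required signs) with probability at least $4^{-r}$. Therefore the probability that none of the first $k$ terms/rules is definitely satisfied under $z$ is $(1-4^{-r})^k \le e^{-k/4^r}$, which is at most $\delta/(4M)$ for $M=\tilde O(k/\epsilon)$ and our choice of $k$. Whenever some $j \le k$ is definitely satisfied, $f(z)$ and $h(z)$ coincide as functions of $x$ (in the DL case both take the fixed action $a_j$; in the DNF case both output $1$).

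For condition (2): set $Y = \Va(T_1)\cup\cdots\cup\Va(T_k)$, so $|Y|\le rk$. Under $\D[1/k^c]$ with $c=2$, each literal on $\overline{Y}$ is substituted to a constant with probability $1-1/k^c$ and kept as a free variable with probability $1/k^c$. In the $r$-DL case, each rule $j>k$ whose term lies entirely on $\overline{Y}$ becomes ``always met'' (pruning the rest of the list) with probability $\Omega(2^{-r})$, so within the first $O(2^r k)$ such rules we hit an always-met rule except with probability $\delta/4$; each of those rules contributes at most $r$ potentially-new relevant variables, each surviving with probability $1/k^c$, giving by Chernoff at most $\lambda k$ new variables with probability $\ge 1-\delta/4$ for a suitable constant $\lambda$ (constant for constant $r$). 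For the $r$-DNF case I view $f$ as the decision list $(T_1,1,1),\ldots,(T_s,1,1)$ with default $0$ and argue similarly: either some $T_j$ becomes always-$1$ under substitution (in which case $f(x_Y\circ y_{\overline{Y}})\equiv 1$ has $0$ relevant variables and $h'=1\in C$), or the surviving sub-terms are few and short, contributing $O(k)$ new relevant variables.

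After Approx we are left with $f' = f(x_X\circ w_{\overline{X}})$, which lies in $r$-DL (resp.\ $r$-DNF) on $q=O(\log(1/\epsilon))$ relevant variables whenever $f$ does. The number of $r$-DLs (resp.\ $r$-DNFs) on $q$ variables is at most $2^{O(q^r)} = 2^{O(\log^r(1/\epsilon))}$, so $\log|C(\Gamma)| = O(\log^r(1/\epsilon))$, and Theorem~\ref{FirstTester} produces a tester using $\tilde O(\log^r(1/\epsilon)/\epsilon) = \tilde O(1/\epsilon)$ queries (for constant $r$). Added to the $\tilde O(k/\epsilon) = \tilde O(1/\epsilon)$ queries of Approx, the overall complexity is $\tilde O(1/\epsilon)$. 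The main obstacle I anticipate is making condition (2) of Lemma~\ref{General} airtight in the $r$-DNF case, since DNFs lack the natural stopping structure of decision lists; the resolution is to exploit the dichotomy ``some term becomes always-$1$'' (making the output constant) versus ``few terms survive substitution'' (bounding new relevant variables), which together yield the $\lambda k$-junta structure with high probability.
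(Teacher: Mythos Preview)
Your argument for condition~(1) of Lemma~\ref{General} has a genuine gap: the events ``term $T_j$ becomes definitely satisfied under $z\in\D[1/2]$'' are \emph{not} independent across $j$ when the terms share variables, so the bound $(1-4^{-r})^k$ is unjustified. A concrete counterexample for $r=2$: let $f=(x_1\wedge x_2)\vee(x_1\wedge x_3)\vee\cdots\vee(x_1\wedge x_{k+1})\vee(x_{k+2}\wedge x_{k+3})$ and let $h$ be your prefix of the first $k$ terms. With probability $1/4$ we have $z_1=0$, forcing $h(z)\equiv 0$; conditioned on this, with probability $1/16$ both $z_{k+2}=z_{k+3}=1$, so $f(z)\equiv 1$. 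Thus $\Pr_{z\in\D[1/2]}[f(z)\neq h(z)]\ge 1/64$ for every $k$, and your prefix truncation cannot satisfy condition~(1). The same dependence issue undermines your condition~(2) analysis (the events ``rule $j>k$ becomes always-met'' are again correlated through shared variables).

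The paper resolves this by \emph{not} taking a prefix. Instead it extracts a maximal subsequence $T_{s_1},\ldots,T_{s_w}$ of terms on pairwise disjoint variable sets (so independence genuinely holds for these), truncates $f$ at $T_{s_{w'}}$ with $w'=O(2^r\log(1/\epsilon))$, and then observes that every remaining term contains a variable from the small set $S=\Va(\{T_{s_1},\ldots,T_{s_{w'}}\})$. For each variable in $S$, the terms containing it form (after factoring that variable out) an $r$-decision list, which is handled by induction on $r$. This yields a \emph{sublist} $h$ (not a prefix) of length $k_r=O(\log^r(1/\epsilon))$, not $O(\log(1/\epsilon))$; the larger $k$ is still $\mathrm{poly}(\log(1/\epsilon))$ and the final query count remains $\tilde O(1/\epsilon)$, but the inductive construction is essential.
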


It is known that the class Length-$s$ $r$-Decision List is learnable under any distribution in time $O(n^r)$ using $O((sr\log n+\log (1/\delta))/\epsilon)$ ExQ$_D$,~\cite{BlumerEHW87,Rivest87}. Also we may assume that $T_1,\ldots ,T_s$ are distinct and therefore $s$ is less than $\sum_{i=1}^r{n\choose i}2^i$ , the number of terms of size at most $r$. Thus, for constant $r$, it is enough to prove that $r$-Decision List and $r$-DNF satisfies \ref{ConG1} and \ref{ConG2} in Lemma~\ref{General} with $k=poly(\log(1/\epsilon))$.

We now prove the result for $r$-Decision List when $r$ is constant. The same analysis shows that the result is also true for $r$-DNF.

Consider an $r$-decision list $f=(T_1,\xi_1,a_1)\cdots (T_s,\xi_s,a_s)$. If $\xi_i=0$ then we change $(T_i,0,a_i)$ to the equivalent expression $(x_{i_1}^{{1-c_{i_1}}},1,a_i)\cdots (x_{i_\ell}^{{1-c_{i_\ell}}},1,a_i)$ where $T_i=x_{i_1}^{c_1}\cdots x_{i_\ell}^{c_\ell}$. Therefore we may assume that $\xi_j=1$ for all $j$. In that case we just write $f=(T_1,a_1)\cdots (T_s,a_s)$.

For an $r$-decision list $f=(T_1,a_1)\cdots (T_s,a_s)$, a {\it sublist of $f$} is an $r$-decision list $g=(T_{i_1},a_{i_1})\cdots $ $(T_{i_\ell},a_{i_\ell})$ such that $1\le i_1<i_2<\cdots<i_\ell\le s$.

We first prove
\begin{lemma}
Let $r$ be a constant. For any $r$-decision list $f$ there is a $k_r:=O(\log^r(1/\epsilon))$-length $r$-decision list $h$ that is a sublist of $f$ and satisfies \ref{ConG1} and \ref{ConG2} in Lemma~\ref{General}.
\end{lemma}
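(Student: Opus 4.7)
The plan is to take $h$ to be the prefix consisting of the first $k_r$ items of $f$ (with $k_r=c_0\log^r(1/\epsilon)$ for a suitably large constant $c_0=c_0(r)$), let $Y$ be the (at most $rk_r$) variables occurring in those items, and set the ``$k$'' of Lemma~\ref{General} to $k=rk_r$. For $h'$ I will use the restriction itself, $h':=f(x_Y\circ y_{\overline Y})$, which is again an $r$-decision list and hence in $C$. As is standard, I may WLOG assume $\xi_j=1$ for every $j$ (otherwise negate all literals of $T_j$), so each item reads ``if $T_j=1$, output $a_j$; else continue''; note that this rewriting preserves $r$-decision list structure.

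For condition~\ref{ConG1}, under $z\in\D[1/2]$ each coordinate is independently $0$, $1$, or $x_i$ with probabilities $\tfrac14,\tfrac14,\tfrac12$. For an $r$-term $T_j$ the event ``$T_j(z)\equiv 1$ as a function of the remaining free variables'' happens precisely when every one of the $|T_j|\le r$ coordinates of $T_j$ is fixed by $z$ to its required value, which has probability at least $4^{-r}$. As soon as this occurs for some $j\le k_r$, both $f(z)$ and $h(z)$ evaluate to the constant $a_j$, so $f(z)=h(z)$. Hence
\[
\Pr_{z\in\D[1/2]}\bigl[f(z)\neq h(z)\bigr]\ \le\ \bigl(1-4^{-r}\bigr)^{k_r}\ \le\ \delta/(4M),
\]
once $c_0$ is taken large enough in terms of $r$ and the logarithmic overhead in $M=\tilde O(k/\epsilon)$.

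For condition~\ref{ConG2}, with $h'=f(x_Y\circ y_{\overline Y})$ the task reduces to arguing that, with probability at least $1-\delta/4$ over $y\in\D[1/k^c]$, the restriction has at most $\lambda k=\lambda rk_r$ relevant variables. Variables in $Y$ already contribute at most $rk_r$; the other possible relevant variables are indices $i\in\overline Y$ with $y_i=x_i$ (an event of probability $p:=1/k^c$) which appear in some tail term $T_j$ ($j>k_r$) whose restriction is neither constantly $0$ nor preempted by an earlier tail item that restricts to the constant $1$. Walking through the tail items, each independently (i) locks to the constant $1$ with probability $((1-p)/2)^{|T_j\cap\overline Y|}$ (terminating the effective list), (ii) locks to the constant $0$ (and is skipped), or (iii) ``survives'' and contributes at most $r$ new free coordinates. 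I will bound the count of case~(iii) by iterating the compound geometric-tail argument used for the $r=1$ case in the proof of Theorem~\ref{kdl}, once for each of the $r$ literal layers; the product of $r$ such geometric decays is precisely what forces $k_r$ to grow like the $r$-th power of $\log(1/\epsilon)$ to achieve the desired $\delta/4$ failure bound.

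The main obstacle is the second condition: for $r=1$ each tail item is essentially independent and a single geometric tail computation suffices, but for $r\ge 2$ the tail items share coordinates and the activation/reachability events become correlated, so one has to peel off one literal per layer and lose a $\log(1/\epsilon)$ factor at each peel, which is exactly the source of the $k_r=O(\log^r(1/\epsilon))$ bound claimed in the lemma. Condition~\ref{ConG1} is in contrast a direct $4^r$-loss over the $r=1$ case and is absorbed comfortably into the $\log^r$ length.
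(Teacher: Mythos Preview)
Your proof of condition~\ref{ConG1} has a genuine gap: the bound $(1-4^{-r})^{k_r}$ tacitly assumes the events ``$T_j(z)\equiv 1$'' are independent across $j\le k_r$, but for $r\ge 2$ the prefix terms may share variables and then no product bound is available. Concretely, for $r=2$ consider an $r$-decision list whose first $k_r$ terms are $T_j=x_1\wedge x_{j+1}$. A term $T_j(z)$ becomes identically~$1$ only if $z_1=1$, an event of probability $1/4$; hence $\Pr[\forall j\le k_r,\ T_j(z)\not\equiv 1]\ge 3/4$ no matter how large $k_r$ is, and with an adversarial tail $\Pr[f(z)\ne h(z)]$ stays bounded away from~$0$. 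So the raw prefix cannot serve as~$h$.

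This is exactly why the paper does not take $h$ to be a prefix but builds it by induction on~$r$. It first greedily extracts indices $s_1<\cdots<s_{w'}$ with $w'=O(\log(1/\epsilon))$ such that $T_{s_1},\ldots,T_{s_{w'}}$ use pairwise disjoint variables (so now the product bound \emph{is} legitimate), and lets $h_0$ be the prefix through index $s_{w'}$ (or $h_0=f$ if fewer than $w'$ such terms exist). Every term of $h_0$ then contains some variable from the $O(\log(1/\epsilon))$-size set $S=\Va(\{T_{s_i}\})$; for each $x\in S$, stripping $x$ from all terms of $h_0$ containing it yields an $(r{-}1)$-decision list, which by the inductive hypothesis can be thinned to a sublist of length $k_{r-1}$. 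Doing this for every variable in $S$ leaves a \emph{sublist} (not a prefix) of $f$ of length $|S|\cdot k_{r-1}=O(\log^r(1/\epsilon))$. The $\log^r$ scaling therefore comes from the recursion $k_r=O(\log(1/\epsilon))\cdot k_{r-1}$, not from a single $r$-layer peeling of a fixed prefix as you sketch. Your outline for condition~\ref{ConG2} inherits the same independence issue (you again write ``each independently'' for tail items that may overlap) and is not spelled out enough to stand on its own; the paper handles~\ref{ConG2} by the same inductive construction.
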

\begin{proof} We show that it satisfies \ref{ConG1} in Lemma~\ref{General}. The proof that it also satisfies~\ref{ConG2} is similar.

We give a stronger result as long as $r$ is constant. We prove by induction that for any $r$-decision list $f$ there is a $k_r=O(\log^{r}(1/\epsilon))$-length $r$-decision list $h$ that is a sublist of $f$ and satisfies $\Pr_{z\in \D[1/2]}[f(z)\not=h(z)]\le poly(\epsilon)$.

The proof is by induction on $r$. For $r=1$, the result follows from the proof of Theorem~\ref{kdl} in the previous subsection. Assume the result is true for $r$-decision list. We now show the result for $(r+1)$-decision list.

Let $c$ be a large constant. Let $f=(T_1,a_1)\cdots (T_s,a_s)$ be $(r+1)$-decision list. Let $s_1=1$ and $T_{s_1},\ldots,T_{s_w}$ be a sequence of terms such that $s_1<s_2<\cdots<s_w\le s$ and for every $i$, $s_i$ is the minimal integer that is greater than $s_{i-1}$ such that the variables in $T_{s_i}$ do not appear in any one of the terms $T_{s_1},T_{s_2},\ldots,T_{s_{i-1}}$. Define $h_0=(T_1,a_1)(T_2,a_2)\cdots (T_{s_{w'}},a_{s_{w'}})$ if $w':=c2^{r+1}\ln(1/\epsilon)\le w$ and $h_0=f$ if $w'>w$. Then
\begin{eqnarray*}
\Pr_{z\in \D[1/2]}[f(z)\not=h_0(z)]&\le&\Pr[T_{1}(z)=0\wedge T_{2}(z)=0\wedge \cdots\wedge T_{s_{w'}}(z)=0]\\\\
&\le&\Pr[T_{s_1}(z)=0\wedge T_{s_2}(z)=0\wedge \cdots\wedge T_{s_{w'}}(z)=0]\\
&\le& \left(1-\frac{1}{2^{r+1}}\right)^{w'}\le
poly(\epsilon).
\end{eqnarray*}
Let $S=\{x_{j_1},\ldots,x_{j_t}\}$ be the set of the variables in $T_{s_1},\ldots,T_{s_{w'}}$. Then $t\le (r+1)w'=c(r+1)2^{r+1}\ln(1/\epsilon)$ and every term $T_i$ in $h_0$ contains at least one variable in $S$. Consider all the terms that contains the variable $x_{j_1}$, $T_{i_1}=x_{j_1}T_{i_1}',T_{i_2}=x_{j_1}T_{i_2}',\ldots,T_{i_\ell}=x_{j_1}T_{i_\ell}'$, $i_1<i_2<\cdots<i_\ell$. Consider the $r$-decision list $g:=(T'_{i_1},a_1)(T'_{i_2},a_2)\cdots(T'_{i_\ell},a_\ell)$. By the induction hypothesis there is an $r$-decision list $g'$ that is a sublist of $g$ of length at most $k_{r}=O(\log^{r}(1/\epsilon))$ such that $\Pr_{z\in \D[1/2]}[g(z)\not=g'(z)]\le poly(\epsilon)$. Let $h_1$ be $h_0$ without all the terms $(T_{i_w},a_{i_w})$ that correspond to the terms $(T_{i_w}',a_{i_w})$ that do not occur in $g'$. It is easy to see that $\Pr_{z\in \D[1/2]}[h_0(z)\not=h_1(z)]\le poly(\epsilon)$. We do the same for all the other variables $x_{j_2},\ldots,x_{j_t}$ of $S$ and get a sequence of $r$-decision lists $h_2,h_3,\ldots,h_t$ that satisfies $\Pr_{z\in \D[1/2]}[h_w(z)\not=h_{w+1}(z)]\le poly(\epsilon)$. Therefore $\Pr_{z\in \D[1/2]}[f(z)\not=h_{t}(z)]\le t \cdot poly(\epsilon)=poly(\epsilon)$ and the length of $h_t$ is at most $tk_{r}=O(\log^{r+1}(1/\epsilon))=k_{r+1}$.
\end{proof}
\subsection*{Acknowledgements}

The author would like to express heartfelt gratitude to Oded Goldreich for his invaluable contributions to this paper. Goldreich's insightful discussions, thoughtful comments, and generosity in allowing the inclusion of his overview of the first tester (Section 2) have greatly improved the exposition and presentation of the technique employed in this paper.

\bibliography{TestingRef}

\section{Appendix A}

In this Appendix, we give some bounds used in the paper.

\begin{lemma}\label{Markov} {\bf Markov's Bound}.
Let $X\ge 0$ be a random variable with a finite expected value $\mu=\E[X]$. Then for any real numbers $\kappa,K > 0$,

\begin{eqnarray}
\Pr(X\ge \kappa)\leq {\frac {\E[X]}{\kappa}}.\label{Markov1}\\
\Pr(X\ge K\E[X])\leq {\frac {1}{K}}.\label{Markov2}
\end{eqnarray}
\end{lemma}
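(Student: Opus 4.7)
The plan is to prove the first inequality (\ref{Markov1}) by the standard indicator-function argument, and then obtain (\ref{Markov2}) as an immediate specialization. Since only the nonnegativity $X \ge 0$ and existence of $\E[X]$ are assumed, the argument is purely measure-theoretic and needs no structural hypotheses beyond these.

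For (\ref{Markov1}), I would introduce the indicator random variable $Y = \mathbb{1}[X \ge \kappa]$ and observe the pointwise inequality $X \ge \kappa \cdot Y$. This holds because on the event $\{X \ge \kappa\}$ we have $X \ge \kappa = \kappa \cdot Y$, while on the complement $Y = 0$ and $X \ge 0 = \kappa \cdot Y$ by the assumption that $X$ is nonnegative. Taking expectations and using linearity and monotonicity of expectation yields $\E[X] \ge \kappa \cdot \E[Y] = \kappa \cdot \Pr[X \ge \kappa]$. Dividing through by $\kappa > 0$ gives the claim.

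For (\ref{Markov2}), I would simply substitute $\kappa = K \E[X]$ into (\ref{Markov1}), yielding
\[
\Pr[X \ge K\E[X]] \le \frac{\E[X]}{K\E[X]} = \frac{1}{K},
\]
with the convention that if $\E[X] = 0$ then $X = 0$ almost surely and the bound is trivial.

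There is no real obstacle here; the only subtlety worth noting is the use of nonnegativity in the pointwise bound $X \ge \kappa \cdot \mathbb{1}[X \ge \kappa]$, which would fail if $X$ could take negative values on the complement event. Since the lemma explicitly assumes $X \ge 0$, this is immediate and the proof is a two-line calculation.
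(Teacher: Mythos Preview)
Your proof is correct and is the standard indicator-function argument for Markov's inequality. The paper itself states this lemma in Appendix~A without proof, treating it as a well-known bound, so there is no approach to compare against; your argument would serve perfectly well as the omitted proof.
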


\begin{lemma}\label{Chebyshev} {\bf Chebyshev's Bound}.
Let $X$ be a random variable with a finite expected value $\mu=\E[X]$ and finite non-zero variance $\Var[X]=\E[X^2]-\E[X]^2$. Then for any real numbers $\kappa,K > 0$,

\begin{eqnarray}
\Pr(|X-\mu |\geq \kappa\sqrt{\Var[X]} )\leq {\frac {1}{\kappa^{2}}}.\label{Chebyshev1}\\
\Pr(|X-\mu |\geq K)\leq {\frac {\Var[X]}{K^{2}}}.\label{Chebyshev2}
\end{eqnarray}

\end{lemma}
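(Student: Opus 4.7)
The plan is to derive both inequalities as consequences of Markov's bound (Lemma~\ref{Markov}) applied to the nonnegative random variable $Y:=(X-\mu)^{2}$. Since $Y\ge 0$ and $\E[Y]=\E[(X-\mu)^{2}]=\E[X^{2}]-\E[X]^{2}=\Var[X]<\infty$, Markov's bound is applicable to $Y$.

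First I would prove (\ref{Chebyshev2}). The key observation is the pointwise equivalence $\{|X-\mu|\ge K\}=\{Y\ge K^{2}\}$, which holds because squaring is a monotone bijection on $[0,\infty)$ and $|X-\mu|\ge 0$. Applying the first form of Markov's bound, namely (\ref{Markov1}), to $Y$ with threshold $\kappa=K^{2}>0$ gives
\[
\Pr(|X-\mu|\ge K)=\Pr(Y\ge K^{2})\le \frac{\E[Y]}{K^{2}}=\frac{\Var[X]}{K^{2}},
\]
which is exactly (\ref{Chebyshev2}).

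Next I would obtain (\ref{Chebyshev1}) as an immediate specialization: since $\Var[X]$ is finite and non-zero, the quantity $K:=\kappa\sqrt{\Var[X]}$ is a well-defined positive real number, so substituting this $K$ into (\ref{Chebyshev2}) yields
\[
\Pr\bigl(|X-\mu|\ge \kappa\sqrt{\Var[X]}\bigr)\le \frac{\Var[X]}{\kappa^{2}\Var[X]}=\frac{1}{\kappa^{2}}.
\]

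There is really no main obstacle here: the entire argument is the standard textbook reduction of Chebyshev to Markov via the squared-deviation trick, and each step is either a definition, a monotonicity observation, or a direct invocation of Lemma~\ref{Markov}. The only point worth noting is that the non-zero variance hypothesis is used solely to make the substitution $K=\kappa\sqrt{\Var[X]}$ well-defined; finiteness of $\Var[X]$ is what lets us invoke Markov on $Y$ in the first place.
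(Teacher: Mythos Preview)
Your proof is correct and is the standard textbook derivation of Chebyshev's inequality from Markov's inequality via the squared-deviation random variable. The paper itself does not supply a proof of this lemma; it is simply stated without proof in the appendix as a well-known bound, so there is nothing to compare against beyond noting that your argument is the canonical one.
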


\begin{lemma}\label{Chernoff}{\bf Chernoff's Bound}. Let $X_1,\ldots, X_m$ be independent random variables taking values in $\{0, 1\}$. Let $X=\sum_{i=1}^mX_i$ denotes their sum, and let $\mu = \E[X]$ denotes the sum's expected value. Then
\begin{eqnarray}\Pr[X>(1+\lambda)\mu]\le \left(\frac{e^{\lambda}}{(1+\lambda)^{(1+\lambda)}}\right)^{\mu}\le e^{-\frac{\lambda^2\mu}{2+\lambda}}\le  \begin{cases} e^{-\frac{\lambda^2\mu}{3}} &\mbox{if\ } 0< \lambda\le 1 \\
e^{-\frac{\lambda \mu}{3}} & \mbox{if\ } \lambda>1 \end{cases} .\label{Chernoff1}
\end{eqnarray}
For $0\le \lambda\le 1$ we have
\begin{eqnarray}
\Pr[X<(1-\lambda)\mu]\le \left(\frac{e^{-\lambda}}{(1-\lambda)^{(1-\lambda)}}\right)^{\mu}\le e^{-\frac{\lambda^2\mu}{2}}.\label{Chernoff2}
\end{eqnarray}
\end{lemma}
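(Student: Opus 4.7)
The plan is to prove Chernoff's Bound by the standard exponential moment (moment generating function) method. I will first establish the upper-tail bound, then derive the simpler consequences, and finally handle the lower tail symmetrically.

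\textbf{Step 1: Exponentiate and apply Markov.} For any $t>0$, since $e^{tx}$ is monotonically increasing,
\[
\Pr[X>(1+\lambda)\mu] \;=\; \Pr\!\left[e^{tX}>e^{t(1+\lambda)\mu}\right] \;\le\; \frac{\E[e^{tX}]}{e^{t(1+\lambda)\mu}}
\]
by Markov's inequality (Lemma~\ref{Markov}). I then exploit independence of the $X_i$ to factor the moment generating function: writing $p_i=\Pr[X_i=1]$, we have $\E[e^{tX_i}]=1-p_i+p_ie^t\le e^{p_i(e^t-1)}$ using $1+x\le e^x$, and therefore
\[
\E[e^{tX}]=\prod_{i=1}^m \E[e^{tX_i}]\le e^{\mu(e^t-1)}.
\]

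\textbf{Step 2: Optimize $t$.} Substituting yields $\Pr[X>(1+\lambda)\mu]\le \exp\bigl(\mu(e^t-1)-t(1+\lambda)\mu\bigr)$, and the exponent is minimized at $t=\ln(1+\lambda)$, which is positive since $\lambda>0$. Plugging this choice in gives exactly the first inequality
\[
\Pr[X>(1+\lambda)\mu]\le \left(\frac{e^{\lambda}}{(1+\lambda)^{1+\lambda}}\right)^{\mu}.
\]

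\textbf{Step 3: Simplify to the smooth bounds.} The next inequality $\lambda-(1+\lambda)\ln(1+\lambda)\le -\lambda^2/(2+\lambda)$ is obtained by defining $\varphi(\lambda):=\lambda-(1+\lambda)\ln(1+\lambda)+\lambda^2/(2+\lambda)$, verifying $\varphi(0)=0$, and showing $\varphi'(\lambda)\le 0$ for $\lambda\ge 0$ by a routine calculus computation (reducing to $(2+\lambda)^2\le \text{const}\cdot\text{stuff}$ manipulations). The case-split bounds $e^{-\lambda^2\mu/3}$ (for $0<\lambda\le 1$) and $e^{-\lambda\mu/3}$ (for $\lambda>1$) follow immediately from $e^{-\lambda^2\mu/(2+\lambda)}$ by substituting $\lambda\le 1$ (so $2+\lambda\le 3$) and $\lambda>1$ (so $\lambda^2/(2+\lambda)\ge \lambda/3$) respectively.

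\textbf{Step 4: Lower tail.} For (\ref{Chernoff2}), I repeat Steps 1--2 using parameter $t<0$: Markov on $e^{tX}$ (noting $tX$ is decreasing in $X$ when $t<0$) gives $\Pr[X<(1-\lambda)\mu]\le \E[e^{tX}]/e^{t(1-\lambda)\mu}$, and the same MGF bound with optimization at $t=\ln(1-\lambda)<0$ produces $\bigl(e^{-\lambda}/(1-\lambda)^{1-\lambda}\bigr)^\mu$. The simplification to $e^{-\lambda^2\mu/2}$ again reduces to a one-variable calculus inequality $-\lambda-(1-\lambda)\ln(1-\lambda)\le -\lambda^2/2$ on $\lambda\in[0,1]$, checked by derivatives at $\lambda=0$. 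The main (and only real) obstacle is the calculus verification in Step 3 and its lower-tail analogue; the probabilistic content is entirely in Steps 1--2, which are the classical Bernstein--Chernoff template.
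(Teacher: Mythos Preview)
The paper does not actually prove this lemma; it is merely stated in Appendix~A as a standard inequality alongside Markov's, Chebyshev's, and Hoeffding's bounds, all without proof. Your argument via the moment generating function is the standard textbook derivation and is correct: the only point worth noting is that the calculus in Step~3 is cleanest if you rewrite the target inequality as $\ln(1+\lambda)\ge \tfrac{2\lambda}{2+\lambda}$ (whose derivative difference is $\lambda^2/[(1+\lambda)(2+\lambda)^2]\ge 0$), and similarly in Step~4 the function $h(\lambda)=-\lambda-(1-\lambda)\ln(1-\lambda)+\lambda^2/2$ has $h''(\lambda)=-\lambda/(1-\lambda)\le 0$, giving the claim directly.
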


\begin{lemma}\label{Hoeffding}{\bf Hoeffding's Bound}. Let $X_1,\ldots, X_m$ are independent random variables taking values in $\{0, 1\}$. Let $X=\sum_{i=1}^mX_i$ denote their sum and let $\mu = \E[X]$ denote the sum's expected value. Then for $0\le \lambda\le 1$ we have
\begin{eqnarray}\Pr[X>\mu+\lambda m]\le  e^{-{2\lambda^2m}}
\end{eqnarray}
and
\begin{eqnarray}
\Pr[X<\mu-\lambda m]\le  e^{-{2\lambda^2m}}
\end{eqnarray}
\end{lemma}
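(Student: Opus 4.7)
The plan is to prove this by the standard Chernoff/Cram\'er exponential moment approach, combined with Hoeffding's lemma for bounded random variables. The key observation is that each $X_i$ lies in $[0,1]$, so centering gives $X_i - \E[X_i] \in [-\E[X_i], 1-\E[X_i]] \subseteq [-1,1]$, a bounded range of length at most $1$.

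First I would handle the upper tail $\Pr[X > \mu + \lambda m]$. For any $t > 0$, apply the exponential Markov (Chernoff) trick:
\begin{equation*}
\Pr[X - \mu > \lambda m] = \Pr[e^{t(X-\mu)} > e^{t\lambda m}] \le \frac{\E[e^{t(X-\mu)}]}{e^{t\lambda m}}.
\end{equation*}
By independence of the $X_i$, the moment generating function factors:
\begin{equation*}
\E[e^{t(X-\mu)}] = \prod_{i=1}^m \E[e^{t(X_i - \E[X_i])}].
\end{equation*}
Next, I would invoke Hoeffding's lemma: if $Y$ is a random variable with $\E[Y]=0$ and $Y \in [a,b]$, then $\E[e^{tY}] \le e^{t^2(b-a)^2/8}$. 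Applied to $Y_i = X_i - \E[X_i]$, which takes values in an interval of length exactly $1$, this yields $\E[e^{tY_i}] \le e^{t^2/8}$, hence $\E[e^{t(X-\mu)}] \le e^{mt^2/8}$.

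Combining gives $\Pr[X > \mu + \lambda m] \le e^{mt^2/8 - t\lambda m}$. I would then optimize over $t > 0$: setting the derivative $mt/4 - \lambda m$ to zero gives $t = 4\lambda$, and substituting back produces the exponent $m(4\lambda)^2/8 - 4\lambda \cdot \lambda m = 2\lambda^2 m - 4\lambda^2 m = -2\lambda^2 m$, yielding the claimed bound $e^{-2\lambda^2 m}$. The lower-tail bound $\Pr[X < \mu - \lambda m] \le e^{-2\lambda^2 m}$ follows by the symmetric argument applied to $-X_i$ (or equivalently, to the variables $1 - X_i$, which are again in $[0,1]$), giving the same optimization and the same exponent.

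The main obstacle — really the only nontrivial step — is justifying Hoeffding's lemma $\E[e^{tY}] \le e^{t^2(b-a)^2/8}$ for a zero-mean $Y \in [a,b]$. This is itself a short convexity argument: write $Y = \alpha a + (1-\alpha) b$ with $\alpha = (b-Y)/(b-a)$, use convexity of $y \mapsto e^{ty}$ to get $e^{tY} \le \alpha e^{ta} + (1-\alpha) e^{tb}$, take expectations (using $\E[Y]=0$), and then bound the resulting function $\varphi(u) := \log(\alpha_0 e^{ua} + (1-\alpha_0)e^{ub})$ via its second derivative using $\varphi''(u) \le (b-a)^2/4$ and a Taylor expansion around $u=0$. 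Since the restriction $\lambda \le 1$ in the statement is not actually needed for this classical derivation (the bound holds for all $\lambda \ge 0$, trivially for $\lambda > 1$), no additional care is required on that front.
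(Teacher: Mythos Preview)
Your proof is correct; this is the standard derivation of Hoeffding's inequality via the exponential Markov bound and Hoeffding's lemma. Note, however, that the paper does not actually prove this statement: it is listed in Appendix~A alongside Markov's, Chebyshev's, and Chernoff's bounds as a standard concentration inequality quoted without proof, so there is no argument in the paper to compare against.
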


\end{document}